\documentclass[12pt,a4paper]{article}
\usepackage{epsfig}
\usepackage{natbib,color}
\usepackage{amsmath}
\usepackage{array}
\usepackage{enumerate}
\usepackage{amsfonts}
\usepackage{amssymb}
\usepackage{graphicx}
\usepackage{amsthm}
\usepackage{booktabs}
\usepackage{chngcntr}
\usepackage{apptools}
\usepackage{tikz}
\usetikzlibrary{calc}
\usetikzlibrary{decorations.pathreplacing}
\usepackage[a4paper, lmargin=0.1666\paperwidth, rmargin=0.1666\paperwidth, tmargin=0.1111\paperheight, bmargin=0.1111\paperheight]{geometry}
\theoremstyle{definition}
\newtheorem{definition}{Definition}[section]
\newtheorem{theo}{Theorem}
\newtheorem{lemm}{Lemma}
\newtheorem{prop}{Proposition}
\newtheorem{coro}{Corollary}

\newtheorem{rema}{Remark}

\AtAppendix{\counterwithin{lemm}{section}}
\AtAppendix{\counterwithin{theo}{section}}
\AtAppendix{\counterwithin{prop}{section}}
\AtAppendix{\counterwithin{definition}{section}}
\AtAppendix{\counterwithin{rema}{section}}

\newcommand{\be}{\begin{eqnarray}}
\newcommand{\ee}{\end{eqnarray}}
\newcommand{\bea}{\begin{eqnarray*}}
\newcommand{\eea}{\end{eqnarray*}}

\newcommand{\bq}{\begin{quote}}
\newcommand{\eq}{\end{quote}}
\newcommand{\beq}{\begin{equation}}
\newcommand{\eeq}{\end{equation}}
\newcommand{\beqn}{\begin{eqnarray*}}
\newcommand{\eeqn}{\end{eqnarray*}}

\newcommand{\ru}{\mbox{Uniform}}

\newcommand{\bt}{\begin{itemize}}
\newcommand{\et}{\end{itemize}}

\theoremstyle{definition}
\newtheorem{example}{Example}[section]

\theoremstyle{definition}
\newtheorem{ass}{Assumption}

\title{Living on the Edge:\\ An Unified Approach to Antithetic Sampling}
\author{Roberto Casarin\setcounter{footnote}{1}\footnotemark{}\hspace{15pt} %
Radu V. Craiu\setcounter{footnote}{2}\footnotemark{}\\ %
Lorenzo Frattarolo\setcounter{footnote}{3}\footnotemark{}%
\hspace{15pt}
 Christian P. Robert \setcounter{footnote}{4}\footnotemark{}  
\\\\%
{\centering \small\setcounter{footnote}{1}\footnotemark{} University Ca' Foscari of Venice}\\%
{\centering \small\setcounter{footnote}{2}\footnotemark{} University of Toronto}\\
 {\centering \small\setcounter{footnote}{3}\footnotemark{} European Commission Joint Research Centre}\\
 {\centering \small\setcounter{footnote}{4}\footnotemark{} CEREMADE, Universit\' e Paris-Dauphine and  University of Warwick}\\
}
\graphicspath{{./figs/}}
\begin{document}
\maketitle
\begin{abstract} 
 We identify recurrent ingredients in the antithetic sampling literature leading to a unified sampling framework. We introduce a new class of antithetic schemes that includes the most used antithetic proposals. This perspective enables the derivation of new properties of the sampling schemes: i) optimality in the Kullback-Leibler sense; ii) closed-form multivariate Kendall's $\tau$ and Spearman's $\rho$; iii)ranking in concordance order and iv) a central limit theorem that characterizes stochastic behavior of  Monte Carlo estimators when the sample size tends to infinity. Finally, we provide applications to  Monte Carlo integration and Markov Chain Monte Carlo Bayesian estimation. 
\end{abstract}
{\it Keywords: Antithetic variables, Countermonotonicity, Monte Carlo, Negative dependence, Variance reduction}

\section{Introduction}

The Monte Carlo method is at the core of model-based scientific exploration.  In its simplest form it relies on approximating an integral $\mathfrak{I}=\int f(\mathbf{x})\pi(d\mathbf{x})$ with $\hat{\mathfrak{I}}_d = \frac{1}{ d}  \sum_{i=1}^d f(\mathbf{X}_i)$ when $\pi$ is a probability measure, $f:\mathbb{R}^p\mapsto \mathbb{R}$ is a integrable function with respect to $\pi$, $d$ is the Monte Carlo sample size and $\mathbf{X}_1,\ldots, \mathbf{X}_d$ are independent, identically distributed (henceforth, iid) samples from $\pi$. 

In modern computational problems, sampling from the distribution $\pi$ may be expensive,  in terms of either computational effort or time,  so techniques needed to reduce the Monte Carlo sample size $d$, while maintaining the desired precision in estimation, are essential. A relevant class is represented by the {\it variance reduction techniques} that use statistical properties induced by the sampling design to reduce $Var(\hat{\mathfrak{I}}_d)$. For instance, in the case $p=1$, if the independence condition between samples $X_1,\ldots, X_d$ is dropped then 
\beq
\hbox{Var}(\hat{\mathfrak{I}}_d)= \dfrac{1}{d}\sum^{d}_{i=1}\hbox{Var}\left(f\left(X_i\right)\right) + \dfrac{1}{d^2}\sum_{i \neq  j} \hbox{Cov}\left( f\left(X_i \right),f\left(X_j\right)\right),
\label{antivar}
\eeq
and the variance is reduced, compared to independent sampling, if the average covariance is negative.

Antithetic sampling designs aim at minimizing the covariances between samples while preserving their marginal distribution.  A historical perspective on the strategies for antithetic sampling  \citep[e.g., ][]{hammersley_mauldon_1956,ham-mo} allows us to better understand  the rationale behind various constructions and to establish useful  relationships with the results available from related fields, such as stochastic orders \citep[e.g., ][]{barlow1975statistical}, optimal transport \citep[e.g., ][]{Gaffke1981}, Fr\'echet classes \cite[e.g.][]{Whitt76}, and group transformation \cite[e.g.][]{Andreasson1972}. Our historical review identifies some key recurrent ingredients used to propose a unified framework for antithetic sampling. We introduce a new class of antithetic constructions that also includes some of the, to our knowledge, most used antithetic proposals, which are reviewed later in this section. Moreover, this new perspective enables the derivation of new properties of the sampling schemes for $p$ stochastically independent replications  ($p\geq 1$): \textit{i)} optimality in the Kullback-Leibler sense; \textit{ii)} closed-form multivariate Kendall's $\tau$ and Spearman's $\rho$; \textit{iii)} ranking in concordance order, and \textit{iv)} a central limit theorem that characterizes stochastic behavior when $d$ tends to infinity.

The pairwise  antithetic coupling  introduced by  \cite{ham-mo} 
achieves variance reduction by generating $d/2$ (we assume $d$ is even in \eqref{antivar}) iid pairs of negatively correlated random variables. The joint distribution of any pair, say $(X_1,X_2)$, achieves the lower Fr\'echet bound \citep[see][]{Frechet35}. This is achieved by sampling $X_{1} \sim \pi$ and sampling the second variable $X_{2}$ via  \[X_{2}=F_\pi^{-1}\left(1- F_\pi\left(X_{1}\right)\right),\] where $F_\pi$ is the cumulative distribution function (CDF) for the distribution $\pi$. This procedure minimizes the correlation  for any monotonic $f$.  
The explanation of this reduction, as correctly pointed out by \cite{Whitt76},  can be found in \cite{hoeff} and \cite{Frechet1951} and it is due to the rearrangement inequality \citep[see chapter X of][]{hardy1934}. Interestingly, the construction cannot be unambiguously extended beyond pairs because the lower Fr\'echet bound of all $k$-variate distributions is itself a distribution only when $k=2$.

In the following, we establish a relationship between variance reduction and stochastic orders. This relationship allows generalizing variance reduction to higher dimensions $d>2$. 

\begin{definition}[Concordance Order \citep{joe1992}] Let $\mathbf{X}$ and $\mathbf{Y}$ be random vectors with CDFs $F$ and $G$ and survival function $\bar{F}$ and $\bar{G}$, respectively. Then $\mathbf{Y}$ is more concordant than $\mathbf{X}$ (written $\mathbf{X}\prec_{C}\mathbf{Y}$) if
\begin{equation}
\begin{array}{ccc}
F\leq G&\hbox{ and }&\bar{F}\leq \bar{G}
\end{array} 
\end{equation}
\end{definition}
The pairwise antithetic construction being based on the lower Fr\'echet bound is minimal in concordance order. If the univariate marginals are the same in the two-dimensional case, then the concordance order is equivalent to the correlation order, which is a stochastic order induced by the covariance among coordinate-wise monotonic functions. The correlation order was introduced in \cite{barlow1975statistical} an studied further in \cite{dhaene1996dependency}. In this paper, we follow the multivariate definition and notation from \cite{lu2004generalized}.

\begin{definition}[Correlation Order \citep{lu2004generalized}]
Denote with $\mathcal{G}^{d}$ the set of real-valued bounded coordinate-wise non-decreasing Borel functions on $\mathbb{R}^d$, $d\in \mathbb{N}$ and with $\mathcal{D}=\left\{1,\ldots,d\right\}$ the set of coordinate indexes. The d-variate random vector $\mathbf{X}$ is less correlated than the d-variate random vector $\mathbf{Y}$, written $\mathbf{X}\prec_{corr} \mathbf{Y}$,
if they have the same univariate marginal CDFs $F_i$, $i=1,\ldots,d$ and if for every pair of disjoint finite subsets $\mathcal{I},\mathcal{J}\subset \mathcal{D}$ and all functions $f\in\mathcal{G}^{\left\vert \mathcal{I} \right\vert}, g\in\mathcal{G}^{\left\vert \mathcal{J} \right\vert}$, for which the covariances exist:
\[\mathbb{C}\mathrm{ov}\left(f\left(X_\mathcal{I}\right),g\left(X_\mathcal{J}\right)\right)\leq \mathbb{C}\mathrm{ov}\left(f\left(Y_\mathcal{I}\right),g\left(Y_\mathcal{J}\right)\right) \]
\end{definition}
Since concordance and correlation orders are equivalent in dimension two \citep[see, for example, ][]{dhaene1996dependency}, antithetic coupling minimizes the covariance among monotone functions.

If we drop the monotonicity assumption about $f$ in \eqref{antivar}, the discussion and derivation of lower bounds for the variance are more complex and less general. If $p=1$ and $d=2$, for non-monotonic, bounded $f$, \cite{hammersley_mauldon_1956} prove that the lower bound of the variance can be attained only by a multivariate transformation of a single standard uniform random variable which, almost surely, is coordinate-wise monotonic. The proof relies on two main ingredients. First, the monotonic transformation introduces an approximate representation of the class of bivariate distributions with uniform marginals. The candidate member of the class is approximated by partitioning the unit square in sub-squares of side $1/n$.  The approximation is a doubly stochastic matrix in which each element corresponds to a sub-square and has a value equal to the mass assigned to the corresponding sub-square.
This construction relies on the bijective rearrangement \citep{Puccetti2015} also known as measure-preserving transformation \citep[e.g.][]{brown1966,vitale1990}. In the interpretation of \cite{vitale1990}, for every random vector $\left(U_1 , U_2\right)$ on the unit square, with standard uniform marginals, there is a sequence of bijective maps $f_n$ such that $\left(U_1 , f_n\left(U_1\right)\right)$ weakly converges to $\left(U_1 , U_2\right)$. Bijective rearrangement and the induced stochastic dependence \citep{DURANTE2012} are relevant to our discussion of the antithetic constructions for $d>2$.
 

The second ingredient is the Birkhoff-Von Neumman's decomposition \citep{birkhoff1946,vonneumann1953} of doubly-stochastic matrices in which they are represented as convex combinations of permutation matrices. \cite{handscomb_1958} extends the latter result to characterize the extremal points of multi-stochastic arrays as higher-dimension permutation arrays and to provide a generalization for $d>2$ of the results in \cite{hammersley_mauldon_1956}.

Those early results based on discretization give sufficient conditions to characterize the transformations needed to obtain a minimal variance. Unfortunately, the characterizations are not constructive and do not provide feasible random sampling algorithms. Moreover, the existence of the optimal transformation minimizing the variance is not guaranteed.

This led earlier researchers to propose feasible, yet sub-optimal sampling solutions, including  Hammersley and Morton's  \citep{ham-mo} proposal for $d>2$. \cite{Andreasson1972}, \cite{AndreassonDahlquist1972} and \cite{Roach1977} follow a group theoretic approach. In particular, \cite{Roach1977}, build on \cite{Andreasson1972}, \cite{AndreassonDahlquist1972} and \cite{tukey_1957},  draws a parallel with systematic sampling. Its solutions for $d=2$, in the case of non-monotonic $f$s, relies on  discretization, optimal transport and a branch and bound algorithm. The group theoretic approach was also used in \cite{Fishman1983} to obtain a reinterpretation of the original \cite{ham-mo} proposal for $d>2$. Their construction was named rotation sampling and is described next. 
\begin{example}[Rotation sampling]\label{ex:rotsam}
\begin{eqnarray}\label{rotsamp}\nonumber
U_1&=&U\sim \mathcal{U}[0,1]\\
U_{l}&=& \left(\dfrac{l-1}{d} + U\right)\mathrm{mod}\,1, \quad l\in\left\{2,\ldots,d\right\}
\end{eqnarray}
\end{example}
This proposal is a particular case of our stochastic representation. 

The extension to unbounded functions of the theorems in \cite{hammersley_mauldon_1956} and \cite{handscomb_1958}  can be found in \cite{Wilson1979} for $d\geq2$ and $p=1$ and in \cite{Wilson1983}  for $d\geq2$,  $p>=1$. The latter paper combines discretization and bijective rearrangement with the optimal transport assignment problem to prove the results. Bijective rearrangement and Monge-Kantorowitch transportation problem are used in \cite{Gaffke1981} to obtain minimum variance constructions for $f$ equal to the identity function. The authors are the first to realize that the \cite{ham-mo} bivariate antithetic vector has an almost sure constant sum, which is one of the main ingredients of our unified approach.

The relationship between constant sum and variance reduction is trivial. Random vectors of dimension $d\geq 2$ with constant sum achieve the smallest variance for the sum of their components. Then they minimize the variance in the case $p=1$ and $f$ equal to the identity function. Beyond that, a recent stream of papers proves that the constant sum vectors are among the minimal vectors with respect to the concordance order. In particular, one possible generalization of the constant sum constraint is the following one.
\begin{definition}[$l$-countermonotonic] A $d$-dimensional random vector $\mathbf{U}$ with uniform marginals, is said to be $l$-countermonotonic ($l$-CTM), if there exist some 
$\mathcal{L}\subseteq\mathcal{D}$ with $\left\vert\mathcal{L}\right\vert=l$, a family $\left\{g_{l}\right\}_{l\in\mathcal{L}}$ of strictly increasing continuous functions $\left[0,1\right]\mapsto \mathbb{R}$ and some $k\in \mathbb{R}$ such that :
\begin{equation}\label{lcounter}
\sum_{l\in\mathcal{L}}g_{l}\left(U_l\right)=k\hbox{   a.s.}
\end{equation}
\end{definition}
Theorem 2 and Proposition 1 in \cite{LEE2014} show that the antithetic vector is the only element of the $2$-CTM class
and it is minimal in the concordance and correlation orders. The equivalence among orders does not hold in the multivariate case, but several relationships have been identified. For instance,  \cite{lu2004generalized} have showed that $\mathbf{X}\prec_{corr} \mathbf{Y}$ implies $\mathbf{X}\prec_{C} \mathbf{Y}$. 

In addition, conditions for achieving minimality in the concordance order were linked to $l$-CTM random vectors. If $\mathbf{V}$ is $d$-CTM or $(d-1)$-CTM and $\mathbf{U}\prec_{C} \mathbf{V}$ then $\mathbf{U}$ has the same distribution as  $\mathbf{V}$  \citep{lee2017multivariate}. The same result has been shown to hold for $\prec_{corr}$ order 
  \citep[see][remark 3.1]{ahn2020minimal}.  Since $d$-CTM and $(d-1)$-CTM are among the minimal elements in the correlation order, they are promising candidates for constructing variance reduction techniques in any dimension $d$. In this paper, we study some of the existing sampling methods and propose new constructions for $d$-CTM vectors of \ru(0,1) random variables with a.s. constant sum, that is $g_{l}\left(U_l\right)=U_l$, $l\in \mathcal{D}$ in Eq \eqref{lcounter}. This subclass is known in the literature as strict $d$-CTM \citep{LEE2014}. \cite{Gaffke1981} recognize that \cite{ham-mo} is strict $2$-CTM and provide the first strict $3$-CTM construction given below.
\begin{example}[\cite{Gaffke1981} strict $3$-CTM]\label{ex:rushsample}
\begin{eqnarray}\label{rushsample}\nonumber
U_1 &=& U,  \qquad U \sim U\left[0,1\right]
\\
U_2 &=& U + \frac{1}{2}\mathbb{I}_{\left[0,1/2\right]}\left(U\right) -\frac{1}{2}\mathbb{I}_{\left[1/2,1\right]}\left(U\right)  \\\nonumber
U_3 &=& -2U + \mathbb{I}_{\left[0,1/2\right]}\left(U\right) +2\mathbb{I}_{\left[1/2,1\right]}\left(U\right) 
\end{eqnarray}
\end{example}
For the case $d>3$, the authors propose to generate a sequence of independent random vectors using their representation in \eqref{rushsample} and the bivariate antithetic vector of \cite{ham-mo}. 
\begin{example}[\cite{Gaffke1981} strict $d$-CTM]\label{ex:GR}
Let $V_i$, $i=1,\ldots, \left\lfloor  (d-2)/2\right\rfloor +1$ independent random variables, and
\begin{eqnarray}\nonumber
U_{2i-1}=V_i, \quad U_{2i}=1-V_i,\quad i=1,\ldots, \left\lfloor  (d-2)/2\right\rfloor 
\end{eqnarray}
with  
\begin{eqnarray}
U_{d-1} =V_{\left\lfloor  (d-2)/2 \right\rfloor +1},\quad U_{d} = 1 - V_{\left\lfloor  (d-2)/2\right\rfloor}
\end{eqnarray}
if $d$ even, and
\begin{eqnarray}\nonumber
U_{d-2} &=& V_{\left\lfloor (d-2)/2 \right\rfloor +1}  \\\nonumber
U_{d-1} &=& V_{\left\lfloor  (d-2)/2 \right\rfloor +1} + 1/2\mathbb{I}_{\left[0,1/2\right]}\left(V_{\left\lfloor  (d-2)/2 \right\rfloor +1}\right) -1/2\mathbb{I}_{\left[1/2,1\right]}\left(V_{\left\lfloor  (d-2)/2 \right\rfloor +1}\right)  \\\label{GR3}
U_{d} &=& -2U + \mathbb{I}_{\left[0,1/2\right]}\left(V_{\left\lfloor  (d-2)/2 \right\rfloor +1}\right) +2\mathbb{I}_{\left[1/2,1\right]}\left(V_{\left\lfloor  (d-2)/2 \right\rfloor +1}\right)
\end{eqnarray}
if $d$ is odd. 
\end{example}

Almost contemporaneously, \cite{Arvi:John:82:VRT} put forward the apparently different proposal given in the following.

\begin{example}[\cite{Arvi:John:82:VRT} strict $d$-CTM]\label{ex:AV}
\begin{eqnarray}\label{AV}\nonumber
U_1 &=& U  \qquad U \sim \mathcal{U}[0,1]\\
U_i&=& \left( 2^{i-2} U_1 + 1/2 \right)\mathrm{mod}\,1, \quad i=\left\lbrace2,\ldots,d-1\right\rbrace \\\nonumber U_d &=& 1- \left( 2^{d-2} U_1 \right)\mathrm{mod}\,1.
\end{eqnarray}
\end{example}
We will show that for $d=3$, the two proposals in Examples \ref{ex:rushsample} and \ref{ex:AV} coincide and are special cases of our general stochastic representation. Both constructions yield vectors with a constant sum, but \cite{Gaffke1981} proposal's use of independent random variates for $d>3$ made us wonder about its efficiency, especially since the results in \cite{hammersley_mauldon_1956} and \cite{handscomb_1958} suggest that combinations of independent vectors could be sub-optimal. We compare different strict $d$-CTM constructions using the concordance order. According to \cite{ahn2020minimal},  all strict $d$-CTM  have minimal multivariate Kendall's $\tau$, but they can have different multivariate Spearman's $\rho$ values. For example, \cite{Gaffke1981}, and \cite{Arvi:John:82:VRT} proposals have the same values for Kendall's $\tau$, but different ones for Spearman's $\rho$.  

Other examples of strict $d$-CTM vectors, partially covered by our representation, can be found in \cite{KNOTT2006} , \cite{LEE2014}, and after a linear transformation also the construction in \cite{BUBENIK2007} and references therein, can be seen as strict $d$-CTM vectors. 

The range of application of CTM constructions has been extended to other marginal distributions. For instance, \cite{Ruschendorf2002}, expands the work in \cite{Gaffke1981} to random variables $\{Y_1,\ldots, Y_d\}$ with 
unimodal distributions using the   Levy-Shepp form of the Khinchine representation theorem \citep{levy1962,shepp1962}, as $Y_i= X V_i$ where $V_i \sim U(-1,1)$ for all $1\le i \le d$. Hence, a CTM construction for $\{V_1,\ldots, V_d\}$ implies constant sum for for $\{Y_1,\ldots, Y_d\}$. Trivially, such a  vector will achieve the smallest variance for the sum of its components. In those cases, the literature refers to these vectors as complete or joint mix, differentiating between having identical or different marginals  \citep[see][and references therein]{Puccetti2015}. Our general construction can be extend to non-uniform marginals following \cite{Ruschendorf2002}.

In \cite{Rubinstein1987}, a different extension of \cite{handscomb_1958} theorem was proposed by dropping the bijective condition for the rearrangement. They prove the existence of antithetic solutions that minimize the variance. According to the authors,  optimal antithetic solutions should be a function of only one uniform random variable, without restriction on the functional dependence. Unfortunately, dropping the bijective condition result in a tautological statement because, as shown in \cite{brown1966}, \cite{Whitt76}, \cite{vitale1990} and recently reformulated in theorem 1 of \cite{Puccetti2015}, every random vector can be expressed as a function of only one uniform random variable. 
This difficulty of narrowing down conditions for the existence of optimal antithetic variables is linked to the challenge of extending Birkoff-von Neumman representation to the continuous case. It is, in fact, well known that bijective rearrangements are only a sub-class of the extremal transformations. For example, the $d=2$ case is known as Birkhoff's problem 111 \citep{isbell1955}, and even if there exists a characterization  \citep{Lindenstrauss65}, the necessary and sufficient conditions in their most recognizable form \citep{moameni_2016} are of limited practical relevance. For a discussion and an example of an extremal non-bijective class in the multivariate case, refer to \cite{Durante2014}. Since a general characterization is out of reach, we solve the optimal transport problem for transformations in the extremal class and produce a  stochastic representation that depends on a single standard uniform.  

 Historically, given the impossibility of obtaining optimal and feasible antithetic plans, by mid 80's the literature shifted the focus  to negative dependence. In particular, a procedure  considered close to antithetic sampling,  but applicable to the general $d\geq 2$, $p\geq1$ is the Latin Hypercube sampling introduced in \cite{McKa:Beck:Cono:79:CTM}. 
 \begin{example}[\cite{McKa:Beck:Cono:79:CTM} Latin Hypercube]\label{ex:LH}
 Given a standard uniform $d$-dimensional random vector $\mathbf{U}$  and let $\mathcal{D}^{\pi}_t = \left(\pi_{t}\left(0\right),\ldots,\pi_{t}\left(d-1\right)\right)^{T}$ be a permutation of $\lbrace 0,1,\ldots,d-1\rbrace$ independent of  $\mathbf{U},\ldots,\mathbf{U}_1$ and 
\begin{eqnarray}\label{LH}
\mathbf{U}_1= \dfrac{1}{d}\left(\mathcal{D}^{\pi}_t + \mathbf{U}\right)
\end{eqnarray}
\end{example}
The simplicity of the method, the guarantee of asymptotic variance reduction \citep{Stei:87:LSP} and the availability of a central limit theorem \citep{Owen:92:CLT} made it one of the most common variance reduction strategies. The relationship with antithetic variates was studied in \cite{CraiuMeng2005}, where through the introduction of an iterative version of the method, the Iterated Latin Hypercube (ILH), it is shown that, in the iteration limit, the resulting random vector has an almost-sure constant sum. Our new representation allows comparing ILH and its combination with other antithetic proposals. Finally, we extend the central limit theorem in \cite{Owen:92:CLT}, showing the irrelevance of the starting distribution when $d$ goes to infinity, and the number of iterations is fixed.

The paper is structured as follows: Section \ref{sec:stochrep} contains the description of the unified representation, its interpretation, and its invariant transformations. Section \ref{sec:distribution} discusses distributional properties and concordance measures.
New and old illustrations of the unified representation and their ranking are presented in Section \ref{sec:examples} followed by the derivation of the general central limit theorem for Latin Hypercube in Section \ref{sec:CLT}. Numerical illustrations are presented in Section \ref{sec:numerics} and the paper ends with a discussion  of future directions for research in Section \ref{sec:discussion}.

\section{Sampling on Line Segments}\label{sec:stochrep}
We introduce a general method for constructing antithetic vectors whose components have a standard uniform, \ru(0,1), as marginal distribution.  Non-uniform  variables can be obtained using various transformations, e.g. the inverse CDF method or  the Levy-Shepp form of the Khinchine representation theorem \citep{levy1962,shepp1962}. We study conditions for achieving  $d$-CTM  and show that several known countermonotonic random vectors used in variance reduction, can be obtained as special cases of our general construction. 

\subsection{Standard Antithetic Construction}
Our method relies on sampling with equal probability on a collection $\mathcal{S}$ of  line segments in the $d$-dimensional Euclidean space. Since each segment is uniquely characterized by its endpoints or vertexes, the collection $\mathcal{S}$ can be equivalently represented by the set of vertex pairs that define the segments and their coordinates. This representation is efficient in large dimensions even when the segments share some of their vertexes.

More formally, let us define a vertex set $\mathcal{V}= \left\lbrace  1,\ldots,n\right\rbrace$ as a set of points in the $d$-dimensional hypercube, the coordinates of the $k$-th vertex as the column vector $\mathbf{x}_k \equiv \left(x_{1k},\ldots,x_{dk}\right)^{T} \in  \left[0,1\right]^d$ and the coordinate matrix $\mathbf{X}= \left(\mathbf{x}_1,\ldots,\mathbf{x}_n\right)$ as the collection of vertex coordinates. We assume there is an edge $e=(i,j)$ between $i$ and $j$, with $i<j$, if there is a segment joining the two vertices $i$ and $j$, and define the collection of segments by the edge set $\mathcal{E}=\left\lbrace \left(i,j\right) \in \mathcal{V}\times \mathcal{V}\right\rbrace$. Then $\mathcal{G} = \left\{\mathcal{V},\mathcal{E}\right\}$ is an undirected graph and $\mathcal{S} = \left\{\mathcal{G},\mathbf{X}\right\}$ is the collection of segments. 
The lexicographic order on vertex indexing induces an order on the edge set, such that the $k$-th element $e_k\in\mathcal{E}$ is uniquely associated to its couple of vertices, defining the map $\{1,\ldots,\left\vert\mathcal{E}\right\vert\}\mapsto\mathcal{E}$, $k\rightarrow\left(i\left(k\right),j\left(k\right)\right)$. 

Our stochastic construction relies on the graph representation $\mathcal{G}$ and requires a properly chosen vertex matrix $\mathbf{X}$ and two standard uniform random numbers\footnote{As in the antithetic couple case it is also possible to use only one standard uniform random number $W$  by setting $V= \left\{ \right\vert\mathcal{E} \left\vert W\right\}$.}
\begin{enumerate}
\item draw $V\sim \mathcal{U}[0,1]$ and $W \sim \mathcal{U}[0,1]$
\item choose with uniform probability on the edge set $\mathcal{E}$ the edge $e_K$ by computing $K = \left\lfloor \left\vert \mathcal{E}\right\vert W\right\rfloor +1$
\item obtain the random vertexes pair $\left(I,J\right)=\left(i\left(K\right),j\left(K\right)\right)$
\item obtain a random point on the segment joining vertexes $I$ and $J$ with uniform probability
\begin{eqnarray}\label{simplevar}
U_1&=& x_{1I} V + \left(1-V\right)x_{1J}\nonumber\\
&\vdots&\\
U_d&=& x_{dI} V + \left(1-V\right)x_{dJ}\nonumber
\end{eqnarray}
\end{enumerate} 

The following example shows that the standard antithetic method is a special case of our general sampling construction. 

\begin{example}\label{ex21}
Let us consider $d=2$ and sampling in the unit square on the diagonal joining the vertex $1$ of coordinates  $\mathbf{x}_1 \equiv \left(x_{11}=1,x_{21}=0\right)^{T}$ and the vertex $2$ of coordinates $\mathbf{x}_2 \equiv \left(x_{12}=0,x_{22}=1\right)^{T}$ (see Figure \ref{fig:antisegments}). Let $V\sim \mathcal{U}[0,1]$ and compute :
\begin{eqnarray*}
U_1&=& x_{11} V + \left(1-V\right)x_{12}= V\\
U_2&=& x_{21} V + \left(1-V\right)x_{22}= 1-V
\end{eqnarray*}

\begin{figure}[t]
\centering
\begin{tabular}{cc}
\begin{tikzpicture}[scale=1.5]
    \draw [<->,thick] (0,2.5) node (yaxis) [above] {$x_2$}
        |- (2.5,0) node (xaxis) [right] {$x_1$};
    \coordinate (c) at (2,2);
    \coordinate (A) at (0,2);
   \coordinate (B) at (2,0);
      \coordinate (D) at (1.5, 0.5);

  \draw[black] (A) -- (B);
    \draw (yaxis |- c) node[left] {$1$}
        -| (xaxis -| c) node[below] {$1$};
    \fill[black] (A) circle (1pt)  ; 
    \fill[black] (B) circle (1pt)  ;  
    \draw[black] (A)++(0,0.1)      node[above,right] {$\mathbf{x}_{2}$};
     \draw[black] (B)++(0,0.1)      node[below,right] {$\mathbf{x}_1$};
    \draw[black] (D) circle (1pt)      node[below,right] {$\mathbf{U}$};
    \draw [dashed](yaxis |- D) node[left] {$1-V$}
        -| (xaxis -| D) node[below] {$V$};
\end{tikzpicture}
&
\begin{tikzpicture}
\filldraw 
 (0,3)  circle (1pt) node[align=center,above] {$2$}--
 node[above,sloped] {$e_1$}
 (3,0) circle (1pt) node[align=left,below] {$1$};
 \end{tikzpicture}
\end{tabular}
\caption{Left: support set, i.e. the segment joining vertexes $1$ and $2$ of coordinates $x_1$ and $x_2$, for the antithetic sampling. Right: dependence graph $\mathcal{G}$ with $\mathcal{V}=\{1,2\}$ and $\mathcal{E}=\{e_1\}$.}\label{fig:antisegments}
\end{figure}
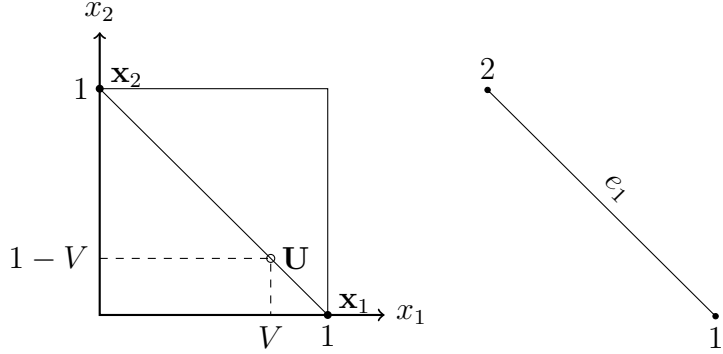

Thus, sampling one uniform antithetic couple $V$ and $1-V$ is equivalent to sampling on a segment (see left plot in Fig. \ref{fig:antisegments}) and the support of the samples can be summarized by the vertex coordinate matrix 
\begin{eqnarray}
\mathbf{X}= \left(\begin{array}{cc}  x_{11} & x_{12}\\ x_{21} & x_{22} \end{array}\right)
\end{eqnarray}
and the couple of vertexes $e_1=\left(1,2\right)$ of the segment we are sampling on (right plot). 

The marginal uniformity of the samples on the segment follows from the convex combination representation and the standard uniform assumption for $V$
\begin{eqnarray}
\begin{array}{cc}
U_l \sim \mathcal{U}\left[\min\left(x_{l1},x_{l2}\right),\max\left(x_{l1},x_{l2}\right)\right]& l=1,2
\end{array}
\end{eqnarray} 
In addition, the standard uniformity of $U_l$ follows from the assumption $\max\left(x_{l1},x_{l2}\right)=1$ and $\min\left(x_{l1},x_{l2}\right)=0$ for all $l=1,2$, and the $d$-CTM property from conditions on the vertex coordinates: $x_{11}+x_{21}=x_{12}+x_{22}=1$. 
\end{example}

Example \ref{ex21} illustrates the fact, which is easy to prove in full generality, that our construction generates samples with uniform probability on $\mathcal{S}$. However, it does not guarantee  that all the  marginal distributions  of the components $U_l$, for $l\in\mathcal{D}$, are  $\ru(0,1)$. In the following, we study the conditions on $\mathcal{G}$ and $\mathbf{X}$, such that the variables $U_l$, $l\in\mathcal{D}$ are conditionally uniform, given the choice of the segment, and marginally $\ru(0,1)$.

\subsection{Uniformity}
We provide conditions on the collection of segments $\mathcal{S}$ in $d$-dimensional hypercube to achieve standard marginal uniformity and $d$-CTM when using our construction. Before introducing the general result we introduce some notation and discuss the main assumptions.

For the general case, the following condition rules out atomic and mixed probability measures.
\begin{ass}[Admissibility]\label{assAdmiss}
The set $\mathcal{S}=\{\mathcal{G},\mathbf{X}\}$ is admissible if all segments in $\mathcal{S}$ are not contained in any of the $(d-1)$-hyperplanes that are parallel to a $(d-1)$-dimensional hyperface of the unit hypercube $[0,1]^{d}$.
\end{ass}

We provide some intuition for Assumption \ref{assAdmiss} through the following  $2$-dimensional example. 
\begin{example}\label{ex2by2}
Consider the collection  segments in the left plot of Figure \ref{fig:edgelabel} with
  coordinate matrix:
\begin{eqnarray}\label{excoord}
\mathbf{X}&= &\left(\begin{array}{cccc}  \alpha &\beta & \gamma&\alpha\\\beta &\alpha& \beta&\alpha\end{array}\right)
\end{eqnarray}
where $\alpha<\beta\leq \gamma\in\mathbb{R}$.
 According to the lexicographic order map $k\rightarrow\left(i\left(k\right),j\left(k\right)\right)$ the edge set $\mathcal{E}^*$ is: 
\begin{eqnarray*}
\left\{e^{*}_1=\left(1,2\right),e^{*}_2=\left(1,3\right),e^{*}_3=\left(1,4\right),e^{*}_4=\left(2,3\right),e^{*}_5=\left(2,4\right),e^{*}_6=\left(3,4\right)\right\}
\end{eqnarray*}
(see right plot). 
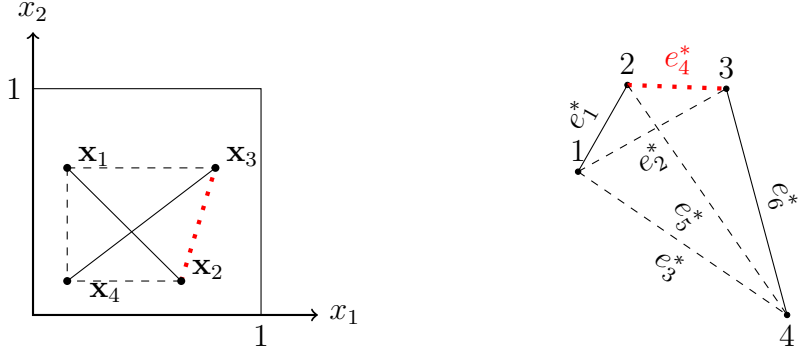
\begin{figure}[t]\label{fig:edgelabel}
\centering
\begin{tabular}{cc}
\begin{tikzpicture}[scale=1.5]
    \draw [<->,thick] (0,2.5) node (yaxis) [above] {$x_2$}
        |- (2.5,0) node (xaxis) [right] {$x_1$};
        
     \coordinate (c) at (2,2);
     \coordinate (X2) at (1.3,0.3);
     \coordinate (X1) at (0.3,1.3);
     \coordinate (X3) at (1.6,1.3);
     \coordinate (X4) at (0.3,0.3);

 \draw[black] (X1) -- (X2);
  \draw[black] (X3) -- (X4);
  \draw[loosely dotted,red,ultra thick] (X2) -- (X3);
  \draw[dashed] (X2) -- (X4);
  \draw[dashed] (X1) -- (X3);
  \draw[dashed] (X1) -- (X4);
    \draw (yaxis |- c) node[left] {$1$}
        -| (xaxis -| c) node[below] {$1$};
   \fill[black] (X2) circle (1pt);
      \fill[black] (X3) circle (1pt);
   \fill[black] (X4) circle (1pt);
   \fill[black] (X1) circle (1pt);
   \draw[black] (X2)++(0,0.1)      node[above,right] {$\mathbf{x}_2$};
     \draw[black] (X1)++(0,0.1)      node[below,right] {$\mathbf{x}_1$};
    \draw[black] (X3)++(0,0.1)      node[above,right] {$\mathbf{x}_3$};
     \draw[black] (X4)++(0.1,-0.1)      node[below,right] {$\mathbf{x}_4$};
\end{tikzpicture}
&\hspace{50pt}
\begin{tikzpicture}

     \coordinate (X2) at (0.5,1.15);
     \coordinate (X1) at (-0.15,0);
     \coordinate (X3) at (1.8,1.1);
     \coordinate (X4) at (2.6,-1.9);

\filldraw 
 (X2)  circle (1pt) node[align=center,above] {$2$}--
 node[above,sloped] {$e^*_1$}
 (X1) circle (1pt) node[align=left,above] {$1$};
 \filldraw 
 (X4)  circle (1pt) node[align=left,   below] {$4$}--node[above,sloped] {$e^*_6$}
 (X3) circle (1pt) node[align=left,   above] {$3$};

  \draw[loosely dotted,red,ultra thick] (X2) --node[above,sloped] {$e^*_4$} (X3);
  \draw[dashed] (X2) --node[below,sloped] {$e^*_5$} (X4);
  \draw[dashed] (X1)  --node[below,sloped] {$ e^*_2\quad$} (X3);
  \draw[dashed] (X1)  --node[below,sloped] {$e^*_3$} (X4);
\end{tikzpicture}
\end{tabular}
\caption{Left: support set of the sampling scheme. Right: admissible (solid) and not-admissible (dashed) edges. Edge labeling follows the vertexes lexicographic order.}
\end{figure}
Since the sampling method can concentrate the probability mass at some points or along some directions of the hypercube we need to impose some admissibility conditions to have  non-degenerate distributions. The first set of conditions excludes degenerate segments, that are segments with equal end-points. Thus, we rule out self-loops from the graph, i.e. edges from one vertex to itself.
The second set of admissibility conditions excludes edges where the distribution concentrates along some coordinates. The edge $e^{*}_2$ joining vertex 1 to 3 is not admissible since:
\begin{eqnarray*}
U_1&=&  x_{11}V + x_{13}\left(1-V\right)=\gamma - \left(\gamma-\alpha\right)V \\
U_2&=&  x_{21}V + x_{23}\left(1-V\right)=\beta
\end{eqnarray*}
and $U_2$ is almost surely constant conditionally on being on that edge. A similar remark applies to the edges $e^{*}_3$ and $e^{*}_5$, whereas $e^{*}_4$ (red dashed in Figure \ref{fig:edgelabel}) is only admissible if $\beta<\gamma$.
\end{example}
In summary, the admissible edge set is:
\begin{eqnarray*}
\mathcal{E}&=&\left\{\begin{array}{ccc}\left\{e_1=e^{*}_1=\left(1,2\right),e_2=e^{*}_4=\left(2,3\right),e_3=e^{*}_6=\left(3,4\right)\right\},&& \mbox{ if } \beta<\gamma\\&&\\\left\{e_1=e^{*}_1=\left(1,2\right),e_2=e^{*}_6=\left(3,4\right)\right\},&& \mbox{ if } \beta=\gamma\end{array}\right.
\end{eqnarray*}
In our construction, conditional uniformity is a necessary condition for standard marginal uniformity and the admissibility assumption implies the conditional uniformity. To enhance the paper's readability, all proofs are deferred to the Appendix.
\begin{lemm}[Conditional Uniformity]\label{notparrallel} 
Let $\mathcal{S}$ satisfy Assumption \ref{assAdmiss}. Conditionally on being on the $k$-th segment of edge  $e_k=\left(i\left(k\right),j\left(k\right)\right)$,  for each $l\in \mathcal{D}$, the random variable $U_l$ in \eqref{simplevar} is uniform on $\left[\alpha_{l,k},\beta_{l,k}\right]$ with $\alpha_{l,k}=\min\left(x_{l,i\left(k\right)},x_{l,j\left(k\right)}\right)$ and $\beta_{l,k}=\max\left(x_{l,i\left(k\right)},x_{l,j\left(k\right)}\right)$.
\end{lemm}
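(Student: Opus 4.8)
The plan is to condition on the selected edge $e_k=(i(k),j(k))$ and reduce the statement to a one-dimensional change-of-variables problem, handled one coordinate $l\in\mathcal{D}$ at a time. Once the edge $e_k$ is fixed (Steps 2--3 of the construction select it through $K=k$), the vertex pair $(I,J)=(i(k),j(k))$ becomes deterministic, while the draw $V\sim\mathcal{U}[0,1]$ is independent of the edge-selection variable $W$; hence the conditional law of $V$ given $\{K=k\}$ is still $\mathcal{U}[0,1]$. (The same holds in the single-uniform variant of the footnote, since conditioning on $K=k$ restricts $W$ to an interval of length $1/|\mathcal{E}|$ on which the fractional part $\{|\mathcal{E}|W\}$ remains uniform.) Writing $a=x_{l,i(k)}$ and $b=x_{l,j(k)}$, equation \eqref{simplevar} then reads
\[
U_l = a\,V+(1-V)\,b = b+(a-b)\,V,
\]
so everything reduces to identifying the distribution of an affine image of a standard uniform.

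The key step, and the only conceptual one, is to rule out the degenerate case $a=b$, and this is precisely where Assumption \ref{assAdmiss} enters. The segment joining $\mathbf{x}_{i(k)}$ and $\mathbf{x}_{j(k)}$ is contained in a hyperplane $\{x_l=c\}$---which is parallel to the $l$-th pair of hyperfaces $\{x_l=0\}$ and $\{x_l=1\}$ of the unit cube---if and only if both endpoints share the same $l$-th coordinate, i.e. $a=b=c$. Admissibility forbids this for every coordinate direction $l$, so I conclude $a\neq b$, equivalently that the slope $a-b$ of the affine map is nonzero. This is exactly the translation illustrated by the edge $e^*_2$ in Example \ref{ex2by2}, where $U_2\equiv\beta$ degenerates.

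Given a nonzero slope, the conclusion follows by a direct CDF computation. For $y$ in the interval with endpoints $a$ and $b$, I would evaluate $\Pr\big(U_l\le y \mid K=k\big)=\Pr\big((a-b)V\le y-b\big)$, splitting into the cases $a>b$ and $a<b$; the increasing case gives $(y-b)/(a-b)$ and the decreasing case $1-(y-b)/(a-b)=(y-a)/(b-a)$, and in both cases this is the CDF of the uniform law on $[\min(a,b),\max(a,b)]=[\alpha_{l,k},\beta_{l,k}]$. Equivalently one may simply invoke that an affine transformation with nonzero slope carries $\mathcal{U}[0,1]$ onto the uniform distribution on its image interval. I do not expect a genuine obstacle: the substantive point is the geometric-to-algebraic translation of admissibility into $a\neq b$, after which the lemma is a routine computation repeated coordinate by coordinate.
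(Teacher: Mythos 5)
Your proposal is correct and follows essentially the same route as the paper's proof: condition on the edge, observe that $U_l=x_{l,j(k)}+(x_{l,i(k)}-x_{l,j(k)})V$ is an affine image of a standard uniform, and use Assumption \ref{assAdmiss} to exclude the degenerate case $x_{l,i(k)}=x_{l,j(k)}$ (which would give a Dirac mass rather than a uniform). Your version merely spells out the independence of $V$ from the edge selection and the explicit CDF computation, both of which the paper leaves implicit.
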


Another requirement for our construction is that sampling points are in the unit hypercube which is guaranteed by the following range condition.
\begin{ass}[Range]\label{assRange}
The range requirement is satisfied if $\max\{x_{lk},k=1,\ldots,n\}=1$ and $\min\{x_{lk},k=1,\ldots,n\}=0$.
\end{ass}

Since $U_{l}$ is a convex combination of $x_{l,i}$, $i\in\left\{1,\ldots,n\right\}$, Assumption \ref{assRange} is needed in light of the requirement that $U_{l} \in \left[0,1\right]$ for each $l\in\mathcal{D}$.
  
In order to introduce the third assumption we need some notation. The set of admissible edges depends on the partitions induced by the distinct elements, sorted in ascending order, in the rows of $\mathbf{X}$. For each row $\left(x_{l,1},\ldots,x_{l,n}\right)$, 
of $\mathbf{X}$ with $l=1,\ldots,d$ we define $\mathbf{a}_l=\left(a_{l,1},a_{l,2},\ldots,a_{l,n_l-1},a_{l,n_l}\right)$ the sequence of $n_l\leq n$ sorted distinct elements. The unique values define a partition, say  $\left\{A_{l,m}\right\}^{n_l}_{m=2}$ of the unit interval with elements:
\begin{equation}
A_{l,m}=\left[a_{l,m-1},a_{l,m}\right),\quad m\in\left\{2,\ldots,n_l\right\}.\label{partition}
\end{equation}

For each unique value $a_{lk}$, the position set $\mathcal{M}_{l,k} =\left\{ i\in\left\{1,\ldots,n\right\}: x_{l,i}=a_{l,k}\right\}$, $k=1,\ldots,n_l$ denotes the set of points that provides the unique projected values of $\mathbf{X}$'s columns into the $l$-th dimension. For each row, $\mathcal{M}_{l,k}$ must therefore satisfy:
\begin{equation*}
\mathcal{M}_{l,m}\cap\mathcal{M}_{l,m^{\prime}}=\emptyset,\quad 
\bigcup^{n_l}_{m=1} \mathcal{M}_{l,m} = \left\{1,\ldots,n\right\},
\end{equation*}
and one can represent the coordinates of the vertex $\mathbf{x}_{k}$ by using the sets of positions of the unique values:
\begin{eqnarray*}
x_{l,k}= \sum^{n_l}_{m=1}a_{l,m}\mathbb{I}_{\mathcal{M}_{l,m}}\left(k\right),\, l\in\mathcal{D},
\end{eqnarray*}
which will be used in the following to write the CM constraint.


For illustration purposes, let us focus on the first coordinate of Example \ref{ex2by2} in the case $\beta<\gamma$. The sequence of $n_l=3$ sorted distinct elements of the first row is $\mathbf{a}_1=\left(a_{1,1}=\alpha,a_{1,2}=\beta,a_{1,3}=\gamma\right)$ and the sets of positions of the unique values are $\mathcal{M}_{1,1} = \left\{1,4\right\}$, $\mathcal{M}_{1,2} = \left\{2\right\}$, $\mathcal{M}_{
1,3}= \left\{3\right\}$ which is a partition of $\{1,2,3,4\}$.

We denote with $\mathcal{G}_l\equiv\left\{\mathcal{V}_l,\mathcal{E}_l\right\}$ the projection of $\mathcal{G}$ on the $l$-coordinate induced by the unique values 
$\mathbf{a}_l$. $\mathcal{G}_l$ is graph obtained by assigning a node to each of the $n_{l}$ components of $\mathbf{a}_l$ and defining the edge set $\mathcal{E}_l=\{e_{l,1},\ldots,e_{l,|\mathcal{E}|}\}$ through the map $\{1,\ldots,\left\vert\mathcal{E}\right\vert\}\mapsto \mathcal{E}_l$,  $k\in\{1,\ldots,\left\vert\mathcal{E}\right\vert\} \rightarrow e_{l,k}=\left(m\left(k\right),m^{\prime}\left(k\right)\right)\in\mathcal{E}_l$ where
\begin{eqnarray*}
m\left(k\right)&=&\left\{ m\in\left\{1,\ldots,n_l\right\}: i\left(k\right)\in \mathcal{M}_{l,m}\right\},\\
m^{\prime}\left(k\right)&=&\left\{ m^{\prime}\in\left\{1,\ldots,n_l\right\}: j\left(k\right)\in \mathcal{M}_{l,m^{\prime}}\right\}.
\end{eqnarray*} 
We call $n^l_{\left(m,m^{\prime}\right)}$ the edges multiplicity between the projected nodes $m$ and $m^{\prime}$, by convention if $\left(m,m^{\prime}\right)$ is not in $\mathcal{E}_{l}$ then $n^l_{\left(m,m^{\prime}\right)}=0$. 

\begin{rema}\label{remadmis}
In the proposed notation,  Assumption \ref{assAdmiss} corresponds to $m\left(k\right)\neq m^{\prime}\left(k\right)$ when $e_{l,k}=\left(m\left(k\right),m^{\prime}\left(k\right)\right)$ for all $k=1,\ldots,|\mathcal{E}|$, $l=1,\ldots,d$. 
\end{rema}

In Figure \ref{fig:edgeproj} we report the projection of the graph of Figure \ref{fig:edgelabel} onto the $l$-th  coordinates given in  Equation \eqref{excoord} of Example \ref{ex2by2}. 

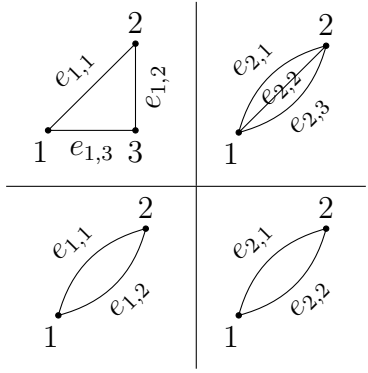
\begin{figure}[t]
\centering

\begin{tabular}{c|c}
\begin{tikzpicture}
\filldraw 
 (1,1.15)  circle (1pt) node[align=center,above] {$2$}--node[above,sloped] {$e_{1,1}$}
 (-0.15,0) ;
 \filldraw 
  (1,0) circle (1pt) node[align=center, below] {$3$}--node[below,sloped] {$e_{1,3}$}
  (-0.15,0) circle (1pt);
   \filldraw 
  (1,0) --node[below,sloped] {$e_{1,2}$}
   (1,1.15) ;
 \draw (-0.25,0)  node[align=center, below] {$1$};
\end{tikzpicture}
&
\begin{tikzpicture}
\filldraw 
 (1,1.15)  circle (1pt) node[align=center,above] {$2$}--node[sloped] {$e_{2,2}$}
 (-0.15,0) circle (1pt);
 \filldraw (1,1.15)  edge [bend right] node[above,sloped] {$e_{2,1}$}(-0.15,0);
  \filldraw (1,1.15)  edge [bend left] node[below,sloped] {$e_{2,3}$}(-0.15,0);
 \draw (-0.25,0)  node[align=center, below] {$1$};
\end{tikzpicture}\\\hline
\begin{tikzpicture}
\filldraw 
 (1,1.15)  circle (1pt) node[align=center,above] {$2$}
 (-0.15,0) circle (1pt);
 \filldraw (1,1.15)  edge [bend right] node[above,sloped] {$e_{1,1}$}(-0.15,0);
  \filldraw (1,1.15)  edge [bend left] node[below,sloped] {$e_{1,2}$}(-0.15,0);
 \draw (-0.25,0)  node[align=center, below] {$1$};
\end{tikzpicture}
&
\begin{tikzpicture}
\filldraw 
 (1,1.15)  circle (1pt) node[align=center,above] {$2$}
 (-0.15,0) circle (1pt);
 \filldraw (1,1.15)  edge [bend right] node[above,sloped] {$e_{2,1}$}(-0.15,0);
  \filldraw (1,1.15)  edge [bend left] node[below,sloped] {$e_{2,2}$}(-0.15,0);
 \draw (-0.25,0)  node[align=center, below] {$1$};
\end{tikzpicture}
\end{tabular}

\caption{ Graph projection for the graph in Example \ref{ex2by2} in the case $\beta<\gamma$ (top) and $\beta=\gamma$ (bottom), considering the first (left) or the second (right) coordinate.}\label{fig:edgeproj}
\end{figure}

Set $\alpha_{l,k}=a_{l,m_{\alpha}\left(k\right)}$ and $\beta_{l,k}=a_{l,m_{\beta}\left(k\right)}$, where $m_{\alpha}\left(k\right)$ and $m_{\beta}\left(k\right)$ are the minimum and the maximum value between $m\left(k\right)$ and $m^{\prime}\left(k\right)$, and define 
\begin{equation*}
\mathcal{K}_{l,m}\equiv\left\{k \in \left\{1,\ldots, \left\vert\mathcal{E}\right\vert\right\}: m_{\alpha}\left(k\right)+1\leq m\leq m_{\beta}\left(k\right)
\right\}
\end{equation*}
We state now the condition on the coordinates of the sampling construction for marginal standard uniformity.
\begin{ass}[Coordinate]\label{assCoord}

The following  set of $n_l-2$ equations in the variables $a_{l,m}$, $m=2,\ldots,n_l-1$ are satisfied 
\begin{eqnarray}\label{uniformseg}
F_{l,m}\left( \mathbf{a}_l\right)&=&\dfrac{1}{\left\vert\mathcal{E}\right\vert }\displaystyle \sum_{k\in \mathcal{K}_{l,m}}  \dfrac{1}{ a_{l,m_{\beta}\left(k\right)}-  a_{l,m_{\alpha}\left(k\right)}}-1= 0
\end{eqnarray} 
$m=2,\ldots,n_l-1$.
\end{ass}

The following example clarifies the relationship between Assumptions \ref{assRange}-\ref{assCoord} and standard uniformity.
\begin{example}
We discuss separately the two cases: $\beta=\gamma$ and $\beta<\gamma$ since they correspond to two different admissible edge sets.  If $\beta=\gamma$ 
the first component of $\mathbf{U}$ in the stochastic construction of Equation \eqref{simplevar} is: 
 \begin{eqnarray*}
 U_1=\left\{\begin{array}{ccc} \alpha + \left(\beta-\alpha\right)V &\hbox{if}& K=1\\\beta - \left(\beta-\alpha\right)V &\hbox{if}& K=2\end{array}\right. 
 \end{eqnarray*}
Since we are sampling uniformly on the edge set $\mathcal{E}$ it follows that: 
 \begin{eqnarray*}\mathbb{P}\left(K=1\right)=\mathbb{P}\left(K=2\right)=\dfrac{1}{\left\vert\mathcal{E}\right\vert}=\dfrac{1}{2}\end{eqnarray*}
and $U_1$ has marginal PDF
 \begin{eqnarray}
f\left(u_1\right) &=& \dfrac{1}{2}\dfrac{1}{\beta-\alpha} \mathbb{I}_{\left[\alpha,\beta\right]}\left(u_1\right)+\dfrac{1}{2}\dfrac{1}{\beta-\alpha} \mathbb{I}_{\left[\alpha,\beta\right]}\left(u_1\right)=\dfrac{1}{\beta-\alpha}\mathbb{I}_{\left[\alpha,\beta\right]}\left(u_1\right).
 \end{eqnarray}
We get a standard uniform random variable if $\alpha=0$ and $\beta=1$. By the same argument,  $U_2$ is a standard uniform random variable. In conclusion, by sampling on the two diagonals of the unit square with a mixture of an antithetic couple $\left(V,1-V\right)$ and a comonotonic couple $\left(V,V\right)$, the method can attain marginal standard uniformity (left plot in Figure \ref{antisegment1}).

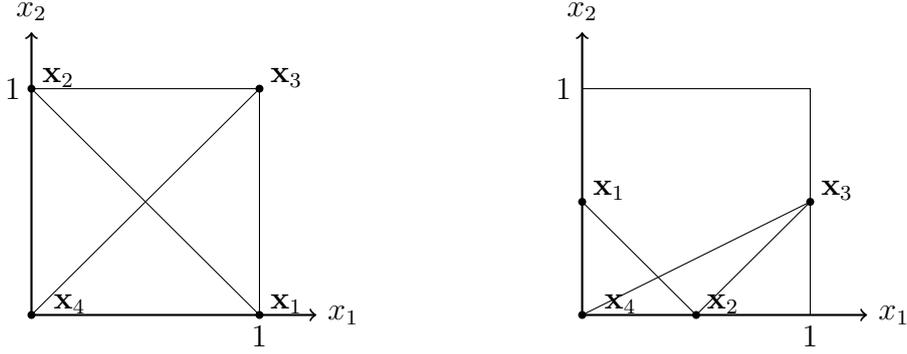
\begin{figure}[t]
\centering
\begin{tabular}{cc}
\begin{tikzpicture}[scale=1.5]
    \draw [<->,thick] (0,2.5) node (yaxis) [above] {$x_2$}
        |- (2.5,0) node (xaxis) [right] {$x_1$};
    \coordinate (c) at (2,2);
    \coordinate (A) at (0,2);
   \coordinate (B) at (2,0);
      \coordinate (C) at (2,2);
   \coordinate (D) at (0,0);

  \draw[black] (A) -- (B);
  \draw[black] (C) -- (D);
    \draw (yaxis |- c) node[left] {$1$}
        -| (xaxis -| c) node[below] {$1$};
   \fill[black] (A) circle (1pt);
      \fill[black] (C) circle (1pt);
   \fill[black] (D) circle (1pt);
   \fill[black] (B) circle (1pt);
   \draw[black] (A)++(0,0.1)      node[above,right] {$\mathbf{x}_2$};
     \draw[black] (B)++(0,0.1)      node[below,right] {$\mathbf{x}_1$};
    \draw[black] (C)++(0,0.1)      node[above,right] {$\mathbf{x}_3$};
     \draw[black] (D)++(0.1,0.1)      node[below,right] {$\mathbf{x}_4$};
\end{tikzpicture}
&\hspace{50pt}
\begin{tikzpicture}[scale=1.5]
    \draw [<->,thick] (0,2.5) node (yaxis) [above] {$x_2$}
        |- (2.5,0) node (xaxis) [right] {$x_1$};
    \coordinate (c) at (2,2);
    \coordinate (A) at (0,1);
   \coordinate (B) at (1,0);
      \coordinate (C) at (2,1);
   \coordinate (D) at (0,0);

  \draw[black] (A) -- (B);
  \draw[black] (C) -- (D);
  \draw[black] (B) -- (C);
    \draw (yaxis |- c) node[left] {$1$}
        -| (xaxis -| c) node[below] {$1$};
   \fill[black] (A) circle (1pt);
      \fill[black] (C) circle (1pt);
   \fill[black] (D) circle (1pt);
   \fill[black] (B) circle (1pt);
   \draw[black] (A)++(0,0.1)      node[above,right] {$\mathbf{x}_1$};
     \draw[black] (B)++(0,0.1)      node[below,right] {$\mathbf{x}_2$};
    \draw[black] (C)++(0,0.1)      node[above,right] {$\mathbf{x}_3$};
     \draw[black] (D)++(0.1,0.1)      node[below,right] {$\mathbf{x}_4$};
\end{tikzpicture}
\end{tabular}
\caption{Sampling on line segments. The support set of the 2 dimensional example with 4 vertexes. Cases: $\beta=\gamma$ (left) and $\beta<\gamma$ (right).}\label{antisegment1}
\end{figure}

If $\beta<\gamma$ the PDF of $U_1$ is:
\begin{eqnarray}\label{pdfexample}\nonumber
f\left(u_1\right) &=& 1/3\dfrac{1}{\beta-\alpha} \mathbb{I}_{\left[\alpha,\beta\right]}\left(u_1\right)+1/3\dfrac{1}{\gamma-\beta} \mathbb{I}_{\left[\beta,\gamma\right]}\left(u_1\right)+1/3\dfrac{1}{\gamma-\alpha} \mathbb{I}_{\left[\alpha,\gamma\right]}\left(u_1\right).
 \end{eqnarray}
A necessary and sufficient condition for $u_1\in \left[0,1\right]$ is  $\alpha=\min\{x_{1k}, k=1,\ldots,4\}=0$ and $\gamma=\max\{x_{1k},k=1,\ldots,4\}= 1$
which implies the PDF is piece-wise constant on the partition $\left[0,\beta\right)\subset[0,1]$ and $\left[\beta,1\right]\subset[0,1]$, induced by the unique values in the first row of $\mathbf{X}$, and is null on $\left[0,\beta\right)\cap\left[\beta,1\right]$. The value of $\beta$ such that the PDF takes the same value on the elements of the partition, i.e.:
 \begin{align}
\left\{
 \begin{array}{ccc}
 \dfrac{1}{\beta} + 1 =3 &\hbox{if}& u_1\in\left[0,\beta\right)\\
 \dfrac{1}{1-\beta} + 1 =3 &\hbox{if}& u_1\in\left[\beta,1\right]\\
 \end{array}
\right.
 \end{align}
is $\beta=1/2$ which implies $U_1\sim \mathcal{U}[0,1]$ and $U_2\sim \mathcal{U}[0,1/2]$. Thus, for $\beta<\gamma$ our construction is not standard uniform along all coordinates of the vector. The right plot of Figure \ref{antisegment1} shows that the range of $U_1$ is $[0,1]$ (horizontal axis) and the range of $U_2$ is $[0,1/2]$ (vertical axis) \footnote{Similar arguments can be applied to show that imposing standard uniformity for $U_2$ requires $\min\{x_{1k},k=1,\ldots,4\}=\alpha=0$  and  $\max\{x_{1k},k=1,\ldots,4\}=\beta=1$ which is not satisfied since $\beta<\gamma$.}
\end{example}

We are ready to state the main result of this section which proves the marginal standard uniformity of our construction. In addition, we provide a method to find $\mathbf{a}_l$ satisfying the condition in
Assumption \ref{assCoord}.

\begin{theo}[Marginal Standard Uniformity]\label{unif} 
Under Assumptions \ref{assAdmiss}-\ref{assCoord},  each  coordinate of the random vector $\mathbf{U}=(U_1,\ldots,U_d)$ in the stochastic representation \eqref{simplevar} has a $\ru(0,1)$ marginal distribution.
\end{theo}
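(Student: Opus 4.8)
The plan is to compute the marginal density of a fixed coordinate $U_l$ as an equal-weight mixture over the chosen edge, to recognize it as a step function on the partition $\{A_{l,m}\}_{m=2}^{n_l}$, and then to read Assumption \ref{assCoord} as the statement that this step function equals $1$ on all cells but the last, with the final cell forced to height $1$ by normalization.

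First I would fix $l\in\mathcal{D}$ and condition on the randomly selected edge. Since $K$ is uniform on $\{1,\ldots,|\mathcal{E}|\}$ and, by Lemma \ref{notparrallel}, the conditional law of $U_l$ given $K=k$ is uniform on $[\alpha_{l,k},\beta_{l,k}]$ (Assumption \ref{assAdmiss}, through Remark \ref{remadmis}, guarantees $\alpha_{l,k}<\beta_{l,k}$, so this density is well defined), the law of total probability yields the marginal density
\[
f_{U_l}(u)=\frac{1}{|\mathcal{E}|}\sum_{k=1}^{|\mathcal{E}|}\frac{1}{\beta_{l,k}-\alpha_{l,k}}\,\mathbb{I}_{[\alpha_{l,k},\beta_{l,k}]}(u).
\]

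Next I would exploit that every $\alpha_{l,k}=a_{l,m_\alpha(k)}$ and $\beta_{l,k}=a_{l,m_\beta(k)}$ is a shared unique value, so all breakpoints of this step function lie in $\mathbf{a}_l$ and $f_{U_l}$ is constant on each cell $A_{l,m}=[a_{l,m-1},a_{l,m})$. The key combinatorial identity is that edge $k$ contributes on $A_{l,m}$ exactly when the whole cell sits inside $[\alpha_{l,k},\beta_{l,k}]$, that is when $m_\alpha(k)+1\le m\le m_\beta(k)$, which is precisely the definition of $\mathcal{K}_{l,m}$; no partial overlaps arise because the cell endpoints coincide with the unique values. Hence on $A_{l,m}$ one gets $f_{U_l}(u)=\tfrac{1}{|\mathcal{E}|}\sum_{k\in\mathcal{K}_{l,m}} (a_{l,m_\beta(k)}-a_{l,m_\alpha(k)})^{-1}=F_{l,m}(\mathbf{a}_l)+1$. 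By Assumption \ref{assRange} the extreme coordinates of row $l$ are $a_{l,1}=0$ and $a_{l,n_l}=1$, so the cells $A_{l,2},\ldots,A_{l,n_l}$ tile $[0,1]$, and Assumption \ref{assCoord} states $F_{l,m}(\mathbf{a}_l)=0$ for $m=2,\ldots,n_l-1$, i.e. $f_{U_l}\equiv 1$ on every cell except possibly the last.

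Finally I would dispatch the remaining cell $A_{l,n_l}=[a_{l,n_l-1},1)$ by normalization: since $f_{U_l}$ is a density it integrates to $1$, and using $f_{U_l}\equiv 1$ on $A_{l,2}\cup\cdots\cup A_{l,n_l-1}$, whose total length is $a_{l,n_l-1}$, the height on the last cell is forced to be $(1-a_{l,n_l-1})/(1-a_{l,n_l-1})=1$. This also explains why only $n_l-2$ equations are imposed, the $(n_l-1)$-th being automatically implied by total mass. Thus $f_{U_l}\equiv 1$ on $[0,1]$, so $U_l\sim\mathcal{U}(0,1)$, and since $l$ was arbitrary the conclusion holds for every coordinate. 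I expect the main obstacle to be the bookkeeping of the second step, namely verifying that $\mathcal{K}_{l,m}$ indexes exactly the edges covering cell $A_{l,m}$ with no fractional contributions; once the breakpoints are shown to be common unique values this is transparent, and the closing normalization argument for the last cell is then immediate.
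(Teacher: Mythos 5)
Your proposal is correct and follows essentially the same route as the paper's proof: the appendix first establishes the mixture density $f_{U_l}(u)=\tfrac{1}{|\mathcal{E}|}\sum_k(\beta_{l,k}-\alpha_{l,k})^{-1}\mathbb{I}_{[\alpha_{l,k},\beta_{l,k}]}(u)$ (Lemma A.1), then shows $\mathcal{K}_{u_l}=\mathcal{K}_{l,m}$ is constant on each cell $A_{l,m}$ so that Assumption \ref{assCoord} forces the density to equal $1$ there, and closes the last cell by the same total-mass normalization you use. Your version is in fact slightly leaner, since the paper's additional digression on coarser and finer partitions is not needed once one observes, as you do, that all breakpoints of the step function lie among the unique values $\mathbf{a}_l$.
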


\begin{theo}\label{theoopt} The system in \eqref{uniformseg} is equivalent to the following convex minimization problem 
\begin{eqnarray}\label{optuni}
\displaystyle\min_{\left(a_{l,2},\ldots,a_{l,n_{l}-1}\right)\in \left[0,1\right]^{n_l-2}}\Psi_l\left(\mathbf{a}_l\right),
\end{eqnarray}
with
\begin{eqnarray}\label{Psi}\nonumber
 \Psi_l\left(\mathbf{a}_l\right)  &=&-\dfrac{1}{2 \left\vert\mathcal{E}\right\vert}\displaystyle \sum^{n_l}_{ m,m^{\prime}=1}n^l_{\left(m,m^{\prime}\right)} \log\left\vert a_{l,m^{\prime}}-  a_{l,m}\right\vert.
\end{eqnarray}
\end{theo}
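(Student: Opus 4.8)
The plan is to show that the equality system \eqref{uniformseg} coincides with the first-order stationarity conditions of $\Psi_l$, and then to invoke convexity so that stationarity is equivalent to global optimality. First I would simplify the objective: since $\log|a_{l,m'}-a_{l,m}|$ is symmetric in $(m,m')$, the double sum defining $\Psi_l$ collapses, up to a fixed positive multiplicative constant that plays no role in locating the minimizer, to a sum over the projected edges,
\[
\Psi_l(\mathbf{a}_l)\ \propto\ -\sum_{k=1}^{|\mathcal{E}|}\log L_k,\qquad L_k:=a_{l,m_\beta(k)}-a_{l,m_\alpha(k)}>0 ,
\]
where $L_k$ is the length of the projection of edge $k$. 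Each summand is $-\log$ of an affine, strictly positive function of $\mathbf{a}_l$, hence convex, so $\Psi_l$ is convex on the admissible region; moreover $\Psi_l\to+\infty$ whenever any $L_k\to 0^+$, so it acts as a barrier keeping a minimizer in the interior, where the ordering $a_{l,1}<\cdots<a_{l,n_l}$ is strict and the only optimality condition is $\nabla\Psi_l=0$.

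Next I would differentiate. Using $\partial L_k/\partial a_{l,m}=\mathbb{I}[m_\beta(k)=m]-\mathbb{I}[m_\alpha(k)=m]$, for each interior index $m\in\{2,\dots,n_l-1\}$ one obtains
\[
\frac{\partial\Psi_l}{\partial a_{l,m}}\ \propto\ \sum_{k:\,m_\alpha(k)=m}\frac{1}{L_k}\ -\ \sum_{k:\,m_\beta(k)=m}\frac{1}{L_k}.
\]
The combinatorial core of the argument is to recognise these two sums as the increment of the piecewise-constant density of $U_l$ across the knot $a_{l,m}$. Writing $d_m=F_{l,m}+1=|\mathcal{E}|^{-1}\sum_{k\in\mathcal{K}_{l,m}}L_k^{-1}$ for the density on the cell $A_{l,m}$ (and extending this definition to $m=n_l$), I would verify the set identities $\mathcal{K}_{l,m}\setminus\mathcal{K}_{l,m+1}=\{k:m_\beta(k)=m\}$ and $\mathcal{K}_{l,m+1}\setminus\mathcal{K}_{l,m}=\{k:m_\alpha(k)=m\}$, both immediate from $\mathcal{K}_{l,m}=\{k:m_\alpha(k)\le m-1,\ m_\beta(k)\ge m\}$. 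Consequently $\partial\Psi_l/\partial a_{l,m}\propto d_{m+1}-d_m=F_{l,m+1}-F_{l,m}$, so that $\nabla\Psi_l=0$ is equivalent to the chain $F_{l,2}=F_{l,3}=\cdots=F_{l,n_l}$.

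Finally I would close with a normalization identity. Telescoping gives $\sum_{m=2}^{n_l}d_m\,(a_{l,m}-a_{l,m-1})=|\mathcal{E}|^{-1}\sum_k 1=1$, while the range Assumption \ref{assRange} gives $\sum_{m=2}^{n_l}(a_{l,m}-a_{l,m-1})=a_{l,n_l}-a_{l,1}=1$; subtracting yields $\sum_{m=2}^{n_l}F_{l,m}\,(a_{l,m}-a_{l,m-1})=0$. Hence if all $F_{l,m}$ share a common value $c$ then $c=0$, so stationarity forces $F_{l,m}=0$ for every $m$, which is exactly \eqref{uniformseg} (the $m=n_l$ equation being the redundant one already absorbed by normalization); conversely any solution of \eqref{uniformseg} makes all increments $F_{l,m+1}-F_{l,m}$ vanish and is therefore stationary. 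By convexity the stationary points are precisely the global minimizers of \eqref{optuni}, establishing the equivalence. The main obstacle is the third step: the bookkeeping that turns the gradient into the density increment $F_{l,m+1}-F_{l,m}$, the careful handling of the boundary indices $m=2$ and $m=n_l-1$, and the reduction of the $n_l-1$ ``density-equals-one'' conditions to the $n_l-2$ stationarity equations through the automatic normalization.
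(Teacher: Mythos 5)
Your proof is correct, and while it rests on the same skeleton as the paper's argument (the system \eqref{uniformseg} is the first-order condition of the convex function $\Psi_l$), the execution is genuinely different at every step. The paper works in the opposite direction: it first proves a separate lemma showing that $\mathbf{F}_l=\mathbf{0}$ is equivalent to a ``differenced'' system $\mathbf{G}_l=\mathbf{0}$ via a full-rank bidiagonal matrix $T_l$, then verifies that the $1$-form $\sum_m G_{l,m}\,da_{l,m}$ is closed and invokes the Poincar\'e lemma to \emph{construct} $\Psi_l$ as its potential, and finally proves convexity by showing the Hessian is symmetric diagonally dominant. You instead start from the given $\Psi_l$, collapse the double sum to $-|\mathcal{E}|^{-1}\sum_k\log L_k$, differentiate directly, and identify the gradient components with the density increments $F_{l,m+1}-F_{l,m}$ through the set identities $\mathcal{K}_{l,m}\setminus\mathcal{K}_{l,m+1}=\{k:m_\beta(k)=m\}$ and $\mathcal{K}_{l,m+1}\setminus\mathcal{K}_{l,m}=\{k:m_\alpha(k)=m\}$ (which are exactly the cancellation the paper illustrates with its arrow figure). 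Your telescoping normalization identity $\sum_m F_{l,m}(a_{l,m}-a_{l,m-1})=0$ is a cleaner and more self-contained way of pinning the common stationary value at zero than the paper's appeal to ``the same argument at the end of the proof of Theorem 1,'' and it handles both directions of the equivalence explicitly. Your convexity argument (each summand is $-\log$ of a positive affine function on the ordered region) is more elementary and avoids the paper's slight imprecision of inferring convexity from a positive semidefinite Hessian on a domain that is not convex until one restricts to a single ordering chamber; note that you, like the paper, implicitly restrict the optimization to the region where $a_{l,1}<\cdots<a_{l,n_l}$, which is where the labels are meaningful. As a minor point, your sign convention for $\partial\Psi_l/\partial a_{l,m}$ is internally consistent and in fact corrects a sign slip in the paper's final expression for $G_{l,m}$; since only the zero set matters, this has no bearing on the result.
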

Since the optimization problem is convex, if a solution exists it is a global minimum.
In addition, since $\Psi_l\left(\mathbf{a}_l\right)$ is a sum of lower semi-continuous functions, it is lower semi-continuous. This, together with the compactness of the unit hypercube, guarantees the existence of a solution, by the lower version of Weierstrass theorem (see for example Theorem 2.43 in \cite{aliprantis2007infinite}).  

In the next theorem we show that our construction satisfies an optimality criteria involving  the Kullback–-Leibler (KL) divergence from the uniform distribution. We remind the reader that the KL divergence of the probability measure $\mathbb{P}$ with respect to the probability measure $\mathbb{Q}$ is 
\begin{eqnarray*}
D_{KL}\left(\mathbb{P}\vert\vert \mathbb{Q}\right)&=& \mathbb{E}_{\mathbb{P}}\left[\log\left(\dfrac{d\mathbb{P}}{d\mathbb{Q}}\right)\right].
\end{eqnarray*}
In our case  $\mathbb{P}$ is the joint measure of $U_l$ and $K$ given by the stochastic representation \eqref{simplevar}:
\begin{eqnarray*}
d\mathbb{P}\left( u_l,k\right)&=& \dfrac{1}{\left\vert\mathcal{E}\right\vert}f\left(u_l\right\vert \left.K=k\right)  du_l
\end{eqnarray*}
$\mathbb{Q}$ is the joint uniform independent measure in $U_l$ and $K$:
\begin{eqnarray*}
d\mathbb{Q}\left( u_l,k\right)&=&\dfrac{1}{\left\vert\mathcal{E}\right\vert} \mathbb{I}_{ \left[0,1\right]}\left(u_l\right)du_l\\
\end{eqnarray*}

\begin{theo}\label{KL} The minimization problem in \eqref{optuni} is equivalent to minimizing the Kullback-Leibler divergence of the joint measure of $U_l$ and $K$ given by the stochastic representation \eqref{simplevar} from a joint uniform independent measure in $U_l$ and $K$.
\end{theo}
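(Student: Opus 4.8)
The plan is to compute $D_{KL}(\mathbb{P}\Vert\mathbb{Q})$ in closed form and to recognise the result, as a function of the free coordinates $(a_{l,2},\ldots,a_{l,n_l-1})$, as exactly the objective $\Psi_l(\mathbf{a}_l)$ from Theorem \ref{theoopt}. Once the two functions are shown to coincide pointwise, the two minimisation problems trivially share the same minimisers, and the stated equivalence follows.

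First I would write down the Radon--Nikodym derivative. Since $d\mathbb{P}$ and $d\mathbb{Q}$ carry the same discrete factor $1/\vert\mathcal{E}\vert$ for the edge label $K$, this factor cancels and on $[0,1]$ one has $\frac{d\mathbb{P}}{d\mathbb{Q}}(u_l,k)=f(u_l\mid K=k)$. Absolute continuity $\mathbb{P}\ll\mathbb{Q}$ holds because, by Assumption \ref{assRange} and the convex-combination form \eqref{simplevar}, each conditional support $[\alpha_{l,k},\beta_{l,k}]$ lies inside $[0,1]$, while Assumption \ref{assAdmiss} (equivalently Remark \ref{remadmis}) guarantees $\beta_{l,k}>\alpha_{l,k}$, so the logarithms appearing below are finite and the diagonal pairs $m=m'$ never contribute.

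Next I would invoke Lemma \ref{notparrallel}, which gives $f(u_l\mid K=k)=(\beta_{l,k}-\alpha_{l,k})^{-1}\mathbb{I}_{[\alpha_{l,k},\beta_{l,k}]}(u_l)$. Substituting into
\begin{equation*}
D_{KL}(\mathbb{P}\Vert\mathbb{Q})=\sum_{k=1}^{\vert\mathcal{E}\vert}\frac{1}{\vert\mathcal{E}\vert}\int_0^1 f(u_l\mid K=k)\,\log f(u_l\mid K=k)\,du_l,
\end{equation*}
and using that the integrand is constant on $[\alpha_{l,k},\beta_{l,k}]$, each summand integrates to $-\log(\beta_{l,k}-\alpha_{l,k})$, so that
\begin{equation*}
D_{KL}(\mathbb{P}\Vert\mathbb{Q})=-\frac{1}{\vert\mathcal{E}\vert}\sum_{k=1}^{\vert\mathcal{E}\vert}\log\left\vert a_{l,m_\beta(k)}-a_{l,m_\alpha(k)}\right\vert .
\end{equation*}
Finally I would re-index the sum over edges $k$ by the pairs of projected nodes they join: grouping the $\vert\mathcal{E}\vert$ edges according to their image $(m(k),m'(k))$ in $\mathcal{G}_l$ and collecting them with the multiplicities $n^l_{(m,m')}$ converts the single edge sum into the symmetric double sum $\tfrac{1}{2}\sum_{m,m'}n^l_{(m,m')}\log\vert a_{l,m'}-a_{l,m}\vert$. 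This is precisely $\Psi_l(\mathbf{a}_l)$, whence $D_{KL}(\mathbb{P}\Vert\mathbb{Q})=\Psi_l(\mathbf{a}_l)$ and the equivalence with \eqref{optuni} is immediate.

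The only delicate point is the bookkeeping in this last step. One must track that $\alpha_{l,k},\beta_{l,k}$ are the \emph{min} and \emph{max} of the two projected values, so that the logarithm always sees the positive gap, and that the passage from the edge sum to the multiplicity-weighted double sum respects the symmetric convention $n^l_{(m,m')}=n^l_{(m',m)}$ that underlies the factor $\tfrac12$ in the definition of $\Psi_l$. Everything else reduces to the elementary entropy of a uniform density on an interval.
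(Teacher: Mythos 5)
Your proposal is correct and follows essentially the same route as the paper: read off the Radon--Nikodym derivative $f(u_l\mid K=k)$ after noting that the $1/\vert\mathcal{E}\vert$ factors cancel and that Assumption \ref{assRange} gives absolute continuity, compute the entropy of each conditional uniform to get $-\frac{1}{\vert\mathcal{E}\vert}\sum_k \log(\beta_{l,k}-\alpha_{l,k})$, and re-index over projected node pairs with multiplicities $n^l_{(m,m')}$ to recover $\Psi_l(\mathbf{a}_l)$. The only cosmetic difference is that the paper explicitly invokes the convention $0\log 0=0$ to discard the indicator term inside the logarithm, which you handle implicitly by integrating only over the support.
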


The results in Theorems \ref{unif}-\ref{KL} allow us to find the $l$-th row $\mathbf{a}_l$, of the coordinates matrix $X$ such that $U_l$ in representation \eqref{simplevar} is standard uniform. Thus,  we have $d$ minimization problems each with a different  number of variables, $n_l$, where the dependence structure in $\mathbf{U}$ is  given by the graph $\mathcal{G}$ of the segments in $\mathcal{S}$ and marginally encoded by $\mathbf{a}_l$ and a projected graphs $\mathcal{G}_l$.
\begin{definition}\label{def:optuncostr}
Let $\mathcal{S}=(\mathcal{G},\mathbf{X})$ be the segment set with $\mathcal{G} = \left\{\mathcal{V},\mathcal{E}\right\}$. The convex minimization problem
\begin{eqnarray*}
\displaystyle\min_{\left\{a_{l,m},m=1,\ldots,n_l,\,l=1,\ldots,d\right\}\in \left[0,1\right]^{n_1+\ldots+n_d}}\sum^d_{l=1}\Psi_l\left(\mathbf{a}_l\right),
\end{eqnarray*}
is called standard uniform on $\mathcal{S}$ problem.
\end{definition}

\subsection{Strict Countermonotonicity on Segments}
Limiting the study to strict $d$-CTM leads to an unique value for the constant $k$ in \eqref{lcounter}:
\begin{eqnarray*}
k=\mathbb{E}\left[\sum^d_{j=1}U_j\right]=\sum^d_{j=1}\mathbb{E}\left[U_j\right]=\dfrac{d}{2}
\end{eqnarray*}
 The constant sum condition can be written as a linear restriction on the coordinates of the vertexes $\mathbf{x}_k$, that is
\begin{eqnarray*}\label{CMcond}
\sum^{d}_{l=1}U_l&=&\sum^{d}_{l=1} x_{lJ} +V\left[\sum^{d}_{l=1}x_{lI}-\sum^{d}_{l=1}x_{lJ}\right]=\dfrac{d}{2}
\end{eqnarray*}\\ and since the previous relationship should be valid for all $V$ and $\left(I,J\right)$ (i.e. for all $W$ in our setting) we obtain the condition that all vertexes should be in the hyperplane of constant sum, i.e. 
\begin{eqnarray}\begin{array}{ccc}\label{segdcm}
\displaystyle\sum^{d}_{l=1} x_{lk} =\sum^{d}_{l=1}\sum^{n_l}_{m=1}a_{l,m}\mathbb{I}_{\mathcal{M}_{l,m}}\left(k\right)=\dfrac{d}{2}&& k =1,\ldots,n.\end{array}
\end{eqnarray}

The convexity of the minimization problem in Eq. \eqref{def:optuncostr} is not altered by the inclusion of a linear constraint, nevertheless the constraint couples the coordinates and yields the following non-separable optimization problem in $n_d=\sum^d_{l=1}n_l$ variables.

\begin{definition}\label{mixonseg}
Let $\mathcal{S}=(\mathcal{G},\mathbf{X})$ be the segment set with $\mathcal{G} = \left\{\mathcal{V},\mathcal{E}\right\}$. The convex minimization problem:
\begin{eqnarray*}
\displaystyle\min_{\left\{a_{l,m},m=1,\ldots,n_l,\,l=1,\ldots,d\right\}\in \left[0,1\right]^{n_1+\ldots+n_d}}\sum^d_{l=1}\Psi_l\left(\mathbf{a}_l\right)
\end{eqnarray*}
subject to:
\begin{eqnarray*}
\sum^{d}_{l=1}\sum^{n_l}_{m=1}a_{l,m}\mathbb{I}_{\mathcal{M}_{l,m}}\left(k\right)=\dfrac{d}{2},&& k =1,\ldots,n
\end{eqnarray*}
will be referred as the strict $d$-CTM on $\mathcal{S}$ problem.
\end{definition}

Since all the constraints are affine the problem in the above definition represents an ordinary convex problem in the terminology of \cite{rockafellar1970convex} and can be solved using the method of Lagrange multipliers. Local minima are also global if they exist. Finally, the existence of a solution is guaranteed by lower semi-continuity of $\sum^d_{l=1}\Psi_l\left(\mathbf{a}_l\right)$ and the fact that the intersection of the unit hypercube with the hyperplane of constant sum is compact.

\subsection{Antithetic Vector Transformations}
We introduce some transformations of the random vector of representation \eqref{simplevar} which are useful to study the properties of our sampling on line segments. These transformations preserve uniformity and provide a general framework to the construction of antithetic vectors. The use of transformations to span the set of antithetic vectors was pioneered by \cite{Andreasson1972} and \cite{AndreassonDahlquist1972} and discussed further by \cite{Roach1977}. 
First, we define deterministic compositions and present reflections as a particular case. We provide sufficient conditions for preserving strict countermonotoniticy under deterministic compositions. Secondly, we introduce stochastic compositions as particular instances of probabilistic mixtures that preserve uniformity and the strict countermonotonicity property. In addition, we show that permutations preserve strict countermonotonicity. Finally, we  generalize of our stochastic representation \eqref{simplevar} inspired by the deterministic composition.

The notion of deterministic composition relies on the interpretation of the components of the stochastic representation \eqref{simplevar} as a deterministic map on the unit interval. Let $\mathbf{X}$ be a solution to the standard uniform on $\mathcal{S}^{(x)}=(\mathcal{G}^{(x)},\mathbf{X})$  with edge set $\mathcal{E}^{(x)}$ and let $T_{x,l,k} : \left[0,1\right] \mapsto  \left[0,1\right]$
\begin{equation}
     T_{x,l,k}\left(u\right)= x_{li(k)} u + x_{lj(k)} \left(1-u\right). 
\end{equation}
Then each element of \eqref{simplevar} can be written as
\begin{equation}
U_l= T_{x,l,K}\left(V\right)
\end{equation}
with $U_l$, $K$, and $V$ as in \eqref{simplevar}.
Let $\mathbf{Y}$ be  a different solution to the standard uniform $\mathcal{S}^{(y)}=(\mathcal{G}^{(y)},\mathbf{Y})$  with edge set $\mathcal{E}^{(y)}$. It induces a map $T_{y,l,k} : \left[0,1\right] \mapsto  \left[0,1\right]$
\begin{equation}
     T_{y,l,k}\left(u\right)= y_{li(k)} u + y_{lj(k)} \left(1-u\right). 
\end{equation}
The deterministic composition is defined as $T_{y,l,k^{\prime}}\circ T_{x,l,k} \left(u\right)=T_{y,l,k}\left(T_{x,l,k}(u)\right)$. The following lemma states that the random vector with components $T_{y,l,k^{\prime}}\circ T_{x,l,k}\left(U_l\right)$ can be represented as in \eqref{simplevar} and provides the set of segments  associated with the composition.

\begin{lemm}\label{detcompunif}
The deterministic composition $T_{z,l,k}\left(V\right)=T_{y,l,k^{\prime}}\circ T_{x,l,k} \left(V\right)$, $l\in\mathcal{D}$ is uniform on  the segments $\mathcal{S}'=(\mathcal{G}',\mathbf{Z})$ with graph $\mathcal{G}'=(\mathcal{V}',\mathcal{E}')$, vertex set $\mathcal{V}'=\mathcal{E}^{(y)}\times\mathcal{V}^{(x)}$, and edge set  $\mathcal{E}=\mathcal{E}^{(y)}\times\mathcal{E}^{(x)}$. The coordinate matrix $\mathbf{Z}$ has elements 
\begin{eqnarray*}
z_{l,m}=\left( y_{li\left(k\right)}x_{li^{\prime}} + y_{lj\left(k\right)}\left(1-x_{li^{\prime}}\right)\right),
\end{eqnarray*}
where the index $m\in \mathcal{V}'$ corresponds to the pair $\left(k,i^{\prime}\right)$ with $k\in\left\{1,\ldots,\left\vert \mathcal{E}^{(y)}\right\vert\right\}$ and $i^{\prime}\in \{1,\ldots,|\mathcal{V}^{(x)}|\}$ sorted in lexicographic order.
\end{lemm}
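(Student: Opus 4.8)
The plan is to prove that the composite random vector is distributed exactly as the sampling construction \eqref{simplevar} applied to $\mathcal{S}'=(\mathcal{G}',\mathbf{Z})$; the explicit forms of $\mathcal{V}'$, $\mathcal{E}'$ and $\mathbf{Z}$ then fall out of that identification. To make the composition stochastic I would draw $V\sim\mathcal{U}[0,1]$, an $\mathbf{X}$-edge $K$ uniformly on $\mathcal{E}^{(x)}$ and a $\mathbf{Y}$-edge $K'$ uniformly on $\mathcal{E}^{(y)}$, all mutually independent, and set $U_l^{\star}=T_{y,l,K'}\!\left(T_{x,l,K}(V)\right)$ for $l\in\mathcal{D}$. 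Here $T_{x,l,K}(V)$ is the $\mathbf{X}$-sample of \eqref{simplevar}, which lies in $[0,1]$ by the range condition, so that $T_{y,l,K'}$ may legitimately be applied to it.

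The key step is the affine algebra of the composition. Fixing $K=k$ with $e_k=(i(k),j(k))$ and $K'=k'$, I would substitute $u=T_{x,l,k}(V)=x_{li(k)}V+x_{lj(k)}(1-V)$ into $T_{y,l,k'}(u)=y_{li(k')}u+y_{lj(k')}(1-u)$, write $y_{lj(k')}=y_{lj(k')}[V+(1-V)]$, and regroup the coefficients of $V$ and of $(1-V)$. This gives
\[
U_l^{\star}=V\,T_{y,l,k'}\!\left(x_{li(k)}\right)+(1-V)\,T_{y,l,k'}\!\left(x_{lj(k)}\right),
\]
and since $T_{y,l,k'}(x_{li'})=y_{li(k')}x_{li'}+y_{lj(k')}(1-x_{li'})$ is precisely the coordinate $z_{l,m}$ assigned to the new vertex $m=(k',i')\in\mathcal{E}^{(y)}\times\mathcal{V}^{(x)}$, the display says that, conditional on $(K',K)=(k',k)$, the vector $U_l^{\star}$ is the affine interpolation along the $\mathcal{S}'$-segment joining the vertices $(k',i(k))$ and $(k',j(k))$ --- i.e.\ the segment of the edge $(k',e_k)\in\mathcal{E}^{(y)}\times\mathcal{E}^{(x)}$. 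This simultaneously reads off $\mathcal{V}'=\mathcal{E}^{(y)}\times\mathcal{V}^{(x)}$, $\mathcal{E}'=\mathcal{E}^{(y)}\times\mathcal{E}^{(x)}$ and the stated form of $\mathbf{Z}$, and it confirms $z_{l,m}\in[0,1]$ because $T_{y,l,k'}$ maps $[0,1]$ into $[0,1]$.

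It then remains to match the two sampling laws. Conditional on the edges, $V\mapsto U_l^{\star}$ is affine and $V$ is uniform, so the conditional law of $U_l^{\star}$ is uniform on its $\mathcal{S}'$-segment, as required by step~4 of \eqref{simplevar}. Because $K$ and $K'$ are independent and each uniform, the pair $(K',K)$ is uniform on $\mathcal{E}^{(y)}\times\mathcal{E}^{(x)}$, every element carrying probability $1/(|\mathcal{E}^{(y)}|\,|\mathcal{E}^{(x)}|)=1/|\mathcal{E}'|$, which is exactly the uniform edge selection of step~2 of the construction on $\mathcal{S}'$. Since the bijection $(k',k)\mapsto(k',e_k)$ sends the uniform law on the product onto the uniform law on $\mathcal{E}'$, the composite and the $\mathcal{S}'$-sampling share the same joint law of (edge, point), and hence the same law of $\mathbf{U}^{\star}$, proving the claim.

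I expect the only delicate part to be the index bookkeeping: keeping the $\mathbf{X}$-edge, the $\mathbf{Y}$-edge and the $\mathbf{X}$-vertex indices distinct and consistent with the lexicographic labelling of $\mathcal{V}'$ and $\mathcal{E}'$ through the correspondence $m\leftrightarrow(k',i')$. The affine identity and the product-measure argument are both routine once that correspondence is fixed; note in particular that admissibility of $\mathcal{S}'$ is \emph{not} required here, since conditional uniformity on each segment holds for any genuine segment by the affine parametrisation alone.
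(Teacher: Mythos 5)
Your proposal is correct and follows essentially the same route as the paper: substitute the inner affine map into the outer one, regroup the coefficients of $V$ and $(1-V)$ to read off the new vertex coordinates $z_{l,m}=T_{y,l,k'}(x_{li'})$, and observe that independent uniform draws on $\mathcal{E}^{(y)}$ and $\mathcal{E}^{(x)}$ induce the uniform law on the product edge set. Your write-up is in fact more explicit than the paper's (which states the regrouped affine identity and the uniform edge sampling without spelling out the product-measure step), but no new idea is involved.
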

We provide a sufficient condition for the deterministic composition to satisfy the constant sum constraint and leave the necessary conditions for future research.
\begin{lemm}\label{CTMcomp}
 Let $\mathbf{X}$ be a solution of the strict countermonotonic problem on $\mathcal{S}^{(x)}=(\mathcal{G}^{(x)},\mathbf{X})$ 
and $\mathbf{Y}$ a solution of the standard uniform problem on $\mathcal{S}^{(y)}=(\mathcal{G}^{(y)},\mathbf{Y})$ with the additional constraints that for every $l\in\mathcal{D}$ and $k\in\left\{1,\ldots,\mathcal{E}^{(y)}\right\}$
\begin{eqnarray}\label{c1c2}\nonumber
|y_{li\left(k\right)}-y_{lj\left(k\right)}|= c_1,\quad \sum^{d}_{l=1}y_{lj\left(k\right)}=c_2.
\end{eqnarray}
Let $W_l = T_{y,l,k^{\prime}}\circ T_{x,l,k} \left(V\right)$, $l\in\mathcal{D}$
then $\sum^{d}_{l=1} W_l= d/2$ if and only if 
\begin{equation}
\left(1-c_1\right) \dfrac{d}{2}=c_2.
\end{equation}

\end{lemm}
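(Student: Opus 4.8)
The plan is to reduce the almost-sure constant-sum condition on $\mathbf{W}=(W_1,\dots,W_d)$ to a condition on the \emph{vertices} of the composed segment set, for which the characterization already derived in \eqref{segdcm} applies verbatim. By Lemma~\ref{detcompunif}, the composed maps $W_l=T_{y,l,k'}\circ T_{x,l,k}(V)$ are exactly the standard construction \eqref{simplevar} run on $\mathcal{S}'=(\mathcal{G}',\mathbf{Z})$; a generic vertex $m$ of $\mathcal{S}'$ corresponds to a pair $(k,i')$, with $k$ an edge of $\mathcal{S}^{(y)}$ and $i'$ a vertex of $\mathcal{S}^{(x)}$, and has coordinates $z_{l,m}=y_{li(k)}x_{li'}+y_{lj(k)}(1-x_{li'})$. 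Since $\mathbf{W}$ is a standard construction on $\mathcal{S}'$, the argument producing \eqref{segdcm} --- that $\sum_l W_l$ is a.s.\ constant if and only if every vertex lies on the hyperplane $\sum_l x_l=d/2$ --- shows that $\sum_{l=1}^d W_l=d/2$ a.s.\ if and only if $\sum_{l=1}^d z_{l,m}=d/2$ for every vertex $m$ of $\mathcal{S}'$. This single step absorbs the dependence on $V$ and on the random edge, leaving only a finite family of linear equations in the $z_{l,m}$.

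First I would evaluate $\sum_l z_{l,m}$ for a generic vertex $m\leftrightarrow(k,i')$. Regrouping its definition gives
\[
\sum_{l=1}^d z_{l,m}=\sum_{l=1}^d y_{lj(k)}+\sum_{l=1}^d x_{li'}\bigl(y_{li(k)}-y_{lj(k)}\bigr).
\]
The first sum equals $c_2$ by the hypothesis $\sum_{l=1}^d y_{lj(k)}=c_2$. For the second sum I would use that $\mathbf{X}$ solves the strict countermonotonic problem, so by \eqref{segdcm} every vertex of $\mathcal{S}^{(x)}$ lies on the constant-sum hyperplane, i.e.\ $\sum_l x_{li'}=d/2$ for each $i'$; combined with the hypothesis $|y_{li(k)}-y_{lj(k)}|=c_1$, oriented so that each coordinate difference equals $+c_1$, this collapses the second sum to $c_1\sum_l x_{li'}=c_1\,(d/2)$. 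Hence $\sum_l z_{l,m}=c_2+c_1\,(d/2)$, and crucially this value is the same for every vertex $m$: it is independent of the $\mathbf{X}$-vertex $i'$ (all vertex sums equal $d/2$) and of the $\mathbf{Y}$-edge $k$ (all lower endpoints sum to $c_2$).

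The equivalence then follows immediately. Because every $\sum_l z_{l,m}$ equals the single number $c_2+c_1\,(d/2)$, the vertex condition $\sum_l z_{l,m}=d/2$ holds for all $m$ precisely when $c_2+c_1\,(d/2)=d/2$, i.e.\ when $(1-c_1)\,(d/2)=c_2$. Chaining this with the reduction of the first paragraph yields $\sum_{l=1}^d W_l=d/2$ a.s.\ if and only if $(1-c_1)\,(d/2)=c_2$, which is the claim.

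The main obstacle is the sign bookkeeping concealed in the identity $\sum_l x_{li'}(y_{li(k)}-y_{lj(k)})=c_1\sum_l x_{li'}$. The hypothesis controls only $|y_{li(k)}-y_{lj(k)}|$, so a priori each coordinate could contribute $\pm c_1$; a sign-weighted sum $c_1\sum_l \sigma_l\, x_{li'}$ would in general vary with the vertex $i'$ and destroy the clean, vertex-independent value on which the equivalence rests. I would settle this by fixing, for each $\mathbf{Y}$-edge, a common orientation of its endpoints across all coordinates $l$ --- equivalently by absorbing the signs into coordinate reflections, which are deterministic compositions and therefore preserve uniformity --- and then verifying that the two hypotheses $|y_{li(k)}-y_{lj(k)}|=c_1$ and $\sum_l y_{lj(k)}=c_2$ survive this reorientation. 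Checking that compatibility is the only genuinely delicate point; once the orientation is fixed, the remaining algebra is routine.
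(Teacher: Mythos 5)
Your proposal is correct and takes essentially the same route as the paper: the paper's proof regroups $\sum_{l}W_l=\sum_{l}(y_{lI}-y_{lJ})U_l+\sum_{l}y_{lJ}=c_1\,(d/2)+c_2$ directly on the random variables, which is exactly the algebra you perform at the vertex level of the composed segment set via Lemma~\ref{detcompunif}. Your remark about the sign ambiguity hidden in $|y_{li(k)}-y_{lj(k)}|=c_1$ is well taken --- the paper's own proof silently writes $y_{lI}-y_{lJ}=c_1$ without addressing it --- and fixing a common orientation of each $\mathbf{Y}$-edge across coordinates is the right way to make that step rigorous.
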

\begin{rema}
 $\mathbf{Y}$ in Lemma \ref{CTMcomp} cannot be a  strict countermonotonic solution on $\mathcal{S}^{(y)}=(\mathcal{G}^{(y)},\mathbf{Y})$, because this implies $c_2= \dfrac{d}{2}$ and $c_1= 2$ while the maximum value of $c_1$ is $1$ for all solutions of standard uniform problems by the Assumption \ref{assRange}.
\end{rema}

Let $I_d$ denote the identity matrix and $\mathbf{e}_l$ its $l$-th column. Reflections are defined in the following.
\begin{definition}[Reflections]\label{def:reflection}
Let $\mathbf{U}\in [0,1]^{d}$ be a random vector, the transformed vector $\mathbf{U}\mapsto \mathbf{W}=R_{l,\frac{1}{2}}\left(\mathbf{U}\right)$ with $R_{l,\frac{1}{2}}\left(\mathbf{U}\right)= \left[I_{d}-2\left(\mathbf{e}_l\mathbf{e}^T_l\right)\right]\mathbf{U}+\mathbf{e}_l$, defines a reflection of $\mathbf{U}$ through an hyperplane with $l$-coordinate equal to 1/2.
Given a index subset $\mathcal{L}\subseteq \mathcal{D}$, the sequential reflection transformation $R_{\mathcal{L},\frac{1}{2}}(\mathbf{U})=[I_{d}-2(\sum_{l\in\mathcal{L}}\mathbf{e}_l\mathbf{e}^T_l)]\mathbf{U}-\sum_{l\in\mathcal{L}}\mathbf{e}_l$ is the transformation obtained by sequentially reflecting $\mathbf{U}$ for a sequence of indexes $l\in\mathcal{L}$ through a hyperplane with $l$-coordinate equal to 1/2. $R_{\mathcal{D},\frac{1}{2}}(\mathbf{U})= (I_{d}-2I_{d})\mathbf{U}+\mathbf{1}_d=(1-U_1,\ldots,1-U_l,\ldots,1-U_d)$ is the central inversion through the center of the unit hypercube.
\end{definition}


If $\mathbf{U}$ is uniform on segments, the reflection can be interpreted as a deterministic composition. Let $l\in \mathcal{D}$ and
\begin{eqnarray*}
y_{l,1}=\left\{\begin{array}{ccc}
 1 & \hbox{if} & l\in \mathcal{L}\\
0 & \hbox{if} &  \hbox{otherwise}
\end{array}\right.\\
y_{l,2}=\left\{\begin{array}{ccc}
 0 & \hbox{if} &  l\in \mathcal{L} \\
1 & \hbox{if} &  \hbox{otherwise}
\end{array}\right.
\end{eqnarray*}
Using the definition of reflection we can write the elements of $\mathbf{W}=R_{\mathcal{L},\frac{1}{2}}(\mathbf{U})$ as:
\begin{eqnarray*}
W_{l} = y_{l,1} U_{l} + y_{l,2} \left(1-U_{l}\right).
\end{eqnarray*}
If $\mathbf{U}$ is strict $d$-CTM on segments and since $c_1=1$ and $c_2=d-\left\vert \mathcal{L}\right\vert$,  according to Lemma \ref{CTMcomp}  $\mathbf{W}$ is $d$-CTM only if $\mathcal{L}= \mathcal{D}$.

The following notion of stochastic composition augments the class of  deterministic compositions and it is useful for constructing $d$-CTM vectors on segments.

\begin{lemm}[Stochastic Composition]\label{stochcomp}
Let $\mathbf{U}^{t}$ , $t=1,2$ be strict $d$-CTM vectors on $\mathcal{S}^{t}=(\mathcal{G}^{t},\mathbf{X}^{t})$, with $\mathcal{E}^t$ the edge set of $\mathcal{G}_t$, the random vector
\begin{equation*}
\mathbf{W} =\left\lbrace \begin{array}{ccc} 
\mathbf{U}^{1}, &\hbox{with probability}&
\dfrac{\left\vert\mathcal{E}^1\right\vert}{\left\vert\mathcal{E}^2\sqcup \mathcal{E}^1\right\vert}\\
\mathbf{U}^{2}, &\hbox{with probability}&
\dfrac{\left\vert\mathcal{E}^2\right\vert}{\left\vert\mathcal{E}^1\sqcup \mathcal{E}^2\right\vert}
\end{array}\right. ,    
\end{equation*}
defines a stochastic composition of $U^1$ and $U^2$ and is strict $d$-CTM on $\mathcal{S}^{\prime}=(\mathcal{G}^1\sqcup \mathcal{G}^2,\left(\mathbf{X}^1,\mathbf{X}^2\right))$, where $\sqcup$ denotes the disjoint union of graphs.
\end{lemm}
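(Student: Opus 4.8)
The plan is to verify, for $\mathbf{W}$, the two defining properties of a strict $d$-CTM vector — $\ru(0,1)$ marginals and an almost-sure constant sum equal to $d/2$ — after first checking that the stated mixture coincides with the stochastic representation \eqref{simplevar} run on $\mathcal{S}'=(\mathcal{G}^1\sqcup\mathcal{G}^2,(\mathbf{X}^1,\mathbf{X}^2))$.

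First I would identify the mixture with the construction on $\mathcal{S}'$. Since the union of graphs is disjoint, the edge set of $\mathcal{G}^1\sqcup\mathcal{G}^2$ is $\mathcal{E}^1\sqcup\mathcal{E}^2$, of cardinality $|\mathcal{E}^1|+|\mathcal{E}^2|$, and every edge joins two vertices lying in the same factor. Ordering the edges so that the first $|\mathcal{E}^1|$ belong to $\mathcal{G}^1$, the uniform edge selection $K=\lfloor|\mathcal{E}^1\sqcup\mathcal{E}^2|\,W\rfloor+1$ falls in $\mathcal{E}^1$ with probability $|\mathcal{E}^1|/(|\mathcal{E}^1|+|\mathcal{E}^2|)$ and, conditional on that event, picks each edge of $\mathcal{G}^1$ with equal probability. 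Because the selected edge lies entirely inside one factor, step~4 of the construction then samples a point on a segment of $\mathcal{S}^1$ exactly as it would when the representation is run on $\mathcal{S}^1$ alone, reproducing $\mathbf{U}^1$; the symmetric statement holds for $\mathcal{G}^2$. Hence the representation on $\mathcal{S}'$ returns $\mathbf{U}^1$ with probability $|\mathcal{E}^1|/|\mathcal{E}^1\sqcup\mathcal{E}^2|$ and $\mathbf{U}^2$ otherwise, which is precisely $\mathbf{W}$.

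Next I would establish the two strict $d$-CTM properties. For marginal uniformity, fix $l\in\mathcal{D}$: since $\mathbf{U}^1$ and $\mathbf{U}^2$ are strict $d$-CTM, both $U^1_l$ and $U^2_l$ are $\ru(0,1)$, and $W_l$ is the two-component mixture of these laws; a mixture of two copies of the same distribution is that distribution, so $W_l\sim\ru(0,1)$ regardless of the weights. For the constant-sum property, strict $d$-CTM forces $\sum_{l=1}^{d}U^t_l=d/2$ almost surely for $t=1,2$, so conditioning on which component is selected gives $\sum_{l=1}^{d}W_l=d/2$ almost surely on both events. Together these show $\mathbf{W}$ is strict $d$-CTM on $\mathcal{S}'$.

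The only step requiring genuine care is the mixture identification: one must invoke the disjointness of the union to guarantee that no edge of $\mathcal{G}^1\sqcup\mathcal{G}^2$ crosses between the two factors, so that conditional on the chosen component the segment sampling reproduces the corresponding vector verbatim and the mixing weights emerge exactly proportional to the edge counts. Once this is in place, the uniformity and constant-sum arguments are immediate.
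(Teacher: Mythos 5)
Your proposal is correct and follows essentially the same route as the paper: both arguments condition on which of the two edge sets the uniformly selected edge falls in, observe that the selection probabilities are exactly $\left\vert\mathcal{E}^1\right\vert/\left\vert\mathcal{E}^1\sqcup\mathcal{E}^2\right\vert$ and $\left\vert\mathcal{E}^2\right\vert/\left\vert\mathcal{E}^1\sqcup\mathcal{E}^2\right\vert$, and then inherit uniformity and the constant sum from each component. The only (cosmetic) difference is that you argue at the level of distributions --- a mixture of two $\ru(0,1)$ laws is $\ru(0,1)$ --- whereas the paper verifies the analytic sufficient condition \eqref{uniformseg} on $\mathcal{K}_{u_l}=\mathcal{K}^1_{u_l}\sqcup\mathcal{K}^2_{u_l}$; these are the same computation, and your explicit identification of the mixture with the representation \eqref{simplevar} on $\mathcal{S}'$ is a step the paper leaves implicit.
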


Another transformation used later on in this paper to obtain exchangeable random vectors is the following.
\begin{lemm}[Permutation]\label{permdctm}
Let $\mathbf{U}\in [0,1]^{d}$ be a random vector and $\left\lbrace \pi\left(1\right)\right.,$  $\ldots,$ $\left. \pi\left(d\right)\right\rbrace$  a permutation of $\mathcal{D}$ with $P_{\pi}= \left(\mathbf{e}_{\pi\left(1\right)},\ldots,\mathbf{e}_{\pi\left(d\right)}\right)^{T}$ the corresponding row-permutation matrix. $\mathbf{W}$ is a permutation of  $\mathbf{U}$ if $\mathbf{W}=P_{\pi}\mathbf{U}$. If $\mathbf{U}$ is strict $d$-CTM on $\mathcal{S}=(\mathcal{G},\mathbf{X})$ then $\mathbf{W}$ is strict $d$-CTM on $\mathcal{S}^{\prime}=(\mathcal{G},P_{\pi}\mathbf{X})$. 
\end{lemm}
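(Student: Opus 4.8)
The plan is to show that applying the row-permutation $P_\pi$ to $\mathbf{U}$ is nothing more than a relabelling of the coordinates of the stochastic representation \eqref{simplevar}, so that $\mathbf{W}$ is literally produced by the \emph{same} construction on the permuted segment set $\mathcal{S}'=(\mathcal{G},P_\pi\mathbf{X})$, driven by the same uniform $V$ and the same random edge $K$. Concretely, from $P_\pi=(\mathbf{e}_{\pi(1)},\ldots,\mathbf{e}_{\pi(d)})^{T}$ one reads off $W_l=(P_\pi\mathbf{U})_l=U_{\pi(l)}$, while the entries of the permuted coordinate matrix are $(P_\pi\mathbf{X})_{l,k}=x_{\pi(l),k}$. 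First I would substitute these permuted coordinates into \eqref{simplevar}: for the edge $(I,J)$ selected by $K$, $x_{\pi(l),I}V+(1-V)x_{\pi(l),J}=U_{\pi(l)}=W_l$, so the vector generated by the construction on $\mathcal{S}'$ coincides almost surely, coordinate by coordinate, with $\mathbf{W}$. This identifies $\mathbf{W}$ as a sample uniform on the segments $\mathcal{S}'$.

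Next I would check that $\mathcal{S}'=(\mathcal{G},P_\pi\mathbf{X})$ is a legitimate segment set for which ``strict $d$-CTM'' is well posed, i.e. that it still satisfies Assumptions \ref{assAdmiss}--\ref{assCoord}, so that $\mathbf{W}$ indeed solves the standard uniform problem on $\mathcal{S}'$. The key observation is that $P_\pi$ leaves the graph $\mathcal{G}$ (hence the edge set $\mathcal{E}$ and the selection mechanism) untouched and merely permutes the \emph{rows} of $\mathbf{X}$. Admissibility is invariant because permuting coordinates only permutes which $(d-1)$-hyperfaces a segment could be parallel to; the range requirement holds row by row since the $l$-th row of $P_\pi\mathbf{X}$ equals the $\pi(l)$-th row of $\mathbf{X}$ and thus has the same maximum $1$ and minimum $0$; and the coordinate equations \eqref{uniformseg} for the $l$-th row of $P_\pi\mathbf{X}$ are exactly the equations for the $\pi(l)$-th row of $\mathbf{X}$, which hold because $\mathbf{X}$ solves the standard uniform problem. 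By Theorem \ref{unif} each $W_l$ is therefore $\ru(0,1)$; alternatively, $W_l=U_{\pi(l)}$ is simply a relabelled component of the strict $d$-CTM vector $\mathbf{U}$ and so is already $\ru(0,1)$.

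Finally I would verify the constant-sum property that upgrades uniformity to strict $d$-CTM. Since $\pi$ is a bijection of $\mathcal{D}$, $\sum_{l=1}^{d}W_l=\sum_{l=1}^{d}U_{\pi(l)}=\sum_{j=1}^{d}U_j=d/2$ almost surely, so the linear restriction of Definition \ref{mixonseg} is preserved; equivalently, the column sums of the coordinate matrix are invariant, $\sum_{l=1}^{d}x_{\pi(l),k}=\sum_{l=1}^{d}x_{l,k}=d/2$, so $P_\pi\mathbf{X}$ satisfies the constant-sum constraint \eqref{segdcm}. Combining the three steps shows $\mathbf{W}$ is strict $d$-CTM on $\mathcal{S}'$. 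There is no genuine analytic difficulty here: the only point that demands care is the bookkeeping of the index convention $W_l=U_{\pi(l)}$ against $(P_\pi\mathbf{X})_{l,k}=x_{\pi(l),k}$, ensuring that the permutation acting on the random vector matches the permutation acting on the rows of the coordinate matrix, which is precisely what makes the two descriptions of $\mathbf{W}$ agree.
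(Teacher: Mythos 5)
Your proof is correct and follows essentially the same route as the paper's: identify $W_l=U_{\pi(l)}=x_{\pi(l)I}V+(1-V)x_{\pi(l)J}$ as the line-segment construction on $(\mathcal{G},P_\pi\mathbf{X})$ and note that the sum $\sum_l W_l=\sum_l U_l=d/2$ is invariant under permutation. You simply spell out the verification of Assumptions \ref{assAdmiss}--\ref{assCoord} for the permuted coordinate matrix, which the paper leaves implicit.
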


\begin{rema}
From the properties of transformations in Lemma  \ref{stochcomp} and \ref{permdctm}  it follows that the stochastic combination of all the permutations of a strict countermonotonic vector is strict countermonotonic and exchangeable. 
\end{rema}

Using the inner composition we can rephrase standard uniform on segments as the deterministic composition of the maps induced by collection of segments $\mathcal{S}$ with the comonotonic random vector $U_l=V$ for each $l\in \mathcal{D}$ which is a uniform vector on the principal diagonal of the hypercube. In particular, it can be shown that our construction includes the multivariate shuffles of min  \citep{mikusinski2010some}.  This new interpretation of Eq. \eqref{simplevar} allows for a generalization to any random vectors with marginal standard uniforms. 
Given a $d$-dimensional $\mathbf{V}\sim F$ with $V_l\sim \mathcal{U}[0,1]$ for $l = 1,\ldots,d$ and the stochastic representation 
\begin{eqnarray}\label{gensimplevar}
U_1&=& x_{1I}V_1  + x_{1J}\left(1-V_1\right)\nonumber\\
&\vdots&\\\nonumber
U_d&=& x_{dI} V_d + x_{dJ}\left(1-V_d\right),
\end{eqnarray}
we have the following corollary  to Theorem \ref{unif}. 

\begin{coro}\label{genunif}
Under Assumptions \ref{assAdmiss}-\ref{assCoord} the $l$-th component of the random vector $\mathbf{U}=(U_1,\ldots,U_d)$ in the stochastic representation \eqref{gensimplevar} is a standard uniform random variable.
\end{coro}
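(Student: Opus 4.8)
The plan is to reduce the statement to Theorem \ref{unif} by observing that the marginal law of a single coordinate $U_l$ is insensitive to the dependence structure $F$ of $\mathbf{V}$ and depends only on the marginal law of $V_l$ together with the edge-selection mechanism. Concretely, from \eqref{gensimplevar} the $l$-th coordinate can be written as the same affine map that appears in \eqref{simplevar},
\[
U_l = x_{lI}\,V_l + x_{lJ}\,(1-V_l) = T_{x,l,K}(V_l),
\]
where $K$ is the uniformly chosen edge index and $(I,J)=(i(K),j(K))$. Thus $U_l$ is a deterministic function of the pair $(V_l,K)$ alone; the remaining coordinates of $\mathbf{V}$ never enter the expression for $U_l$.

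First I would record that the joint law of $(V_l,K)$ is exactly the one used in \eqref{simplevar}: the edge index $K=\lfloor|\mathcal{E}|W\rfloor+1$ is drawn independently of $\mathbf{V}$ and is uniform on $\{1,\ldots,|\mathcal{E}|\}$, while $V_l\sim\mathcal{U}[0,1]$ by hypothesis. In particular, conditionally on $\{K=k\}$ the variable $V_l$ is still standard uniform, so $U_l\mid\{K=k\}$ is an affine image of a uniform variable and Lemma \ref{notparrallel} applies verbatim: under Assumption \ref{assAdmiss} it is uniform on $[\alpha_{l,k},\beta_{l,k}]$.

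Next I would mix over the edge selection exactly as in the proof of Theorem \ref{unif}. Since the conditional densities and the edge probabilities $\mathbb{P}(K=k)=1/|\mathcal{E}|$ coincide with those of the original construction, the marginal density of $U_l$ is identical to the one obtained there; Assumptions \ref{assRange} and \ref{assCoord} then force this mixture of conditional uniforms to be the $\ru(0,1)$ density. Hence $U_l$ is standard uniform, which is the claim.

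The main point requiring care --- and the only place where the generalization could fail --- is the decoupling in the first step: one must verify that replacing the common $V$ by a coordinate-specific $V_l$ drawn from a vector with arbitrary dependence $F$ does not disturb the marginal of $U_l$. This is precisely the observation that the joint law $F$ governs the dependence \emph{among} the $U_l$ (and hence the countermonotonicity and concordance properties studied later), but leaves each one-dimensional marginal untouched, provided the edge selection $K$ remains independent of $\mathbf{V}$ and each $V_l$ retains its $\mathcal{U}[0,1]$ marginal. Once this reduction is in place, the remainder of the argument is word-for-word that of Theorem \ref{unif}.
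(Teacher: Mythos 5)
Your proof is correct and takes essentially the same route as the paper, which simply notes that the argument of Theorem \ref{unif} goes through after replacing the common $V$ by the coordinate-specific $V_l\sim\mathcal{U}[0,1]$. Your added observation --- that $U_l$ depends only on the pair $(V_l,K)$, so the joint dependence $F$ of $\mathbf{V}$ cannot affect the one-dimensional marginal --- is exactly the implicit justification behind the paper's one-line proof, just made explicit.
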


\section{Distributional Properties}\label{sec:distribution}
\subsection{Distribution}
The joint cumulative distribution function of the stochastic representation $\mathbf{U}$ of Eq. \eqref{gensimplevar} can be written using the distributions of the reflections of $\mathbf{V}$. Given a index subset $\mathcal{L}$, we denote the distribution of the reflection  $R_{\mathcal{L},\frac{1}{2}}\left(\mathbf{V}\right)$ with:
\begin{eqnarray*}
F_{\mathbf{V},\mathcal{L}}\left(\mathbf{u}\right)= \mathbb{P}\left(R_{\mathcal{L},\frac{1}{2}}\left(\mathbf{V}\right)\leq \mathbf{u}\right).
\end{eqnarray*} 
The marginal distribution of $U_l$ conditionally on living on an edge $e_k$ is $\mathcal{U}[\alpha_{l,k},\beta_{l,k}]$ with CDF
\begin{equation}
F_{U_l\vert K}\left(u_l;k\right)=\frac{\max\{\alpha_{l,k},\min\{\beta_{l,k},u_l\}\}-\alpha_{l,k}}{\beta_{l,k}-\alpha_{l,k}}
\end{equation}
where $\alpha_{l,k}$, $\beta_{l,k}$ are defined in Lemma \ref{notparrallel}. For each $k=1,\ldots, n$ define the sets:
\begin{eqnarray*}
\mathcal{L}_{k}^{+} &=& \left\{l\in\left\{1,\ldots,d\right\}: x_{li\left(k\right)}-x_{lj\left(k\right)}\geq 0\right\}\\ \mathcal{L}_{k}^{-} &=& \left\{l\in\left\{1,\ldots,d\right\}: x_{li\left(k\right)}-x_{lj\left(k\right)}< 0\right\}.
\end{eqnarray*}

\begin{theo}\label{th:condjoint}
$\mathbf{U}=\left(U_1 ,\ldots,U_d\right)$ in representation \eqref{gensimplevar} conditioning on $K=k$ has cumulative distribution function $F_{\mathbf{U}\vert K}\left(
u_{1},\ldots,u_{d};k\right)$ given by
\begin{eqnarray*}\label{copula}\nonumber
\mathbb{P}\left(U_1 \leq u_1,\ldots,U_d\leq u_d\left\vert K=k \right.\right)
=F_{\mathbf{V},\mathcal{L}^{-}_{k}}\left(\mathbf{v}_k\right)
\end{eqnarray*}
where $\mathbf{v}_k=(v_{1,k},\ldots,v_{d,k})$ with $v_{l,k}=F_{U_l\left\vert K\right.}\left(u_l;k\right)$.
\end{theo}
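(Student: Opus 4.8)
The plan is to fix the conditioning value $k$, reduce the joint event $\{U_1\le u_1,\dots,U_d\le u_d\}$ coordinate by coordinate to an event on the entries of $\mathbf{V}$, and then recognise the resulting event as the one defining the reflected vector $R_{\mathcal{L}^-_k,\frac12}(\mathbf{V})$ evaluated at the clipped thresholds $\mathbf{v}_k$.

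First I would note that conditioning on $K=k$ fixes the endpoints $I=i(k)$ and $J=j(k)$, and that, because the segment selection (through $W$) is independent of $\mathbf{V}$, the vector $\mathbf{V}$ keeps its law $F$ under this conditioning. Setting $\delta_l=x_{li(k)}-x_{lj(k)}$, representation \eqref{gensimplevar} becomes the affine map $U_l=x_{lj(k)}+V_l\,\delta_l$ in each coordinate, which is strictly monotone in $V_l$ since Assumption \ref{assAdmiss} (via Remark \ref{remadmis}) forces $\delta_l\neq 0$ on every admissible edge.

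Next I would invert each coordinate event according to the sign of $\delta_l$, i.e. according to whether $l\in\mathcal{L}^+_k$ or $l\in\mathcal{L}^-_k$. For $l\in\mathcal{L}^+_k$ one has $\alpha_{l,k}=x_{lj(k)}$, $\beta_{l,k}=x_{li(k)}$ and $U_l=\alpha_{l,k}+V_l(\beta_{l,k}-\alpha_{l,k})$ is increasing, so $\{U_l\le u_l\}=\{V_l\le v_{l,k}\}$ with $v_{l,k}=F_{U_l\mid K}(u_l;k)$ as in Lemma \ref{notparrallel}. For $l\in\mathcal{L}^-_k$ one has $\alpha_{l,k}=x_{li(k)}$, $\beta_{l,k}=x_{lj(k)}$ and $U_l=\beta_{l,k}-V_l(\beta_{l,k}-\alpha_{l,k})$ is decreasing, so the inequality flips and $\{U_l\le u_l\}=\{1-V_l\le v_{l,k}\}$. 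Intersecting over $l$, the conditional joint event is exactly $\{V_l\le v_{l,k}\ \forall l\in\mathcal{L}^+_k\}\cap\{1-V_l\le v_{l,k}\ \forall l\in\mathcal{L}^-_k\}$, which is $\{R_{\mathcal{L}^-_k,\frac12}(\mathbf{V})\le\mathbf{v}_k\}$ because the reflection of Definition \ref{def:reflection} replaces $V_l$ by $1-V_l$ precisely on $\mathcal{L}^-_k$ and leaves the remaining coordinates untouched; taking probabilities then yields $F_{\mathbf{V},\mathcal{L}^-_k}(\mathbf{v}_k)$.

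The main obstacle is the boundary bookkeeping: the coordinatewise equivalences above are clean only when $u_l\in[\alpha_{l,k},\beta_{l,k}]$, whereas $v_{l,k}=F_{U_l\mid K}(u_l;k)$ is pinned to $\{0,1\}$ outside this range through the $\max\{\alpha_{l,k},\min\{\beta_{l,k},u_l\}\}$ appearing in the conditional CDF. I would dispose of these cases by checking that $v_{l,k}=1$ (i.e. $u_l\ge\beta_{l,k}$) produces the sure event on that coordinate, matching $\{U_l\le u_l\}$ holding almost surely, while $v_{l,k}=0$ (i.e. $u_l\le\alpha_{l,k}$) produces a null event, matching $\mathbb{P}(U_l\le u_l\mid K=k)=0$ by continuity of the conditional law; hence the identification of events, and therefore of probabilities, is exact for all $\mathbf{u}$.
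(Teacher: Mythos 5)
Your proposal is correct and follows essentially the same route as the paper: both reduce each coordinate event $\{U_l\le u_l\}$ to $\{V_l\le v_{l,k}\}$ or $\{1-V_l\le v_{l,k}\}$ according to whether $l\in\mathcal{L}^{+}_{k}$ or $l\in\mathcal{L}^{-}_{k}$, intersect over $l$, and identify the resulting event with $\{R_{\mathcal{L}^{-}_{k},\frac{1}{2}}(\mathbf{V})\le\mathbf{v}_k\}$. The extra care you take with the monotonicity justification via Assumption \ref{assAdmiss} and with the boundary cases where $u_l\notin[\alpha_{l,k},\beta_{l,k}]$ is sound but not a departure from the paper's argument.
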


The following result comes from summing over all possible values of $K$.
\begin{coro}\label{distr} The random vector 
$\left(U_1 ,\ldots,U_d\right)$ in representation \eqref{gensimplevar},  has distribution $F_{\mathbf{U}}\left(
u_{1},\ldots,u_{d}\right)$ given by
\begin{eqnarray*}\label{copula}\nonumber
\mathbb{P}\left(U_1 \leq u_1,\ldots,U_d\leq u_d\right)=\displaystyle\dfrac{1}{\left\vert \mathcal{E} \right\vert}\sum^{\left\vert \mathcal{E} \right\vert}_{k=1}F_{\mathbf{V},\mathcal{L}^{-}_{k}}\left(\mathbf{v}_k\right).
\end{eqnarray*}
\end{coro}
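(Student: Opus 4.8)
The plan is to obtain the unconditional joint CDF by marginalizing the conditional CDF of Theorem \ref{th:condjoint} over the edge-selection variable $K$. The only ingredients are the distribution of $K$ and the law of total probability, so the argument is short and amounts to writing the joint law of $\mathbf{U}$ as a uniform mixture indexed by the edges.

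First I would record that, in the sampling scheme underlying \eqref{gensimplevar}, the edge index is obtained as $K=\lfloor |\mathcal{E}| W\rfloor +1$ with $W\sim\mathcal{U}[0,1]$ drawn independently of $\mathbf{V}$. Hence $K$ is uniform on $\{1,\ldots,|\mathcal{E}|\}$, i.e.\ $\mathbb{P}(K=k)=1/|\mathcal{E}|$ for every $k$, and $K$ is independent of $\mathbf{V}$. This independence is what guarantees that the conditional law appearing in Theorem \ref{th:condjoint} is the genuine distribution of $\mathbf{U}$ given $\{K=k\}$, so that conditioning and then averaging is legitimate. Next I would apply the law of total probability across the disjoint events $\{K=k\}$, $k=1,\ldots,|\mathcal{E}|$:
\[
\mathbb{P}\left(U_1\leq u_1,\ldots,U_d\leq u_d\right)=\sum_{k=1}^{|\mathcal{E}|}\mathbb{P}\left(U_1\leq u_1,\ldots,U_d\leq u_d\mid K=k\right)\mathbb{P}(K=k).
\]
Substituting $\mathbb{P}(K=k)=1/|\mathcal{E}|$ together with the conditional CDF $F_{\mathbf{V},\mathcal{L}^{-}_{k}}(\mathbf{v}_k)$ supplied by Theorem \ref{th:condjoint} (with $\mathbf{v}_k$ defined there) yields the stated mixture representation
\[
\mathbb{P}\left(U_1\leq u_1,\ldots,U_d\leq u_d\right)=\frac{1}{|\mathcal{E}|}\sum_{k=1}^{|\mathcal{E}|}F_{\mathbf{V},\mathcal{L}^{-}_{k}}(\mathbf{v}_k).
\]

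There is essentially no substantive obstacle here: the statement is a direct corollary, and the genuine work was already carried out in Theorem \ref{th:condjoint}, which identifies each conditional slice with a reflection distribution of $\mathbf{V}$ evaluated at the rescaled argument $\mathbf{v}_k$. The only points demanding a line of care are verifying that $K$ is uniform over $\mathcal{E}$ and independent of $\mathbf{V}$; both are built into step 2 of the sampling construction and need only be invoked, not re-derived.
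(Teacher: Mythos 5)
Your proof is correct and follows exactly the route the paper takes: the authors simply remark that the corollary ``comes from summing over all possible values of $K$,'' which is the law-of-total-probability argument you spell out, using the uniformity of $K$ on $\{1,\ldots,|\mathcal{E}|\}$ and Theorem \ref{th:condjoint} for the conditional CDFs. Your additional observation that the independence of $K$ and $\mathbf{V}$ legitimizes the conditioning is a sensible (if routine) point of care that the paper leaves implicit.
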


For random vectors in representation \eqref{simplevar} we are able 
to derive a closed form expression of the conditional distribution of $\mathbf{U}$.

\begin{coro}\label{th:permutecond} $\mathbf{U}=\left(U_1 ,\ldots,U_d\right)$ in representation \eqref{simplevar} conditioning on $K=k$ has  cumulative distribution function
\begin{eqnarray*}\label{copula}\nonumber
&&F_{\mathbf{U}\vert K}\left(
u_{1},\ldots,u_{d};k\right)=\max\left( v^{+}_{k}  + v^{-}_{k} -1 , 0 \right)
\end{eqnarray*}
where $v^{+}_{k}=\min\{v_{l,k}, l\in\mathcal{L}_{k}^{+}  \}$, $v^{-}_{k}=\min\{v_{l,k}, l\in\mathcal{L}_{k}^{-}\}$
and $v_{l,k}=F_{U_l\left\vert K\right.}\left(u_l;k\right)$.
The pairs of variables $U_l$ and $U_{l^{\prime}}$ $l\neq l^{\prime}$ have cumulative distribution function:
\begin{eqnarray}\resizebox{0.9\linewidth}{!}{$
F_{U_{l},U_{l^{\prime}}\left\vert K\right.}\left(u_l,u_{l^{\prime}};k\right)= \left\{\begin{array}{ccc}\min\left(F_{U_{l}
\left\vert K\right.}\left(u_l;k\right),F_{U_{l^{\prime}}\left\vert K\right.}\left(u_{l^{\prime}};k\right)\right)& \hbox{if}& l,l^{\prime}\in\mathcal{L}^{+}_k \\&&\\\min\left(F_{U_{l}
\left\vert K\right.}\left(u_l;k\right),F_{U_{l^{\prime}}\left\vert K\right.}\left(u_{l^{\prime}};k\right)\right)& \hbox{if}&  l,l^{\prime}\in\mathcal{L}^{-}_k\\&&\\\max\left(F_{U_{l}
\left\vert K\right.}\left(u_l;k\right)+F_{U_{l^{\prime}}\left\vert K\right.}\left(u_{l^{\prime}};k\right) -1,0\right)& \hbox{if}& l\in\mathcal{L}^{-}_k,l^{\prime}\in\mathcal{L}^{+}_k  \\&&\\\max\left(F_{U_{l}
\left\vert K\right.}\left(u_l;k\right)+F_{U_{l^{\prime}}\left\vert K\right.}\left(u_{l^{\prime}};k\right) -1,0\right)& \hbox{if}&  l\in\mathcal{L}^{+}_k,l^{\prime}\in\mathcal{L}^{-}_k \end{array}\right.$}
\end{eqnarray}
\end{coro}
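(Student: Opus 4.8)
The plan is to reduce the whole statement to Theorem \ref{th:condjoint}, whose conclusion $F_{\mathbf{U}\vert K}(\mathbf{u};k)=F_{\mathbf{V},\mathcal{L}^{-}_{k}}(\mathbf{v}_k)$ already holds for the broader representation \eqref{gensimplevar}. The one extra ingredient is to identify the base vector $\mathbf{V}$ attached to the special representation \eqref{simplevar}: there a single uniform $V$ drives every coordinate, so $\mathbf{V}=(V,\ldots,V)$ is the comonotonic vector, whose joint CDF is the upper Fr\'echet bound $\min(v_1,\ldots,v_d)$. Once this is recognized, the corollary becomes an explicit computation of the reflected comonotonic law $F_{\mathbf{V},\mathcal{L}^{-}_{k}}$.

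First I would write down $R_{\mathcal{L}^{-}_{k},\frac{1}{2}}(\mathbf{V})$ coordinatewise: by Definition \ref{def:reflection} its $l$-th entry is $V$ for $l\in\mathcal{L}^{+}_{k}$ and $1-V$ for $l\in\mathcal{L}^{-}_{k}$. Evaluating the joint CDF at the argument $\mathbf{v}_k$ then turns the event $\{R_{\mathcal{L}^{-}_{k},\frac{1}{2}}(\mathbf{V})\leq\mathbf{v}_k\}$ into the conjunction $\{V\leq v_{l,k}\ \forall l\in\mathcal{L}^{+}_{k}\}\cap\{1-V\leq v_{l,k}\ \forall l\in\mathcal{L}^{-}_{k}\}$, i.e. $\{V\leq v^{+}_{k}\}\cap\{V\geq 1-v^{-}_{k}\}$, using $v^{+}_{k}=\min\{v_{l,k},l\in\mathcal{L}^{+}_{k}\}$ and $v^{-}_{k}=\min\{v_{l,k},l\in\mathcal{L}^{-}_{k}\}$. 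Since $V\sim\mathcal{U}[0,1]$ and both thresholds lie in $[0,1]$, the probability of this interval event is the length $\max(v^{+}_{k}-(1-v^{-}_{k}),0)=\max(v^{+}_{k}+v^{-}_{k}-1,0)$, which is the claimed formula. A self-contained alternative, bypassing Theorem \ref{th:condjoint}, is to note that conditionally on $K=k$ each $U_l$ is the affine, hence monotone, image $x_{l,j(k)}+(x_{l,i(k)}-x_{l,j(k)})V$ of the single uniform $V$, to invert each inequality $U_l\leq u_l$ into a one-sided bound on $V$, and to combine them, the sign of $x_{l,i(k)}-x_{l,j(k)}$ dictating which side, exactly as encoded by $\mathcal{L}^{\pm}_{k}$.

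For the pairwise statement I would marginalize the joint formula by sending every coordinate other than $l,l'$ to $+\infty$, i.e. setting the corresponding $v_{j,k}=1$, and split into the three sign configurations. When $l,l'\in\mathcal{L}^{+}_{k}$ one gets $v^{+}_{k}=\min(v_{l,k},v_{l',k})$ and $v^{-}_{k}=1$, so the formula collapses to $\min(v_{l,k},v_{l',k})$; symmetrically for $l,l'\in\mathcal{L}^{-}_{k}$ with the roles of $v^{+}_{k}$ and $v^{-}_{k}$ swapped; and when the two indices sit on opposite sides one obtains $v^{+}_{k}=v_{l',k}$, $v^{-}_{k}=v_{l,k}$, yielding $\max(v_{l,k}+v_{l',k}-1,0)$. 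This matches the comonotone (upper Fr\'echet) behaviour of two coordinates that are both increasing, or both decreasing, in $V$, and the countermonotone (lower Fr\'echet) behaviour of coordinates with opposite monotonicity.

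The step needing the most care is the bookkeeping around empty index sets and the $\min$-conventions: I must verify that when $\mathcal{L}^{-}_{k}=\emptyset$ the reflection is the identity and the convention $v^{-}_{k}=1$ makes $\max(v^{+}_{k}+v^{-}_{k}-1,0)$ reduce correctly to the pure comonotone CDF $v^{+}_{k}$, and likewise when $\mathcal{L}^{+}_{k}=\emptyset$. Here admissibility (Assumption \ref{assAdmiss}) is what guarantees $\alpha_{l,k}<\beta_{l,k}$, so the affine maps are genuinely strictly monotone and the inversions are well defined; with that in hand the remainder is routine.
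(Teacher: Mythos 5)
Your proposal is correct and follows essentially the same route as the paper: the authors likewise specialize the proof of Theorem \ref{th:condjoint} to $V_l=V$, reduce the event to $\{1-v^{-}_{k}<V\leq v^{+}_{k}\}$ to get $\max(v^{+}_{k}+v^{-}_{k}-1,0)$, and obtain the bivariate formulas by setting $u_{l^{\prime\prime}}=1$ (hence $v_{l^{\prime\prime},k}=1$) for all other coordinates and splitting into the four sign configurations. Your explicit attention to the empty-index-set convention $v^{\pm}_{k}=1$ is a small point the paper leaves implicit, but it does not change the argument.
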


From Corollary \ref{th:permutecond} and following the definition of Fr\'{e}chet bound \citep{Frechet1951}, the elements $U_l$ and $U_{l^{\prime}}$ of the stochastic representation \eqref{simplevar} are monotonic in the same direction if $l,l^{\prime}\in\mathcal{L}^{+}_k$  or $l,l^{\prime}\in\mathcal{L}^{-}_k$  and antithetic if $l\in\mathcal{L}^{-}_k ,l^{\prime}\in\mathcal{L}^{+}_k$ or $l\in\mathcal{L}^{+}_k ,l^{\prime}\in\mathcal{L}^{-}_k$, conditionally on living on the $k$-th segment.
\subsection{Multivariate Kendall' $\tau$ and Spearman's $\rho$ }\label{sec:mrho}
Multivariate Kendall' $\tau$  and Spearman's $\rho$ are  multivariate measures of concordance introduced in \cite{joe1992} as a generalization of the well known bivariate measures. \cite{fuchs2018characterizations,ahn2020minimal} show that  $d$-CTM vectors  have the same minimal multivariate Kendall's $\tau$ but different Spearman's $\rho$.  Then, Spearman's $\rho$ can be used to rank $d$-CTM vectors in concordance order.

Let $\mathbf{U}$ and $\mathbf{W}$ be independent random vectors with the same distribution $F_{\mathbf{U}}=F_{\mathbf{W}}=G$. The multivariate Kendall's $\tau$ is defined as:

\begin{eqnarray*}
\tau\left(F_{\mathbf{U}}\right)&=& \dfrac{2^{d}}{2^{d-1}-1} \left[\int_{\left[0,1\right]^d}F_{\mathbf{U}}\left(\mathbf{u}\right)dF_{\mathbf{U}}\left(\mathbf{u}\right)-\dfrac{1}{2^{d}}\right]
\\&=& \dfrac{2^{d}}{2^{d-1}-1} \left[\mathbb{E}\left[G\left(\mathbf{W}\right)\right]-\dfrac{1}{2^{d}}\right]
\\&=& \dfrac{2^{d}}{2^{d-1}-1} \left[\mathbb{P}\left(\mathbf{U}\leq\mathbf{W}\right)-\dfrac{1}{2^{d}}\right].
\end{eqnarray*}
\cite{fuchs2018characterizations} show that if $\mathbf{U}$ and $\mathbf{W}$ are $d$-CTM then $\tau\left(F_{\mathbf{U}}\right)$ attains its minimal value 
\begin{eqnarray}\label{taumin}
\tau\left(F_{\mathbf{U}}\right)=\tau_{\min}=-\dfrac{1}{2^{d-1}-1}.
\end{eqnarray}
We provide an analytical expression of the Kendall's $\tau$ for vectors with the stochastic representation of Eq. \ref{gensimplevar}.
\begin{prop}\label{prop:taubound}
Let $\mathcal{S}=(\mathcal{G},\mathbf{X})$ be the segment set with $\mathcal{G} = \left\{\mathcal{V},\mathcal{E}\right\}$. Let $\mathbf{U}$ and $\mathbf{W}$ two independent copies of a random vectors with the generalized line segments representation \eqref{gensimplevar} on $\mathcal{S}$ , then: 
\begin{eqnarray}\label{rhosegment}
\tau\left(F_{\mathbf{U}}\right)
 &=& \dfrac{2^{d}}{2^{d-1}-1} \left[\dfrac{1}{\left\vert\mathcal{E}\right\vert}\sum^{\left\vert\mathcal{E}\right\vert}_{k_{\mathbf{U}}=1} \mathbb{P}\left(R_{\mathbf{V},\mathcal{L}^{-}_{k_{\mathbf{U}}}}\left(\mathbf{V}\right)\leq\mathbf{Y}_{k_{\mathbf{U}}}\right)-\dfrac{1}{2^{d}}\right],
\end{eqnarray}
with $\mathbf{Y}_{k_{\mathbf{U}}}=\left(F_{U_1\left\vert K_{\mathbf{U}}\right.}\left(W_1;k_{\mathbf{U}}\right),\ldots, F_{U_d\left\vert K_{\mathbf{U}}\right.}\left(W_d;k_{\mathbf{U}}\right)\right)$.

\end{prop}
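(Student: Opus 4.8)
The plan is to compute the multivariate Kendall's $\tau$ directly from its probabilistic characterization $\tau(F_{\mathbf{U}}) = \frac{2^d}{2^{d-1}-1}\left[\mathbb{P}(\mathbf{U}\leq\mathbf{W}) - \frac{1}{2^d}\right]$, where $\mathbf{U}$ and $\mathbf{W}$ are independent copies of the line-segment vector. The whole task reduces to evaluating $\mathbb{P}(\mathbf{U}\leq\mathbf{W})$ and matching it to the claimed sum. First I would condition on the edge index $K_{\mathbf{U}}$ of the first copy $\mathbf{U}$, using the law of total probability: since $K_{\mathbf{U}}$ is uniform on $\{1,\ldots,|\mathcal{E}|\}$, we get $\mathbb{P}(\mathbf{U}\leq\mathbf{W}) = \frac{1}{|\mathcal{E}|}\sum_{k_{\mathbf{U}}=1}^{|\mathcal{E}|}\mathbb{P}(\mathbf{U}\leq\mathbf{W}\mid K_{\mathbf{U}}=k_{\mathbf{U}})$.

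Next I would rewrite the conditional event. For fixed $K_{\mathbf{U}}=k_{\mathbf{U}}$, the conditional CDF of $\mathbf{U}$ is given by Theorem \ref{th:condjoint}, namely $F_{\mathbf{U}\mid K}(\mathbf{w};k_{\mathbf{U}}) = F_{\mathbf{V},\mathcal{L}^{-}_{k_{\mathbf{U}}}}(\mathbf{v}_{k_{\mathbf{U}}})$ with $\mathbf{v}_{k_{\mathbf{U}}}$ the vector of componentwise conditional marginals evaluated at the thresholds. The key step is that $\mathbb{P}(\mathbf{U}\leq\mathbf{W}\mid K_{\mathbf{U}}=k_{\mathbf{U}})$ should be obtained by integrating this conditional CDF against the distribution of the independent copy $\mathbf{W}$, i.e. $\mathbb{E}_{\mathbf{W}}\!\left[F_{\mathbf{V},\mathcal{L}^{-}_{k_{\mathbf{U}}}}(\mathbf{v}_{k_{\mathbf{U}}}(\mathbf{W}))\right]$, where $v_{l,k_{\mathbf{U}}}(\mathbf{W}) = F_{U_l\mid K_{\mathbf{U}}}(W_l;k_{\mathbf{U}})$. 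Recognizing that $\mathbf{Y}_{k_{\mathbf{U}}}=\big(F_{U_1\mid K_{\mathbf{U}}}(W_1;k_{\mathbf{U}}),\ldots,F_{U_d\mid K_{\mathbf{U}}}(W_d;k_{\mathbf{U}})\big)$ is precisely this transformed copy, the conditional probability equals $\mathbb{P}\!\left(R_{\mathbf{V},\mathcal{L}^{-}_{k_{\mathbf{U}}}}(\mathbf{V})\leq\mathbf{Y}_{k_{\mathbf{U}}}\right)$, where the randomness now comes from both the internal $\mathbf{V}$ (generating the first copy conditionally) and from $\mathbf{W}$ (entering through $\mathbf{Y}_{k_{\mathbf{U}}}$). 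Substituting this into the conditional sum and then inserting the result into the $\tau$ formula yields \eqref{rhosegment}.

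The main obstacle I expect is bookkeeping the two independent sources of randomness cleanly. The event $\{\mathbf{U}\leq\mathbf{W}\}$ conditional on $K_{\mathbf{U}}$ couples the internal uniform $V$ that realizes $\mathbf{U}$ on segment $k_{\mathbf{U}}$ with the thresholds coming from the independent copy $\mathbf{W}$, whose own edge index $K_{\mathbf{W}}$ must be marginalized. Writing $U_l = T_{x,l,k_{\mathbf{U}}}(V)$ and carefully identifying which coordinates are increasing versus decreasing in $V$ (governed by the sets $\mathcal{L}^{+}_{k_{\mathbf{U}}}$ and $\mathcal{L}^{-}_{k_{\mathbf{U}}}$) is what produces the reflection operator $R_{\mathbf{V},\mathcal{L}^{-}_{k_{\mathbf{U}}}}$, so the delicate point is verifying that pushing the conditional CDF through the expectation over $\mathbf{W}$ reproduces exactly the reflection structure of Theorem \ref{th:condjoint} rather than a mismatched index set. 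I would handle this by expressing the event componentwise, applying the conditional marginal CDF transform $U_l\mapsto F_{U_l\mid K}(U_l;k_{\mathbf{U}})$ to standardize each coordinate to its conditional-uniform scale, and then appealing directly to the conditional joint law already established. Once the reflection bookkeeping is pinned down, the remaining algebra — folding in the uniform weighting $1/|\mathcal{E}|$ and the additive and multiplicative constants from the definition of $\tau$ — is routine.
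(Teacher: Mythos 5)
Your proposal is correct and follows essentially the same route as the paper: the paper simply evaluates the mixture representation of $F_{\mathbf{U}}$ from Corollary \ref{distr} (itself the sum over $K$ of the conditional laws in Theorem \ref{th:condjoint}) at the independent copy $\mathbf{W}$ and takes the expectation, which is exactly your conditioning on $K_{\mathbf{U}}$ followed by integration against the law of $\mathbf{W}$. Your extra care about the two sources of randomness and the reflection bookkeeping is a correct elaboration of what the paper leaves implicit.
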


  We use the definition of Spearman's $\rho$ in \cite{joe1992}\footnote{The Spearman's $\rho$ following the definition in \cite{ahn2020minimal} is obtained as: $\frac{\rho\left(F_U\right)+\rho\left(F_W\right)}{2}$, with $\mathbf{W}= R_{\mathcal{D},\frac{1}{2}}\left(\mathbf{U}\right)$.} for distributions with standard uniform marginals. Let $F\left(\mathbf{U}\right)$ be the cumulative distribution function of the random vector $\mathbf{U}$; then the multivariate Spearman's $\rho$ is

\begin{eqnarray*}
\rho\left(F_{\mathbf{U}}\right)&=& \dfrac{2^d\left(d+1\right)}{2^d-\left(d+1\right)}\left(\int_{\left[0,1\right]^d}F_{\mathbf{U}}\left(\mathbf{u}\right)d\mathbf{u} -\dfrac{1}{2^d}\right)
\\&=& \dfrac{2^d\left(d+1\right)}{2^d-\left(d+1\right)}\left(\int_{\left[0,1\right]^d}\prod^{d}_{l=1}u_L dF_{\mathbf{U}}\left(\mathbf{u}\right)  -\dfrac{1}{2^d}\right)
\\&=& \dfrac{2^d\left(d+1\right)}{2^d-\left(d+1\right)}\left(\mathbb{E}\left[\prod^{d}_{l=1}U_l\right] -\dfrac{1}{2^d}\right).
\end{eqnarray*}

The attainable lower bound $\rho_{\min}$ can be computed using the lower bound for $\mathbb{E}\left[\prod^{d}_{l=1}U_l\right] $ given in Corollary 4.1 of \cite{Wang2011}. We report the values in Table \ref{tab:rhomin}. 

\begin{table}[htbp]
  \centering
  \caption{Minimum values of multivariate Spearman's $\rho$, using the lower bound in Corollary 4.1 of \cite{Wang2011}}\label{tab:rhomin}
    \begin{tabular}{|c|c|c|c|c|c|c|c|c|}
    \hline
  $d$ &  2     &  3     & 4     & 5     & 10    & 20    & 50    & 100   \\
  \hline
  $\rho_{\min}$&    -1    & -0.56  & -0.32  & -0.18  & -0.01 & -1.99$\cdot10^{-5}$  & -4.53$\cdot10^{-14}$ &-7.97$\cdot10^{-29}$  \\
   \hline
    \end{tabular}%
  \label{tab:addlabel}%
\end{table}%

\begin{prop}\label{prop:rhobound}
Let $\mathbf{U}$ have the generalized representation \eqref{gensimplevar},  $\mathbf{V}$ be reflection invariant, and $\xi^{*}\leq 1$. Then: 
\begin{itemize}
\item[(a)] \begin{eqnarray}\label{rhosegment}
\rho\left(F_{\mathbf{U}}\right)
 &=& \resizebox{0.7\linewidth}{!}{$\displaystyle\sum^{d}_{m=0}\sum_{\tiny\begin{array}{c}\mathcal{L}_m\subseteq\mathcal{D}\\ \left\vert \mathcal{L}_m\right\vert=m\end{array}}\xi_{\mathcal{L}_m}\rho\left(F_{\mathbf{V},\mathcal{D}\setminus\mathcal{L}_m}\right)
+\dfrac{2^d\left(d+1\right)}{2^d-\left(d+1\right)}\dfrac{1}{2^d}\left(\xi^{*} -1 \right)\label{rhobound},$}
\end{eqnarray}
with
\begin{eqnarray*}
\xi_{\mathcal{L}_m}&=&\dfrac{1}{\left\vert \mathcal{E}\right\vert}\sum^{\left\vert \mathcal{E}\right\vert}_{k=1}\left(\prod_{l\in \mathcal{L}_m}x_{l,i\left(k\right)}  \prod_{l\in \mathcal{D}\setminus\mathcal{L}_m}x_{l,j\left(k\right)}\right)\\
\xi^{*}&=&\sum^{d}_{m=0}\sum_{\tiny\begin{array}{c}\mathcal{L}_m\subseteq\mathcal{D}\\ \left\vert \mathcal{L}_m\right\vert=m\end{array}} \xi_{\mathcal{L}_m} =\dfrac{1}{\left\vert \mathcal{E}\right\vert}\sum^{\left\vert \mathcal{E}\right\vert}_{k=1} \prod^{d}_{l=1} \left(x_{l,i\left(k\right)} + x_{l,j\left(k\right)} \right) ;\label{eq:RhoSeg}\\
\end{eqnarray*}
\item[(b)]\begin{eqnarray}
 \rho\left(F_{\mathbf{U}}\right)&=&\xi^{*} \rho\left(F_{\mathbf{V}}\right)
 + \left(1-\xi^{*} \right)\left(-\dfrac{\left(d+1\right)}{2^d-\left(d+1\right)}\right);\label{eq:rhovref}
\end{eqnarray}
\item[(c)]
$ \rho\left(F_{\mathbf{U}}\right)\leq  \rho\left(F_{\mathbf{V}}\right)$.
\end{itemize}
\end{prop}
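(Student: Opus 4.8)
The plan is to reduce all three parts to a single computation of $\mathbb{E}\left[\prod_{l=1}^d U_l\right]$, since by definition $\rho\left(F_{\mathbf{U}}\right) = \frac{2^d(d+1)}{2^d-(d+1)}\left(\mathbb{E}\left[\prod_{l=1}^d U_l\right] - \frac{1}{2^d}\right)$. First I would condition on the selected edge $K=k$, so that $U_l = x_{l,i(k)}V_l + x_{l,j(k)}(1-V_l)$, and expand the product over the two summands in each factor, indexing by the subset $\mathcal{L}_m\subseteq\mathcal{D}$ of coordinates for which the $i(k)$-term is chosen:
\[
\prod_{l=1}^d U_l = \sum_{\mathcal{L}_m\subseteq\mathcal{D}}\left(\prod_{l\in\mathcal{L}_m}x_{l,i(k)}\prod_{l\notin\mathcal{L}_m}x_{l,j(k)}\right)\left(\prod_{l\in\mathcal{L}_m}V_l\prod_{l\notin\mathcal{L}_m}(1-V_l)\right).
\]
The crucial observation is that $\prod_{l\in\mathcal{L}_m}V_l\prod_{l\notin\mathcal{L}_m}(1-V_l)$ is precisely the product of the coordinates of the reflected vector $R_{\mathcal{D}\setminus\mathcal{L}_m,\frac{1}{2}}\left(\mathbf{V}\right)$, whose expectation I can re-express through $\rho\left(F_{\mathbf{V},\mathcal{D}\setminus\mathcal{L}_m}\right)$ by inverting the Spearman definition.

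Taking $\mathbb{E}$ over $\mathbf{V}$ and then averaging over $K$ uniformly on $\mathcal{E}$, the reflected-expectation factor does not depend on $k$, so it pulls out of the average over edges and the remaining coefficient collapses exactly to $\xi_{\mathcal{L}_m}$. Substituting into the Spearman formula and collecting the $\frac{1}{2^d}$ terms, using $\sum_{\mathcal{L}_m}\xi_{\mathcal{L}_m}=\xi^{*}$, yields part (a); the closed form $\xi^{*} = \frac{1}{\left\vert\mathcal{E}\right\vert}\sum_k\prod_l\left(x_{l,i(k)}+x_{l,j(k)}\right)$ then follows from recognizing the sum over $\mathcal{L}_m$ of $\prod_{l\in\mathcal{L}_m}x_{l,i(k)}\prod_{l\notin\mathcal{L}_m}x_{l,j(k)}$ as the expansion of that coordinate product.

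For part (b) I would invoke reflection invariance of $\mathbf{V}$, i.e. $F_{\mathbf{V},\mathcal{L}}=F_{\mathbf{V}}$ for every $\mathcal{L}$, so that $\rho\left(F_{\mathbf{V},\mathcal{D}\setminus\mathcal{L}_m}\right)=\rho\left(F_{\mathbf{V}}\right)$ uniformly in $\mathcal{L}_m$. Factoring this common value out of the sum in (a) and simplifying $\frac{2^d(d+1)}{2^d-(d+1)}\cdot\frac{1}{2^d}=\frac{d+1}{2^d-(d+1)}$ delivers the affine identity \eqref{eq:rhovref}. Finally, for part (c), I would subtract $\rho\left(F_{\mathbf{V}}\right)$ from (b) and factor to obtain
\[
\rho\left(F_{\mathbf{U}}\right)-\rho\left(F_{\mathbf{V}}\right) = -\left(1-\xi^{*}\right)\left(\rho\left(F_{\mathbf{V}}\right)+\frac{d+1}{2^d-(d+1)}\right).
\]
The hypothesis $\xi^{*}\le 1$ makes the first factor non-negative, while $U_l,V_l\ge 0$ forces $\mathbb{E}\left[\prod_l V_l\right]\ge 0$ and hence $\rho\left(F_{\mathbf{V}}\right)\ge-\frac{d+1}{2^d-(d+1)}$, making the bracket non-negative; the difference is therefore $\le 0$.

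The hard part will be the combinatorial bookkeeping in the first step: correctly identifying $\prod_{l\in\mathcal{L}_m}V_l\prod_{l\notin\mathcal{L}_m}(1-V_l)$ with the coordinate product of the reflection $R_{\mathcal{D}\setminus\mathcal{L}_m,\frac{1}{2}}\left(\mathbf{V}\right)$ and inverting the Spearman definition without mishandling the normalizing constant $\frac{2^d(d+1)}{2^d-(d+1)}$, which reappears at every stage and is the most likely source of error.
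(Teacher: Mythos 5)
Your proposal is correct and follows essentially the same route as the paper: expand $\prod_l U_l$ conditionally on $K=k$ over subsets $\mathcal{L}_m$, identify the $\mathbf{V}$-factor with the coordinate product of the reflection $R_{\mathcal{D}\setminus\mathcal{L}_m,\frac{1}{2}}(\mathbf{V})$, average over edges to isolate $\xi_{\mathcal{L}_m}$, and then use reflection invariance for (b) and the nonnegativity of $\mathbb{E}\bigl[\prod_l V_l\bigr]$ together with $\xi^{*}\leq 1$ for (c). Your direct subtract-and-factor presentation of part (c) is marginally cleaner than the paper's min/max bracketing via $\rho_{\min}$, but it rests on exactly the same two ingredients.
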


We provide a simplified formula for  $\rho\left(F_{\mathbf{V},\mathcal{D}\setminus\mathcal{L}_m}\right)$ in the case of two line segment constructions which will be studied later on in this paper. For the line segment representation \eqref{simplevar} we have:
\begin{eqnarray}\label{rhoseg}
\rho\left(F_{\mathbf{V},\mathcal{D}\setminus\mathcal{L}_m}\right)&=& \dfrac{2^d\left(d+1\right)}{2^d-\left(d+1\right)}\left(\mathbb{E}\left[V^{m}\left(1-V\right)^{d-m} \right] -\dfrac{1}{2^d}\right)
\\&=& \dfrac{2^d\left(d+1\right)}{2^d-\left(d+1\right)}\left(B\left(m+1,d-m+1\right) -\dfrac{1}{2^d}\right)
\end{eqnarray}
where $B\left(x,y\right)$ is the Euler's beta function.

For the representation \eqref{gensimplevar}, under the independence assumption $V_{l}$ $l=1,\ldots,d$ iid, the Spearman's $\rho$ is:
\begin{eqnarray}\nonumber
\rho\left(F_{\mathbf{V},\mathcal{D}\setminus\mathcal{L}_m}\right)&=&\dfrac{2^d\left(d+1\right)}{2^d-\left(d+1\right)}\left(\prod_{l\in \mathcal{L}_m}\mathbb{E}\left[V_l\right]\prod_{l\in \mathcal{D}\setminus\mathcal{L}_m}\mathbb{E}\left[\left(1-V_l\right) \right] -\dfrac{1}{2^d}\right)\\\label{rhoViid}
&=& \dfrac{2^d\left(d+1\right)}{2^d-\left(d+1\right)}\left( \dfrac{1}{2^d}-\dfrac{1}{2^d}\right)=0.
\end{eqnarray}
Finally note that the multivariate Kendall's $\tau$ and Spearman's $\rho$ are permutation invariant, thus permuted uniform vectors will have the same measures.

\section{Special cases}\label{sec:examples}
In this section, we discuss several examples starting from the new constructions proposed in this paper and then reviewing the constructions proposed in the literature which are special cases of our stochastic representations \eqref{simplevar} or \eqref{gensimplevar}. We use multivariate Spearman's $\rho$ to rank the proposals in concordance order\footnote{To keep the paper to a reasonable length, we report here the values of the multivariate Spearman's $\rho$ for dimensions $d$ from 2 to 5. We have evaluated the ordering of the constructions up to dimension $d=20$, and no changes were observed.}.

\subsection{Circulant Variates} 
\label{polytopalvar}
Obtaining the coordinate matrix $\mathbf{X}$ used in Eq. \ref{simplevar} can be costly in high dimensions especially when  numerical procedures are used to solve the optimization problem stated in Section \ref{mixonseg}. We propose suitable constraints on the segment set $\mathcal{S}=(\mathbf{X},\mathcal{G})$ to reduce the computational cost of our procedure. The proposed conditions on the coordinate matrix $\mathbf{X}$ allows for decoupling the CTM constraint. 

First, we  assume that the number of vertexes $n$ is equal to the dimension $d$ of the random vector and $\mathbf{X}=d/2 \tilde{\mathbf{X}}$ where $\tilde{\mathbf{X}}$ is doubly stochastic, and obtain:
\begin{eqnarray*}
\sum^d_{i=1}x_{i,k}&=&\sum^n_{k=1}x_{i,k}=\dfrac{d}{2}.
\end{eqnarray*}
This assumption allows us to simplify the optimization problem and to search for independent solutions for each row of $\mathbf{X}$. 

We assume further constraints on the matrix $\mathbf{X}$ and on the graph $\mathcal{G}$ such that the same optimization problem is solved for all rows of $\mathbf{X}$. We assume the first coordinates of the $d$ vertexes are arranged in increasing order $x_{11}\leq\ldots\leq x_{1d}$ and compute the $k$-th coordinates as the $k$-th circular permutation of the first ones:
\begin{eqnarray}\label{polyvertex}
\begin{array}{ccc}
x_{k1}&=&x_{1\,\left(k-1\right)\left(\mathrm{mod} \,d\right) +1  }\\
       &\vdots&\\
x_{ki}&=& x_{1\,\left(i-1 + \left(k-1\right)\right)\left(\mathrm{mod}\,d\right) +1 }\\
&\vdots&\\
x_{kd}&=& x_{1\,\left(d-1 + \left(k-1\right)\right)\left(\mathrm{mod}\,d\right) +1 }.
\end{array}
\end{eqnarray}  
The resulting coordinate matrix $\mathbf{X}$ is a circulant matrix with $i$-th row sum equal to the $i$-th column sum for all rows. Imposing $x_{11}\leq\ldots\leq x_{1d}$  implies the same set of $\mathbf{a}_{l}$  is used for all $l\in\mathcal{D}$, with the same multiplicities $ \left\vert\mathcal{M}_{l,m}\right\vert= \left\vert\mathcal{M}_{1,m}\right\vert$ but with different positions $m\in \mathcal{M}_{l,m}$ as effect of the circular permutation.  

Furthermore, we choose the edge set in such way to have the same projected graph for each set of coordinates, and assume a circulant graph  which is invariant by circular shifts of the vertexes. 
\begin{definition}[Circulant Graph] Given a subset $\mathcal{L}\subseteq \left\{1,\ldots,\left\lfloor \frac{d}{2}\right\rfloor\right\}$ then the $d$-vertex circulant graph $\mathcal{C}_{d}\left(\mathcal{L}\right)$ is a graph with vertexes $1,\ldots,d$ and edge set $\mathcal{E}_{d,\mathcal{L}}$ is such that $(i,j)\in \mathcal{E}_{d,\mathcal{L}}$  if either $\left\vert i-j\right\vert \in \mathcal{L}$ or $\left(d-\left\vert i-j\right\vert\right) \in \mathcal{L}$. 
\end{definition}


\begin{table}[htb!]
\label{tab:Circulant}
\centering
\begin{tabular}{ >{\centering\arraybackslash} m{2cm} | >{\centering\arraybackslash} m{3cm}| >{\centering\arraybackslash} m{4.6cm}  | >{\centering\arraybackslash} m{2cm} }
 Label & $\mathcal{G}$ & $\mathbf{x}_1$ & Spearman's $\rho$
\\\hline
 $\mathcal{C}_{2}\left(\left\{1\right\}\right)$&\begin{tikzpicture}[scale=0.9, every node/.style={draw,circle,inner sep=0pt}]
  \def\radius{1 cm}
  \coordinate (center) at (0,1);
  \pgfmathsetmacro{\d}{2}
  \pgfmathtruncatemacro{\nodes}{\d-1}
  \foreach \i in {0,...,\nodes}  
  {\pgfmathtruncatemacro{\nodelab}{\i+1}
   \node [minimum size=0.4cm,font=\scriptsize] (\i) at  (90+\i*360/\d:\radius){$\nodelab$};  }  
   \foreach \l in {1} {    
   \foreach \i in {0,...,\nodes} {    
   \pgfmathtruncatemacro {\j}{mod(\l+\i,\d)}
   \draw (\i) -- (\j);} 
   }
   \end{tikzpicture} & $\left(0,1\right)$& $-1$
\\\hline  $\mathcal{C}_{3}\left(\left\{1\right\}\right)$&
\begin{tikzpicture}[scale=0.9, every node/.style={draw,circle,inner sep=0pt}]
  \def\radius{1 cm}
  \coordinate (center) at (0,1);
  \pgfmathsetmacro{\d}{3}
  \pgfmathtruncatemacro{\nodes}{\d-1}
  \foreach \i in {0,...,\nodes}  
  {\pgfmathtruncatemacro{\nodelab}{\i+1}
   \node [minimum size=0.4cm,font=\scriptsize] (\i) at  (90+\i*360/\d:\radius){$\nodelab$};  }  
   \foreach \l in {1} {    
   \foreach \i in {0,...,\nodes} {    
   \pgfmathtruncatemacro {\j}{mod(\l+\i,\d)}
   \draw (\i) -- (\j);} 
   }
   \end{tikzpicture} &$\left(0,\frac{1}{2},1\right)$ & -0.5
   \\ \hline  $\mathcal{C}_{4}\left(\left\{1\right\}\right)$ & 
\begin{tikzpicture}[scale=0.9, every node/.style={draw,circle,inner sep=0pt}]
  \def\radius{1 cm}
  \coordinate (center) at (0,1);
  \pgfmathsetmacro{\d}{4}
  \pgfmathtruncatemacro{\nodes}{\d-1}
  \foreach \i in {0,...,\nodes}  
  {\pgfmathtruncatemacro{\nodelab}{\i+1}
   \node [minimum size=0.4cm,font=\scriptsize] (\i) at  (90+\i*360/\d:\radius){$\nodelab$};  }  
   \foreach \l in {1} {    
   \foreach \i in {0,...,\nodes} {    
   \pgfmathtruncatemacro {\j}{mod(\l+\i,\d)}
   \draw (\i) -- (\j);} 
   }
   \end{tikzpicture} & $\left(0,\frac{1}{3},\frac{2}{3},1\right) $ & -0.2840
   \\ \hline  $\mathcal{C}_{4}\left(\left\{1,2\right\}\right)$ & 
\begin{tikzpicture}[scale=0.9, every node/.style={draw,circle,inner sep=0pt}]
  \def\radius{1 cm}
  \coordinate (center) at (0,1);
  \pgfmathsetmacro{\d}{4}
  \pgfmathtruncatemacro{\nodes}{\d-1}
  \foreach \i in {0,...,\nodes}  
  {\pgfmathtruncatemacro{\nodelab}{\i+1}
   \node [minimum size=0.4cm,font=\scriptsize] (\i) at  (90+\i*360/\d:\radius){$\nodelab$};  }  
   \foreach \l in {1,2} {    
   \foreach \i in {0,...,\nodes} {    
   \pgfmathtruncatemacro {\j}{mod(\l+\i,\d)}
   \draw (\i) -- (\j);} 
   }
   \end{tikzpicture} & $\left(0,
\frac{1}{2}-\frac{1}{2\sqrt{5}}
,\frac{1}{2}+\frac{1}{2\sqrt{5}},
1\right) $ &  -0.2763
   \\ \hline  $\mathcal{C}_{4}\left(\left\{2\right\}\right)$ & 
\begin{tikzpicture}[scale=0.9, every node/.style={draw,circle,inner sep=0pt}]
  \def\radius{1 cm}
  \coordinate (center) at (0,1);
  \pgfmathsetmacro{\d}{4}
  \pgfmathtruncatemacro{\nodes}{\d-1}
  \foreach \i in {0,...,\nodes}  
  {\pgfmathtruncatemacro{\nodelab}{\i+1}
   \node [minimum size=0.4cm,font=\scriptsize] (\i) at  (90+\i*360/\d:\radius){$\nodelab$};  }  
   \foreach \l in {2} {    
   \foreach \i in {0,...,\nodes} {    
   \pgfmathtruncatemacro {\j}{mod(\l+\i,\d)}
   \draw (\i) -- (\j);} 
   }
   \end{tikzpicture} & $\left(0,0,1,1\right) $ & -0.2121
    \\ \hline  $\mathcal{C}_{5}\left(\left\{1\right\}\right)$ & 
\begin{tikzpicture}[scale=0.9, every node/.style={draw,circle,inner sep=0pt}]
  \def\radius{1 cm}
  \coordinate (center) at (0,1);
  \pgfmathsetmacro{\d}{5}
  \pgfmathtruncatemacro{\nodes}{\d-1}
  \foreach \i in {0,...,\nodes}  
  {\pgfmathtruncatemacro{\nodelab}{\i+1}
  \node [minimum size=0.4cm,font=\scriptsize] (\i) at  (90+\i*360/\d:\radius){$\nodelab$};  }  
  \foreach \l in {1} {    
  \foreach \i in {0,...,\nodes} {    
  \pgfmathtruncatemacro {\j}{mod(\l+\i,\d)}
  \draw (\i) -- (\j);} 
  }
  \end{tikzpicture} & $\left(0,1/4,2/4,3/4,1\right) $ & -0.1659
  \\ \hline   $\mathcal{C}_{5}\left(\left\{1,2\right\}\right)$ & 
\begin{tikzpicture}[scale=0.9, every node/.style={draw,circle,inner sep=0pt}]
  \def\radius{1 cm}
  \coordinate (center) at (0,1);
  \pgfmathsetmacro{\d}{5}
  \pgfmathtruncatemacro{\nodes}{\d-1}
  \foreach \i in {0,...,\nodes}  
  {\pgfmathtruncatemacro{\nodelab}{\i+1}
  \node [minimum size=0.4cm,font=\scriptsize] (\i) at  (90+\i*360/\d:\radius){$\nodelab$};  }  
  \foreach \l in {1,2} {    
  \foreach \i in {0,...,\nodes} {    
  \pgfmathtruncatemacro {\j}{mod(\l+\i,\d)}
  \draw (\i) -- (\j);} 
  }
  \end{tikzpicture} & $\left(0,\frac{1}{2}-\frac{\sqrt{3}}{2\sqrt{7}},\frac{1}{2},\frac{1}{2}+\frac{\sqrt{3}}{2\sqrt{7}}),1\right) $& -0.1577
 \\ \hline  $\mathcal{C}_{5}\left(\left\{2\right\}\right)$ & 
\begin{tikzpicture}[scale=0.9, every node/.style={draw,circle,inner sep=0pt}]
  \def\radius{1 cm}
  \coordinate (center) at (0,1);
  \pgfmathsetmacro{\d}{5}
  \pgfmathtruncatemacro{\nodes}{\d-1}
  \foreach \i in {0,...,\nodes}  
  {\pgfmathtruncatemacro{\nodelab}{\i+1}
  \node [minimum size=0.4cm,font=\scriptsize] (\i) at  (90+\i*360/\d:\radius){$\nodelab$};  }  
  \foreach \l in {2} {    
  \foreach \i in {0,...,\nodes} {    
  \pgfmathtruncatemacro {\j}{mod(\l+\i,\d)}
  \draw (\i) -- (\j);} 
  }
  \end{tikzpicture}  & $\left(0,0,\frac{1}{2},1,1\right) $ & -0.1385
\\\hline
\end{tabular}
\caption{Circulant graphs $\mathcal{C}_{d}\left(\mathcal{L}\right)$ of the Circulant Countermonotonic on Segments up to dimension 5.}
\end{table}
The circular symmetry imposed on the vertex coordinates and on the graph simplifies the optimization problem. Whatever the multiplicities  $\left\vert\mathcal{M}_{1,m}\right\vert$, under Assumptions \ref{assRange} and \ref{assAdmiss}, we obtain the following results:
\begin{eqnarray*}
\Psi_l\left(\mathbf{a}_l\right)&=&\Psi_1\left(\mathbf{a}_1\right)=\Phi_1\left(\mathbf{x}_1\right)= -\dfrac{1}{2 \left\vert\mathcal{E}_{d,\mathcal{L}}\right\vert}\displaystyle \sum_{ \left(i,j\right)\in \mathcal{E}_{d,\mathcal{L}} }\log\left\vert x_{1,i}-  x_{1,j }\right\vert, \\
&&\sum^{n_1}_{m=1} \left\vert \mathcal{M}_{1,k} \right\vert a_{1,m} = \sum^{d}_{i=1}x_{1,i}.
\end{eqnarray*}

 This rewriting of constraint and objective function allows one to minimize on $\mathbf{x}_1$  instead of $\mathbf{a}_l$. Minimization automatically excludes vectors that violate Assumption \ref{assAdmiss} given $\mathcal{E}_{d,\mathcal{L}}$, because for those cases the objective function is infinite.
 The following definitions introduce formally our new proposal called Circulant Variates (CCV).

\begin{definition}[Circulant Countermonotonic]\label{circmixonseg}
We call the circulant matrix $\mathbf{X}\in \mathbb{R}^d\times \mathbb{R}^n$ whose rows are obtained by the $d$ circular shifts of the first row, a solution of the Circulant Countermonotonic on Segments problem, on the circulant graph $\mathcal{C}_{d}\left(\mathcal{L}\right)$, if $x_{1,1}=0$, $x_{1,d}=1$ and  $\left\{x_{1,2},\ldots,x_{1,d-1}\right\}$ solve the convex minimization problem:

\begin{eqnarray*}
\displaystyle\min_{\left\{x_{l,m}\right\}^{d-1}_{m=2}\in \left[0,1\right]^{d-2}}\Phi_1\left(\mathbf{x}_1\right),
\end{eqnarray*}
with 
\begin{eqnarray*}
\Phi_1\left(\mathbf{x}_1\right)=-\dfrac{1}{2 \left\vert\mathcal{E}_{d,\mathcal{L}}\right\vert}\displaystyle \sum_{ \left(i,j\right)\in \mathcal{E}_{d,\mathcal{L}} }\log\left\vert x_{1,i}-  x_{1,j }\right\vert ,
\end{eqnarray*}
subject to:
\begin{eqnarray*}
\sum^{d}_{i=1}x_{1,i}=\dfrac{d}{2}.
\end{eqnarray*}

\end{definition}

\begin{definition}[Circulant Variates]
Let $\mathcal{S}_{d,\mathcal{L}}=\left\{\mathbf{X},C_{d}\left(\mathcal{L}\right)\right\}$ a collection of segments such that $\mathbf{X}$ is a solution  Circulant Countermonotonic on Segments problem on  $\mathcal{C}_{d}\left(\mathcal{L}\right)$. Variates obtained from the components of the $d$-dimensional random vector uniformly distributed on $\mathcal{S}_{d,\mathcal{L}}$ are  Circulant Variates (CCV).
 \end{definition}

The following corollary is the application of Theorem \ref{unif}  to CCV.
\begin{coro}\label{unifpoly} CCV are marginally standard uniform and constant in sum.
\end{coro}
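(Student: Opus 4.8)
The plan is to deduce the corollary directly from the two structural theorems already established: marginal standard uniformity will follow from Theorem \ref{unif} once the three hypotheses (Assumptions \ref{assAdmiss}--\ref{assCoord}) are verified for the circulant segment set $\mathcal{S}_{d,\mathcal{L}}$, while the constant-sum property will follow by checking that every vertex lies on the hyperplane $\sum_{l=1}^d x_{lk}=d/2$, i.e.\ condition \eqref{segdcm}, and then reading off $\sum_l U_l$ from the stochastic representation \eqref{simplevar}.

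For the uniformity part I would dispatch the two easy assumptions first. The Range condition (Assumption \ref{assRange}) holds because each row of the circulant matrix $\mathbf{X}$ is a circular permutation of the first row, hence contains the same multiset of entries; since Definition \ref{circmixonseg} fixes $x_{1,1}=0$ and $x_{1,d}=1$, every coordinate $l$ attains both $0$ and $1$. Admissibility (Assumption \ref{assAdmiss}) is automatic at any minimizer of $\Phi_1$: whenever an edge $(i,j)\in\mathcal{E}_{d,\mathcal{L}}$ degenerates, i.e.\ $x_{1,i}=x_{1,j}$, the term $-\log|x_{1,i}-x_{1,j}|$ diverges, so the infimum is never attained at a degenerate configuration. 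The heart of the argument is the Coordinate condition (Assumption \ref{assCoord}). Here I would invoke the circular symmetry: because both the coordinate matrix and the graph $\mathcal{C}_d(\mathcal{L})$ are invariant under a simultaneous cyclic shift of vertices, the projected graphs $\mathcal{G}_l$ are mutually isomorphic with identical edge multiplicities $n^l_{(m,m')}$, and the sorted unique values satisfy $\mathbf{a}_l=\mathbf{a}_1$ for every $l$. Consequently $\Psi_l(\mathbf{a}_l)=\Psi_1(\mathbf{a}_1)=\Phi_1(\mathbf{x}_1)$, so the single convex programme of Definition \ref{circmixonseg} is simultaneously the programme \eqref{optuni} for all $d$ coordinates. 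Since $\Phi_1$ is lower semicontinuous on the compact feasible set a minimizer exists, and by Theorem \ref{theoopt} it solves the system \eqref{uniformseg} for each $l$, which is exactly Assumption \ref{assCoord}. Theorem \ref{unif} then yields $U_l\sim\ru(0,1)$ for every $l$.

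For the constant-sum property I would use the defining feature of circulant matrices that the $i$-th column sum equals the $i$-th row sum. The constraint in Definition \ref{circmixonseg} imposes $\sum_{i=1}^d x_{1,i}=d/2$ on the first row; since every other row is a cyclic permutation of it, all row sums equal $d/2$, and therefore all column sums $\sum_{l=1}^d x_{lk}$ equal $d/2$ as well. Thus condition \eqref{segdcm} holds: every vertex of $\mathcal{S}_{d,\mathcal{L}}$ lies on the hyperplane of constant sum $d/2$. Feeding this into \eqref{simplevar} gives $\sum_{l=1}^d U_l = V\sum_{l}x_{lI}+(1-V)\sum_{l}x_{lJ}=d/2$ almost surely, independently of $V$ and of the chosen edge, so the vector is constant in sum.

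The step I expect to require the most care is the symmetry reduction that collapses the $d$ coordinate-wise uniformity systems into the single objective $\Phi_1$. I would need to argue carefully that cyclically shifting the vertex labels permutes the edge set $\mathcal{E}_{d,\mathcal{L}}$ onto itself and carries the projection onto coordinate $l$ to the projection onto coordinate $1$, so that the multiplicities $n^l_{(m,m')}$ and the partition $\{\mathcal{M}_{l,m}\}$ are genuinely independent of $l$; only then is the identity $\Psi_l\equiv\Phi_1$ legitimate and the passage from optimizing over each $\mathbf{a}_l$ to optimizing over the single row $\mathbf{x}_1$ justified.
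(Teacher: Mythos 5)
Your overall route is the paper's: Corollary~\ref{unifpoly} is presented there as a direct application of Theorem~\ref{unif}, resting on the preceding observations that circular symmetry collapses the $d$ coordinate-wise objectives into the single $\Phi_1$ and that a circulant matrix has equal row and column sums. Your verification of Assumption~\ref{assRange} (each row is a cyclic shift of $\mathbf{x}_1$ with $x_{1,1}=0$, $x_{1,d}=1$), your observation that the $-\log$ terms exclude inadmissible configurations from the minimizing set, and your derivation of the constant sum from \eqref{segdcm} via row sum $=$ column sum all match the intended argument, and the symmetry reduction you single out as delicate is indeed the content of the displayed identity $\Psi_l(\mathbf{a}_l)=\Phi_1(\mathbf{x}_1)$ in Section~\ref{polytopalvar}.

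One step, however, is asserted rather than proved, and it is the only place the logic can actually break: you claim that a minimizer of the program in Definition~\ref{circmixonseg} ``by Theorem~\ref{theoopt} \dots solves the system \eqref{uniformseg}''. Theorem~\ref{theoopt} equates \eqref{uniformseg} with the \emph{unconstrained} convex problem \eqref{optuni} --- Assumption~\ref{assCoord} is exactly the stationarity condition for $\Psi_l$ --- whereas Definition~\ref{circmixonseg} minimizes $\Phi_1$ \emph{subject to} $\sum_{i}x_{1,i}=d/2$. A constrained minimizer only satisfies $\nabla\Phi_1=\lambda\mathbf{1}$ for some multiplier $\lambda$, so Assumption~\ref{assCoord} follows only once you show the equality constraint is inactive, i.e.\ $\lambda=0$. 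This can be closed by symmetry: the affine involution $\sigma(\mathbf{x}_1)_i=1-x_{1,d+1-i}$ preserves the boundary values $x_{1,1}=0$, $x_{1,d}=1$, maps $\mathcal{E}_{d,\mathcal{L}}$ onto itself (it preserves $\left\vert i-j\right\vert$), and leaves $\Phi_1$ invariant; since the set of minimizers of a convex function is convex, $\frac{1}{2}\left(\mathbf{x}_1^{*}+\sigma(\mathbf{x}_1^{*})\right)$ is an unconstrained minimizer fixed by $\sigma$, hence satisfies $x_{1,i}+x_{1,d+1-i}=1$ and so sums to $d/2$ automatically. The constrained and unconstrained problems therefore share a solution, at which Theorem~\ref{unif} applies. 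The paper leaves this point equally implicit, so the gap is inherited rather than introduced, but in a self-contained proof it should be stated.
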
 
For $C_d\left(\left\{1\right\}\right)$ a solution can be derived as follows. The sum in each of the first $d-1$ equations of \eqref{simplevar} has only two terms, and considering the $m$-th equation we have
\begin{eqnarray}
\nonumber
\dfrac{1}{\left(x_{1m+1}-x_{1m}\right)}+\dfrac{1}{\left(x_{1d}-x_{11}\right)}&=& d.
\end{eqnarray}
Substituting the constraints $x_{11}=0$ and $x_{1d}=1$ , we find that the $x$'s are uniformly spaced on the unit interval and satisfy all the $d-2$ equations in \eqref{uniformseg}. 
The case of $C_{3}\left(\left\{1\right\}\right)$ was already studied in \cite{nelsen2012directional} . In their example, the probability mass has a distribution uniform on the edges of the triangle with vertexes $\mathbf{x}_1=\left(0, 1/2, 1\right)$, $\mathbf{x}_2=\left(1/2, 1, 0\right)$, $\mathbf{x}_3=\left(1, 0, 1/2\right)$. The other two vertexes are different 3-cycles of the first one. This implies that the row sum of $\mathbf{x}=(\mathbf{x}_1,\mathbf{x}_2,\mathbf{x}_3)$, is equal to its column sum and both are equal to $3/2$. \cite{LEE2014} show that  the construction in \cite{nelsen2012directional} is $3$-CTM.
Additionally, the exchangeable version of the $C_{d}\left(\left\{1\right\}\right)$ construction is distributionally equivalent to the degenerate random balanced sampling introduced in Eq. (8) of \cite{Ger}. They propose to generate $Z_1$ as an uniform random variable on $\left[-1,1\right]$ and obtain the remaining variables according to 
\begin{eqnarray*}
Z_l &=& c_l - \dfrac{Z_1}{d-1},\quad c_l = -1 +\dfrac{2 l -3 }{d-1}
\end{eqnarray*}
$l\in \mathcal{D}$, and then randomly permute the $Z_l$. They show that $\sum^{d}_{l=1}Z_l=0$ and that once permuted, the $Z_l$'s are uniformly distributed on $\left[-1,1\right]$. If we set $U_l= (Z_l + 1)/2$ then the permuted version can be written in terms of permuted  $C_{d}\left(\left\{1\right\}\right)$ construction: 
\begin{prop}\label{RBScirc}
The exchangeable version of $U_l= (Z_l + 1)/2$ $l\in \mathcal{D}$ has the same distribution of the exchangeable version of CCV with dependence graph $C_{d}\left(\left\{1\right\}\right)$.
\end{prop}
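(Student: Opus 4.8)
The plan is to show that, after symmetrizing over all coordinate permutations, both schemes collapse to the \emph{same} uniform mixture over the $d!$ permutations of a single line segment; the whole content of the statement then reduces to the fact that the Ger segment is literally one of the $d$ segments underlying the $C_d(\{1\})$ construction. First I would make the Ger vector explicit: setting $U_1=(Z_1+1)/2\sim\mathcal{U}[0,1]$, substituting $Z_1=2U_1-1$ into $Z_l=c_l-Z_1/(d-1)$ with $c_l=-1+(2l-3)/(d-1)$, and simplifying, one obtains $U_l=((l-1)-U_1)/(d-1)$ for $l=2,\ldots,d$. Hence, before permutation, the Ger scheme places all its mass uniformly on the single segment $s_0$ parametrized by $t=U_1\in[0,1]$ as $\big(t,\tfrac{1-t}{d-1},\ldots,\tfrac{d-1-t}{d-1}\big)$.

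Next I would recall that the $C_d(\{1\})$ circulant solution has uniformly spaced first row $x_{1,k}=(k-1)/(d-1)$ and circulant coordinate matrix \eqref{polyvertex}, so $x_{l,k}=((l+k-2)\bmod d)/(d-1)$, with graph the $d$-cycle having edges $(k,k+1)$, $k=1,\ldots,d-1$, and the wrap-around edge $(1,d)$. Substituting these coordinates into \eqref{simplevar} on the edge $(1,d)$ and setting $V=1-t$, a direct check gives $U_1=1-V=t$ and $U_l=\tfrac{(l-2)+V}{d-1}=\tfrac{(l-1)-t}{d-1}$ for $l\ge 2$, which coincides exactly with $s_0$. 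Since $V\sim\mathcal{U}[0,1]$ if and only if $t=1-V\sim\mathcal{U}[0,1]$, the uniform law on this CCV edge is identical to the Ger law, so $s_0$ is one of the $d$ CCV segments.

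The core step is then to exploit circular symmetry. Using the circulant identity $x_{l+1,k}=x_{l,k+1}$, I would show that a cyclic coordinate permutation $\rho$ carries the segment on edge $(k,k+1)$ onto the segment on edge $(k+1,k+2)$, so the $d$ CCV segments form a single orbit $\{\rho^{j}s_0\}_{j=0}^{d-1}$. Now the exchangeable version (Lemma \ref{permdctm} combined with the stochastic composition of Lemma \ref{stochcomp}) assigns equal weight $1/d!$ to each of the $d!$ coordinate permutations. For the Ger scheme this is $\frac{1}{d!}\sum_{\pi\in S_d}\pi_\ast\mu_{s_0}$, where $\mu_{s_0}$ is the uniform law on $s_0$. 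For CCV it is $\frac{1}{d!}\sum_{\pi}\pi_\ast\big(\frac1d\sum_{j}(\rho^{j})_\ast\mu_{s_0}\big)$; reindexing the double sum by $g=\pi\rho^{j}$, each $g\in S_d$ is reached exactly $d$ times, which cancels the factor $1/d$ and yields $\frac{1}{d!}\sum_{g\in S_d}g_\ast\mu_{s_0}$, the same law.

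I expect the only delicate point to be this final bookkeeping, namely verifying that folding the cyclic orbit back into the full symmetric group preserves the mixing weights, so that each fully permuted copy of $s_0$ is covered with the correct multiplicity $d$ and the two symmetrized mixtures agree exactly. The explicit segment identifications in the first two steps are routine once the circulant coordinates are written out, and the cyclic-shift correspondence follows immediately from $x_{l+1,k}=x_{l,k+1}$.
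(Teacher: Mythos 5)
Your proposal is correct and follows essentially the same route as the paper's proof: you identify the Ger segment with the uniform law on the edge $(1,d)$ of the $C_{d}(\{1\})$ construction and then use the circulant symmetry $x_{l+1,k}=x_{l,k+1}$ together with symmetrization over $S_d$ to conclude that both exchangeable versions are the mixture $\frac{1}{d!}\sum_{g\in S_d}g_{\ast}\mu_{s_0}$. The only difference is that you make explicit the orbit-folding count (each $g=\pi\rho^{j}$ hit exactly $d$ times), a bookkeeping step the paper leaves implicit.
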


\subsection{Rotation Sampling}\label{rotsample}

 \cite{Fishman1983} rephrase the original \cite{ham-mo} proposal for $d>2$ obtaining an equivalent construction with the standard marginal uniformity that was missing in \cite{ham-mo}. Their construction was named rotation sampling because the modulo one arithmetic on which it is based is often associated with circular motion.
 
\begin{prop} \label{rotsampseg} The line segment stochastic representation  \eqref{simplevar} of the rotation sampling in the Example \ref{ex:rotsam} has $2d$ vertexes with coordinate matrix of elements: \begin{eqnarray*}
x_{l,m}&=& \left\{\begin{array}{ccc}  \dfrac{l+m-1}{d} &\hbox{if}& m < d+2-l 
\\\dfrac{l+m-1-d}{d} & \hbox{if}& m \geq d+2-l  \end{array}\right.
\\x_{l,d+m}&=& \left\{\begin{array}{ccc}  \dfrac{l+m-2}{d} &\hbox{if}& m < d+2-l 
\\\dfrac{l+m-2-d}{d} & \hbox{if}& m \geq d+2-l  \end{array}\right.
\end{eqnarray*}

and  edge set:
\begin{eqnarray*}
\mathcal{E}^{RS}=\left\lbrace \left(i,j\right) \in \{1,2,\ldots,d\}\times\{d+1,d+2,\ldots,2d\}\vert j=d+i 
\right\rbrace.
\end{eqnarray*}
\end{prop}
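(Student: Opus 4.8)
The plan is to expose the piecewise-affine structure hidden inside the modular arithmetic of Example \ref{ex:rotsam}. First I would write all coordinates uniformly as $U_l = \left(\frac{l-1}{d} + U\right)\bmod 1$ for $l \in \{1,\ldots,d\}$ (the $l=1$ case reducing to $U_1 = U$), and observe that each $U_l$, viewed as a function of the single driving variable $U\in[0,1]$, is continuous and affine with slope one except for a single downward jump of size one, occurring exactly when $\frac{l-1}{d} + U$ crosses the integer $1$, i.e. at $U = \frac{d+1-l}{d}$. As $l$ ranges over $\{1,\ldots,d\}$, these jump locations are precisely the grid points $\frac{m}{d}$, $m=1,\ldots,d-1$. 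Hence the breakpoints partition $[0,1]$ into the $d$ equal subintervals $I_m = \left[\frac{m-1}{d},\frac{m}{d}\right]$, and on the interior of each $I_m$ every coordinate is affine with slope one and free of jumps, so the map $U\mapsto\mathbf{U}$ restricted to $I_m$ parametrizes a single line segment in $[0,1]^d$.

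Next I would identify the two endpoints of the $m$-th segment with the two vertexes of edge $k=m$. The endpoints arise as the one-sided limits of $\mathbf{U}$ at the ends of $I_m$: the right limit at $U=\frac{m-1}{d}$ (the post-jump value) and the left limit at $U=\frac{m}{d}$ (the pre-jump value). Evaluating $\left(\frac{l+m-2}{d}\right)\bmod 1$ with the post-jump convention (a value landing on the integer $1$ is read as $0$) reproduces exactly the stated formula for $x_{l,d+m}$, while evaluating $\left(\frac{l+m-1}{d}\right)\bmod 1$ with the pre-jump convention (a value landing on $1$ is read as $1$) reproduces the stated formula for $x_{l,m}$; in both cases the two-branch split is governed by the single inequality $m < d+2-l$. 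A short check then confirms $x_{l,m}-x_{l,d+m}=\frac1d$ for every $l$, consistent with the slope-one affine map, where the coordinates that jump at the two ends of $I_m$ (namely $l=d+1-m$ and $l=d+2-m$, when in range) must be handled separately to verify the limiting values $1$ and $0$.

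Finally I would recast uniform sampling of $U$ as the two-stage procedure of the line-segment construction. Since the $d$ subintervals have equal length $\frac1d$, drawing $U\sim\mathcal{U}[0,1]$ is equivalent to drawing $K$ uniformly on $\{1,\ldots,d\}$ and, conditionally on $K=m$, drawing $U$ uniformly on $I_m$; this matches the uniform edge selection of the construction since $|\mathcal{E}^{RS}|=d$. Setting $V = dU-(m-1)\sim\mathcal{U}[0,1]$ on $I_m$ gives $U_l = x_{l,m}V + x_{l,d+m}(1-V)$, which is precisely representation \eqref{simplevar} with edge $(i(m),j(m))=(m,d+m)$ listed in lexicographic order. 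It then remains only to record that the resulting edge set is $\mathcal{E}^{RS}=\{(i,d+i):i=1,\ldots,d\}$ and that the segments are admissible, the latter following from the marginal uniformity already guaranteed by Example \ref{ex:rotsam} together with Theorem \ref{unif}.

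The main obstacle is the careful bookkeeping of the modular reduction at the interval boundaries: one must track which coordinate jumps at each breakpoint and assign the correct one-sided limit ($0$ versus $1$) so that the closed-segment endpoints agree with the two-branch coordinate formulas for $x_{l,m}$ and $x_{l,d+m}$. Everything else is routine affine interpolation and a change of variables.
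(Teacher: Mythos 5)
Your proof is correct and follows essentially the same route as the paper's: both rest on the decomposition $m-1=\lfloor dU\rfloor$, $V=dU-\lfloor dU\rfloor$, after which the modular arithmetic unfolds into the two branches governed by $m<d+2-l$ and the convex-combination form $U_l=x_{l,m}V+x_{l,d+m}(1-V)$ identifies the $2d$ vertexes and the edges $(m,d+m)$. Your presentation via breakpoints of the piecewise-affine map and one-sided limits is just a more geometric phrasing of the paper's direct substitution, and your boundary bookkeeping (pre-jump value read as $1$, post-jump as $0$) is handled correctly.
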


\begin{coro}\label{unifRS}A rotation sampling random vector has standard uniform marginals and is not $d$-CTM.
\end{coro}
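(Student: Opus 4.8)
The plan is to prove the two assertions separately, working directly with the explicit form of rotation sampling in Example \ref{ex:rotsam}, namely $U_l = \left(\frac{l-1}{d} + U\right)\bmod 1$ with $U \sim \mathcal{U}[0,1]$, which is equivalent to the line-segment representation exhibited in Proposition \ref{rotsampseg}.

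For the marginal uniformity, I would observe that for each fixed $l$ the map $u \mapsto \left(\frac{l-1}{d} + u\right)\bmod 1$ is a rotation of the circle $[0,1)$, hence a measure-preserving bijection, so $U_l$ inherits the $\mathcal{U}(0,1)$ law of $U$. Concretely, one splits $[0,1)$ at the threshold $1 - \frac{l-1}{d}$ and checks that the density of $U_l$ equals $1$ on each of the two pieces. Alternatively, since Proposition \ref{rotsampseg} supplies the representation \eqref{simplevar}, one may verify Assumptions \ref{assAdmiss}--\ref{assCoord} for the coordinate matrix given there and invoke Theorem \ref{unif}; the direct circle-rotation argument is shorter and self-contained, so I would lead with it.

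For the failure of the $d$-CTM property, a short calculation gives $\sum_{l=1}^d U_l = \frac{d-1}{2} + \{dU\}$, where $\{\cdot\}$ denotes the fractional part, so the raw sum is (uniformly) spread over $\left[\frac{d-1}{2},\frac{d+1}{2}\right)$ and is not a.s.\ constant; this already rules out the \emph{strict} case $g_l=\mathrm{id}$. To obtain the stronger conclusion that no strictly increasing continuous $g_1,\ldots,g_d$ can make $\sum_l g_l(U_l)$ a.s.\ constant, I would restrict attention to the event $\{U \in [0,1/d)\}$, which has probability $1/d>0$. On this event no modular wraparound occurs for any coordinate, since the first drop of $U_l$ happens at $U = 1 - \frac{l-1}{d} \geq \frac{1}{d}$; hence $U_l = \frac{l-1}{d} + U$ simultaneously for every $l$. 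Each $g_l$ being strictly increasing, the function $u \mapsto \sum_{l=1}^d g_l\!\left(\frac{l-1}{d} + u\right)$ is strictly increasing on $[0,1/d)$, so it attains any prescribed value $k$ at most once. This contradicts the requirement that it equal $k$ for Lebesgue-almost-every $u$ in the positive-measure set $[0,1/d)$, and therefore rotation sampling is not $d$-CTM.

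The main obstacle is precisely the gap between ``not strict $d$-CTM'' and ``not $d$-CTM'': the constant-sum computation is immediate, but since the $d$-CTM definition in \eqref{lcounter} permits arbitrary strictly increasing continuous transformations $g_l$, ruling them all out requires the monotonicity argument on the wraparound-free sub-interval rather than a single sum computation. Identifying that sub-interval $[0,1/d)$---on which all coordinates move monotonically together in $U$---is the key step, as it converts the simultaneous strict monotonicity of the $g_l$ into strict monotonicity of the aggregate, which precludes a.s.\ constancy.
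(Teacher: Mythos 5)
Your proof is correct, and it is in fact more complete than the argument the paper supplies. The paper's own proof (which appears in the appendix under a mislabeled heading, reusing the coordinates from Proposition \ref{rotsampseg}) establishes marginal uniformity by checking the conditions of Theorem \ref{unif} on the line-segment representation — the projected graph has $d$ edges, each of length $1/d$ in every coordinate, so the density sums to $1$ — and then disposes of the CTM claim by computing that the vertex coordinate sums equal $\tfrac{d+1}{2}$ and $\tfrac{d-1}{2}$ rather than $\tfrac{d}{2}$, so the constant-sum condition \eqref{segdcm} fails. Your direct circle-rotation argument for uniformity is an equivalent, shorter route to the same conclusion. The real difference is in the second half: the paper's computation only rules out \emph{strict} $d$-CTM (the case $g_l=\mathrm{id}$ in \eqref{lcounter}), whereas the definition of $d$-CTM quantifies over arbitrary strictly increasing continuous $g_l$. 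You correctly identify this gap and close it with the wraparound-free interval $[0,1/d)$, on which every coordinate is a strictly increasing function of $U$, so that $\sum_l g_l(U_l)$ is strictly increasing there and cannot be a.s.\ constant on a set of positive measure. This monotonicity argument is absent from the paper and is needed for the corollary as literally stated; your version proves the stronger, stated claim.
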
 

We report in Table \ref{tab:rhorotsamp} the analytic values of the Multivariate Spearman's $\rho$ for rotation sampling vectors up to dimension $d=5$.

\begin{table}[htbp]
  \centering
  \caption{Values of the multivariate Spearman's $\rho$ for rotation sampling.}\label{tab:rhorotsamp}
    \begin{tabular}{|c|c|c|c|c|}
    \hline
  $d$ &  2     &  3     & 4     & 5      \\
  \hline
  $\rho_{\min}$&    -0.5    & -0.33  & -0.2168  & -0.1372  \\
   \hline
    \end{tabular}%
  \label{tab:addlabel}%
\end{table}%

For $d=2$ this construction does not reduce to the usual antithetic variates. This is suggested by a value of multivariate Spearman's $\rho$ different from the value of $-1$ attained by the Fr\'{e}chet lower bound. Lacking the constant sum property,  the proposal has multivariate Spearman's $\rho$ larger than the one of the $d$-CTM proposals considered in this paper.

\subsection{Arvidsen and Johnson: A Fresh Look}\label{permdisp}
In the pioneering paper of \cite{Arvi:John:82:VRT}, the objective of variance reduction is obtained by designing the first standard uniform $d$-CTM construction \eqref{AV}. \cite{CraiuMeng2005} have shown that this construction is displacing the binary digits of $U_1$ and give the name permuted displacement to its exchangeable version. We show the following relationship between the two.
\begin{prop}\label{propAJ}
For $d=3$, the construction of \cite{Arvi:John:82:VRT} given in Example \ref{ex:AV} is equivalent to the antithetic proposal of \cite{Gaffke1981} given in \eqref{rushsample}.
\end{prop}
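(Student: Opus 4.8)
The plan is to show that, for $d=3$, both \eqref{AV} and \eqref{rushsample} are the \emph{same} deterministic function of the single uniform $U$, by evaluating the modulo-one maps of \eqref{AV} piecewise on the partition of $[0,1]$ at the point $1/2$. First I would specialize \eqref{AV} to $d=3$. The middle line covers only $i=2$, where $2^{i-2}=2^{0}=1$, and the last line uses $2^{d-2}=2$, so the Arvidsen--Johnson vector becomes $U_1=U$, $U_2=\left(U+\tfrac12\right)\mathrm{mod}\,1$, and $U_3=1-\left(2U\right)\mathrm{mod}\,1$. Since both representations share $U_1=U$, it suffices to match the second and third coordinates.

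For $U_2$ I would use that $x\,\mathrm{mod}\,1=x$ whenever $x\in[0,1)$. Splitting at $1/2$: if $U\in[0,1/2)$ then $U+\tfrac12\in[\tfrac12,1)$, giving $U_2=U+\tfrac12$; if $U\in[1/2,1)$ then $U+\tfrac12\in[1,\tfrac32)$, giving $U_2=U-\tfrac12$. This is precisely $U+\tfrac12\mathbb{I}_{[0,1/2]}(U)-\tfrac12\mathbb{I}_{[1/2,1]}(U)$, the $U_2$ of \eqref{rushsample}. The same splitting handles $U_3$: if $U\in[0,1/2)$ then $2U\in[0,1)$, so $\left(2U\right)\mathrm{mod}\,1=2U$ and $U_3=1-2U$; if $U\in[1/2,1)$ then $2U\in[1,2)$, so $\left(2U\right)\mathrm{mod}\,1=2U-1$ and $U_3=2-2U$. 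These two cases coincide with $-2U+\mathbb{I}_{[0,1/2]}(U)+2\mathbb{I}_{[1/2,1]}(U)$, the $U_3$ of \eqref{rushsample}.

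There is no real obstacle here; the only point requiring care is that the modulo expressions and the indicator expressions may disagree at the boundary $U=1/2$ (and at the endpoints $0,1$), a set of Lebesgue measure zero. Since $U$ is absolutely continuous, the two maps agree $U$-almost surely, so the two random vectors are equal almost surely and in particular have identical distribution, establishing the claimed equivalence.
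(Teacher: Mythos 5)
Your proof is correct and follows essentially the same route as the paper's: specialize \eqref{AV} to $d=3$ and evaluate the modulo-one maps piecewise on $[0,1/2)$ and $[1/2,1)$, matching each case to the indicator form of \eqref{rushsample}. Your additional remark about the measure-zero disagreement at the boundary $U=1/2$ is a harmless refinement the paper leaves implicit.
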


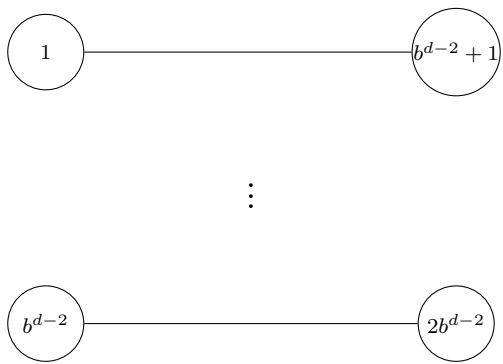
\begin{figure}[t]
\centering
\begin{tikzpicture}[scale=0.9]
  \node [{draw,circle,inner sep=0pt,minimum size=1cm,font=\scriptsize}] (a) at  (-3,2){$1$};
  \node [{draw,circle,inner sep=0pt,minimum size=1cm,font=\scriptsize}] (b) at (3,2){$b^{d-2}+1$};
  \draw (a) -- (b);
  \node [{draw,circle,inner sep=0pt,minimum size=1cm,font=\scriptsize}] (c) at  (3,-2){$2b^{d-2}$};
  \node [{draw,circle,inner sep=0pt,minimum size=1cm,font=\scriptsize}] (d) at (-3,-2){$b^{d-2}$};
  \draw (c) -- (d);
\node[scale=1] (f) at  (0,0){ {\large$\vdots$}};
   \end{tikzpicture}
\caption{$\mathcal{G}$ for $b$-based Arvidsen and Johnson construction}
\label{fig:AJ3d}
\end{figure}

The Ardvisen and Johnson' construction and the following family of general constructions admit a line segment representation. The generalization is useful to show that Ardvisen and Johnson is the only $d$-CTM within this family and attains consequently the minimal Spearman's $\rho$ within this family.

\begin{definition} \label{def:basedAV} Given $U_1\sim\mathcal{U}[0,1]$ and $b\in\mathbb{N}$ the base-$b$ Ardvisen and Johnson construction is:
\begin{eqnarray}\label{basedAV} 
\begin{array}{ccccc}U_i= \left( b^{i-2} U_1 + 1/b \right)\mathrm{mod}\,1 & \hbox{for}& i=\left\lbrace2,\ldots,d-1\right\rbrace & & U_d= 1- \left( b^{d-2} U_1 \right)\mathrm{mod}\,1
\end{array} 
\end{eqnarray}
\end{definition}

\begin{prop} \label{basedAVseg} The line segment stochastic representation  \eqref{simplevar} of the construction in \eqref{basedAV} has $2b^{d-2}$ vertexes with coordinate matrix $\mathbf{X}=(\mathbf{z}^{T},\mathbf{y}^{T})$ where:
\begin{eqnarray*} \begin{array}{ccccccccc}
\mathbf{y}_{1}&=& \left(0,\dfrac{1}{b^{d-2}},\ldots,\dfrac{b^{d-2}-1}{b^{d-2}} \right)^{T}, & 
\mathbf{z}_{1}&=& \mathbf{y}_{1} + \dfrac{1}{b^{d-2}}\mathbf{1}_{b^{d-1}}\\
\mathbf{y}_{k}&=& \left( b^{k-2} \mathbf{y}_{1}+ \dfrac{1}{b}\right)\mathrm{mod}\,1, & \mathbf{z}_{k}&=& \mathbf{y}_{k} + \dfrac{1}{b^{d-k}}\mathbf{1}_{b^{d-1}}\\
 \mathbf{y}_{d}&=& \mathbf{1}_{b^{d-1}}, & \mathbf{z}_{d}&=& \mathbf{y}_{d} - \mathbf{1}_{b^{d-1}}\\
 \mathbf{y}&=&\left(\mathbf{y}_{1},\ldots,\mathbf{y}_{d}\right) &
 \mathbf{z}&=&\left(\mathbf{z}_{1},\ldots,\mathbf{z}_{d}\right)\\
\end{array}
\end{eqnarray*}

and  edge set:
\begin{eqnarray*}
\mathcal{E}^{AJ}=\left\lbrace \left(i,j\right) \in \left\lbrace 1,\ldots,b^{d-2}\right\rbrace\times\left\lbrace 1,\ldots,b^{d-2}\right\rbrace \vert j=b^{d-2}+i 
\right\rbrace
\end{eqnarray*}
\end{prop}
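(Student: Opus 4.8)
The plan is to regard $U_1\mapsto(U_1,\ldots,U_d)$ defined by \eqref{basedAV} as a single piecewise-affine curve, to cut it at its kinks into $b^{d-2}$ straight pieces, and then to match those pieces one-by-one with the edges and vertex columns in the statement. Each coordinate map is piecewise affine: $U_1$ is the identity, $U_i=(b^{i-2}U_1+1/b)\bmod 1$ for $2\le i\le d-1$ is affine between the points where its argument crosses an integer, and $U_d=1-(b^{d-2}U_1)\bmod 1$ is affine between the points where $b^{d-2}U_1$ crosses an integer. I would introduce the uniform grid $\{m/b^{d-2}:m=0,\ldots,b^{d-2}\}$ and show that the full vector map is affine on each cell $[(k-1)/b^{d-2},k/b^{d-2}]$, so that $\mathbf U$ traces exactly one segment as $U_1$ runs through the $k$-th cell; this will fix $|\mathcal E|=b^{d-2}$ and the vertex count $n=2b^{d-2}$.

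The key and only genuinely delicate point is that no coordinate map has a breakpoint in the interior of a cell, i.e.\ that every breakpoint lands on the grid. I would prove this by an integrality argument: the breakpoints of $U_i$ occur at $U_1=(nb-1)/b^{i-1}$ with $n\in\mathbb Z$, and multiplying by $b^{d-2}$ gives $(nb-1)b^{\,d-1-i}$, which is an integer because $i\le d-1$; hence such a point cannot lie strictly between two consecutive multiples of $1/b^{d-2}$. For $U_d$ the breakpoints are exactly the grid points $U_1=n/b^{d-2}$. Consequently the grid cells are precisely the maximal common intervals of linearity, one segment per cell.

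With the cells identified, I would read off the two endpoints of the $k$-th segment as the limiting values of the map at the two ends of the cell. The left end $U_1=(k-1)/b^{d-2}$ gives the column $\mathbf y_k$: the first coordinate is $(k-1)/b^{d-2}$, matching $\mathbf y_1$; the middle coordinates are $(b^{l-2}U_1+1/b)\bmod 1$, matching $\mathbf y_l=(b^{l-2}\mathbf y_1+1/b)\bmod 1$; and $U_d=1$, matching $\mathbf y_d=\mathbf 1$. Traversing the cell increases $U_1$ by $1/b^{d-2}$, which (since no interior breakpoint intervenes) increases $U_l$ by $b^{-(d-l)}$ for $2\le l\le d-1$ and by $b^{-(d-2)}$ for $l=1$, while $U_d$ decreases from $1$ to $0$; the right-end limits therefore reproduce exactly $\mathbf z_l=\mathbf y_l+b^{-(d-l)}\mathbf 1$, $\mathbf z_1=\mathbf y_1+b^{-(d-2)}\mathbf 1$ and $\mathbf z_d=\mathbf y_d-\mathbf 1$. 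Pairing each $\mathbf y$-column (the $V=0$ endpoint) with the matching $\mathbf z$-column (the $V=1$ endpoint) produces the edge set $\mathcal E^{AJ}$.

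Finally I would close the argument distributionally. Writing $U_1=(K-1+V)/b^{d-2}$ with $K=\lfloor b^{d-2}U_1\rfloor+1$ and $V=\{b^{d-2}U_1\}$, the uniformity of $U_1$ makes $K$ uniform on $\{1,\ldots,b^{d-2}\}$ and $V\sim\mathcal U[0,1]$ independent of $K$; on $\{K=k\}$ the vector equals $x_{lI}V+(1-V)x_{lJ}$ for the $k$-th edge. This is precisely the sampling mechanism of \eqref{simplevar}, namely a uniform edge choice followed by a uniform point on the chosen segment, so the law of \eqref{basedAV} coincides with that of the line-segment representation on $\mathcal S=(\mathcal G,\mathbf X)$, and marginal standard uniformity of each $U_l$ then follows from Theorem \ref{unif}. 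I expect the grid-alignment integrality step of the second paragraph to be the main obstacle, the remaining endpoint identifications being routine substitutions.
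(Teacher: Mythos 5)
Your proof is correct and follows essentially the same route as the paper's: both decompose $U_1=(m+V)/b^{d-2}$ with $m=\lfloor b^{d-2}U_1\rfloor$ uniform on $\{0,\ldots,b^{d-2}-1\}$ and $V=\{b^{d-2}U_1\}\sim\mathcal U[0,1]$, and then verify coordinatewise that on each cell the map is affine in $V$ with the stated $\mathbf y$/$\mathbf z$ endpoints. Your explicit integrality argument for grid alignment of the breakpoints is the same fact the paper invokes more tersely when it justifies pulling $V$ through the outer fractional part via $z_{i,m+1}\le 1$, so the two proofs differ only in presentation.
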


\begin{coro}\label{CTMAJ} The base-$b$ Ardvisen and Johnson random vector has standard uniform marginals.  Only the case $b=2$ is $d$-CTM.
\end{coro}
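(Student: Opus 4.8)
The plan is to treat the two assertions separately, reading ``$d$-CTM'' as \emph{strict} $d$-CTM (constant sum $d/2$), consistent with the rest of the paper (compare the reasoning behind Corollary~\ref{unifRS}). Throughout write $\{y\}$ for the fractional part $y-\lfloor y\rfloor$ and abbreviate $X=U_1$. For the marginal uniformity I would argue directly from \eqref{basedAV} rather than through the segment machinery. The only fact needed is the elementary lemma that if $X\sim\mathcal{U}[0,1]$, $a\in\mathbb{N}$ and $c\in\mathbb{R}$, then $\{aX+c\}\sim\mathcal{U}[0,1]$: the map $x\mapsto\{ax\}$ sends the uniform law on $[0,1]$ to itself (on each interval $[k/a,(k+1)/a)$ it is an affine bijection onto $[0,1)$), and adding $c$ modulo $1$ is a rotation of the circle, which preserves the uniform law. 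Applying this to $U_i=\{b^{i-2}X+1/b\}$ for $i=2,\dots,d-1$ gives uniformity, while $U_d=1-\{b^{d-2}X\}$ is uniform because $\{b^{d-2}X\}$ is and $u\mapsto 1-u$ preserves $\mathcal{U}[0,1]$; and $U_1=X$ is uniform by assumption. (Equivalently one verifies Assumptions~\ref{assAdmiss}--\ref{assCoord} for the segment set of Proposition~\ref{basedAVseg} and invokes Theorem~\ref{unif}.)

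For the constant-sum claim the key object is $S(X)=\sum_{l=1}^d U_l$ regarded as a function of $X$. On each $b$-adic subinterval $[m b^{-(d-2)},(m+1)b^{-(d-2)})$ all the integer parts $\lfloor b^{i-2}X+1/b\rfloor$ and $\lfloor b^{d-2}X\rfloor$ are constant, so every $U_l$ is affine in $X$ and $S$ has the single common slope
\begin{equation*}
\sigma=1+\sum_{i=2}^{d-1}b^{i-2}-b^{d-2}=1+\frac{b^{d-2}-1}{b-1}-b^{d-2}=-\,\frac{(b-2)\,(b^{d-2}-1)}{b-1}.
\end{equation*}
For $d\ge 3$ and $b\ge 2$ the factors $b^{d-2}-1$ and $b-1$ are positive, so $\sigma=0$ if and only if $b=2$; for $d=2$ the construction degenerates to $U_2=1-U_1$ for every $b$, so I would flag that case and assume $d\ge3$.

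This already settles one direction: when $b\ge 3$ we have $\sigma<0$, so $S$ is strictly monotone on a set of positive measure and therefore cannot be a.s.\ constant, whence the vector is not strict $d$-CTM. The hard part is the converse $b=2$. There $\sigma=0$ makes $S$ only \emph{piecewise} constant, and I must upgrade this to \emph{globally} constant by controlling the jumps at the breakpoints $X=m/2^{d-2}$. The mechanism is a carry/valuation bookkeeping: at such a point $U_d=1-\{2^{d-2}X\}$ jumps by $+1$, and a short $2$-adic argument (comparing $m2^{i-1}$ with $\mathrm{odd}\cdot 2^{d-2}$) shows that exactly one coordinate $U_i$, with $i=d-1-v_2(m)\in\{2,\dots,d-1\}$, jumps by $-1$ while all others are continuous, so the net jump of $S$ vanishes. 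Combining vanishing slope with vanishing jumps gives $S\equiv\mathrm{const}$, and taking expectations (each $U_l$ is uniform, so $\mathbb{E}[S]=d/2$) pins the constant at $d/2$, which is exactly the strict $d$-CTM condition $\sum_{l=1}^d U_l=d/2$ a.s.

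I expect the jump bookkeeping for $b=2$ to be the only delicate step, since everything else reduces to the one-line slope computation above. As an alternative that bypasses the jump analysis, one can read the endpoints of each segment directly off Proposition~\ref{basedAVseg} and check, via \eqref{segdcm}, that every vertex of the edge set $\mathcal{E}^{AJ}$ lies on the hyperplane $\sum_{l=1}^d x_{l,k}=d/2$ precisely when $b=2$; the cost of this route is manipulating the explicit vertex matrix $\mathbf{X}=(\mathbf{z}^T,\mathbf{y}^T)$ instead of the more transparent single-variable function $S(X)$.
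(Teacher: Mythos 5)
Your proof is correct, but it follows a genuinely different route from the paper's. For marginal uniformity the paper works entirely inside the segment machinery: it extracts the unique projected coordinates $\mathbf{a}_l$, their multiplicities $\left\vert\mathcal{M}_{l,j}\right\vert$ and the edge multiplicities $n^{l}_{(m,m')}$ from Proposition~\ref{basedAVseg}, and then verifies the balance condition of Theorem~\ref{unif}; your circle-rotation argument (that $x\mapsto\{ax+c\}$ preserves $\mathcal{U}[0,1]$ for integer $a$) reaches the same conclusion more directly and without any of that bookkeeping. For the countermonotonicity claim the two proofs split the work differently. The paper handles necessity by checking a single vertex: it computes $\sum_{k}y_{k,1}=(d-2)/b+1$ and observes this equals $d/2$ only at $b=2$, so for $b\neq 2$ the vertex violates the hyperplane condition \eqref{segdcm}; sufficiency for $b=2$ is not proved but attributed to \cite{Arvi:John:82:VRT} and \cite{CraiuMeng2005}. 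You instead analyze $S(X)=\sum_l U_l$ as a function of one uniform: the common slope $\sigma=-(b-2)(b^{d-2}-1)/(b-1)$ (which I checked) kills all $b\geq 3$ at once, and the $2$-adic jump cancellation at the breakpoints $m/2^{d-2}$ --- where $U_d$ jumps by $+1$ and exactly the coordinate $i=d-1-v_2(m)$ jumps by $-1$ --- gives a self-contained proof of the $b=2$ case that the paper simply cites. Your route buys independence from both the segment representation and the prior literature, and it correctly isolates the degenerate case $d=2$ (where every $b$ yields the antithetic pair and the ``only $b=2$'' claim fails), which the paper's statement glosses over; its cost is that the carry bookkeeping, while correct as stated, is only sketched and would need to be written out to the same level of detail as the slope computation.
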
 

Table \ref{tab:rhoAJ} shows that the constructions in Definition \ref{def:basedAV} attain the lowest Spearman's $\rho$ for $b=2$. 

\begin{table}[htb!]
\label{tab:rhoAJ}
\centering
    \begin{tabular}{|c|c|ccccc|c|}
    \hline
    \multicolumn{2}{|c|}{Spearman's $\rho$} &\multicolumn{5}{c|}{$b$} & GR\\
\multicolumn{2}{|c|}{} & 1     & 2     & 3     & 4     & 5 & \\
\hline
$d$   & 2     & -1    & -1    & -1    & -1    & -1 & -1\\
          & 3     & -0.3333 & -0.5000 & -0.3333 & -0.2083 & -0.1200 & -0.5000\\
          & 4     & -0.0909 & -0.2822 & -0.1662 & -0.0869 & -0.0367 & -0.2525\\
          & 5     & 0.0154 & -0.1637 & -0.0933 & -0.0455 & -0.0165 & -0.1538\\
\hline
    \end{tabular}%
    \caption{Multivariate Spearman's $\rho$ for base-$b$ Ardvisen and Johnson random vectors and the multivariate proposal of \cite{Gaffke1981} in Example \ref{ex:GR}}
\end{table}
Table \ref{tab:rhoAJ} reports also the multivariate proposal of \cite{Gaffke1981} (GR) described in Example \ref{ex:GR}. The table shows that GR performs better than the non $d$-CTM proposal, but worse than the AJ proposal because it has independence as one of the main ingredients as discussed in the introduction. This also show the effectiveness of multivariate $\rho$ as a ranking measure for $d$-CTM vectors.

\subsection{Latin Hypercube Iterations}
 In this section, we will reconsider the Iterated Latin Hypercube construction introduced in \cite{CraiuMeng2005} establish a relationship with a $d$-dimensional generalization of the superstar introduced in \cite{Ger}, and with our strict countermonotonic on segments construction. \\
The latin hypercube sampling, introduced by \cite{McKa:Beck:Cono:79:CTM}, and further developed by \cite{CraiuMeng2005} with the goal of obtaining variance reduction  in MCMC sampling, consists in the following steps: for $t= 0,\ldots, T$ take an iid standard uniform $d$-dimensional random vector $\mathbf{U}_0$  and let $\mathcal{D}^{\pi}_t = \left(\pi_{t}\left(0\right),\ldots,\pi_{t}\left(d-1\right)\right)^{T}$ be a permutation of $\lbrace 0,1,\ldots,d-1\rbrace$ independent of  $\mathbf{U}_{0},\ldots,\mathbf{U}_{t-1}$ and 
\begin{eqnarray}\label{ILHiter}
\mathbf{U}_t= \dfrac{1}{d}\left(\mathcal{D}^{\pi}_t + \mathbf{U}_{t-1}\right)
\end{eqnarray}
If $t=1$ \eqref{ILHiter} corresponds to the original Latin Hypercube Sampling, and $t>1$ to the Iterated Latin Hypercube procedure introduced in \cite{CraiuMeng2005}. It was showed in \cite{Craiu2006} that ILH iterations represent an Iterated Function System with probabilities (IFSP) $\left(\left[0,1\right]^{d}, \left(w_{\pi}\right),p_{\pi}\right)$ with  $w_{\pi} $ similitudes with contraction ratio $\dfrac{1}{d}$ associated to each permutation $\pi$ of $\left\{1,\ldots,d\right\}$
\begin{eqnarray} 
w_{\pi}\left(\mathbf{u}\right)= \left(\dfrac{\pi\left(1\right)-1}{d} + \dfrac{u_1}{d}, \ldots, \dfrac{\pi\left(d\right)-1}{d} + \dfrac{u_d}{d}   \right).
\end{eqnarray}
We can show that the $3$-dimensional superstar considered in \cite{Ger} can be obtained by using the same IFSP: 

\begin{prop}\label{ILHsperstar}
The $3$-dimensional superstar proposed in  \cite{Ger}:
\begin{eqnarray}
 \mathbf{X}_{t}=f_{k}\left(\mathbf{X}_{t-1}\right)= \dfrac{1}{3} \mathbf{X}_{t-1} + \dfrac{2}{3}\mathbf{V}_k
\end{eqnarray} 
with $\mathbf{V}_k$ a random permutation  of $\left\{-1,0,1\right\}$ and an initial $\mathbf{X}_0$ a $3$-dimensional vector such that $-1\leq \mathbf{X}_{i0}\leq 1$ and $\sum^{3}_{i=1}\mathbf{X}_{i0} =0$, up to a change of support, is generated by the same IFSP of the 3-dimensional version of the Iterated Latin Hypercube construction introduced in \cite{CraiuMeng2005}.
\end{prop}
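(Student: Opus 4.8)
The plan is to exhibit an explicit affine change of coordinates that conjugates the superstar iterated function system onto the three-dimensional ILH system, thereby showing the two IFSPs coincide up to that conjugacy. Define $\phi:[-1,1]^3\to[0,1]^3$ by $\phi(\mathbf{x})=\tfrac{1}{2}(\mathbf{x}+\mathbf{1})$, so that componentwise $U_i=(X_i+1)/2$ and $X_i=2U_i-1$. First I would check that $\phi$ carries the superstar support onto the ILH support: the constraints $-1\le X_{i0}\le 1$ become $0\le U_{i0}\le 1$, while $\sum_{i=1}^3 X_{i0}=0$ becomes $\sum_{i=1}^3 U_{i0}=3/2$, which is precisely the constant-sum hyperplane associated with a strict $3$-CTM vector. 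This is what justifies the phrase ``up to a change of support.''

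The heart of the argument is a direct substitution. Rewriting the superstar recursion $\mathbf{X}_t=\tfrac{1}{3}\mathbf{X}_{t-1}+\tfrac{2}{3}\mathbf{V}_k$ in the $\mathbf{U}$ coordinates via $X_i=2U_i-1$ gives
\begin{equation*}
2\mathbf{U}_t-\mathbf{1}=\tfrac{1}{3}\left(2\mathbf{U}_{t-1}-\mathbf{1}\right)+\tfrac{2}{3}\mathbf{V}_k,
\end{equation*}
and solving for $\mathbf{U}_t$ yields
\begin{equation*}
\mathbf{U}_t=\tfrac{1}{3}\left(\mathbf{U}_{t-1}+\mathbf{1}+\mathbf{V}_k\right).
\end{equation*}
Because $\mathbf{V}_k$ is a permutation of $\{-1,0,1\}$, the shifted vector $\mathbf{1}+\mathbf{V}_k$ is a permutation of $\{0,1,2\}$, which is exactly the range of $\mathcal{D}^{\pi}_t$ when $d=3$. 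Hence the conjugated recursion is identical to the ILH iteration \eqref{ILHiter} with $d=3$, namely $\mathbf{U}_t=\tfrac{1}{3}(\mathcal{D}^{\pi}_t+\mathbf{U}_{t-1})$.

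Finally, I would match the full IFSP data rather than just the recursions. Each superstar map $f_k(\mathbf{x})=\tfrac{1}{3}\mathbf{x}+\tfrac{2}{3}\mathbf{V}_k$ is a similitude of ratio $1/3$, and conjugating by the affine bijection $\phi$ preserves the contraction ratio, so $\phi\circ f_k\circ\phi^{-1}$ is exactly the ILH similitude $w_{\pi}$ in which $\pi$ is read off from the positions dictated by $\mathbf{1}+\mathbf{V}_k$. Since both constructions draw their permutation uniformly among the $3!$ permutations, the attached probabilities $p_{\pi}$ agree as well, so the two systems are the same IFSP up to the conjugacy $\phi$.

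I expect the only real care to be the bookkeeping: verifying the index correspondence between the permutation of $\{-1,0,1\}$ appearing in the superstar and the permutation $\pi$ of $\{1,2,3\}$ (equivalently $\mathcal{D}^{\pi}$ over $\{0,1,2\}$) entering the ILH maps, and confirming that the probability weights coincide and not merely the maps. There is no analytic obstacle here; the entire content lies in the affine conjugacy $\phi$ together with the shift $\mathbf{V}_k\mapsto\mathbf{1}+\mathbf{V}_k$.
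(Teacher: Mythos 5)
Your proof is correct and takes essentially the same route as the paper: the paper also applies the affine change of variables $\mathbf{U}_t=(\mathbf{X}_t+\mathbf{1})/2$, rewrites the superstar recursion as $\mathbf{U}_t=\tfrac{1}{3}\mathbf{U}_{t-1}+\tfrac{2}{3}\tfrac{\mathbf{V}_k+\mathbf{1}}{2}=\tfrac{1}{3}\left(\mathbf{U}_{t-1}+\mathbf{V}_k+\mathbf{1}\right)$, and observes that $\mathbf{V}_k+\mathbf{1}$ is a permutation of $\left\{0,1,2\right\}$, recovering the ILH iteration \eqref{ILHiter} with $d=3$. Your extra bookkeeping on the support hyperplane, the contraction ratio, and the uniform weights $p_{\pi}$ is consistent with, and slightly more explicit than, the paper's treatment.
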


\begin{prop} \label{ILHseg} The line segment stochastic representation  \eqref{gensimplevar} of the construction in \eqref{ILHiter} has $2d!$ vertexes with coordinates $x_{l,k}= \frac{\pi_k\left(l\right) + 1}{d}$ and $x_{l,d!+ k}= \frac{\pi_k\left(l\right)}{d}$, $k=1,\ldots,d!$ and  edge set:
\begin{eqnarray*}
\mathcal{E}^{LH}=\left\lbrace \left(i,j\right) \in \left\lbrace 1,\ldots,d!\right\rbrace\times\left\lbrace d!+1,\ldots,2d!\right\rbrace \vert j=d!+i 
\right\rbrace
\end{eqnarray*}
\end{prop}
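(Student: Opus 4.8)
The plan is to recognize a single ILH step \eqref{ILHiter} as a special instance of the generalized line-segment representation \eqref{gensimplevar}, in which the input vector $\mathbf{V}$ is identified with the previous iterate $\mathbf{U}_{t-1}$ (whose marginals are $\mathcal{U}[0,1]$, either by induction or by hypothesis). First I would condition on the value of the random permutation $\mathcal{D}^{\pi}_t$, enumerating the $d!$ permutations as $\pi_1,\ldots,\pi_{d!}$. Conditionally on $\pi_t=\pi_k$, the $l$-th component of \eqref{ILHiter} reads $U_{t,l}=\left(\pi_k(l)+U_{t-1,l}\right)/d$, an affine function of $U_{t-1,l}$ whose range is the interval $\left[\pi_k(l)/d,\,(\pi_k(l)+1)/d\right]$.

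Next I would verify the convex-combination form. A direct expansion gives
\[
\frac{\pi_k(l)+1}{d}\,v+\frac{\pi_k(l)}{d}\,(1-v)=\frac{\pi_k(l)+v}{d},
\]
so that setting $x_{l,k}=(\pi_k(l)+1)/d$, $x_{l,d!+k}=\pi_k(l)/d$, and $V_l=U_{t-1,l}$ reproduces the conditional ILH map exactly. Hence, on the edge $e_k=(k,d!+k)$, the representation \eqref{gensimplevar} coincides with \eqref{ILHiter} conditional on $\pi_t=\pi_k$, which is precisely the coordinate matrix asserted in the statement.

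I would then match the randomness and count the geometry. Since $\mathcal{D}^{\pi}_t$ is uniform over the $d!$ permutations and independent of $\mathbf{U}_{t-1}$, drawing it is equivalent to selecting an edge index $K$ uniformly on $\{1,\ldots,d!\}$ as in step~2 of the construction, giving the edge set $\mathcal{E}^{LH}=\{(i,j):j=d!+i\}$. Each permutation contributes the two endpoints of one segment, and distinct permutations yield distinct coordinate vectors, so no vertex is shared and the total is $2d!$ vertices. Finally I would check the hypotheses needed to invoke Corollary~\ref{genunif}: the two endpoints of every edge differ by $1/d\neq 0$ in each coordinate, so no segment lies in a coordinate hyperplane and Assumption~\ref{assAdmiss} holds; for each fixed $l$ the value $\pi_k(l)$ ranges over all of $\{0,\ldots,d-1\}$ as $k$ varies, so the minimal vertex coordinate is $0$ and the maximal is $1$, giving Assumption~\ref{assRange}; and each $V_l=U_{t-1,l}$ is standard uniform.

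The main obstacle is purely the bookkeeping of indices rather than any analytic difficulty: aligning the enumeration $\pi_1,\ldots,\pi_{d!}$ with the lexicographic order on the edge set, reconciling the component index $l\in\{1,\ldots,d\}$ with the permutation values ranging over $\{0,\ldots,d-1\}$ in the definition of $\mathcal{D}^{\pi}_t$, and confirming that distinct permutations genuinely produce distinct vertices so that the vertex count is exactly $2d!$ with no accidental coincidences.
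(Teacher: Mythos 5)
Your proposal is correct and follows essentially the same route as the paper: the key step in both is the algebraic identity $\frac{\pi_k(l)+1}{d}v+\frac{\pi_k(l)}{d}(1-v)=\frac{\pi_k(l)+v}{d}$, which identifies the ILH step conditional on $\pi_t=\pi_k$ with the convex combination \eqref{gensimplevar} on the edge $(k,d!+k)$ with $V_l=U_{t-1,l}$ (the paper's one-line computation to this effect appears in the appendix, apparently swapped with the proof of Proposition \ref{ILHsperstar}). Your additional bookkeeping --- matching the uniform permutation draw to the uniform edge selection, checking that distinct permutations give distinct vertices so the count is exactly $2d!$, and verifying Assumptions \ref{assAdmiss}--\ref{assRange} --- is a sound and slightly more complete rendering of the same argument.
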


\cite{CraiuMeng2005} obtained standard marginal uniformity for the special case of an initial vector with iid components. In the superstar case, the original distribution is concentrated on points but converges to standard marginal uniforms as showed in \cite{Ger}. In \cite{CraiuMeng2005}, it is also shown that in the limit $t\rightarrow \infty$ the ILH is $d$-CTM . We show that the ILH iterations preserve marginal uniformity and constant sum in the general case.
 
 \begin{coro}\label{unifILH} Let $\mathbf{U}_{t-1}$ be a dependent  random vector of dimension $d$, whose coordinates add up to d/2 (a.s.) and each coordinate has a  $\mathcal{U}[0,1]$ distribution. Then the random vector $\mathbf{U}_t$ in \eqref{ILHiter} has all its coordinates  marginally $\mathcal{U}[0,1]$ distributed and adding up to $d/2$.
\end{coro}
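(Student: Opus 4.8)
The plan is to treat the two conclusions—constant sum and marginal uniformity—separately, exploiting the fact that the iteration \eqref{ILHiter} acts coordinate-wise. Writing the $l$-th coordinate explicitly gives $U_{t,l}=\frac{1}{d}\bigl(\pi_t(l-1)+U_{t-1,l}\bigr)$, where $\pi_t(l-1)$ is the $l$-th entry of $\mathcal{D}^{\pi}_t$, a value in $\{0,1,\ldots,d-1\}$ coming from a uniformly random permutation $\pi_t$ that is independent of $\mathbf{U}_{t-1}$.

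First I would establish the constant-sum property, which is immediate. Summing over $l$ and using that the entries of $\mathcal{D}^{\pi}_t$ are exactly $\{0,1,\ldots,d-1\}$ in some order, so that $\sum_{l=1}^d \pi_t(l-1)=\tfrac{d(d-1)}{2}$, together with the hypothesis $\sum_{l=1}^d U_{t-1,l}=d/2$ a.s., yields
\[
\sum_{l=1}^d U_{t,l}=\dfrac{1}{d}\left(\dfrac{d(d-1)}{2}+\dfrac{d}{2}\right)=\dfrac{d}{2}\quad\text{a.s.}
\]
This step uses only that $\mathcal{D}^{\pi}_t$ is a permutation of $\{0,\ldots,d-1\}$ and the starting-vector constraint; the dependence structure of $\mathbf{U}_{t-1}$ plays no role.

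The marginal uniformity is the substantive step, and the key observation is that the law of a single coordinate $U_{t,l}$ depends only on the marginal law of $U_{t-1,l}$ and on $\pi_t(l-1)$, not on the joint dependence among the coordinates of $\mathbf{U}_{t-1}$. I would fix $l$, condition on the independent draw $\pi_t(l-1)=k$ (uniform on $\{0,\ldots,d-1\}$ by uniformity of $\pi_t$), and compute the CDF of $U_{t,l}$ by the standard stratification argument: for $x=\frac{j+y}{d}$ with $j=\lfloor dx\rfloor$ and $y=dx-j\in[0,1)$, conditioning on $k$ gives $\mathbb{P}(U_{t-1,l}\le j-k+y)$, which equals $1$ for $k<j$, equals $y$ for $k=j$, and equals $0$ for $k>j$, since $U_{t-1,l}\sim\mathcal{U}[0,1]$. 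Averaging over the $d$ equally likely values of $k$ recovers $\mathbb{P}(U_{t,l}\le x)=\frac{1}{d}(j+y)=x$, so $U_{t,l}\sim\mathcal{U}[0,1]$.

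The only point requiring care—and where the hypothesis of a merely marginally-uniform, possibly dependent, starting vector matters—is isolating the single-coordinate marginal so that the stratification computation never invokes independence across coordinates of $\mathbf{U}_{t-1}$. Alternatively, marginal uniformity could be obtained by combining the line-segment representation of \eqref{ILHiter} given in Proposition \ref{ILHseg} with Corollary \ref{genunif}, taking $\mathbf{V}=\mathbf{U}_{t-1}$ in \eqref{gensimplevar}; the direct stratification argument is self-contained and handles the dependent initial vector transparently, so I would favour it.
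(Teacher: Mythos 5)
Your proof is correct, but it takes a more elementary route than the paper. For the constant sum, the paper verifies that the line-segment representation of \eqref{ILHiter} satisfies the two constraints of Lemma \ref{CTMcomp} with $c_1=1/d$ and $c_2=\sum_l \pi_k(l)/d=(1-1/d)\,d/2$, and then invokes that lemma; your direct summation $\frac{1}{d}\bigl(\frac{d(d-1)}{2}+\frac{d}{2}\bigr)=\frac{d}{2}$ is the same computation unpacked, and is arguably cleaner. For marginal uniformity, the paper works entirely inside its segment framework: it reads off $a_{l,m}=m/d$, $\beta_{l,k}-\alpha_{l,k}=1/d$ and $\left\vert\mathcal{K}_{l,m}\right\vert=(d-1)!$ from Proposition \ref{ILHseg}, checks that the coordinate condition $F_{l,m}(\mathbf{a}_l)=1$ of Assumption \ref{assCoord} holds, and then appeals to Theorem \ref{unif} via Corollary \ref{genunif} (which is exactly the alternative you mention at the end and decline to pursue). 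Your stratification argument --- conditioning on $\pi_t(l-1)=k$, uniform on $\{0,\ldots,d-1\}$ and independent of $\mathbf{U}_{t-1}$, and averaging the piecewise CDF to get $\mathbb{P}(U_{t,l}\le x)=x$ --- is self-contained and makes explicit that only the marginal law of $U_{t-1,l}$ enters, so the dependence structure of the starting vector is manifestly irrelevant; this is the whole point of the corollary and your argument surfaces it more directly. What the paper's longer route buys is reuse: the same segment bookkeeping ($a_{l,m}$, $\mathcal{K}_{l,m}$, the edge set $\mathcal{E}^{LH}$) feeds immediately into the concordance computations of Proposition \ref{ILHtaurho}, whereas your argument would have to be supplemented by that representation anyway for those later results. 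Both proofs are valid.
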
 
 The preservation of strict $d$-CTM property raises the possibility of using ILH iterations on Arvidsen and Johnson's and CCV constructions. In addition, using results in section \ref{sec:mrho},  we can give a closed-form expression for the multivariate Kendall's $\tau$ and Spearman's $\rho$ for ILH iteration applied to $\mathbf{V}$ with i.i.d components or in representation \eqref{simplevar}  and rank the obtained random vector in the correlation order. 
In particular, ILH iterations on the Arvisen and Johnson construction and CCV have a constant sum and minimal multivariate Kendall's $\tau$  ( equation \eqref{taumin}).  Spearman's $\rho$ for those constructions can be easily obtained using the deterministic composition and equation \eqref{rhoseg}.
Using the segment representation in Proposition \ref{ILHseg}, the ILH($T$) proposal applied to $\mathbf{V}$ with i.i.d components, can be expressed as a $T$-fold deterministic composition. Each composition expands the cardinality of vertex and edge sets by $d!$, resulting in a vertex set of cardinality $2\left(d!\right)^T$ and an edge set of cardinality $\left(d!\right)^T$ . To maintain a feasible notation we substitute the index $k$ of the different edges with the multi-index $\left\{k_t\right\}^T_{t=1}$ where each $k_t = 1\ldots d!$ :

\begin{eqnarray}\label{ILHTvertex}
\alpha_{l,k_1,\ldots,k_T}=\sum^{T}_{r=1} \dfrac{\pi_{k_r}\left(l\right)}{d^{T-r+1}} ,&&\beta_{l,k_1,\ldots,k_T}=\sum^{T}_{r=1} \dfrac{\pi_{k_r}\left(l\right)}{d^{T-r+1}} +\dfrac{1}{d^T} 
\end{eqnarray}

The following proposition allows for computing multivariate Kendall's $\tau$ and Spearman's $\rho$ for the ILH($T$) case.

\begin{prop} \label{ILHtaurho} 
Let $U_0$ a $d$ dimensional random vector with iid $\mathcal{U}[0,1]$ components and let  $\mathbf{U}_T$ be the vector obtained by applying the transformation $T$ given in \eqref{ILHiter} to it. Then the Kendall's $\tau$ is
\begin{eqnarray*}
\tau&=&\dfrac{1}{2^{d-1}-1}\left( \dfrac{1}{\left(d!\right)^T}-1\right)
\end{eqnarray*}
and the Spearmans' $\rho$ has coefficient $\xi^{*}$ of Equation \eqref{eq:rhovref} equal to
\begin{eqnarray*}
\xi^{*} &=& \resizebox{0.9\linewidth}{!}{$\displaystyle \sum_{m_0+m_1+\ldots+m_T=d} {d\choose m_0,m_1,\dots ,m_T} \prod^{T}_{t=1}\left\{{d\choose m_t}^{-1} \sum_{{\tiny\left\{i_{1},\ldots,i_{m_t}\right\}\in \mathcal{P}_{\mathcal{D},m_t}} } \prod^{m_t}_{l=1} \dfrac{2\left(i_l-1\right)}{d^{T-t+1}}   \right\}\left(\dfrac{1}{d^T}\right)^{m_0}$}   
\end{eqnarray*}
where $\mathcal{C}_{\mathcal{D},m_2}$ denotes the set of combinations of the elements of $\mathcal{D}$ with $m_t$ elements.
\end{prop}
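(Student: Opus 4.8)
The plan is to read off the $T$-fold line-segment representation of ILH($T$) already recorded just before the statement: each iteration of \eqref{ILHiter} is a deterministic composition (Lemma \ref{detcompunif}), so iterating the single-step description of Proposition \ref{ILHseg} gives, along coordinate $l$ on the edge indexed by the permutation sequence $(k_1,\ldots,k_T)$, endpoints $\alpha_{l,k_1,\ldots,k_T}$ and $\beta_{l,k_1,\ldots,k_T}=\alpha_{l,k_1,\ldots,k_T}+d^{-T}$ as in \eqref{ILHTvertex}, with edge set of cardinality $(d!)^T$. Since $\mathbf U_0$ has i.i.d.\ components, representation \eqref{gensimplevar} applies with $\mathbf V$ having i.i.d.\ uniform coordinates (Corollary \ref{genunif}), so conditionally on an edge the coordinates of $\mathbf U_T$ are independent, with $U_{T,l}$ uniform on the grid cell $[\alpha_{l,k_1,\ldots,k_T},\alpha_{l,k_1,\ldots,k_T}+d^{-T}]$.

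For Kendall's $\tau$ I would start from $\tau(F_{\mathbf U})=\frac{2^d}{2^{d-1}-1}\big[\mathbb P(\mathbf U\le\mathbf W)-2^{-d}\big]$ (equivalently Proposition \ref{prop:taubound}) with $\mathbf U,\mathbf W$ independent copies, and condition both on their edges $k,k'$. By the conditional independence of coordinates, $\mathbb P(\mathbf U_T\le\mathbf W_T\mid k,k')=\prod_{l=1}^d\mathbb P(U_{T,l}\le W_{T,l}\mid k,k')$; because the two cells for coordinate $l$ are grid cells of common length $d^{-T}$ they either coincide or are disjoint, so each factor equals $1$, $\tfrac12$ or $0$ according as $\alpha_{l,k}<\alpha_{l,k'}$, $\alpha_{l,k}=\alpha_{l,k'}$, or $\alpha_{l,k}>\alpha_{l,k'}$. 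The crucial point is the constant-sum property: since every permutation of $\{0,\ldots,d-1\}$ sums to $\binom{d}{2}$, the quantity $\sum_l\alpha_{l,k}=\binom{d}{2}\sum_r d^{-(T-r+1)}$ does not depend on $k$, so $\sum_l(\alpha_{l,k}-\alpha_{l,k'})=0$. A product with all factors non-zero forces $\alpha_{l,k}\le\alpha_{l,k'}$ for every $l$, and the vanishing sum then forces $\alpha_{l,k}=\alpha_{l,k'}$ for all $l$; uniqueness of the base-$d$ expansion in \eqref{ILHTvertex} gives $k=k'$. Hence all off-diagonal terms vanish, only $k=k'$ survives with conditional value $(\tfrac12)^d$, and $\sum_k\mathbb P(K=k)^2=(d!)^{-T}$ yields $\mathbb P(\mathbf U_T\le\mathbf W_T)=2^{-d}(d!)^{-T}$; substituting back gives $\tau=\frac{1}{2^{d-1}-1}\big((d!)^{-T}-1\big)$.

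For Spearman's $\rho$ I would simply evaluate the coefficient $\xi^{*}=\frac{1}{|\mathcal E|}\sum_k\prod_{l=1}^d\big(x_{l,i(k)}+x_{l,j(k)}\big)$ of Proposition \ref{prop:rhobound} on the ILH($T$) segment set (the hypotheses of \eqref{eq:rhovref} hold since i.i.d.\ uniform $\mathbf V$ is reflection invariant). On edge $(k_1,\ldots,k_T)$ one has $x_{l,i(k)}+x_{l,j(k)}=2\alpha_{l,k}+d^{-T}=\sum_{r=1}^T\frac{2\pi_{k_r}(l)}{d^{T-r+1}}+\frac1{d^T}$, a sum of $T+1$ terms (one per iteration plus a constant term, which I index by $r=0$). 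Expanding $\prod_{l=1}^d(\,\cdot\,)$ assigns each coordinate a level $\sigma(l)\in\{0,1,\ldots,T\}$, and I would interchange this expansion with the uniform average over the independent draws $k_1,\ldots,k_T$. Because the level-$r$ factor involves only $\pi_{k_r}$, the average factorizes across $r$; for a fixed set $S_r=\{l:\sigma(l)=r\}$ of size $m_r$, the symmetry of the permutation sum gives $\frac1{d!}\sum_{\pi}\prod_{l\in S_r}\pi(l)=\binom{d}{m_r}^{-1}\sum_{\{i_1,\ldots,i_{m_r}\}\in\mathcal P_{\mathcal D,m_r}}\prod_{l}(i_l-1)$, which depends only on $m_r$, while the $r=0$ level contributes the factor $(d^{-T})^{m_0}$. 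Grouping the $\sigma$'s by their multiplicity vector $(m_0,\ldots,m_T)$, of which there are $\binom{d}{m_0,\ldots,m_T}$, reproduces exactly the stated closed form.

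The main obstacle is the bookkeeping in this last step: one must justify interchanging the product expansion with the permutation average, verify that each level-$r$ contribution is independent of \emph{which} coordinates lie in $S_r$ (this is precisely where the permutation/reflection symmetry of the i.i.d.\ uniform $\mathbf V$ enters), and correctly absorb the constant ``$+d^{-T}$'' summand as a separate level $r=0$ so that the multinomial coefficient $\binom{d}{m_0,\ldots,m_T}$ emerges. By contrast, the constant-sum cancellation driving the $\tau$ computation is the conceptually sharper ingredient, but it becomes short once the grid-cell structure of \eqref{ILHTvertex} is in hand.
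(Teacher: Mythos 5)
Your proposal is correct and follows essentially the same route as the paper: the $T$-fold grid-cell representation \eqref{ILHTvertex}, the double conditioning on edges with coordinate-wise factorization reducing $\mathbb{P}(\mathbf U\le\mathbf W)$ to the diagonal $k_{\mathbf U}=k_{\mathbf W}$ (giving $2^{-d}(d!)^{-T}$), and the multinomial expansion of $\prod_l(2\alpha_{l,k}+d^{-T})$ with permutation averages factorizing across iterations for $\xi^{*}$. Your explicit constant-sum argument for why the off-diagonal terms vanish (nonzero product forces coordinate-wise domination of the cells, which the invariant $\sum_l\alpha_{l,k}$ then rules out unless $k=k'$) usefully fills in a step the paper's proof states only tersely as ``the boxes with different segments do not overlap.''
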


In the one iteration case, i.e., $T=1$, we obtain the following expression:
\begin{eqnarray*}
\xi^{*} &=&  \prod^{d-1}_{l=0} \dfrac{(2l  +1)}{d}= \prod^{d}_{l=1}\dfrac{\left(2l  -1\right)}{d}=\dfrac{\left(2d  -1\right)!}{d^d \left(2^{d-1} \left(d-1\right)!\right)} =  \dfrac{\prod^{2d-1}_{l=d} l}{d^d2^{d-1} }
\end{eqnarray*}
and using the arithmetic and geometric means inequality we obtain the bound:
\begin{eqnarray*}
\xi^{*}_1 &\leq&   \dfrac{\left(\frac{1}{d}\sum^{2d-1}_{l=d} l\right)^d}{d^d2^{d-1} }\leq  2\left(\dfrac{3d-1}{4d}\right)^{d}\leq 1.
\end{eqnarray*}

For the sake of comparison, in figure \ref{fig:mrho} we summarize Kendall's $\tau$ and Spearman's $\rho$ for all the constructions discussed in this section.  Since Kendall's $\tau$ and Spearman's $\rho$ are permutation invariant, the same ranking applies to the exchangeable versions of the constructions. 
\begin{figure}[t]\label{fig:mrho}
\begin{center}
\includegraphics[trim=70 50 50 0,clip,scale=0.5]{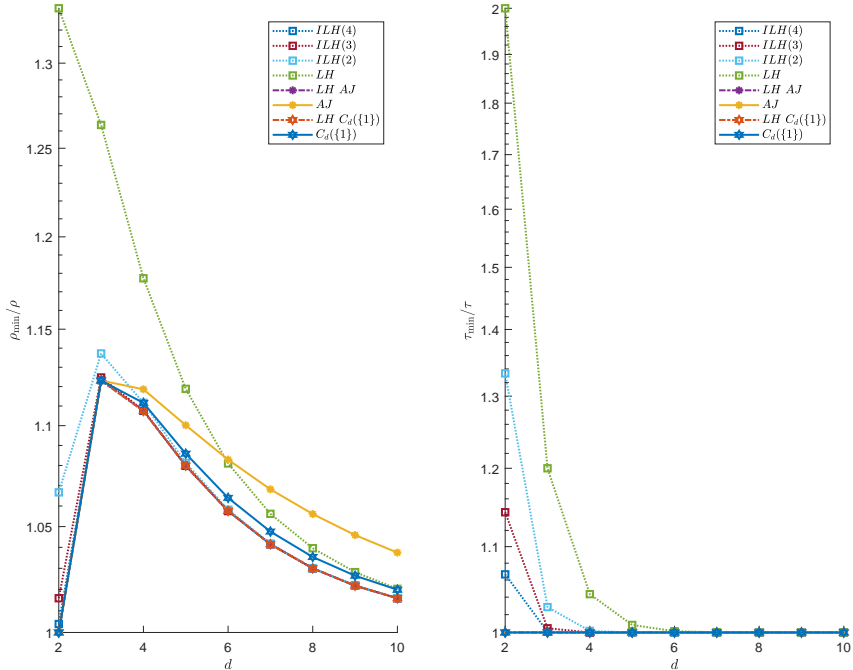}
\end{center}
\caption{Ratio of minimum value attainable by multivariate association measures and the values attained by different antithetic random vectors (vertical axis) as a function of dimension $d$ (horizontal axis). Left: the Spearman's $\rho$ measure. Right: the Kendall's $\tau$ measure. Note: in each plot, larger values indicate constructions farther from the minimum.}
\end{figure}

\section{A Central Limit Theorem}\label{sec:CLT}

In this section, we study the Central Limit Theorem for our best performing classes of variates. 

\begin{definition}[Generalized Latin Hypercube Sample] Let $\pi_{i}$, $i=1,\ldots,p$ independent random permutations of $0,\ldots,d-1$ and  $\mathbf{V}^{i}=\left( V_1^i,\ldots,V_d^i\right)$ $i=1,\ldots,p$  random vectors, independent from the $\pi_{i}$ and from each other,  identically distributed with probability measure $\mu$.  A $p\times d$ matrix $\mathbf{U}$ is a Generalized Latin Hypercube Sample if has stochastic representation
\begin{eqnarray}\label{lssample}
U_l^i =\frac{\pi_{i}\left(l\right) + 1}{d} V^{i}_{l} +  \frac{\pi_{i}\left(l\right)}{d}\left(1-V^{i}_{l}\right)\, i=1,\ldots,p , l=1,\ldots,d 
\end{eqnarray}
\end{definition}

We remark that the most negative dependent constructions ILH(T), LH$C_d\left(\left\{1\right\}\right)$, and LHAJ generate Generalized Latin Hypercube Samples. 

The following lemma introcuce the irrelevance of the distribution on the $\mathbf{V}^{i}$, $i=1,\ldots,p$.

\begin{lemm}\label{muirrelevance}
Consider  a $a$-LH sample and a  $b$-LH sample where  $a$ and $b$ are two different Radon measures. The following relationship holds for every function $f$ locally integrable with respect to both measures :
\begin{eqnarray*}
\mathbb{E}_{a-LH}\left[\left(\dfrac{1}{d}\sum^{d}_{l=1}f\left(\mathbf{U}_l\right)\right)^r \right]-\mathbb{E}_{b-LH}\left[\left(\dfrac{1}{d}\sum^{d}_{l=1}f\left(\mathbf{U}_l\right)\right)^r \right]= o\left(1\right)
\end{eqnarray*}
\end{lemm}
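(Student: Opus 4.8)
The plan is to expand the $r$-th moment as a multi-index sum, to carry out the average over the random permutations explicitly, and to observe that this average turns each summand into a shifted equispaced Riemann sum whose large-$d$ limit is blind to the law of $\mathbf{V}$. Throughout, $r$ is a fixed positive integer and all limits are taken as $d\to\infty$ with $p$, $r$, $a$, $b$ held fixed. Writing the representation \eqref{lssample} as $U_l^i=(\pi_i(l)+V_l^i)/d$, the point $\mathbf{U}_l=(U_l^1,\ldots,U_l^p)$ lies in the product cell indexed by $(\pi_1(l),\ldots,\pi_p(l))\in\{0,\ldots,d-1\}^p$ with within-cell offset $\mathbf{V}_l=(V_l^1,\ldots,V_l^p)$. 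I would start from
\[
\mathbb{E}_{a\text{-}LH}\!\left[\left(\tfrac1d\sum_{l=1}^{d}f(\mathbf{U}_l)\right)^{r}\right]=\frac{1}{d^{r}}\sum_{l_1,\ldots,l_r=1}^{d}\mathbb{E}\!\left[\prod_{j=1}^{r}f(\mathbf{U}_{l_j})\right].
\]
The tuples with at least one repeated index number $O(d^{r-1})$, so they form a fraction $O(1/d)$ of all $d^{r}$ terms; using that $f$ has finite $r$-th moment under both laws, uniformly in $d$ (the quantitative content I read into ``locally integrable with respect to both measures''), their total contribution to each moment is $o(1)$ and may be dropped. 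It remains to treat the all-distinct tuples.

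For a fixed all-distinct tuple, condition on $\mathbf{V}$ and average over the $p$ independent uniform permutations. The cell vectors $(\pi_i(l_1),\ldots,\pi_i(l_r))$ are then a sample without replacement of distinct cells, independently across $i$; replacing sampling without replacement by sampling with replacement costs a total-variation error $O(1/d)$. After this replacement the conditional expectation factorizes, giving
\[
\mathbb{E}_{\pi}\!\left[\prod_{j=1}^{r}f(\mathbf{U}_{l_j})\ \Big|\ \mathbf{V}\right]=\prod_{j=1}^{r}G_d(\mathbf{V}_{l_j})+O(1/d),\qquad G_d(\mathbf{v})=\frac{1}{d^{p}}\sum_{\mathbf{c}\in\{0,\ldots,d-1\}^{p}}f\!\left(\frac{\mathbf{c}+\mathbf{v}}{d}\right),
\]
where $G_d$ is exactly the $p$-dimensional shifted equispaced Riemann sum of $f$ with common offset $\mathbf{v}$ in every cell. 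The law of $\mathbf{V}$ now enters only through the outer expectation $\mathbb{E}_\mu[\prod_j G_d(\mathbf{V}_{l_j})]$.

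The engine of the argument is that $G_d(\mathbf{v})\to\bar f:=\int_{[0,1]^p}f\,d\mathbf{u}$. When $f$ is continuous on $[0,1]^p$ this holds \emph{uniformly} in the offset, with error at most the modulus of continuity $\omega_f(\sqrt{p}/d)$; hence $\prod_{j=1}^{r}G_d(\mathbf{V}_{l_j})\to\bar f^{\,r}$ uniformly, and its expectation tends to $\bar f^{\,r}$ for \emph{any} probability law $\mu$ and regardless of the dependence among the coordinates of $\mathbf{V}$. Summing over the $(d)_r\sim d^{r}$ distinct tuples and dividing by $d^{r}$, each of the two moments converges to the same measure-free limit $\bar f^{\,r}=(\int f)^{r}$, so their difference is $o(1)$. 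For continuous $f$ the lemma is thus immediate.

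For general $f$ I would split $f=g+h$ with $g$ continuous and $h$ small in the appropriate $L^1$ sense, dispose of $g$ as above, and bound the remainder through its shifted Riemann sum $G_d^{h}$. This last step is the main obstacle: in $G_d^{h}(\mathbf{v})$ the offset law is rescaled by $1/d$ and copied into every cell, so $\int|G_d^{h}|\,d\mu$ is governed by the Lebesgue mass of $|h|$ on $[0,1]^p$ rather than by $\|h\|_{L^1(\mu)}$. Controlling it therefore requires the almost-everywhere convergence of shifted Riemann sums for $L^1$ functions together with a dominated-convergence bound uniform in $\mathbf{v}$ over the supports of $a$ and $b$; this analytic transfer from Lebesgue-$L^1$ convergence to convergence against the (possibly singular) offset laws — not the permutation combinatorics — is where the real work lies, and it is what the local-integrability hypothesis is there to supply.
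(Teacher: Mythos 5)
Your route is genuinely different from the paper's, and for continuous (or Riemann-integrable) $f$ it would work: averaging over the permutations to produce the shifted equispaced sum $G_d(\mathbf{v})=d^{-p}\sum_{\mathbf{c}}f((\mathbf{c}+\mathbf{v})/d)$, discarding diagonal tuples, and invoking uniform convergence of $G_d$ to $\int f$ is a clean argument whose limit is visibly measure-free. But as you yourself flag, the proof is not complete for the class of functions in the statement, and the gap is not merely technical. The quantity you must control is $\mathbb{E}_{\mu}[G_d^{h}(\mathbf{V})]$ for a possibly singular Radon offset law $\mu$; when $\mu$ is, say, a point mass this requires \emph{pointwise} convergence of shifted Riemann sums at a prescribed offset, and it is a classical fact (Rudin 1964, Marstrand 1970) that Riemann sums $d^{-1}\sum_{k}f(v+k/d)$ of an $L^{1}$ -- even bounded measurable -- function need not converge along the full sequence $d\to\infty$ for any particular $v$, only in $L^{1}(dv)$ or a.e.\ along lacunary subsequences. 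Local integrability therefore cannot supply the missing step, and the approximation $f=g+h$ cannot be closed without extra hypotheses on $f$ or on the supports of $a$ and $b$. A secondary gap: your dismissal of the repeated-index tuples needs $r$-th moments of $f(\mathbf{U}_l)$ bounded uniformly in $d$ under both LH laws, which local integrability alone also does not give.

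The paper sidesteps the Riemann-sum obstruction by a different decomposition. It expands the power by the multinomial theorem, conditions on the vector of boxes $A^{\mathbf{m}}$ containing the $\mathbf{U}_l$, and makes two observations: (i) the box probabilities $\mu_{c\text{-}LH}(A_{\{\mathbf{m}_l\}})$ are pure permutation combinatorics, hence identical for the $a$- and $b$-samples; (ii) each conditional expectation is an average of $f$ over a \emph{single} shrinking box with respect to the measure $\mu_{c\text{-}LH}$ itself, so the Besicovitch--Lebesgue differentiation theorem for Radon measures gives convergence to the value at the box centre for $\mu$-almost every box, which is exactly what local integrability is used for. Because the measure-dependence is confined to these single-box averages rather than to a sum over all cells with a common offset, the pathology of non-convergent Riemann sums never arises. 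If you want to salvage your argument, you would need either to restrict to $f$ continuous a.e.\ (as in your first case), or to replace the global Riemann-sum step by a per-cell differentiation argument of the paper's type.
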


Given the previous lemma we are able to show that the asymptotic distribution is the same of the ordinary Latin Hypercube. In particular \cite{Stei:87:LSP} express the variance of the latin hypercube using the ANOVA decomposition of the function $f$:
\begin{eqnarray}\label{ANOVA}
f\left(\mathbf{u}\right) &=& \mathbb{E}_{IID}\left[f\left(\mathbf{U}\right)\right] + \sum^{p}_{i=1} f_i\left(u_i\right) + r\left(\mathbf{u}\right)\\\nonumber
f_i\left(u_i\right) &=& \mathbb{E}_{IID}\left[ f\left(\mathbf{u}\right) - \mathbb{E}_{IID}\left[f\left(\mathbf{U}\right)\right] \left\vert U_i =u_i\right.\right] 
\end{eqnarray}
\begin{theo}\label{CLT} Let 
$\bar{X}= \dfrac{1}{d}\sum^{d}_{l=1}f\left(\mathbf{U}_l\right)$
with $\mathbf{U}_l$ $l\in\mathcal{D}$ from a $\mu$-LH sample with $\mu$ Radon and $f$ bounded and locally integrable with respect to $\mu$. Then $\sqrt{d} \left( \bar{X}- E_{IID}\left(X\right)\right)$ converges in distribution to $\mathcal{N}\left(0, \int_{\left[0,1\right]^p} r\left(\mathbf{u}\right)^2d\mathbf{u}\right)$, where $ r\left(\mathbf{u}\right)$ is introduced in \eqref{ANOVA}.
\end{theo}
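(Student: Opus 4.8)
The plan is to establish the weak limit by the method of moments, exploiting that a centred Gaussian law is uniquely determined by its moments (Fr\'echet--Shohat / Carleman). Writing $m=\mathbb{E}_{IID}\left[f(\mathbf{U})\right]$ and $\sigma^2=\int_{[0,1]^p}r(\mathbf{u})^2d\mathbf{u}$, it suffices to prove that for every integer $r\geq 1$
\begin{equation*}
\mathbb{E}_{\mu\text{-}LH}\left[\left(\sqrt{d}\,(\bar{X}-m)\right)^r\right]\longrightarrow \gamma_r(\sigma^2),
\end{equation*}
where $\gamma_r(\sigma^2)$ is the $r$-th moment of $\mathcal{N}(0,\sigma^2)$ (so $\gamma_r=0$ for odd $r$). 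The two ingredients are \cite{Owen:92:CLT}, which settles the case where $\mu$ is Lebesgue, and Lemma \ref{muirrelevance}, which erases the dependence on $\mu$.

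First I would treat the reference measure $\mu_0=$ Lebesgue. When $\mathbf{V}^i$ has i.i.d.\ uniform coordinates, representation \eqref{lssample} collapses to $U_l^i=(\pi_i(l)+V_l^i)/d$, i.e.\ to the ordinary Latin Hypercube design that places exactly one point in each of the $d$ strata along every coordinate. Using the ANOVA decomposition \eqref{ANOVA}, $\bar{X}-m=\frac1d\sum_l\sum_i f_i(U_l^i)+\frac1d\sum_l r(\mathbf{U}_l)$; the main-effect block consists of stratified estimators of $\int f_i=0$ whose total variance is $o(d^{-1})$ by \cite{Stei:87:LSP}, hence is annihilated by the $\sqrt{d}$ scaling, while the residual average behaves like an i.i.d.\ Monte Carlo mean with variance $\sigma^2/d$. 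This is exactly \cite{Owen:92:CLT}, giving $\sqrt{d}(\bar{X}-m)\Rightarrow\mathcal{N}(0,\sigma^2)$ under $\mu_0$. Supplemented by uniform $L^{r+1}$-bounds on the scaled statistic (finite because $f$ is bounded and the LHS moment expansion is uniformly controlled in $d$), this upgrades the weak limit to convergence of every moment in the Lebesgue case.

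The last step transfers the statement to an arbitrary Radon $\mu$ via Lemma \ref{muirrelevance} applied to $g=f-m$: the $r$-th moments of $\frac1d\sum_l g(\mathbf{U}_l)$ under $\mu$-LH and under $\mu_0$-LH coincide in the limit, so both sequences of scaled moments share the value $\gamma_r(\sigma^2)$, and the method of moments closes the argument. The main obstacle is precisely this comparison: the scaled moment carries a factor $d^{r/2}$, so the headline bound ``difference $=o(1)$'' of Lemma \ref{muirrelevance} is not sufficient on its own — one needs the difference of the \emph{unscaled} $r$-th moments to be $o(d^{-r/2})$. I would resolve this by invoking Lemma \ref{muirrelevance} at the level of its proof rather than its statement: $\mu$ enters only through fluctuations inside strata of width $1/d$, so after the ANOVA split the $\mu$-sensitive part lives in the main-effect block (order $d^{-3/2}$, killed by $\sqrt{d}$), whereas the residual block, which fixes the variance, sees $\mu$ only through within-stratum averages that converge to the Lebesgue integral $\sigma^2$ by a Riemann-sum argument independent of $\mu$. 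Verifying that these within-stratum averages converge at a rate compatible with the $\sqrt{d}$ normalisation, and that the cross-moments between the two ANOVA blocks are genuinely negligible, is the technical heart of the proof.
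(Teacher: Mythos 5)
Your proposal follows essentially the same route as the paper: the published proof consists of exactly the two steps you describe---invoke Lemma \ref{muirrelevance} via the method of moments to replace $\mu$ by the independence (Lebesgue) measure, and then apply Theorem 1 of \cite{Owen:92:CLT} using the boundedness of $f$. The two issues you single out as the ``technical heart''---that the $o(1)$ of Lemma \ref{muirrelevance} must hold at the $\sqrt{d}$ scale (note that the lemma's statement normalizes by $1/d$ while its proof actually works with $1/\sqrt{d}$, and the multinomial expansion there has a number of terms growing with $d$), and that Owen's weak convergence must be upgraded to convergence of all moments via uniform moment bounds---are left entirely implicit in the paper's two-line proof, so your additional care on these points is warranted rather than redundant.
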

\begin{rema} The hypothesis of independence $\mathbf{U}^{j}$ independent from $\mathbf{U}^{k}$ for all $j\neq k$ $j,k=1,\ldots,p$ in the definition of Generalized Latin Hypercube Sample is not restrictive as it seems. In practice, one can use the inverse  Rosenblatt transform (see for example \cite{ruschendorf2013}, Theorem 1.12) to obtain samples from a generic distribution. Then, if the composition of $f$ with the inverse Rosenblatt transform is bounded, we are still under the incidence of Theorem \ref{CLT}. A multivariate version of the central limit theorem can be obtained as in Corollary 1 of \cite{Owen:92:CLT}. Concerning the boundedness assumption on $f$, it is probably too restrictive because in \cite{loh} a multivariate Berry-Essen type Bound for the standardized multivariate version of $\bar{X}$, in the case of Latin Hypercube, is obtained under the assumption that the multivariate function $\mathbf{f}$ involved is Lebesgue measurable and $\mathbb{E}\left\Vert \mathbf{f} \right\Vert^3<\infty$.  
\end{rema}

\section{Numerical Illustrations}\label{sec:numerics}
We illustrate the performance of the methods presented in this paper using several simulation exercises involving standard Monte Carlo, Markov chain Monte Carlo and Sequential Monte Carlo algorithms \footnote{Replication codes for the examples in the paper can be found at https://github.com/Frattalol/Livingontheedge}.

One of the critical dimension used to rank our countermonotonic vectors is sampling time, as shown in Figure \ref{fig:samplingtimes1}. It is obtained by averaging over 1000 independent replications. In each experiment we sampled 5000 antithetic vectors of dimension $2\le d \le 20$. Samplings schemes have been implemented in Matlab on a Windows 10 laptop with an Intel i7-6500U CPU and 8 GB of RAM.
 
 \begin{figure}[t]\label{fig:samplingtimes1}
 \begin{center}
\includegraphics[trim=50 50 50 40,clip, scale=0.5]{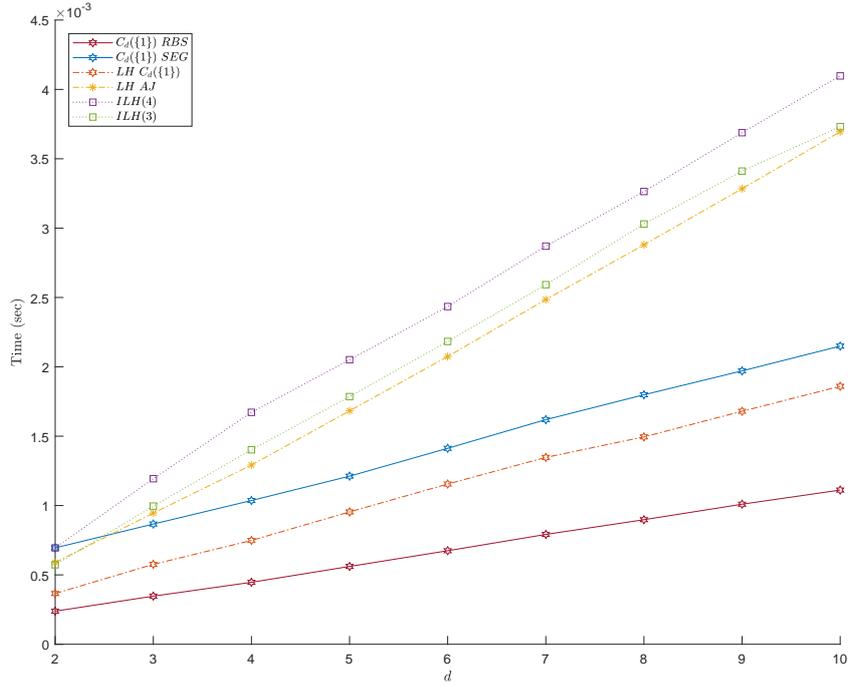}
\end{center}
\caption{Sampling times (vertical axis) as function of the sampling dimension $d$ (horizontal axis) for different methods (different symbols and colors). All estimates are averages over 1000 experiments. In each experiments 5000 antithetic vectors of dimensions $d$ are sampled following a given method.}
\end{figure}

Permuted $C_d\left(\left\{1\right\}\right)$ outperforms the other competitors. We report the times for the segmental version that randomly permutes stochastic representation \eqref{simplevar} and the RBS version that uses the equivalence with random balanced sampling described in Proposition \ref{RBScirc}. The latter choice is faster and it is also used as the base for the Latin Hypercube iteration in $LH$ $C_d\left(\left\{1\right\}\right)$.

Consequently, all the experiment in the section  use the exchangeable version of LH $C_d\left(\left\{1\right\}\right)$. In fact, LH $C_d\left(\left\{1\right\}\right)$ vector reach the lower value of multivariate Kendall's $\tau$ and  Spearman's $\rho$ (fig.\ref{fig:mrho} ), but is  also a faster sampling scheme.

\subsection{Monte Carlo Integration}
The first evaluation of our methodology is for the Monte Carlo integration on the unit hypercube,
with integration points chosen according to the three competing schemes. In standard Monte-Carlo (MC), the sampling points are iid, and each sampling point corresponds to a random vector of dimension equal to the number of variables in the function. In our antithetic method (LH $C_{d}\left(\left\{1\right\}\right)$) the variables used to populate the same coordinate for different sampling points are from an antithetic vector. Across coordinates, these antithetic vectors are independent of one another. For the Quasi Monte Carlo's Sobol scheme (QMC Sobol'), a deterministic sequence of points uniformly cover the hypercube of dimension equal to the number of variables.
In the evaluation of the integration problem complexity the effective dimension plays an important role. We refer to \cite{owen2003}, \cite{wangfang2003} and \cite{wangsloan2005} for the formal definition of truncation $p_t$ and superposition $p_s$ dimensions. \cite{owen2003} shows that a low 
$p_s$ is necessary for QMC to surpass the computational efficiency of MC when the sample sizes are at practical levels. \cite{wangfang2003} and \cite{wangsloan2005} show that the integrands commonly used in option pricing have  $p_t\simeq p$ and  $p_s\leq 2$ and explain, using those results, QMC's good performance in this domain. According to our Theorem \ref{CLT} for integrands with $p_s=1$, i.e. for functions that are well approximated by sums of one dimensional functions, LH $C_{d}\left(\left\{1\right\}\right)$ should be efficient in reducing the variance. Theorem \ref{CLT} also guarantees that our method cannot perform worst than MC, asymptotically in the number of points. To investigate the role of effective dimension in the relative performance of the three competing methods, we use the two parameter function introduced in \cite{wangsloan2005}:
\begin{eqnarray}\label{wangsloan}
f\left(\mathbf{x}\right)=\prod^{p}_{i=1}\left(1+ a\tau^{i}\left(x_i-1/2\right)\right).
\end{eqnarray} 
Varying the parameter $a$ has more effect on $p_t$ than on $p_s$ and varying the parameter $\tau$ has the opposite effect. We consider an high dimensional function ($p=100$) and different specifications of effective dimensions, according to the parameters reported in Table \ref{tab:wangsloan}.

\begin{table}[htbp]
  \centering
  \caption{\cite{wangfang2003} effective dimensions (truncation and superposition dimensions, $p_t$  and $p_s$, respectively) of the integrand function in Eq. \eqref{wangsloan} for dimension $p=100$ and different parameter settings (columns).}\label{tab:wangsloan}
  \setlength{\tabcolsep}{4pt}
    \begin{tabular}{|c|c c c c c| c c c c c| c c c c  c|}
    \hline
  $a$ &0.1	&0.1	&0.1	&0.1	&0.1	&1	&1	&1	&1	&1 &	10&	10	&10	&10	&10
   \\
  \hline
  $\tau$ & 0.1	&0.5&	0.8&	0.9&	1&	0.1&	0.5&	0.8&	0.9&	1&	0.1&	0.5&	0.8&	0.9&	1\\\hline
  $p_t$ & 2	&4	&11	&22	&100	&2	&4	&11	&23	&100	&2	&5	&17	&39	&100\\\hline
  $p_s$ &1&	1&	1&	1&	2&	1&	1&	2&	3&	14&	1&	3&	8&	15&	96
\\\hline  
    \end{tabular}
  \label{tab:effectivedim}%
\end{table}%

Effective dimensions are computed using the methods for multiplicative functions introduced in  \cite{wangfang2003}.

Figure \ref{fig:MCgenz_1} shows the mean square error (MSE) (vertical axis) and the computing time (horizontal axis) of Monte Carlo (red), QMC (black) and circulant variates LH-$C_d(\{1\})$ (blue) sampling, for the different effective dimensions $p_t$ and $p_s$ (different plots) given in Table \ref{tab:wangsloan}. Our LH-$C_d(\{1\})$ method has the best performance when the superposition dimension is equal to 1. The performance is decreasing in the truncation dimension. For $p_s =2,3$, QMC performs better than the method proposed here when the number of points used in the integration is high. QMC advantage increases in the truncation dimension. At moderate $p_s$  QMC dominates. In those cases, our proposal is slower than MC, but reaches the same MSE. In the extreme case of $f$ almost full dimensional (right lower corner),  $C_{d}\left(\left\{1\right\}\right)$ is performing as bad as MC, but QMC is doing orders of magnitudes worst. These numerical results are in line with the result in Theorem \ref{CLT} and indicate that our method should be used when the superposition dimension is low and when there is no information about effective dimensions of the integrand since in the worst case, it  reproduce the precision of standard MC estimates.

%
%
\begin{figure}[t]
\begin{center}
\includegraphics[trim=80 30 80 50,clip,scale=0.6]{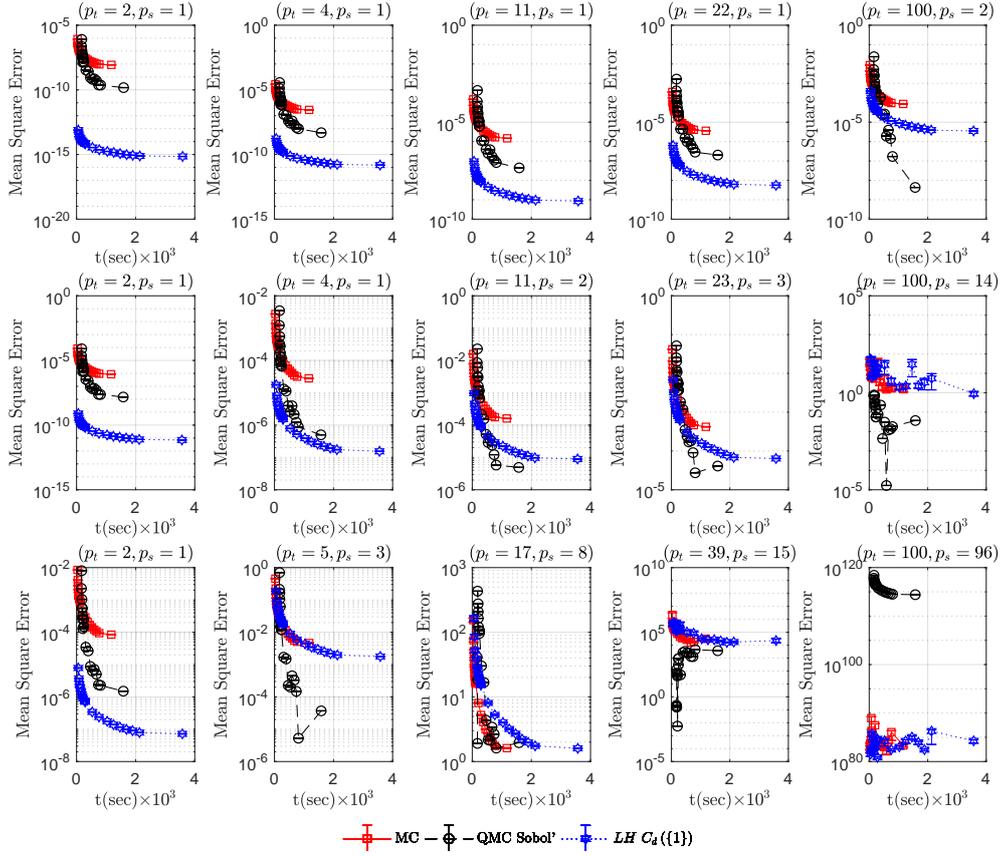}
\end{center}
\caption{Mean square error (vertical axis) and computing time (horizontal axis) for different effective dimensions $p_t$ and $p_s$ (different plots). In each plot: Monte Carlo (red), QMC (black) and circulant variates LH-$C_d(\{1\})$ (blue) sampling. For each line: different number of samples from 10 to 1000 (circles). For each setting, all statistics are averages over 10000 experiments.}
\label{fig:MCgenz_1}
\end{figure}
\subsection{Markov Chain Monte Carlo}
\subsubsection{Bayesian inference on Probit \cite{Meng2001}}
The data used are taken from \cite{Meng2001} and represent the clinical characteristics summarized by two covariates of 55 patients, of which 19 diagnosed with lupus. The disease indicator is modelled as independent $Y_i\sim \mathcal{B}er\left(\Phi\left(x_i^{T} \beta \right)\right)$ where $\Phi$ is the standard normal CDF and $\beta=\left(\beta_0,\beta_1,\beta_2\right)^{T}$ is a the vector of parameters. The objective is to sample from the posterior distribution corresponding to the flat prior for $\beta$.
We adopt the standard Gibbs sampler with latent variables $\Psi_i\sim \mathcal{N}\left(x_i^T\beta,1\right)$ of which we consider only the sign.  To obtain draws from the posterior we repeat the following alternating two steps. First we sample from $\left.\beta\right\vert \psi \sim \mathcal{N}\left(\tilde{\beta}, \left(X^{T}X\right)^{-1}\right)$ with $\tilde{\beta}=\left(X^{T}X\right)^{-1}X^{T}\psi$ with $X$ the data matrix whose $i$-th rows is $x_i$. Then from $\psi_i\left\vert\beta,Y_i\right.\sim \mathcal{TN}\left(x_i^{T}\beta,1,Y_i\right)$ where $\mathcal{TN}\left(\mu,\sigma^2,Y\right)$ is a the normal distribution with mean $\mu$ and variance $\sigma^2$, truncated to be positive if $Y>0$ or negative otherwise. Further details of the algorithms can be found in \citep{CraiuMeng2005}. 

\begin{figure}[t]
\includegraphics[trim=80 30 80 30,clip,scale=0.6]{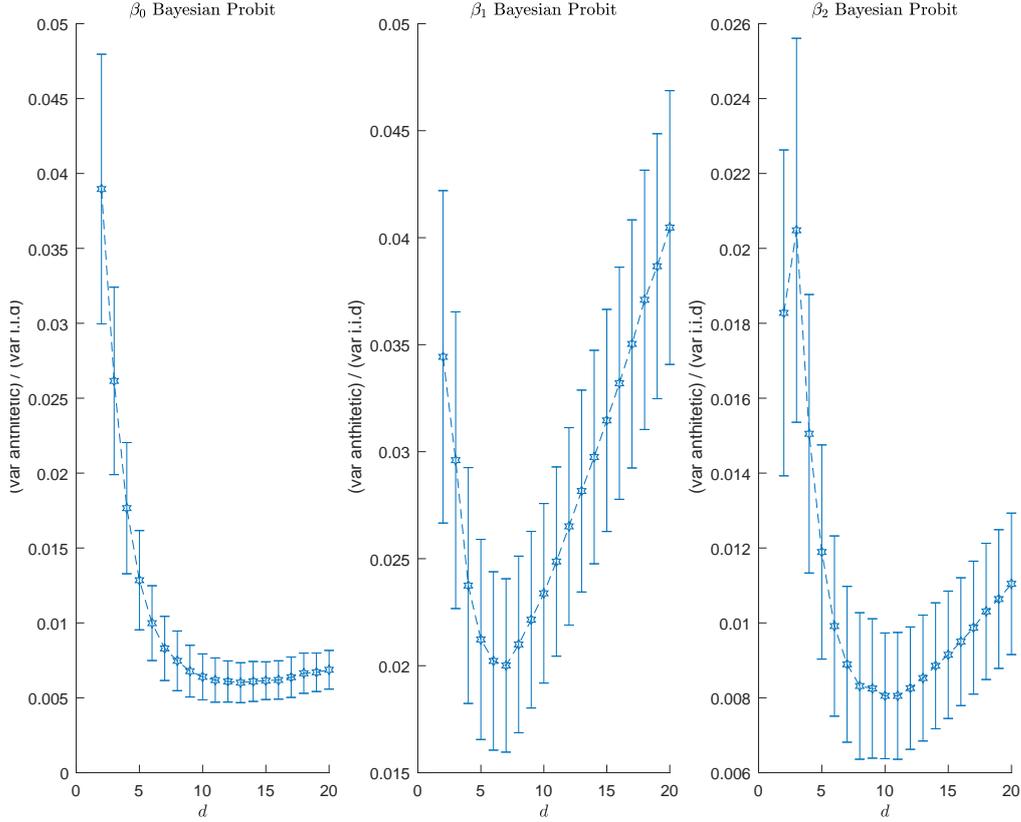}
\caption{Asymptotic variance ratio between antithetic and iid sampling (vertical axis) for different dimension $d$ (horizontal axis) of antithetic variates, and parameters of the Bayesian probit model (different plots). In each plot: the average ratio (dots) and its range (vertical segments). Note: all estimates are based on 1000 independently-replicated experiments. In each experiment 5000 Gibbs iterations with antithetic vectors of dimensions $d$ are used.}
\label{fig:BP}
\end{figure}
\subsubsection{Metropolis within Gibbs Hierarchical Poisson \cite{Gelfand1990}}
This second example concerns the  counts of failures $s=\left(s_1,\ldots,s_n\right)$ for $n=10$ pumps in a nuclear power  plant. The time $t=\left(t_1,\ldots,t_n\right)$ of operation are known. The model assumes $s_k\sim\mathcal{P}oi\left(\lambda_k t_k\right) $ and $\lambda_k\sim\mathcal{G}a\left(\alpha,\beta\right)$. The objective of the inference are draws from the posterior distribution of the parameters $\alpha$  to which we assign an exponential prior with mean 1  and $\beta$ with a Gamma prior distribution $\mathcal{G}a\left(0.1,1\right)$. Using the conjugate priors it is easy to obtain a Gamma distribution for $\lambda_1\left.\vert \lambda_2,\ldots\lambda_n,\beta\right.$ and those variables can be easily sampled using a Gibbs step. Sampling of $\alpha$ is slightly more difficult. In fact we have :
\begin{eqnarray*}
\mathbb{P}\left(\alpha\left.\vert \lambda_1,\ldots\lambda_n,\beta\right.\right)\propto \exp\left[\alpha\left(n\log\beta + \sum^n_{k=1}\log\lambda_k-1\right)-n\log\Gamma\left(\alpha\right)\right]
\end{eqnarray*}
We then sample $\alpha$ with a random walk Metropolis Hastings (MH) step with a deterministic scan. We show results for the case when we antithetically couple also the uniform draws for acceptance rejection choice and when we are not doing it. The former case is the one for which \cite{ Frigessi2000} report the worst performance of the usual two variates antithetic coupling of chains, in agreement with our results. 
\begin{figure}[t]
\includegraphics[trim=80 30 80 30,clip,scale=0.6]{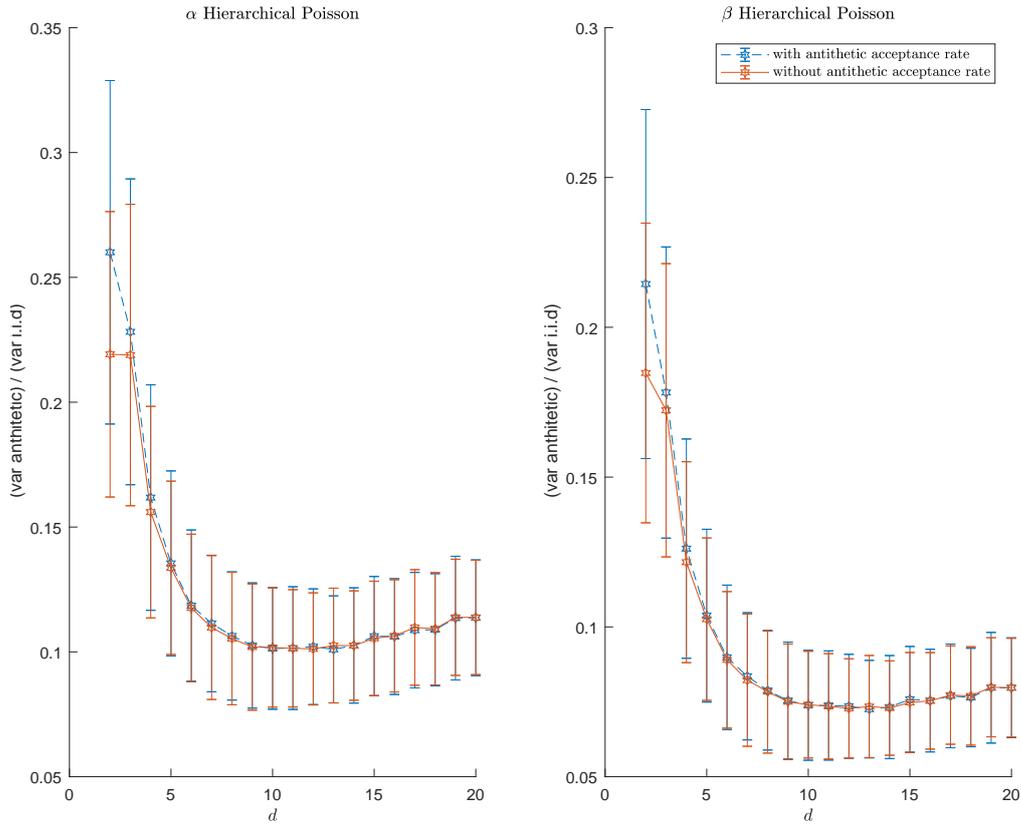}
\caption{Asymptotic variance ratio between antithetic and iid sampling (vertical axis) for different dimension $d$ (horizontal axis) of antithetic variates, and parameters of the Bayesian Hierarchical Poisson model (different plots). In each plot: the average ratio for the MH with (blue dots) and without (red dots) antithetic acceptance rule and their ranges (vertical segments). Note: all estimates are based on 1000 independently-replicated experiments. In each experiment 5000 Gibbs iterations with antithetic vectors of dimensions $d$ are used.}
\label{fig:HP}
\end{figure}

\subsubsection{Pseudo Marginal Metropolis Hastings Stochastic Volatility \cite{SQMC}}
The last application, target a state of the art methodology, the Pseudo Marginal MH (PCMH) proposed by \cite{andrieu2009} and which is able to estimate models with intractable likelihoods that are approximated using a particle filter. In particular, following \cite{SQMC}, we consider the bivariate stochastic volatility model introduced in \cite{Chan2006}:
\begin{eqnarray*}
\mathbf{y}_t &=& S_t^{1/2} \mathbf{\epsilon}_t \\
\mathbf{x}_t &=& \mathbf{\mu} + \Phi\left(\mathbf{x}_{t-1}-\mathbf{\mu}\right)+ \Psi^{1/2}\nu_t\\
S_t &=& \hbox{diag}\left(\exp\left(x_{1t},x_{2t}\right)\right)\\
\left(\mathbf{\epsilon}_t,\mathbf{\nu}_t\right)&\sim & \mathcal{N}\left(\mathbf{0}_4,C\right)
\end{eqnarray*}
with $\Phi$ and $\Psi$ diagonal matrices and $C$ a correlation matrix.
Following those authors we take the prior:
\begin{eqnarray}
\phi_{ii}&\sim & \mathcal{U}[0,1] \\
\dfrac{1}{\psi_{ii}}&\sim & \mathcal{G}a\left(10 \exp\left(-10\right)/, 10\exp\left(-3\right)\right)
\end{eqnarray}
where $\phi_{ii}$ and $\psi_{ii}$ denote respectively the diagonal elements of $\Phi$ and $\Psi$, and a flat prior for
$\mu$. In addition, we assume that C is uniformly distributed on the space of correlation matrices. To sample from the posterior distribution of the parameters, we use a Gaussian random-walk MH algorithm with covariance matrix calibrated by \cite{SQMC} so that the acceptance probability of the algorithm becomes, as $N$ tends to infinity, close to 25\%. We consider the mean-corrected daily returns on the Nasdaq and Standard and Poor's 500 indices for the period ranging from January 3rd, 2012, to
October 21st, 2013, so that the data set contains 452 observations. Figure \ref{fig:PMMH}
\begin{figure}[h!]
\centering
\includegraphics[trim=00 0 0 0,clip,scale=0.5]{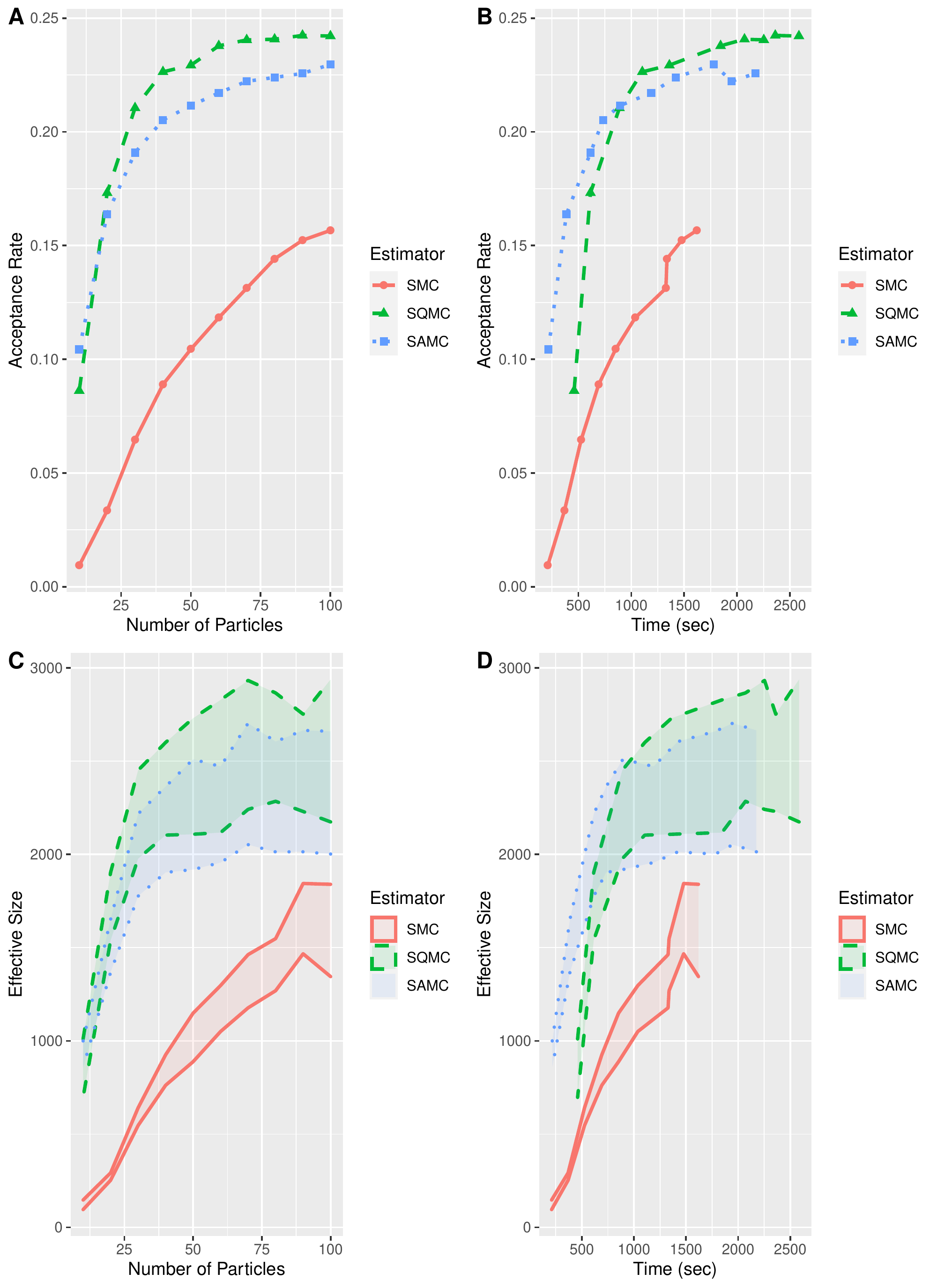}
\caption{PCMH using Sequential Monte Carlo (SMC), Sequential Quasi-Monte Carlo (SQMC) and Sequential Antithetic Monte Carlo (SAMC) (different colors). Acceptance rate of the Metropolis step (vertical axis) versus number of particles (horizontal axis, Panel A) and computing time (horizontal axis, Panel B). Maximum and minimum effective sample size (vertical axis) versus number of particles (horizontal axis, Panel C) and computing time (horizontal axis, Panel D).}
\label{fig:PMMH}
\end{figure}
PCMH algorithms using sequential quasi-Monte Carlo and antithetic Monte Carlo are equivalent in acceptance rate and effective sample size (Panels A and C) when a low number of particles (up to 20) is used. Nevertheless, antithetic Monte Carlo achieves larger acceptance rates (AR) and effective sample size (ESS) with a lower computing time (Panel B and D). When a larger number of particles is used (above 20), the performances are equivalent in terms of ESS, whereas SQMC is better in terms of AR.

\section{Discussion}\label{sec:discussion}
The development of antithetic constructions has generated a rich class of methods to accompany the evolution of Monte Carlo sampling algorithms. We enrich this class with a new antithetic method, the circulant variates (CCV), that satisfies the countermonotonicity, exchangeability, and marginal uniformity conditions. In particular, the marginal uniformity condition is linked to the Kullback-Leibler optimality.

The principle behind the proposal, relying on sampling on segments, leads to a unification of several classical antithetic constructions: rotation sampling, Latin hypercube, permuted displacement, and random balanced sampling.

Within this common framework, we evaluate theoretically the resulting distributions of the antithetic vectors and their concordance measures. The latter allows us to rank the methods within the class of sampling on segments. We also demonstrate a central limit theorem in the case of asymptotically increasing vector size.

Leveraging on iterative Latin Hypercube (LH) properties, we combine the two methods by using the CCV construction to initialize the LH construction and reduce the number of iterations. This reduces the simulation cost and improves the concordance lower bound. The numerical experiments include MCMC, Sequential MC, Quasi-MC, and classical MC integration. The methods proposed outperform standard implementations and are competitive with Quasi-Monte Carlo methods in low effective-dimension scenarios. In addition, Theorem \ref{CLT}
is confirmed by our numerical experiments, which show variance reduction at least as good as standard MC.

Future work includes possible extensions of the theory for KL optimality beyond the marginal univariate case. An investigation of the relationship between the line segment representation and orthogonal array-based latin hypercubes \cite{tang1993} could lead to an improvement of the performance for superposition dimensions bigger than 1. We notice that different methods that satisfy countermonotonicity, exchangeability, and marginal uniformity yield different variance reductions in practice. A more general and ambitious aim is to propose a mathematical framework
directed at identifying the additional features that produce these
differences. Our sampling method has been successfully applied within the Bayesian estimation framework of the European Commission's multi-country model \citep{albonico2019}. We expect that the proposed simulation technique will find direct application in other fields of computational mathematics and statistics.

\section*{Acknowledgements}
The views expressed by Lorenzo Frattarolo are the author's alone and do not necessarily correspond to those of the European Commission. We thank for the useful comments the participants at: the \textit{1st Italian-French Statistics Workshop}, University Ca' Foscari of Venice, October 2017, Venice, Italy; the \textit{32nd European Statisticians Meeting}, Palermo, Italy, 2019; the \textit{Workshop  "Monte Carlo methods, and approximate dynamic programming"}, ESSEC, October 2019, Cergy, France; the seminar series at the University of Surrey. Radu V. Craiu's research was supported by NSERC of Canada grant RGPIN-249547. Roberto Casarin acknowledges support from the Venice Centre for Risk Analytics (VERA) at the University Ca' Foscari of Venice.

\appendix
\setcounter{figure}{0}
\renewcommand{\thefigure}{\thesection.\arabic{figure}}
\section{Proof of Theorems and Lemmas}
 \subsection{Proof of Lemma \ref{notparrallel}}
\begin{proof}
By \eqref{simplevar}, conditionally on being on the $k$-th edge  $e_k=\left(i\left(k\right),j\left(k\right)\right)$,  $U_l$ is a standard uniform on $\left[\alpha_{lk},\beta_{lk}\right]$ if the interval has length bigger than 0 and a Dirac mass centered in $x_{lj\left(k\right)}$ otherwise i.e. has a density: 
\begin{eqnarray*} f\left(\right.u\left\vert K=k\right)=\left\{\begin{array}{ccc}\dfrac{1}{\left(\beta_{lk}-\alpha_{lk}\right)}\mathbb{I}_{\left\{ \left[\alpha_{lk},\beta_{lk}\right]\right\}}\left(u_l\right)& \hbox{if} & x_{li\left(k\right)} \neq x_{lj\left(k\right)}
\\&&\\
\mathbb{I}_{\left\{x_{lj\left(k\right)}\right\}}\left(u_l\right)& \hbox{if} & x_{li\left(k\right)} = x_{lj\left(k\right)}
\end{array} \right.\end{eqnarray*}

We get rid of  the case $x_{li\left(k\right)} = x_{lj\left(k\right)}$ by excluding segments contained in hyperplanes parallel to the hyperfaces of the hypercube, obtaining uniformity on $\left[\alpha_{lk},\beta_{lk}\right]$.

\end{proof}

\subsection{Proof of Theorem \ref{unif}}
Before stating the validity of the result in Theorem \ref{unif} we show the following lemma.
\begin{lemm}\label{unif1} 

Let $U_l$ be the $l$-th component of random vector $\mathbf{U}=(U_1,\ldots,U_d)$ in the stochastic representation \eqref{simplevar} and let $$\mathcal{K}_{u_l}=\left\{k \in \left\{1,\ldots, \left\vert\mathcal{E}\right\vert\right\}: u_l\in \left[a_{l,m_{\alpha}\left(k\right)},a_{l,m_{\beta}\left(k\right)}\right]
\right\}$$ for each $u_l\in \left[0,1\right]$.
Under Assumption \ref{assAdmiss} and \ref{assRange} if
\begin{eqnarray}\nonumber
 1 &=&\dfrac{1}{\left\vert\mathcal{E}\right\vert}\displaystyle \sum_{k\in \mathcal{K}_{u_l}}  \dfrac{1}{ a_{l,m_{\beta}\left(k\right)}-  a_{l,m_{\alpha}\left(k\right)}},\label{eqam}
\end{eqnarray} 
  then $U_l$ has standard uniform marginal density. If $u_l$ belongs to the element $A_{l,m}$ of the partition defined in Equations we get $\mathcal{K}_{u_l}=\mathcal{K}_{l,m}$.
\end{lemm}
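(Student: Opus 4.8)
The plan is to compute the marginal density of $U_l$ explicitly as a finite mixture over the edges and then to recognise the stated hypothesis as exactly the assertion that this density is identically one. First I would condition on the edge index $K$. Since $K=\lfloor|\mathcal{E}|W\rfloor+1$ with $W\sim\mathcal{U}[0,1]$, the edge is selected uniformly, so $\mathbb{P}(K=k)=1/|\mathcal{E}|$ for each $k$. By Lemma \ref{notparrallel}, Assumption \ref{assAdmiss} excludes degenerate segments, so conditionally on $K=k$ the variable $U_l$ has the genuine uniform density $\tfrac{1}{\beta_{l,k}-\alpha_{l,k}}\mathbb{I}_{[\alpha_{l,k},\beta_{l,k}]}(u_l)$ rather than a Dirac mass. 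Averaging over $k$ then yields the marginal density
\[
f_{U_l}(u_l)=\frac{1}{|\mathcal{E}|}\sum_{k=1}^{|\mathcal{E}|}\frac{1}{\beta_{l,k}-\alpha_{l,k}}\,\mathbb{I}_{[\alpha_{l,k},\beta_{l,k}]}(u_l).
\]

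Next I would rewrite the indicators in terms of the sorted distinct values $\mathbf{a}_l$. Because $\alpha_{l,k}=a_{l,m_\alpha(k)}$ and $\beta_{l,k}=a_{l,m_\beta(k)}$, the event $u_l\in[\alpha_{l,k},\beta_{l,k}]$ is precisely $k\in\mathcal{K}_{u_l}$, whence
\[
f_{U_l}(u_l)=\frac{1}{|\mathcal{E}|}\sum_{k\in\mathcal{K}_{u_l}}\frac{1}{a_{l,m_\beta(k)}-a_{l,m_\alpha(k)}}.
\]
The stated hypothesis is exactly that this quantity equals one for every $u_l\in[0,1]$; combined with Assumption \ref{assRange}, which forces the support of $U_l$ to be $[0,1]$, this gives $f_{U_l}\equiv\mathbb{I}_{[0,1]}$, i.e. $U_l\sim\mathcal{U}(0,1)$.

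The one genuinely combinatorial step, which I expect to be the main obstacle, is the concluding identity $\mathcal{K}_{u_l}=\mathcal{K}_{l,m}$ whenever $u_l\in A_{l,m}$, which shows that $f_{U_l}$ is piecewise constant on the partition \eqref{partition} and hence that the hypothesis need only be verified cell by cell. Fixing $u_l$ in the interior $(a_{l,m-1},a_{l,m})$ and using that the $a_{l,\cdot}$ are strictly increasing, the lower constraint $a_{l,m_\alpha(k)}\le u_l$ holds iff $m_\alpha(k)\le m-1$, while the upper constraint $u_l\le a_{l,m_\beta(k)}$ holds iff $m_\beta(k)\ge m$; together these give $m_\alpha(k)+1\le m\le m_\beta(k)$, which is the defining condition of $\mathcal{K}_{l,m}$. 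The sole delicacy lies at the finitely many breakpoints $a_{l,m}$, where the two boundary inequalities can behave ambiguously, but these points form a Lebesgue-null set and are therefore irrelevant to the density, so the equality off a null set is all that is required.
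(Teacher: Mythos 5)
Your proof is correct and follows essentially the same route as the paper: you express the marginal density of $U_l$ as the uniform mixture over edges of the conditional uniform densities from Lemma \ref{notparrallel}, observe that the indicator restricts the sum to $\mathcal{K}_{u_l}$ so that the stated hypothesis is precisely $f_{U_l}\equiv 1$ on $[0,1]$, and identify $\mathcal{K}_{u_l}$ with $\mathcal{K}_{l,m}$ on each cell of the partition. Your explicit treatment of the breakpoints $a_{l,m}$ as a Lebesgue-null set where the boundary inequalities may disagree is in fact slightly more careful than the paper's own argument, which asserts the cell-wise dichotomy without addressing that boundary case.
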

\begin{proof}
When $u_l\in\left[0,1\right]$ the marginal standard uniform is obtained by equating the piece-wise uniform PDF of $U_l$ to $1$:
\begin{eqnarray*}
 f\left(u_l\right)&=& \dfrac{1}{\left\vert\mathcal{E}\right\vert}\displaystyle\sum^{\left\vert\mathcal{E}\right\vert}_{k= 1}
 f\left(\left. u_l\right\vert  K=k \right)
\\&=& \dfrac{1}{\left\vert\mathcal{E}\right\vert}\displaystyle\sum^{\left\vert\mathcal{E}\right\vert}_{k= 1} \dfrac{1}{\left(\beta_{lk}-\alpha_{lk}\right)}\mathbb{I}_{\left\{ \left[\alpha_{lk},\beta_{lk}\right]\right\}}\left(u_l\right)=1,
\end{eqnarray*}
which is satisfied by choosing $\mathcal{S}=(\mathcal{G},\mathbf{X})$ and $\alpha_{lk}$ and $\beta_{lk}$ as in the definition of admissibility of  given in Remark \ref{remadmis}.
\end{proof}
In the previous lemma the elements in the sum depend on $u_l$ and the equation \eqref{eqam} has to be checked for infinitely many cases. The proof of theorem \ref{unif} apply lemma \ref{unif1} and
the properties of the finite partition \eqref{partition} of the interval $\left[a_{l,1}=0,a_{l,n}=1\right]$, to show that  condition  \eqref{eqam} need to be verified only for a finite number of points.
\begin{proof}
Consider the interval $A_{l,m}=\left[a_{l,m-1},  a_{l,m}\right)$,
 There are two possibilities $A_{l,m}\cap \left[a_{l,m_{\alpha}\left(k\right)},a_{l,m_{\beta}\left(k\right)}\right]=\emptyset$, when $a_{l,m_{\alpha}\left(k\right)}>= a_{l,m}$ or $a_{l,m_{\beta}\left(k\right)}< a_{l,m-1}$, and $A_{l,m}\subseteq\left[a_{l,m_{\alpha}\left(k\right)},a_{l,m_{\beta}\left(k\right)}\right]$ when we have jointly  $a_{l,m_{\alpha}\left(k\right)}<= a_{l,m-1}$ and  $a_{l,m_{\beta}\left(k\right)}>= a_{l,m}$.

If $u_l \in A_{l,m}$ and $m_{\alpha}\left(k\right)+1\leq m\leq m_{\beta}\left(k\right)$ then $u_l \in \left[a_{l,m_{\alpha}\left(k\right)},a_{l,m_{\beta}\left(k\right)}\right]$. Thus $u_l \in A_{l,m}$ implies $\mathcal{K}_{u_l}=\mathcal{K}_{l,m}$ and $\mathcal{K}_{u_l}$ does not depend on $u_l$.
 
\begin{figure}

\center
\begin{tikzpicture}[x=0.25cm,y=1cm]

  \coordinate (2/m)  at ($(-4,0)+(0,2)$);
  \coordinate (2/m+1)  at ($(6,0)+(0,2)$);

  \coordinate (1/m)  at ($(-4,0)+(0,1)$);
  \coordinate (1/m+1)  at ($(6,0)+(0,1)$);
  
  \coordinate (0/m) at (-4,0);
  \coordinate (0/m+1) at (6,0);

  \foreach 
    \pt 
    in 
    {0,1,2} 
    { 
      \draw ($(\pt/m)-(4em,0)$) -- ($(\pt/m+1)+(4em,0)$);
    } 

  \node[draw,fill=white,inner sep=2pt,circle] (1/m)  at (1/m) {};
  \node[draw, fill=white, inner sep=2pt,circle]      (2/m+1) at (2/m+1) {};
  \node[draw,fill,inner sep=2pt,circle]      (0/m+1) at (0/m+1) {};
  \node[draw,fill,inner sep=2pt,circle]      (0/m) at (0/m) {};

  \draw[line width=1.5pt,arrows=->]      (1/m)  --  ($(1/m)-(4em,0)$);
  \draw[line width=1.5pt,arrows=->]       (2/m+1) -- ($(2/m+1)+(4em,0)$);
  \node at ($(2/m+1)+(-60:3ex)$) {$a_{l,m_{\beta}\left(k\right)}$};  
  \node at ($(0/m+1)+(-60:3ex)$)    {$a_{l,m}$};  
  \node at ($(1/m)+(-140:4ex)$) {$a_{l,m_{\alpha}\left(k\right)}$};
  \node at ($(0/m)+(-140:4ex)$)     {$a_{l,m-1}$};
\end{tikzpicture} 
 \end{figure}

Let us show that if \eqref{eqam} is valid for the all the $A_{l,m}$, then it is valid for every other partition of the unit interval.  We start by verifying that if condition \eqref{eqam} is verified by each disjoint interval in the partition it is also verified by intervals from coarser partitions. Consider partitions where we join two different intervals $B= A_{l,m}\cup A_{l,m+1}$. We should verify \eqref{eqam} for all $u_l\in B$ but since the intervals are disjoint it suffice to verify separately for all $u_l\in A_{l,m}$ and for all $u_l\in A_{l,m+1}$ and those conditions are satisfied by assumption. Let us show, now, that if condition \eqref{eqam} is verified by each disjoint interval in the partition it is also verified by intervals from finer partitions. Lets divide $A_{l,m}$ in two disjoint subsets by choosing an arbitrary point $a_{*}\in A_{l,m}$.
\begin{equation*}
A^1_{l,m}=\left[a_{l,m},  a_{*}\right),\quad
A^2_{l,m}=\left[a_{*},  a_{l,m+1}\right).
\end{equation*}
Let us call $\mathcal{K}_i=\mathcal{K}_{u_l}$ when $u_l\in A^i_{l,m}$, $i=1,2$.

Since no segment in $\mathcal{S}$ has a vertex with $l$-th coordinate $a_{*}$, when $k\in\mathcal{K}_{l,m}$,  we have  $a_{l,m_{\alpha}\left(k\right)}\leq a_{l,m}< a_{*} $ and  $a_{l,m_{\beta}\left(k\right)}\geq a_{l,m+1}>a_{*}$. Then, we have jointly  $a_{l,m_{\alpha}\left(k\right)}< a_{*}$ and  $a_{l,m_{\beta}\left(k\right)}\leq a_{l,m+1}$  and $a_{l,m_{\alpha}\left(k\right)}\leq a_{l,m}$ and  $a_{l,m_{\beta}\left(k\right)}> a_{*}$ . This means:
\begin{eqnarray*}
\mathcal{K}_{1}=\mathcal{K}_{2}=\mathcal{K}_{l,m},
\end{eqnarray*}
verifying \eqref{eqam} for $A^1_{l,m}$ and $A^2_{l,m}$ i.e. for the finer partition. Every other partition can be obtained by iteratively splitting and joining intervals. 

Finally we show that condition \eqref{eqam} is satisfied on the last interval $A_{l,n_l}$ if it is verified on every other interval and that the probability mass is 1. Since $U_l$ is a mixture of uniform random variables on the range $\left[0,1\right]$ the PDF integrates to one and exploiting $\left[0,1\right]=(\left[0,1\right]/A_{l,n_l})\cup A_{l,n_l}$ one obtains:
\begin{eqnarray}
1= \int^{1}_{0}f\left(u_l\right)du_l = \int^{a_{l,n_l-1}}_{0}f\left(u_l\right)du_l +\int^{1}_{a_{l,n_l-1}}f\left(u_l\right)du_l
\end{eqnarray}
If \eqref{eqam} is verified for $\left\{A_{l,m}\right\}^{n_l-1}_{m=2}$, then:
\begin{eqnarray*}
\int^{a_{l,n_l-1}}_{0}f\left(u_l\right)du_l= a_{l,n_l-1},\\
1= a_{l,n_l-1}  + \int^{1}_{a_{l,n_l-1}}\dfrac{1}{\left\vert\mathcal{E}\right\vert}\displaystyle \sum_{k\in \mathcal{K}_{l,n_l}}  \dfrac{1}{ 1-  a_{l,m_{\alpha}\left(k\right)}}du_l\\
1= a_{l,n_l-1}  + \left(1 -a_{l,n_l-1}\right)\dfrac{1}{\left\vert\mathcal{E}\right\vert}\displaystyle \sum_{k\in \mathcal{K}_{l,n_l}}  \dfrac{1}{ 1-  a_{l,m_{\alpha}\left(k\right)}},
\end{eqnarray*}
or equivalently
\begin{eqnarray*}
\dfrac{1}{\left\vert\mathcal{E}\right\vert}\displaystyle \sum_{k\in \mathcal{K}_{l,n_l}}  \dfrac{1}{ 1-  a_{l,m_{\alpha}\left(k\right)}}=1.
\end{eqnarray*}

\end{proof}

\subsection{Proof of Theorem \ref{theoopt}}

In the proof we will use the following lemma.
\begin{lemm}\label{lemmsym} The system in \eqref{uniformseg} is equivalent to the system :

\begin{eqnarray}\nonumber\label{sysunisym}
G_{l,m}\left( \mathbf{a}_l\right)&=& \displaystyle -\dfrac{1}{\left\vert\mathcal{E}\right\vert}\sum^{n_l}_{m^{\prime}=1}\dfrac{n^l_{\left(m,m^{\prime}\right)}}{\left\vert a_{l,m^{\prime}}-  a_{l,m}\right\vert}=0\\
m&=&2,\ldots,n_l-1\\\nonumber
\end{eqnarray}
\end{lemm}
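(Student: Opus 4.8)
The plan is to read both systems as statements about the marginal density of $U_l$ and to move between them by finite differencing, using the normalization of that density as the bridge. From the proof of Theorem~\ref{unif}, conditionally on the partition cell $A_{l,m}=[a_{l,m-1},a_{l,m})$ the density of $U_l$ is constant and equal to $D_m/|\mathcal{E}|$, where
\[
D_m:=\sum_{k\in\mathcal{K}_{l,m}}\frac{1}{a_{l,m_\beta(k)}-a_{l,m_\alpha(k)}}.
\]
Thus $F_{l,m}(\mathbf{a}_l)=D_m/|\mathcal{E}|-1$, and the system \eqref{uniformseg} says exactly that $D_m=|\mathcal{E}|$ for $m=2,\ldots,n_l-1$, i.e.\ that the density equals $1$ on each of those cells.

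First I would establish the differencing identity that turns \eqref{sysunisym} into a statement about consecutive cells. The sets $\mathcal{K}_{l,m}$ and $\mathcal{K}_{l,m+1}$ differ only through edges incident to node $m$: every edge with both endpoints away from $m$ contributes the same term to $D_m$ and $D_{m+1}$ and cancels, an edge with $m_\alpha(k)=m$ belongs to $\mathcal{K}_{l,m+1}\setminus\mathcal{K}_{l,m}$, and an edge with $m_\beta(k)=m$ belongs to $\mathcal{K}_{l,m}\setminus\mathcal{K}_{l,m+1}$. Collecting the surviving edges by their opposite endpoint $m'$ and counting them with the multiplicities $n^l_{(m,m')}$ gives
\[
D_{m+1}-D_m=\sum_{m'>m}\frac{n^l_{(m,m')}}{a_{l,m'}-a_{l,m}}-\sum_{m'<m}\frac{n^l_{(m,m')}}{a_{l,m}-a_{l,m'}},
\]
which, after recording in each group the sign of the gap $a_{l,m'}-a_{l,m}$, is exactly $-|\mathcal{E}|\,G_{l,m}(\mathbf{a}_l)$. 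Hence each equation $G_{l,m}=0$ is equivalent to $D_m=D_{m+1}$, and the full system \eqref{sysunisym} is equivalent to $D_2=D_3=\cdots=D_{n_l}$, i.e.\ to the density of $U_l$ being flat across all cells.

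It then remains to bridge ``flat density'' and ``density $\equiv 1$'' by normalization, which is where the unit total mass and Assumption~\ref{assRange} enter. Since the construction produces a genuine probability density, $\sum_{m=2}^{n_l}(D_m/|\mathcal{E}|)(a_{l,m}-a_{l,m-1})=1$; if all $D_m$ equal a common value $c$, then $c$ times the total length $a_{l,n_l}-a_{l,1}=1$ equals $|\mathcal{E}|$, forcing $c=|\mathcal{E}|$ and hence \eqref{uniformseg}. Conversely, \eqref{uniformseg} gives $D_2=\cdots=D_{n_l-1}=|\mathcal{E}|$, and the final-cell computation already used in the proof of Theorem~\ref{unif} supplies $D_{n_l}=|\mathcal{E}|$, so all the $D_m$ agree and \eqref{sysunisym} holds. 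The two systems carry the same number $n_l-2$ of equations because the one remaining degree of freedom---the overall level of the density---is pinned down by normalization rather than by an equation.

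The step I expect to be the main obstacle is the combinatorial bookkeeping in the differencing identity: showing cleanly that every edge not incident to node $m$ cancels in $D_{m+1}-D_m$, and converting the resulting edge-indexed sum into the node form with the correct signs for the gaps $a_{l,m'}-a_{l,m}$. Some care is also needed at the boundary index $m=n_l-1$, where $D_{n_l}$ appears but is not controlled directly by \eqref{uniformseg}; supplying it through the normalization argument is precisely what upgrades a one-sided implication to the full equivalence.
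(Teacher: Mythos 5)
Your proof is correct and follows essentially the same route as the paper's: the same cancellation of edges not incident to node $m$ yields $G_{l,m}=F_{l,m}-F_{l,m+1}$, and the same normalization argument (unit total mass over $[a_{l,1},a_{l,n_l}]=[0,1]$, as at the end of the proof of Theorem \ref{unif}) disposes of the uncontrolled index $n_l$; the paper merely packages the resulting equivalence as invertibility of the bidiagonal matrix $T_l$ in $\mathbf{G}_l=T_l\mathbf{F}_l$ rather than as your ``flat density plus normalization'' chain. One incidental point in your favour: your signed form of the differenced sum is the careful one, since a $G_{l,m}$ with a uniform minus sign over absolute values could never vanish, so the displayed formula must be read with the sign of $a_{l,m'}-a_{l,m}$ retained, exactly as you do.
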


\begin{proof}
Recall the definition in Eq. \eqref{uniformseg}
$$
F_{l,m}\left( \mathbf{a}_l\right)=\dfrac{1}{\left\vert\mathcal{E}\right\vert }\displaystyle \sum_{k\in \mathcal{K}_{l,m}}  \dfrac{1}{ a_{l,m_{\beta}\left(k\right)}-  a_{l,m_{\alpha}\left(k\right)}}-1,
$$
then $G_{l,m}(\mathbf{a}_l)$ writes as:
\begin{eqnarray}\label{FG}\nonumber
G_{l,m}\left( \mathbf{a}_l\right)&=&F_{l,m}\left( \mathbf{a}_l\right)-F_{l,m+1}\left( \mathbf{a}_l\right) \\
m&=&2,\ldots,n_l-1\\\nonumber
G_{l,n_l}\left( \mathbf{a}_l\right)&=&   F_{l,n_l}\left( \mathbf{a}_l\right).
\end{eqnarray}

Then: 
\begin{eqnarray*}
G_{l,m}\left( \mathbf{a}_l\right)&=&\dfrac{1}{\left\vert\mathcal{E}\right\vert }\displaystyle \sum_{k\in \mathcal{K}_{l,m}}  \dfrac{1}{ a_{l,m_{\beta}\left(k\right)}-  a_{l,m_{\alpha}\left(k\right)}}-\dfrac{1}{\left\vert\mathcal{E}\right\vert }\sum_{k\in \mathcal{K}_{l,m+1}}  \dfrac{1}{ a_{l,m_{\beta}\left(k\right)}-  a_{l,m_{\alpha}\left(k\right)}}=0.
\end{eqnarray*}

\begin{figure}[t]\label{fig:sumcancellation}
\centering

\begin{tikzpicture}[x=0.25cm,y=1cm]

 \coordinate (4/m-1)  at ($(-12,0)+(0,4)$);
  \coordinate (4/m)  at ($(-4,0)+(0,4)$);
  \coordinate (4/m+1)  at ($(6,0)+(0,4)$);

 \coordinate (3/m-1)  at ($(-12,0)+(0,3)$);
  \coordinate (3/m)  at ($(-4,0)+(0,3)$);
  \coordinate (3/m+1)  at ($(6,0)+(0,3)$);
  
  \coordinate (2/m-1)  at ($(-12,0)+(0,2)$);
  \coordinate (2/m)  at ($(-4,0)+(0,2)$);
  \coordinate (2/m+1)  at ($(6,0)+(0,2)$);
  
  \coordinate (1/m-1)  at ($(-12,0)+(0,1)$);
  \coordinate (1/m)  at ($(-4,0)+(0,1)$);
  \coordinate (1/m+1)  at ($(6,0)+(0,1)$);
  
   \coordinate (0/m-1)  at ($(-12,0)+(0,0)$);  
  \coordinate (0/m) at (-4,0);
  \coordinate (0/m+1) at (6,0);

  \foreach 
    \pt 
    in 
    {0,1,2,3,4} 
    { 
      \draw ($(\pt/m-1)-(4em,0)$) -- ($(\pt/m+1)+(4em,0)$);
    } 

  \node[draw,fill,inner sep=2pt,circle] (1/m)  at (1/m) {};
  \node[draw,fill=white,inner sep=2pt,circle]      (2/m+1) at (2/m+1) {};
  \node[draw,fill,inner sep=2pt,circle] (3/m-1)  at (3/m-1) {};
  \node[draw,fill=white,inner sep=2pt,circle]      (4/m) at (4/m) {};
  \node[draw,fill,inner sep=2pt,circle]      (0/m+1) at (0/m+1) {};
  \node[draw,fill,inner sep=2pt,circle]      (0/m) at (0/m) {};
    \node[draw,fill,inner sep=2pt,circle]      (0/m-1) at (0/m-1){};

  \draw[line width=1.5pt,arrows=->]      (1/m)  --  ($(1/m-1)-(4em,0)$);
  \draw[line width=1.5pt,arrows=->]       (2/m+1) -- ($(2/m+1)+(4em,0)$);
   \draw[line width=1.5pt,arrows=->]      (3/m-1)  --  ($(3/m-1)-(4em,0)$);
  \draw[line width=1.5pt,arrows=->]       (4/m) -- ($(4/m+1)+(4em,0)$);  
  
  \node at ($(2/m+1)+(-60:3ex)$) {$a_{l,m_{\beta}\left(k\right)}$};  
  \node at ($(0/m+1)+(-60:3ex)$)    {$a_{l,m+1}$};  
  \node at ($(4/m)+(-60:3ex)$) {$a_{l,m_{\beta}\left(k\right)}$};  
  \node at ($(0/m-1)+(-60:3ex)$)    {$a_{l,m-1}$};  
  \node at ($(1/m)+(-140:4ex)$) {$a_{l,m_{\alpha}\left(k\right)}$};
    \node at ($(3/m-1)+(-140:4ex)$) {$a_{l,m_{\alpha}\left(k\right)}
    $};
    \node at (-20,3.5) {$\mathcal{K}_{l,m}$};
        \node at (-20,1.5) {$\mathcal{K}_{l,m+1}$};

  \node at ($(0/m)+(-140:4ex)$)     {$a_{l,m}$};
\end{tikzpicture}
\caption{Computation of $G_{l,m}$}
\end{figure}
In Figure \ref{fig:sumcancellation} the terms corresponding to overlapping arrows are equal in the in the two sums  and cancel out.

Let us define the set of edges starting and terminating in $a_{l,m}$:
\begin{eqnarray*}
\mathcal{K}^{\alpha}_{l,m}=\left\{ k\in \left\{1,\ldots,\left\vert\mathcal{E}\right\vert\right\}: m_{\alpha}\left(k\right)=m\right\}\\
\mathcal{K}^{\beta}_{l,m}=\left\{ k\in \left\{1,\ldots,\left\vert\mathcal{E}\right\vert\right\}: m_{\beta}\left(k\right)=m\right\}
\end{eqnarray*}

We obtain:
\begin{eqnarray*}
G_{l,m}\left( \mathbf{a}_l\right)&=&\displaystyle -\dfrac{1}{\left\vert\mathcal{E}\right\vert }\sum_{k\in \mathcal{K}^{\alpha}_{l,m}}  \dfrac{1}{ a_{l,m_{\beta}\left(k\right)}-  a_{l,m}}+\dfrac{1}{\left\vert\mathcal{E}\right\vert }\sum_{k\in \mathcal{K}^{\beta}_{l,m}}  \dfrac{1}{ a_{l,m}-  a_{l,m_{\alpha}\left(k\right)}}\\
&=& -\dfrac{1}{\left\vert\mathcal{E}\right\vert }\sum^{n_l}_{m^{\prime}=m+1}  \dfrac{n^l_{\left(m,m^{\prime}\right)}}{ a_{l,m^{\prime}}-  a_{l,m}}+\dfrac{1}{\left\vert\mathcal{E}\right\vert } \sum^{m-1}_{m^{\prime}=1}  \dfrac{n^l_{\left(m,m^{\prime}\right)}}{ a_{l,m}-  a_{l,m^{\prime}}}\\
&=& -\dfrac{1}{\left\vert\mathcal{E}\right\vert }\sum^{n_l}_{m^{\prime}=1}  \dfrac{n^l_{\left(m,m^{\prime}\right)}}{ \left\vert a_{l,m^{\prime}}-  a_{l,m}\right\vert}.
\end{eqnarray*}


If we call $\mathbf{F}_{l}$ and $\mathbf{G}_{l}$ the vectors of components $F_{l,m}$ and $G_{l,m}$ ,$m\in\left\{1,\ldots,n_l\right\}$ , by \eqref{FG} we have $\mathbf{G}_{l}= T_{l}\mathbf{F}_{l}$ with 
 $T_l$ as the $\left[n_l \times n_l \right]$ matrix with the only non zero elements
\begin{eqnarray*}
\begin{array}{ccccc}
T_{l\, i,i}=1,&
T_{l\, i,i+1}=-1
\end{array}
\end{eqnarray*}
It is easy to show that $T_l$ has full rank $n_l$. The two systems are then equivalent because of the linear independence of rows of $T_l$. 
Finally by the same argument at the end of the proof of Theorem \ref{unif} we can get rid of the $n_l$-th equation.
\end{proof}

We are ready to present the proof of of Theorem \ref{theoopt}.
\begin{proof}
By Lemma \ref{lemmsym} solutions to $\mathbf{F}_{l}=\mathbf{0}$ are also solution to $\mathbf{G}_{l}=\mathbf{0}$. We, then, look for a function whose gradient is equivalent to  $\mathbf{G}_{l}$. This requirement can be expressed in the language of differential forms.
A differential form $\boldsymbol{\alpha}$ is closed if its exterior derivative is zero $d\boldsymbol{\alpha}=0$.  A differential form $\boldsymbol{\alpha}$ is exact if exist a differential form $\boldsymbol{\beta}$ of degree less than the one of $\boldsymbol{\alpha}$ such that $\boldsymbol{\alpha}=d\boldsymbol{\beta}$.
If the $1$-form:
\begin{eqnarray}
\boldsymbol{\gamma}_l = \sum^{n_l-1}_{m=2} G_{l,m}\left(\mathbf{a}_l\right)da_{l,m}
\end{eqnarray}
is exact in some domain, there exists a $0$-form i.e. a smooth function $\Psi_l$ such that
\begin{eqnarray}
d\Psi_l = \sum^{n_l-1}_{m=2} \dfrac{\partial \Psi_l}{\partial a_{l,m}}\left(\mathbf{a}_l\right)da_{l,m}= \boldsymbol{\gamma}_l
\end{eqnarray} 
i.e.
\begin{eqnarray}\label{partialpsi}
G_{l,m} &=&  \dfrac{\partial \Psi_l}{\partial a_{l,m}}\left(\mathbf{a}_l\right),\quad m\in\left\{2,\ldots,n_l-1\right\}.
\end{eqnarray} 

 Let us pick an open ball $B$ in $\mathbb{R}^{n_l-2}$ not including the points $\left(a_{l,j}=a_{l,i}\right)$, $i,j\in\left\{1,\ldots,n_l\right\}$ where $\mathbf{G}_l$ is not defined. By Poincare lemma if  $\boldsymbol{\gamma}_l$ is closed in  $B$ is also exact in $B$. Thus if
\begin{eqnarray*}
d\boldsymbol{\gamma}_l = \sum^{n_l-1}_{m,m^{\prime}=2} \left\{\dfrac{\partial G_{l,m}}{\partial a_{l,m^{\prime}}}-\dfrac{\partial G_{l,m^{\prime}}}{\partial a_{l,m}}\right\}da_{l,m}da_{l,m^{\prime}}=0
\end{eqnarray*}
then
\begin{eqnarray*}
\left\{\dfrac{\partial G_{l,m}}{\partial a_{l,m^{\prime}}}-\dfrac{\partial G_{l,m^{\prime}}}{\partial a_{l,m}}\right\}=0,
\end{eqnarray*}
and there exists $\Psi_l$ such that $G_{l,m}$ is its gradient. We have:
\begin{eqnarray*}
H_{i,i}=\dfrac{\partial G_{l,i}}{\partial a_{l,i}}&=&\displaystyle \sum^{n_l}_{m=1}  \dfrac{n^l_{\left(i,m\right)}}{ \left(a_{l,m}-  a_{l,i}\right)^2}\\
H_{i,j}=\dfrac{\partial G_{l,i}}{\partial a_{l,j}} &=&-\displaystyle  \dfrac{n^l_{\left(i,j\right)}}{ \left(a_{l,i}-  a_{l,j}\right)^2}
\end{eqnarray*}
$i\neq j\in\left\{2,\ldots,n_l-1\right\}$. Since $n^l_{\left(i,j\right)}=n^l_{\left(j,i\right)}$ $\boldsymbol{\gamma}_l$ is exact whenever $\mathbf{G}_l$ is defined and $\Psi_l$ exists. After some algebra we obtain:
\begin{eqnarray*}
\Psi_l\left(\mathbf{a}_l\right)  &=&-\dfrac{1}{2 \left\vert\mathcal{E}\right\vert}\displaystyle \sum^{n_l}_{ m,m^{\prime}=1}n^l_{\left(m,m^{\prime}\right)} \log\left\vert a_{l,m^{\prime}}-  a_{l,m}\right\vert. 
\end{eqnarray*}

Since $H$ is a symmetric diagonally dominant matrix, it is everywhere semi-positive definite, this is equivalent to the convexity of $\Psi_l\left(\mathbf{a}_l\right)$. In addition, we are restricting the domain to the unit hypecube that is a convex domain so the problem is a convex one.

 \end{proof}
\subsection{Proof of Theorem \ref{KL}}
\begin{proof}
In order to apply Radon--Nikodym theorem $\mathbb{P}$ needs to be absolutely continuous with respect to $\mathbb{Q}$.
This assumption amounts to impose the restrictions $$\displaystyle a_{l,1}=\min_{k\in \left\{1,\ldots,\left\vert \mathcal{E}\right\vert\right\}}\alpha_{l,k}=0$$
and
$$\displaystyle a_{l,n_l}=\max_{k\in \left\{1,\ldots,\left\vert \mathcal{E}\right\vert\right\}}\beta_{l,k}=1$$ (Assumption \ref{assRange}). We have: 
 \begin{eqnarray*}
 \mathbb{P}\left(u_l\in A, k \in B\right)&=& \sum_{k\in B} \int_{u_l\in A} \dfrac{1}{\left\vert\mathcal{E}\right\vert}f\left(u_l\right.\left\vert K=k\right)d u_l\\&=& \sum_{k\in B} \int_{u_l\in A} \dfrac{d\mathbb{P}}{d\mathbb{Q}}\left(u_l,k\right) \dfrac{1}{\left\vert\mathcal{E}\right\vert} \mathbb{I}_{\left[0,1\right]}\left(u_l\right)du_l
 \end{eqnarray*}
By the condition on the a's above, we have $f\left(u_l\right\vert \left.K=k\right)=f\left(u_l\right\vert \left.K=k\right)\mathbb{I}_{\left[0,1\right]}\left(u_l\right)$ and we can read off the derivative :
\begin{eqnarray*}
\dfrac{d\mathbb{P}}{d\mathbb{Q}}\left(u_l,k\right) = f\left(u_l\right.\left\vert K=k\right)
\end{eqnarray*}
\begin{eqnarray*}
D_{KL}\left(\mathbb{P}\vert\vert \mathbb{Q}\right)&=& \mathbb{E}_{\mathbb{P}}\left[\log\left(\dfrac{d\mathbb{P}}{d\mathbb{Q}}\left(u_l,k\right)\right)\right]\\&=& \displaystyle\sum^{\left\vert\mathcal{E}\right\vert}_{k= 1} \int^{1}_{0} \log\left(f\left(u_l\right.\left\vert K=k\right)\right) \dfrac{1}{\left\vert\mathcal{E}\right\vert} f\left(u_l\right.\left\vert K=k\right)du_l
\\&=& \displaystyle\sum^{\left\vert\mathcal{E}\right\vert}_{k= 1} \int^{1}_{0} \log\left(\dfrac{1}{\left(\beta_{lk}-\alpha_{lk}\right)}\mathbb{I}_{\left\{ \left[\alpha_{lk},\beta_{lk}\right]\right\}}\left(u_l\right)\right) \dfrac{1}{\left\vert\mathcal{E}\right\vert}\dfrac{1}{\left(\beta_{lk}-\alpha_{lk}\right)}\mathbb{I}_{\left\{ \left[\alpha_{lk},\beta_{lk}\right]\right\}}\left(u_l\right)du_l
\\&=& \displaystyle\sum^{\left\vert\mathcal{E}\right\vert}_{k= 1} \int^{1}_{0} \log\left(\dfrac{1}{\left(\beta_{lk}-\alpha_{lk}\right)}\right) \dfrac{1}{\left\vert\mathcal{E}\right\vert}\dfrac{1}{\left(\beta_{lk}-\alpha_{lk}\right)}\mathbb{I}_{\left\{ \left[\alpha_{lk},\beta_{lk}\right]\right\}}\left(u_l\right)du_l
\\&+& \displaystyle\sum^{\left\vert\mathcal{E}\right\vert}_{k= 1} \int^{1}_{0} \log\left(\mathbb{I}_{\left\{ \left[\alpha_{lk},\beta_{lk}\right]\right\}}\left(u_l\right)\right) \dfrac{1}{\left\vert\mathcal{E}\right\vert}\dfrac{1}{\left(\beta_{lk}-\alpha_{lk}\right)}\mathbb{I}_{\left\{ \left[\alpha_{lk},\beta_{lk}\right]\right\}}\left(u_l\right)du_l
\end{eqnarray*}
As usually done in this kind of computation we assume by convention that $0\log\left(0\right)=0$, then:
\begin{eqnarray*}
D_{KL}\left(\mathbb{P}\vert\vert \mathbb{Q}\right)&=& \displaystyle\sum^{\left\vert\mathcal{E}\right\vert}_{k= 1} \int^{1}_{0} \log\left(\dfrac{1}{\left(\beta_{lk}-\alpha_{lk}\right)}\right) \dfrac{1}{\left\vert\mathcal{E}\right\vert}\dfrac{1}{\left(\beta_{lk}-\alpha_{lk}\right)}\mathbb{I}_{\left\{ \left[\alpha_{lk},\beta_{lk}\right]\right\}}\left(u_l\right)du_l
\\&=& \dfrac{1}{\left\vert\mathcal{E}\right\vert} \displaystyle\sum^{\left\vert\mathcal{E}\right\vert}_{k= 1} -\log\left(\left(\beta_{lk}-\alpha_{lk}\right)\right)
\\&=& \dfrac{1}{\left\vert\mathcal{E}\right\vert} \displaystyle\sum^{\left\vert\mathcal{E}\right\vert}_{k= 1} -\log\left(\left(a_{l,m_{\beta}\left(k\right)}-a_{l,m_{\alpha}\left(k\right)}\right)\right)
\\&=& \dfrac{1}{\left\vert\mathcal{E}\right\vert} \displaystyle\sum_{m>m^{\prime}} -n_{\left(m,m^{\prime}\right)}\log\left(a_{l,m}-a_{l,m^{\prime}}\right)
\\&=& \dfrac{1}{2\left\vert\mathcal{E}\right\vert} \displaystyle\sum^{n_l}_{m,m^{\prime}=1} -n_{\left(m,m^{\prime}\right)}\log\left\vert a_{l,m}-a_{l,m^{\prime}}\right\vert.
\end{eqnarray*}

\end{proof}

\subsection{Proof of Lemma \ref{detcompunif}}
\begin{proof}
Since $\mathbf{U}$ is uniform on $\mathcal{S}^{(x)}$ the random variables $\left\{U_l,l\in\mathcal{D}\right\}$ are standard uniforms, and since $\mathbf{Y}$ is a solution to the standard uniform on $\mathcal{S}^{(y)}=(\mathcal{G}^{(y)},\mathbf{Y})$, the $\left\{W_l, l\in\mathcal{D}\right\}$ are standard uniform as well.
$W_l= \left( y_{lI}x_{lI^{\prime}} + y_{lJ}\left(1-x_{lI^{\prime}}\right)\right)V\\
+\left( y_{lI}x_{lJ^{\prime}} + y_{lJ}\left(1-x_{lJ^{\prime}}\right)\right)\left(1-V\right)$
and $\mathbf{W}$ and $\left(I^{\prime},J^{\prime}\right)$ are sampled with uniform probability from $\mathcal{E}^1$
\end{proof}

\subsection{Proof of Lemma\ref{CTMcomp}}
\begin{proof}
\begin{eqnarray*}
 \dfrac{d}{2}= \sum^{d}_{l=1} W_l&=&  y_{lI} U_l + \left(1-U_l\right)y_{lJ}\\
&=& \sum^{d}_{l=1} \left(y_{lI} - y_{lJ}\right) U_l + \sum^{d}_{l=1} y_{lJ}\\
&=& c_1\sum^{d}_{l=1} U_l + c_2= \dfrac{d}{2} c_1 + c_2
\end{eqnarray*}
and $\sum^{d}_{l=1} W_l=\dfrac{d}{2}$ if 
\begin{equation}
\left(1-c_1\right) \dfrac{d}{2}=c_2.
\end{equation}
\end{proof}

\subsection{Proof of Lemma \ref{stochcomp}}
\begin{proof}

 The coordinates of the vertexes (i.e. the columns of $\left(\mathbf{X}^1,\mathbf{X}^2\right)$) inherit the constant sum from \eqref{segdcm} and allow us to define the following subsets of $\mathcal{E}^{t}$
\begin{eqnarray*}
\mathcal{K}^1_{u_l}&\equiv&\left\{k \in \left\{1,\ldots, \left\vert\mathcal{E}^1\right\vert\right\}: u_l\in \left[a_{l,m_{\alpha}\left(k\right)},a_{l,m_{\beta}\left(k\right)}\right]
\right\}\\
\mathcal{K}^2_{u_l}&\equiv&\left\{k \in \left\{1,\ldots, \left\vert\mathcal{E}^1\right\vert\right\}: u_l\in \left[a_{l,m_{\alpha}\left(k\right)},a_{l,m_{\beta}\left(k\right)}\right]
\right\}\\
\mathcal{K}_{u_l}&\equiv&\left\{k \in \left\{1,\ldots, \left\vert\mathcal{E}^1\sqcup \mathcal{E}^2\right\vert\right\}: u_l\in \left[a_{l,m_{\alpha}\left(k\right)},a_{l,m_{\beta}\left(k\right)}\right]
\right\}\\ &=& \mathcal{K}^1_{u_l}\sqcup\mathcal{K}^2_{u_l}.
\end{eqnarray*} 
The sufficient conditions for standard marginal uniformity on $\mathcal{S}^{\prime}$ given in \eqref{uniformseg} become 
\begin{eqnarray*}
&&\dfrac{1}{\left\vert\mathcal{E}^1\sqcup \mathcal{E}^2\right\vert }\displaystyle \sum_{k\in \mathcal{K}_{u_l} } \dfrac{1}{ a_{l,m_{\beta}\left(k\right)}-  a_{l,m_{\alpha}\left(k\right)}}=\\&=&
\dfrac{\left\vert\mathcal{E}^1\right\vert}{\left\vert\mathcal{E}^1\sqcup \mathcal{E}^2\right\vert }  \displaystyle \dfrac{1}{\left\vert\mathcal{E}^1\right\vert } \sum_{k\in \mathcal{K}^1_{u_l}}  \dfrac{1}{ a_{l,m_{\beta}\left(k\right)}-  a_{l,m_{\alpha}\left(k\right)}} \\
&+& \dfrac{\left\vert \mathcal{E}^2\right\vert}{\left\vert\mathcal{E}^1\sqcup \mathcal{E}^2\right\vert }\displaystyle \dfrac{1}{\left\vert\mathcal{E}^2\right\vert }
\sum_{k\in \mathcal{K}^2_{u_l}}  \dfrac{1}{ a_{l,m_{\beta}\left(k\right)}-  a_{l,m_{\alpha}\left(k\right)}}\\
&=&\dfrac{\left\vert \mathcal{E}^1\right\vert+\left\vert \mathcal{E}^2\right\vert}{\left\vert\mathcal{E}^1\sqcup \mathcal{E}^2\right\vert }=1.
\end{eqnarray*}
\end{proof}

\subsection{Proof of Lemma \ref{permdctm}}
\begin{proof}
The random vector $\mathbf{W}\in [0,1]^{d}$ with elements
\begin{eqnarray}
W_{1}&=& x_{{\pi\left(1\right)}I} V + \left(1-V\right)x_{{\pi\left(1\right)}J}\nonumber\\
&\vdots&\\
W_{d}&=& x_{{\pi\left(d\right)}I} V + \left(1-V\right)x_{{\pi\left(d\right)}J}\nonumber
\end{eqnarray}
defines a permutation of $\mathbf{U}$.
If $\mathbf{U}$ is strict $d$-CTM on $\mathcal{S}=(\mathcal{G},\mathbf{X})$ then the permuted vector is strict $d$-CTM on $\mathcal{S}^{\prime}=(\mathcal{G},P_{\pi}\mathbf{X})$, because the sum is preserved by permutations. 
\end{proof}

\subsection{Proof of Corollary \ref{genunif}}
\begin{proof}
Follows the same lines as the proof of Theorem \ref{unif} by replacing the common $V$ by the coordinate-specific $V_l\sim\mathcal{U}[0,1]$.
\end{proof}

\subsection{Proof of Theorem \ref{th:condjoint}}
\begin{proof}
Let us define the random variables
\begin{eqnarray}
V_{l,k}&=& \left\lbrace\begin{array}{ccc}  V_l & & \hbox{if }l \in \mathcal{L}^{+}_{k}\\ &&\\
1-V_l & & \hbox{if } l \in \mathcal{L}^{-}_{k}\end{array}\right.,
\end{eqnarray}
the event $\left\{U_l\leq u_l \left\vert K=k\right.\right\}$ can be rewritten
\begin{eqnarray*}\label{preimage}
\left\{U_l\leq u_l \left\vert K=k\right.\right\}&=&\left\{x_{li\left(k\right)} V_l + \left(1-V_l\right)x_{lj\left(k\right)} \leq u_l \left\vert K=k\right.\right\} \\&=& \left\{V_{l,k} \leq F_{U_l\left\vert K\right.}\left(u_l;k\right)\left\vert K=k\right.\right\}
\end{eqnarray*}
Let $v_{l,k}=F_{U_l\left\vert K\right.}\left(u_l;k\right)$ for $l=1,\ldots,d$, $k=1,\ldots,\left\vert\mathcal{E} \right\vert$ 
then:
\begin{eqnarray}\nonumber
& &\mathbb{P}\left(U_1 \leq u_1,\ldots,U_d\leq u_d\left\vert K=k \right.\right) = \mathbb{P}\left(V_{1,k} \leq v_{1,k},\ldots,V_{d,k} \leq v_{d,k}\left\vert K=k\right.\right)
\\\nonumber &=&\mathbb{P}\left(  \left.\left(\bigcap_{l\in \mathcal{L}^{-}_{k}} \left\{\left(1-V_l\right)\leq v_{l,k} \right\}\right)\cap \left(\bigcap_{l\in \mathcal{L}^{+}_{k}} \left\{V_l\leq v_{l,k}\right\}\right)\right\vert K=k\right)
\\\nonumber &=&\mathbb{E}\left[\prod_{l\in \mathcal{L}^{-}_{k}}\mathbb{I}_{\left(1-v_{l,k},1\right]}\left(V_l\right)\prod_{l\in \mathcal{D}\setminus \mathcal{L}^{-}_{k}}\mathbb{I}_{\left[0,v_{l,k}\right]}\left(V_l\right)\right]
\\\nonumber&=&\mathbb{P}\left(R_{\mathcal{L}^{-}_{k},\frac{1}{2}}(\mathbf{V})\leq \mathbf{v}_k\right).
\end{eqnarray}

\end{proof}

\subsection{Proof of Corollary \ref{th:permutecond}}
\begin{proof}
Assuming $V_l=V$ for $l=1,\ldots,d$ in the proof of Theorem \ref{th:condjoint}, it follows that
\begin{eqnarray}\nonumber
& &\mathbb{P}\left(U_1 \leq u_1,\ldots,U_d\leq u_d\left\vert K=k \right.\right) = \mathbb{P}\left(V_{1,k} \leq v_{1,k},\ldots,V_{d,k} \leq v_{d,k}\left\vert K=k\right.\right)
\\\nonumber &=&\mathbb{P}\left(  \left.\left(\bigcap_{l\in \mathcal{L}^{-}_{k}} \left\{\left(1-V\right)\leq v_{l,k} \right\}\right)\cap \left(\bigcap_{l\in \mathcal{L}^{+}_{k}} \left\{V\leq v_{l,k}\right\}\right)\right\vert K=k\right)
\\\label{minmax}&=& \mathbb{P}\left(1-v^{-}_{ k}<V\leq v^{+}_{ k}\left\vert K=k\right.\right)
\\ \nonumber &=& \mathbb{I}_{\left[0,v^{+}_{ k}\right]}\left(1-v^{-}_{ k}\right)\int^{v^{+}_{ k}}_{1-v^{-}_{ k}} dv
\\ \nonumber\label{unifcop}&=& \max\left( v^{+}_{k}  + v^{-}_{k} -1 , 0 \right),
\end{eqnarray}
where  in \eqref{minmax} we used the identity 
$\displaystyle \max_{l\in\mathcal{L}^{-}_{k}} \left(1- v_{l,k}\right)=1- \displaystyle \min_{l\in\mathcal{L}^{-}_{k}}  v_{l,k}$

The expression for the bivariate conditional distributions is obtained by substituting $u_{l^{\prime \prime}}=1$ in $v_{l^{\prime \prime},k}=F_{U_{l^{\prime \prime}}\left\vert K\right.}\left(u_{l^{\prime \prime}};k\right)$ for all $l^{\prime \prime}\neq l^{ \prime}$ and $l^{\prime \prime}\neq l$ where $l^{\prime \prime}, l^{ \prime}, l\in\mathcal{D}$, thus obtaining:

\begin{eqnarray*}
& v^{+}_{k}=\min\left(v_{l,k},v_{l^{\prime},k}\right), v^{-}_{k}=1  &\hbox{ if } l^{ \prime}, l\in\mathcal{L}^{+}_{k}\\
&v^{-}_{k}=\min\left(v_{l,k},v_{l^{\prime},k}\right), v^{+}_{k}=1  &\hbox{ if }l^{ \prime}, l\in\mathcal{L}^{-}_{k}\\
& v^{+}_{k}=v_{l,k}, v^{-}_{k}=v_{l^{\prime},k} & \hbox{ if } l^{ \prime}\in\mathcal{L}^{-}_{k}, l\in\mathcal{L}^{+}_{k}\\
& v^{-}_{k}=v_{l,k}, v^{+}_{k}=v_{l^{\prime},k} & \hbox{ if } l^{ \prime}\in\mathcal{L}^{+}_{k}, l\in\mathcal{L}^{-}_{k}.
\end{eqnarray*}

\end{proof}
\subsection{Proof of Proposition \ref{prop:taubound}}
\begin{proof}
By  corollary \ref{distr} we obtain:
\begin{eqnarray*}
\mathbb{P}\left(\mathbf{U}\leq\mathbf{W}\right)= \dfrac{1}{\left\vert\mathcal{E}\right\vert}\sum^{\left\vert\mathcal{E}\right\vert}_{k_{\mathbf{U}}=1} \mathbb{P}\left(R_{\mathbf{V},\mathcal{L}^{-}_{k_{\mathbf{U}}}}\left(\mathbf{V}\right)\leq\mathbf{Y}_{k_{\mathbf{U}}}\right)
\end{eqnarray*}

with $\mathbf{Y}_{k_{\mathbf{U}}}=\left(F_{U_1\left\vert K_{\mathbf{U}}\right.}\left(W_1;k_{\mathbf{U}}\right),\ldots, F_{U_d\left\vert K_{\mathbf{U}}\right.}\left(W_d;k_{\mathbf{U}}\right)\right)$.

\end{proof}
\subsection{Proof of Proposition \ref{prop:rhobound}}
\begin{proof}
Using the definition of the reflection $R_{\mathcal{L},\frac{1}{2}}\left(\mathbf{V}\right)$, we obtain:
\begin{eqnarray*}
\rho\left(F_{\mathbf{V},\mathcal{D}\setminus\mathcal{L}}\right)&=& \dfrac{2^d\left(d+1\right)}{2^d-\left(d+1\right)}\left(\mathbb{E}\left[\prod_{l\in \mathcal{L}}V_l\prod_{l\in \mathcal{D}\setminus\mathcal{L}}\left(1-V_l\right) \right] -\dfrac{1}{2^d}\right).
\end{eqnarray*}

For our general representation in \eqref{gensimplevar} the expected value of the product is: 
\begin{eqnarray*}
\end{eqnarray*}
\begin{eqnarray*}
\mathbb{E}\left[\prod^{d}_{l=1}U_l\right]&=& \dfrac{1}{\left\vert \mathcal{E}\right\vert}\sum^{\left\vert \mathcal{E}\right\vert}_{k=1}\mathbb{E}\left[\left.\prod^{d}_{l=1}U_l\right\vert K=k\right]=\dfrac{1}{\left\vert \mathcal{E}\right\vert}\sum^{\left\vert \mathcal{E}\right\vert}_{k=1}\mathbb{E}\left[\left.\prod^{d}_{l=1}U_l\right\vert K=k\right]\\
&=& \dfrac{1}{\left\vert \mathcal{E}\right\vert}\sum^{\left\vert \mathcal{E}\right\vert}_{k=1}\mathbb{E}\left[\prod^{d}_{l
=1}\left(x_{l,i\left(k\right)} V_l +x_{l,j\left(k\right)}\left(1-V_l\right)\right) \right]
\\
&=& \sum^{d}_{m=0}\sum_{\tiny\begin{array}{c}\mathcal{L}_m\subseteq\mathcal{D}\\ \left\vert \mathcal{L}_m\right\vert=m\end{array}} \xi_{\mathcal{L}_m}\mathbb{E}\left[\prod_{l\in \mathcal{L}_m}V_l\prod_{l\in \mathcal{D}\setminus\mathcal{L}_m}\left(1-V_l\right) \right],
\end{eqnarray*}

with 

\begin{equation*}
    \xi_{\mathcal{L}_m}=\dfrac{1}{\left\vert \mathcal{E}\right\vert}\sum^{\left\vert \mathcal{E}\right\vert}_{k=1}\left(\prod_{l\in \mathcal{L}_m}x_{l,i\left(k\right)}  \prod_{l\in \mathcal{D}\setminus\mathcal{L}_m}x_{l,j\left(k\right)}\right).
\end{equation*}

The Spearman's $\rho\left(F_{\mathbf{U}}\right)$ can be expressed as a function of the $\rho$s on the different  $F_{\mathbf{V},\mathcal{D}\setminus\mathcal{L}_m}$ :

\begin{eqnarray}\label{rhosegment}
 \rho\left(F_{\mathbf{U}}\right)
 &=& \sum^{d}_{m=0}\sum_{\tiny\begin{array}{c}\mathcal{L}_m\subseteq\mathcal{D}\\ \left\vert \mathcal{L}_m\right\vert=m\end{array}}\xi_{\mathcal{L}_m}\rho\left(F_{\mathbf{V},\mathcal{D}\setminus\mathcal{L}_m}\right)
+\dfrac{2^d\left(d+1\right)}{2^d-\left(d+1\right)}\dfrac{1}{2^d}\left(\xi^{*} -1 \right)\label{rhobound}
\end{eqnarray}
with
\begin{eqnarray*}
\xi^{*}&=&\sum^{d}_{m=0}\sum_{\tiny\begin{array}{c}\mathcal{L}_m\subseteq\mathcal{D}\\ \left\vert \mathcal{L}_m\right\vert=m\end{array}} \xi_{\mathcal{L}_m} =\dfrac{1}{\left\vert \mathcal{E}\right\vert}\sum^{\left\vert \mathcal{E}\right\vert}_{k=1} \prod^{d}_{l=1} \left(x_{l,i\left(k\right)} + x_{l,j\left(k\right)} \right).\label{eq:RhoSeg}
\end{eqnarray*}

If $\mathbf{V}$ is reflection invariant then $\bar{F}_{\mathbf{V},\mathcal{D}\setminus\mathcal{L}_{m}}=\bar{F}_{\mathbf{V}}$ in eq. \eqref{eq:RhoSeg}. The condition $0<\xi^*\leq 1$ implies
%
$\underline{\rho}\leq\rho\left(F_{\mathbf{U}}\right)<\bar{\rho}$ where  $\underline{\rho}=\min\{\rho\left(F_{\mathbf{V}}\right),-(d+1)/(2^d-\left(d+1\right))\}$ and $\bar{\rho}=\max\{\rho\left(F_{\mathbf{V}}\right),-(d+1)/(2^d-\left(d+1\right))\}$. Since  
$\rho_{\min}\leq\rho\left(F_{\mathbf{V}}\right)$ if we are able to show that $-(d+1)/(2^d-(d+1))\leq\rho_{\min}$ we obtain the result. Let $\mathbf{V}^{\ast}$ such that $\rho(F_{\mathbf{V}^{\ast}})=\rho_{\min}$
\begin{eqnarray}
&&-\dfrac{\left(d+1\right)}{2^d-\left(d+1\right)}\leq  \dfrac{2^d\left(d+1\right)}{2^d-\left(d+1\right)}\left(\mathbb{E}\left[\prod^{d}_{l=1}V^{\ast}_l\right] -\dfrac{1}{2^d}\right)\\
&&\Longleftrightarrow
2^{d}\mathbb{E}\left[\prod^{d}_{l=1}V^{\ast}_l\right]\geq 0
\end{eqnarray}
which is satisfied since $V_l^{\ast}$ is a.s. positive for all $l=1,\ldots,d$.
\end{proof}

\subsection{Proof of Proposition \ref{RBScirc}}
\begin{proof}
  $U_1\sim\mathcal{U}[0,1]$ and 
\begin{eqnarray*}
U_l &=& \gamma_l - \dfrac{U_1}{d-1},\quad
\gamma_l = \dfrac{l -1 }{d-1}
\end{eqnarray*}
or 
\begin{eqnarray*}
U_l &=& \dfrac{l}{d-1}U_1 +\dfrac{l-1}{d-1}\left(1-U_1\right)\sim \mathcal{U}\left[\dfrac{l-1}{d-1},\dfrac{l}{d-1}\right].
\end{eqnarray*}
It could be verified that  $U_l$ has the same distribution of extracting on the edge $\left(d,1\right) \in C_{d}\left(\left\{1\right\}\right)$ . Random permutation permute simultaneously the coordinate of the vertexes that remain one the first  circular permutation of the other , The exchangeable version then has uniform support on the edges that start from a permutation of the first vertex and link the vertex resulting from an addition first circular permutation. This is exactly the support of the  the exchangeable version of  CCV with dependence graph $C_{d}\left(\left\{1\right\}\right)$  
\end{proof}

\subsection{Proof of Proposition \ref{rotsampseg}}
\begin{proof}
Let $m-1=\left\lfloor d U\right\rfloor\in \left[0,\ldots, d-1\right]$ and $V= dU- \left\lfloor d U\right\rfloor = \left(dU\right)\mathrm{mod}\,1\sim \mathcal{U}[0,1]$ .
We can write \eqref{rotsamp} as:
\begin{eqnarray*}
U_{l}&=& \left(\dfrac{l-1+m-1 +V}{d} \right)\mathrm{mod}\,1,
\\&=&  \left(\left(\dfrac{l+m-2}{d}\right)\mathrm{mod}\,1 +\dfrac{V}{d} \right)\mathrm{mod}\,1
\\&=& \left\{\begin{array}{ccc} \dfrac{l+m-2}{d} + \dfrac{V}{d}&\hbox{if}& m < d+2-l 
\\\dfrac{l+m-2-d}{d} + \dfrac{V}{d}  & \hbox{if}& m \geq d+2-l  \end{array}\right.
\\&=& \left\{\begin{array}{ccc} \dfrac{l+m-1}{d}V+ \dfrac{l+m-2}{d}\left(1-V\right) &\hbox{if}& m < d+2-l 
\\\dfrac{l+m-1-d}{d}V +\dfrac{l+m-2-d}{d}\left(1-V\right)  & \hbox{if}& m \geq d+2-l  \end{array}\right.
\end{eqnarray*}
\end{proof}

\subsection{Proof of Corollary \ref{CTMAJ}}
\begin{proof}

The unique values for coordinates over the $l$-th dimension are:

\begin{eqnarray*}
n_l&=& d+1\\
\mathbf{a}_l&=& \dfrac{1}{d}\left(0,\ldots,d\right)^{T}
\end{eqnarray*}

In the projected graph, only nearby vertexes are connected by edges, the difference in projected coordinates joined is constant and the number of edges is $d$. We have:
\begin{eqnarray*}
n^{l}_{\left(m,m^{\prime}\right)}&=&\left\{\begin{array}{ccc}1 && \left\vert m-m^{\prime}\right\vert =1\\
&&\\
0 && \hbox{otherwise}\end{array}
\right.
\\ \mathcal{K}_{l,m}&=& \left(m-1,m\right)\\
a_{l m}- a_{l,m-1}&=& \dfrac{1}{d}
\end{eqnarray*}
with $m,m'=1,\ldots,n_l$. Given those preliminary computation we are ready to show standard marginal uniformity using the conditions in Theorem \ref{unif}:
\begin{eqnarray*}
\dfrac{1}{\left\vert\mathcal{E}\right\vert}\displaystyle \sum_{k\in \mathcal{K}_{l,m}}  \dfrac{1}{ a_{l,m_{\beta}\left(k\right)}-  a_{l,m_{\alpha}\left(k\right)}}= \dfrac{1}{\left\vert\mathcal{E}\right\vert} \dfrac{n^{l}_{\left(m,m^{\prime}\right)}}{a_{l m}- a_{l,m-1}}= \frac{d}{d}=1
\end{eqnarray*}

Concerning the constant sum constraint, it is sufficent to sum over $l$ , $x_{l,m}$.
\begin{eqnarray*}
\sum^{d}_{l=1} x_{l,m}&=&  \sum^{d+2-m-1}_{l=1} \dfrac{l+m-1}{d} + \sum^{d}_{l=d+2-m} \dfrac{l+m-1-d}{d}
\\&=& \dfrac{d+1}{2}.
\end{eqnarray*}

\end{proof}

\subsection{Proof of Proposition \ref{propAJ}}
\begin{proof}

In the $3$-dimensional case the proposal of  \cite{Arvi:John:82:VRT} is
\begin{eqnarray*}
U_1 &=& U  \qquad U \sim \mathcal{U}[0,1]
\\
U_2&=& \left\lbrace U + 1/2\right\rbrace\\
U_3 &=& 1 - \left\lbrace 2U \right\rbrace
\end{eqnarray*}
Case $U\in\left[0,1/2\right]$
\begin{eqnarray*}
U_1 &=& U  
\\
U_2&=&  U + 1/2 \\
U_3 &=& 1 - 2U 
\end{eqnarray*}
Case $U\in\left[1/2,1\right]$
\begin{eqnarray*}
U_1 &=& U 
\\
U_2&=&  U - 1/2\\
U_3 &=& 1 - \left( 2U -1\right)
\end{eqnarray*}
The construction in Eq (2.4) of \cite{Ruschendorf2002} is obtained by setting $W_l=-1+2 U_l$. 
\end{proof}

\subsection{Proof of Proposition \ref{basedAVseg}}
\begin{proof}
Let $V= \left\lbrace b^{d-2} U_1\right\rbrace$ and $m=  \left\lfloor b^{d-2} U_1\right\rfloor$. Then $V$ is a standard uniform random variable and $m$ takes values on the integers $\left\lbrace 0,\ldots,d-1\right\rbrace$ with equal probabilities. We have:
\begin{eqnarray*}
U_1&=& \dfrac{V}{b^{d-2}} + \dfrac{m}{b^{d-2}} = V \dfrac{m +1}{b^{d-2}} + \left(1-V\right) \dfrac{m}{b^{d-2}} \\&=&  V z_{1,m+1} + \left(1-V\right)  y_{1,m+1}.
\end{eqnarray*}
For $i=2,\ldots,d-1$:
\begin{eqnarray*}
U_i&=&\left\lbrace b^{i-2} \left(V \dfrac{m +1}{b^{d-2}} + \left(1-V\right) \dfrac{m}{b^{d-2}}\right) + 1/2 +1/2 V - 1/2 V \right\rbrace\\  
&=&\left\lbrace V \left( b^{i-2}\dfrac{m +1}{b^{d-2}} +1/2\right) + \left(1-V\right) \left(b^{i-2}\dfrac{m}{b^{d-2}} + 1/2 \right)\right\rbrace\\ 
&=& \left\lbrace V \left( \left\lfloor b^{i-2}\dfrac{m}{b^{d-2}} +1/2 \right\rfloor  +\left\lbrace b^{i-2}\dfrac{m}{b^{d-2}} +1/2 \right\rbrace + \dfrac{b^{i-2}}{b^{d-2}}\right)\right.
\\ &+& \left. \left(1-V\right) \left(\left\lfloor b^{i-2}\dfrac{m}{b^{d-2}} +1/2 \right\rfloor  +\left\lbrace b^{i-2}\dfrac{m}{b^{d-2}} +1/2 \right\rbrace\right)  \right\rbrace 
\\
&=& \left\lbrace V \left( \left\lbrace b^{i-2}\dfrac{m}{b^{d-2}} +1/2 \right\rbrace + \dfrac{b^{i-2}}{b^{d-2}}\right)+ \left(1-V\right) \left(  \left\lbrace b^{i-2}\dfrac{m}{b^{d-2}} +1/2 \right\rbrace\right)  \right\rbrace 
\\
&=&  V z_{i,m+1}+ \left(1-V\right) y_{i,m+1}, \quad i=1,\ldots,d-1  
\end{eqnarray*}
\begin{eqnarray*}
U_d&=& 1- \left\lbrace  \left(V + m\right) \right\rbrace = 1-V= V z_{d,m+1} + \left(1-V\right) y_{d,m+1}
\end{eqnarray*}
 
The final result for $U_i$ follows from $z_{i,m+1}\leq 1 $. Then the convex combination of $z_{i,m+1}$ and $y_{i,m+1}$ is in $\left[0,1\right]$.
\end{proof}
\subsection{Proof of Corollary \ref{CTMAJ}}
\begin{proof}
We now show that the base-$b$ Ardvisen and Johnson random vector has standard uniform marginals for each $b\in\mathbb{N}$ using conditions in Theorem \ref{unif}. This would also showcase the use of our notation in a non trivial example. Let us derive another expression for the $\mathbf{y}$'s that would simplify the coming computations
\begin{eqnarray*}
y_{1,m}&=& \dfrac{m-1}{b^{d-2}}\\
y_{k,m}&=& \left( b^{k-2} \dfrac{m-1}{b^{d-2}}+ \dfrac{1}{b}\right) \mathrm{mod}\,1 =\left( \dfrac{m-1 + b^{d-k-1}}{b^{d-k}}\right) \mathrm{mod}\,1\\&= &\dfrac{\left(m-1 + b^{d-k-1}\right)\mathrm{mod}\, b^{d-k}}{b^{d-k}},\quad k\in\left\{2,\ldots,d-1\right\}\\
y_{d,m}&=&1
\end{eqnarray*}
for $m\in\{1,\ldots,b^{d-2}\}$, where the last line follows from the fact that $m-1 + b^{d-k-1}$ is an integer, the inclusion of the case $k=d$ from he fact that $a \left(\hbox{mod }1\right)=0$ for $a$ integer. 
In particular this representation allow an easy computation of key quantities. First of all , coordinates of projected vertexes i.e. the vector of ordered unique values in $\mathbf{y}_l$ and  $\mathbf{z}_l$, $l\in\left\{1,\ldots,d\right\} $ :
\begin{eqnarray*}
n_l&=&b^{d-\max\left(l,2\right)} +1\\
\mathbf{a}_l&=& \dfrac{1}{n_l-1}\left(0,\ldots,n_l-1 \right)^{T}
\end{eqnarray*}

Then, because of the mod function, values of $y_{k,m}$ repeat themselves after $b^{d-\max\left(l,2\right)}$ positions. The multiplicity of each unique value in $\mathbf{y}_l$ is the the length of the vector divided by the period, i.e.: $\frac{b^{d-2}}{b^{d-\max\left(l,2\right)}}=b^{\max\left(l,2\right)-2}$. Considering the fact that all values with the exception of $0$ are also repeated in the same way in $\mathbf{z}_l$, we get:
\begin{eqnarray*}
\left\vert\mathcal{M}_{l,1}\right\vert&=&\left\vert\mathcal{M}_{l,n_l}\right\vert= b^{\max\left(l,2\right)-2}\\
\left\vert\mathcal{M}_{l,j}\right\vert &=&2b^{\max\left(l,2\right)-2}, j=2,\ldots,n_l-1
\end{eqnarray*}

Finally, in the projected graph only nearby vertexes are connected by edges, the difference in projected coordinates joined is constant and the number of edges correspond to the number of replicas in $\mathbf{y}_l$ of such a vertex. We have:
\begin{eqnarray*}
n^{l}_{\left(m,m^{\prime}\right)}&=&\left\{\begin{array}{ccc} b^{\max\left(l,2\right)-2} && \left\vert m-m^{\prime}\right\vert =1\\
&&\\
0 && \hbox{otherwise}\end{array}
\right.
\\ \mathcal{K}_{l,m}&=& \left(m-1,m\right)\\
a_{l m}- a_{l,m-1}&=& \dfrac{1}{b^{d-\max\left(l,2\right)}}
\end{eqnarray*}
with $m,m'=1,\ldots,n_l$. Given those preliminary computation we are ready to show standard marginal uniformity using the conditions in Theorem \ref{unif}:
\begin{eqnarray*}
\dfrac{1}{\left\vert\mathcal{E}\right\vert}\displaystyle \sum_{k\in \mathcal{K}_{l,m}}  \dfrac{1}{ a_{l,m_{\beta}\left(k\right)}-  a_{l,m_{\alpha}\left(k\right)}}= \dfrac{1}{\left\vert\mathcal{E}\right\vert} \dfrac{n^{l}_{\left(m,m^{\prime}\right)}}{a_{l m}- a_{l,m-1}}= \frac{b^{\max\left(l,2\right)-2}b^{d-\max\left(l,2\right)}}{b^{d-2} }=1
\end{eqnarray*}

Concerning the constant sum constraint, it is known since \cite{Arvi:John:82:VRT} (see also \cite{CraiuMeng2005}) that the case $b=2$ satisfies $d$-CTM. We show that the first vertex ($m=1$) sum to $\frac{d}{2}$ only in the case $b=2$, implying that $b\neq2$ cases are not $d$-CTM.
\begin{eqnarray*}
y_{1,1}&=& \dfrac{0}{b^{d-2}}=0\\
y_{k,1}&=& \left\lbrace b^{k-2} \dfrac{0}{b^{d-2}}+ \dfrac{1}{b}\right\rbrace = \dfrac{1}{b}\\
k&\in&\left\{2,\ldots,d-1\right\}\\
y_{d,1}&=&1
\end{eqnarray*}
then we should solve for $b$ the following equation 
\begin{eqnarray*}
\sum^{d}_{k=1} y_{k,1} = \dfrac{d-2}{b} +1 = \dfrac{d}{2}
\end{eqnarray*}
which has as unique solution $b=2$.
\end{proof}

\subsection{Proof of Proposition \ref{ILHsperstar}}
\begin{proof}
 For each $l\in\mathcal{D}$
 \begin{eqnarray*}
U_{l,t}&=& \dfrac{1}{d}\left(\pi_k\left(l\right) + U_{l,t-1}\right) 
\\&=&  \dfrac{1}{d}\left(\pi_k\left(l\right) + U_{l,t-1}\right)  + \dfrac{\pi_k\left(l\right)}{d}U_{l,t-1} - \dfrac{\pi_k\left(l\right)}{d}U_{l,t-1}
\\&=&  \dfrac{\pi_k\left(l\right)+1 }{d} U_{l,t-1} +  \dfrac{\pi_k\left(l\right)}{d}\left(1-U_{l,t-1}\right)
\end{eqnarray*}
\end{proof}

\subsection{Proof of Proposition \ref{ILHseg}}
\begin{proof}
For each t $-1\leq X_{it}\leq 1$ let us define $\mathbf{U}_t=\dfrac{\mathbf{X}_t+1}{2}$, we have:
\begin{eqnarray}
 \mathbf{U}_{t}= \dfrac{1}{3} \mathbf{U}_{t-1} + \dfrac{2}{3}\dfrac{\mathbf{V}_k +1}{2}
\end{eqnarray} 
and $\mathbf{V}_k +1$ is a permutation of $\left\{0,1,2\right\}$, so that the generalization to $d$ dimension is straightforward:
\begin{eqnarray}
 \mathbf{U}_{t}= \dfrac{1}{d} \mathbf{U}_{t-1} + \left(1-\dfrac{1}{d}\right) \dfrac{\mathcal{D}^{\pi}_t}{d-1} = \dfrac{1}{d}\left(\mathcal{D}^{\pi}_t + \mathbf{U}_{t-1}\right)
\end{eqnarray} 
\end{proof}

\subsection{Proof of Corollary \ref{unifILH}}

\begin{proof}
With the line segment representation, of proposition \ref{ILHseg}
the unique values are  $a_{l,m}= \dfrac{m}{d}$, $m=0,\ldots,d$ and the segments extremes are  $\beta_{l,k}= \frac{\pi_k\left(l\right) + 1}{d}$ and $\alpha_{l,k}= \frac{\pi_k\left(l\right)}{d}$. Then, since $m_{\beta}\left(k\right)= \pi_k\left(l\right) + 1$ and $m_{\alpha}\left(k\right)= \pi_k\left(l\right)$, we obtain:
\begin{eqnarray*}
a_{l,m_{\beta}\left(k\right)}-  a_{l,m_{\alpha}\left(k\right)}&=& \dfrac{1}{d}
\end{eqnarray*}
and 
\begin{eqnarray*}
\mathcal{K}_{l,m}&=&\left\{k \in \left\{1,\ldots, \left\vert\mathcal{E}\right\vert\right\}: \left(m_{\alpha}\left(k\right)\leq m-1 \right) \cap \left(m_{\beta}\left(k\right)\geq m  \right)\right\}
\\&=&\left\{k \in \left\{1,\ldots, \left\vert\mathcal{E}\right\vert\right\}: \pi_k\left(l\right) =m-1
\right\}
\end{eqnarray*}

leading to  $\left\vert \mathcal{K}_{l,m}\right\vert$ being equal to the number of permutations with the l-th element equal to $m-1$. We obtain then   $\left\vert \mathcal{K}_{l,m}\right\vert= \left(d-1\right)!$, which implies for each $m =2,\ldots,d$:

\begin{eqnarray*}
F_{l,m}\left( \mathbf{a}_l\right)&=&\dfrac{1}{\left\vert\mathcal{E}\right\vert }\displaystyle \sum_{k\in \mathcal{K}_{l,m}}  \dfrac{1}{ a_{l,m_{\beta}\left(k\right)}-  a_{l,m_{\alpha}\left(k\right)}}
\\&=&\dfrac{1}{d! }\displaystyle \sum_{k\in \mathcal{K}_{l,m}} d = \dfrac{d}{d! }\left\vert \mathcal{K}_{l,m}\right\vert=1
\end{eqnarray*} 

we now investigate if the constant sum is preserved over iterations. As previously shown, the line segment representation of ILH satisfies the first constraint in equations \eqref{c1c2} with $c_1=\dfrac{1}{d}$. The second constraint is:
\begin{equation*}
\sum^{d}_{l=1} \dfrac{\pi_{k}\left(l\right)}{d}= \sum^{d-1}_{l=0} \dfrac{l}{d} = \dfrac{1}{d}\left(d-1\right)\dfrac{d}{2}= \left(1-\dfrac{1}{d}\right)\dfrac{d}{2}= \left(1-c_1\right)\dfrac{d}{2}
\end{equation*}
and if  $U_{l,t-1}$ has constant sum then $U_{l,t}$ in \eqref{ILHiter} has constant sum by Lemma \ref{CTMcomp}.
\end{proof}

\subsection{Proof of Proposition \ref{ILHtaurho}}

\begin{proof}
In the ILH($T$) case we can apply the deterministic composition. By ILH construction the $\mathbf{V}$ vector has independent components, and in the segments belong to non-intersecting hypercubes of smaller size. Let's consider a generic construction with the latter characteristics.
\begin{eqnarray*}
&&\mathbb{P}\left(\mathbf{U}\leq\mathbf{W}\right)=\\
&= &\dfrac{1}{\left\vert\mathcal{E}\right\vert}\sum^{\left\vert\mathcal{E}\right\vert}_{k_{\mathbf{U}}=1} \mathbb{P}\left(R_{\mathbf{V},\mathcal{L}^{-}_{k_{\mathbf{U}}}}\left(\mathbf{V}\right)\leq\mathbf{Y}_{k_{\mathbf{U}}}\right)\\
&=& \dfrac{1}{\left\vert\mathcal{E}\right\vert}\sum^{\left\vert\mathcal{E}\right\vert}_{k_{\mathbf{U}}=1} \mathbb{P}\left(\mathbf{V}\leq\mathbf{Y}_{k_{\mathbf{U}}}\right)\\
&=& \dfrac{1}{\left\vert\mathcal{E}\right\vert}\sum^{\left\vert\mathcal{E}\right\vert}_{k_{\mathbf{U}}=1} \mathbb{E}_{\mathbf{W}}\left[\prod^d_{l=1}\dfrac{\min\left(\max\left(W_l,\alpha_{l,k_{\mathbf{U}}}\right),\beta_{l,k_{\mathbf{U}}}\right)-\alpha_{l,k_{\mathbf{U}}}}{\beta_{l,k_{\mathbf{U}}}-\alpha_{l,k_{\mathbf{U}}}}\right]
\\&=&
\dfrac{1}{\left\vert\mathcal{E}\right\vert}\sum^{\left\vert\mathcal{E}\right\vert}_{k_{\mathbf{U}}=1} \dfrac{1}{\left\vert\mathcal{E}\right\vert}\sum^{\left\vert\mathcal{E}\right\vert}_{k_{\mathbf{W}}=1}\mathbb{E}_{\mathbf{V}}\left[\prod^d_{l=1}\dfrac{\min\left(\max\left(\alpha_{l,k_{\mathbf{W}}}+ \left(\beta_{l,k_{\mathbf{W}}}-\alpha_{l,k_{\mathbf{W}}}\right)V_l,\alpha_{l,k_{\mathbf{U}}}\right),\beta_{l,k_{\mathbf{U}}}\right)-\alpha_{l,k_{\mathbf{U}}}}{\beta_{l,k_{\mathbf{U}}}-\alpha_{l,k_{\mathbf{U}}}}\right]
\\&=&
\dfrac{1}{\left\vert\mathcal{E}\right\vert}\sum^{\left\vert\mathcal{E}\right\vert}_{k_{\mathbf{U}}=1} \dfrac{1}{\left\vert\mathcal{E}\right\vert}\sum^{\left\vert\mathcal{E}\right\vert}_{k_{\mathbf{W}}=1}\prod^d_{l=1}\dfrac{\mathbb{E}_{V_l}\left[\min\left(\max\left(\alpha_{l,k_{\mathbf{W}}}+ \left(\beta_{l,k_{\mathbf{W}}}-\alpha_{l,k_{\mathbf{W}}}\right)V_l,\alpha_{l,k_{\mathbf{U}}}\right),\beta_{l,k_{\mathbf{U}}}\right)\right]-\alpha_{l,k_{\mathbf{U}}}}{\beta_{l,k_{\mathbf{U}}}-\alpha_{l,k_{\mathbf{U}}}}
\end{eqnarray*}
Since the boxes with different segments do not overlap 
in the sum for $k_{\mathbf{W}}$ the only contributions come from $k_{\mathbf{U}}=k_{\mathbf{W}}$:

\begin{eqnarray*}
\mathbb{P}\left(\mathbf{U}\leq\mathbf{W}\right)&= &
\dfrac{1}{\left\vert\mathcal{E}\right\vert^2}\sum^{\left\vert\mathcal{E}\right\vert}_{k_{\mathbf{U}}=1} \prod^d_{l=1}\dfrac{\mathbb{E}_{V_l}\left[\alpha_{l,k_{\mathbf{U}}}+ \left(\beta_{l,k_{\mathbf{U}}}-\alpha_{l,k_{\mathbf{U}}}\right)V_l\right]-\alpha_{l,k_{\mathbf{U}}}}{\beta_{l,k_{\mathbf{U}}}-\alpha_{l,k_{\mathbf{U}}}}\\
&=& \dfrac{1}{\left\vert\mathcal{E}\right\vert^2}\sum^{\left\vert\mathcal{E}\right\vert}_{k_{\mathbf{U}}=1} \prod^d_{l=1}\dfrac{\left(\beta_{l,k_{\mathbf{U}}}-\alpha_{l,k_{\mathbf{U}}}\right)/2}{\beta_{l,k_{\mathbf{U}}}-\alpha_{l,k_{\mathbf{U}}}}
\\&=& \dfrac{1}{\left\vert\mathcal{E}\right\vert} \dfrac{1}{2^d}
\end{eqnarray*}

The number of vertexes in the ILH($T$) case is $\left(d!\right)^T$ and the result for $\tau$ follows.

Denoting $\xi^{*}_t$, the case for $t\leq T$ iterations we compute:
\begin{eqnarray*}
\xi^{*}_1 &=&  \dfrac{1}{d!}\sum^{d!}_{k_1=1}\prod^d_{l=1}\dfrac{\left(2\pi_{k_1}\left(l\right)  +1\right)}{d}\\
\xi^{*}_2 &=&  \dfrac{1}{d!}\sum^{d!}_{k_2=1}\dfrac{1}{d!}\sum^{d!}_{k_1=1} \prod^d_{l=1}\left(\dfrac{ \pi_{k_1}\left(l\right)}{d^2} +\dfrac{ \pi_{k_2}\left(l\right)}{d} +\dfrac{ \pi_{k_1}\left(l\right)  +1}{d^2} +\dfrac{ \pi_{k_2}\left(l\right) }{d}\right) 
\\&=& \dfrac{1}{d!}\sum^{d!}_{k_2=1}\dfrac{1}{d!}\sum^{d!}_{k_1=1} \prod^d_{l=1}\left(\sum^2_{t=1}\dfrac{ 2\pi_{k_t}\left(l\right)}{d^{2-t+1}} + \dfrac{1}{d^2}\right)
\\&\vdots&
\\\xi^{*}_T &=&\dfrac{1}{d!}\sum^{d!}_{k_T=1}\cdots\dfrac{1}{d!}\sum^{d!}_{k_1=1} \prod^d_{l=1}\left(\sum^T_{t=1}\dfrac{ 2\pi_{k_t}\left(l\right)}{d^{T-t+1}} + \dfrac{1}{d^T}\right)
\end{eqnarray*}
We simplify the expression for $T=2$ by expanding the product
\begin{eqnarray*}
\xi^{*}_2 &=&  \dfrac{1}{d!}\sum^{d!}_{k_2=1}\dfrac{1}{d!}\sum^{d!}_{k_1=1} 
\sum^{d}_{m_2=0} \sum_{{\tiny\begin{array}{c}\mathcal{L}_{m_2}\subseteq\mathcal{D}\\ \left\vert\mathcal{L}_{m_2}\right\vert=m_2 \end{array}}}\left\{\prod_{l\in\mathcal{L}_{m_2} }\dfrac{ 2 \pi_{k_2}\left(l\right)}{d}   \right\}\left\{  \prod_{l\in\mathcal{D}\setminus\mathcal{L}_{m_2}}\dfrac{ 2\pi_{k_1}\left(l\right)  +1}{d^2} \right\}
\\&=& \sum^{d}_{m_2=0} \sum_{{\tiny\begin{array}{c}\mathcal{L}_{m_2}\subseteq\mathcal{D}\\ \left\vert\mathcal{L}_{m_2}\right\vert=m_2 \end{array}}}\dfrac{1}{d!} \sum^{d!}_{k_2=1}\left\{\prod_{l\in\mathcal{L}_{m_2} }\dfrac{ 2 \pi_{k_2}\left(l\right)}{d}   \right\} \dfrac{1}{d!}\sum^{d!}_{k_1=1} \left\{ \prod_{l\in\mathcal{D}\setminus\mathcal{L}_{m_2}}\dfrac{ 2\pi_{k_1}\left(l\right)  +1}{d^2} \right\}
\end{eqnarray*}

The sums over permutations can be reduced. Consider the first sum, the product is independent from $\pi_{k_2}\left(l\right)$ for $l\in\mathcal{D}\setminus\mathcal{L}_{m_2}$ and invariant for permutations of the product itself. Then the product is the same for $\left(d-m_2\right)! m_2!$ permutations.  We consider only the ordered products multiplied by   $\left(d-m_2\right)! m_2!$. This amounts to summing over the set of combinations of $\mathcal{D}$ with $m_2$ elements that we denote by $\mathcal{C}_{\mathcal{D},m_2}$:
\begin{eqnarray*}
\dfrac{1}{d!} \sum^{d!}_{k_2=1}\prod_{l\in\mathcal{L}_{m_2} }\dfrac{ 2 \pi_{k_2}\left(l\right)}{d} &=& {d\choose m_2}^{-1} \sum_{{\tiny\left\{i_{1},\ldots,i_{m_2}\right\}\in \mathcal{C}_{\mathcal{D},m_2}} } \prod^{m_2}_{l=1} \dfrac{2\left(i_l-1\right)}{d}
\end{eqnarray*}
\begin{eqnarray*}
\xi^{*}_2 &=& \resizebox{\linewidth}{!}{$\displaystyle \sum^{d}_{m_2=0} {d\choose m_2, d-m_2} \left\{{d\choose m_2}^{-1} \sum_{{\tiny\left\{i_{1},\ldots,i_{m_2}\right\}\in {\mathcal{D}\choose m_2}} } \prod^{m_2}_{l=1} \dfrac{2\left(i_l-1\right)}{d}   \right\} \left\{{d\choose d-m_2}^{-1} \sum_{{\tiny\left\{i_{1},\ldots,i_{d-m_2}\right\}\in \mathcal{C}_{\mathcal{D},d-m_2}} } \prod^{d-m_2}_{l=1} \dfrac{2\left(i_l-1\right) +1}{d^2}   \right\}$}
\\&=& \resizebox{\linewidth}{!}{$\displaystyle \sum_{m_2+m_1+m_0=d} {d\choose m_2, m_1, m_0}\prod^{2}_{t=1} \left\{{d\choose m_t}^{-1} \sum_{{\tiny\left\{i_{1},\ldots,i_{m_t}\right\}\in \mathcal{C}_{\mathcal{D},m_t}} } \prod^{m_t}_{l=1} \dfrac{2\left(i_l-1\right)}{d^{2-t+1}}   \right\} \left(\dfrac{1}{d^2}\right)^{m_0}$}
\end{eqnarray*}

The same line of reasoning leads to:
\begin{eqnarray*}
\xi^{*}_T &=& \!\!\!\resizebox{\linewidth}{!}{$\displaystyle \sum_{m_0+m_1+\ldots+m_T=d} {d\choose m_0,m_1,\dots ,m_T} \prod^{T}_{t=1}\left\{{d\choose m_t}^{-1} \sum_{{\tiny\left\{i_{1},\ldots,i_{m_t}\right\}\in \mathcal{C}_{\mathcal{D},m_2}} } \prod^{m_t}_{l=1} \dfrac{2\left(i_l-1\right)}{d^{T-t+1}}   \right\}\left(\dfrac{1}{d^T}\right)^{m_0}$}   
\end{eqnarray*}

\end{proof}

\subsection{Proof of Lemma \ref{muirrelevance}}

\begin{proof}

Let $\mathbf{U}_{l}=\left( U_l^1,\ldots,U_l^p\right), l\in\mathcal{D}$, $\mathbf{U}_{l}$ is a vector of standard uniform independent random variables. We define $\mathbf{m}_l=\left\{m^1_l,\ldots,m^p_l\right\}, m^i_l=1,\ldots,n_l$, $A^{\mathbf{m}_l}= A_{m^1_l}\times \ldots\times A_{m^p_l}$, $A_{m^l_1}=\left[a_{l,m^i_l-1},a_{l,m^i_l}\right]$ , $A^{\mathbf{m}}= A_{\mathbf{m}_1}\times \ldots\times A_{\mathbf{m}_d}$. By Corollary 1 
we have 
 $\cup_{\mathbf{m}}A^{\mathbf{m}} = \left[0,1\right]^{dp}$ and 
$A^{\mathbf{m}}\cap A^{\mathbf{m}^{\prime}}=\emptyset$ if $\mathbf{m}\neq \mathbf{m}^{\prime}$.
We introduce also a notation for the marginal boxes, given a subset $\mathcal{R}_s\subset \mathcal{D}$, $\left\vert\mathcal{R}_{s}\right\vert=s\leq d$ we call $A_{\left\{m^i_l\right\}_{l\in\mathcal{R}_s}}$ the box $ A_{m^i_{t_1}}\times \ldots\times A_{m^i_{t_s}}$ and $A_{\left\{\mathbf{m}_l\right\}_{l\in\mathcal{R}_s}}$ the box $ A_{\mathbf{m}_{t_1}}\times \ldots\times A_{\mathbf{m}_{t_s}}$ .
The multinomial theorem implies:
\begin{eqnarray*}
 &&\mathbb{E}_{a-LH}\left[\left(\dfrac{1}{\sqrt{d}}\sum^{d}_{l=1}f\left(\mathbf{U}_l\right)\right)^r \right]-\mathbb{E}_{b-LH}\left[\left(\dfrac{1}{\sqrt{d}}\sum^{d}_{l=1}f\left(\mathbf{U}_l\right)\right)^r \right]
\\&=&\displaystyle d^{-r/2}\sum_{r_1+\ldots +r_d = r} \dfrac{r!}{ r_1! \ldots r_d!}\left\{ \mathbb{E}_{a-LH}\left[\prod^d_{l=1} 	 f\left(\mathbf{U}_l\right)^{r_l}\right] - \mathbb{E}_{b-LH}\left[\prod^d_{l=1} 	 f\left(\mathbf{U}_l\right)^{r_l}\right]\right\}
\end{eqnarray*}

For each expectation inside the sum, since $\sum^d_{l=1}r_l = r$ does not vary with $d$, in the limit a different subset $r_{t_1},\ldots,r_{t_s}$, $t_1,\ldots,t_s \in\mathcal{R}_s\subset \mathcal{D}$, $s\leq r$ , of exponents will be different from zero. Then for $c=a,b$ we rewrite:
\begin{eqnarray*}
 \lim_{d\rightarrow \infty}\mathbb{E}_{c-LH}\left[\prod^d_{l=1} 	 f\left(\mathbf{U}_l\right)^{r_l}\right]= \lim_{d\rightarrow \infty}\mathbb{E}_{c-LH}\left[\prod^{s}_{l=1} 	 f\left(\mathbf{U}_{t_{l}}\right)^{r_{t_l}}\right]
\end{eqnarray*}
For $c=a,b$ we obtain:
\begin{eqnarray*}
&&\mathbb{E}_{c-LH}\left[\mathbb{E}_{c-LH}\left[\prod^s_{l=1} 	 f\left(\mathbf{U}_{t_l}\right)^{r_{t_l}}\left\vert \left\{\mathbf{U}_l\in A^{\mathbf{m}_{l}}\right\}_{l\in\mathcal{R}_s}\right.\right]\right]\\&=&
 \sum_{\left\{\mathbf{m}_{l}\right\}_{l\in\mathcal{R}_s}} \mu_{c-LH}\left(A_{\left\{\mathbf{m}_l\right\}_{l\in\mathcal{R}_s}}\right)\mathbb{E}_{c-LH}\left[\prod^s_{l=1} 	 f\left(\mathbf{U}_{t_l}\right)^{r_{t_l}}\left\vert \left\{\mathbf{U}_l\in A^{\mathbf{m}_{l}}\right\}_{l\in\mathcal{R}_s}\right.\right]
\end{eqnarray*}

\begin{eqnarray*}
&&\mu_{c-LH}\left(A_{\left\{\mathbf{m}_l\right\}_{l\in\mathcal{R}_s}}\right)=\prod^{p}_{i=1}\mathbb{P}_{c-LH}\left(\left\{U^{i}_l\in \left(a_{l,m^i_l-1},a_{l,m^i_l}\right]\right\}_{l\in\mathcal{R}_s}\right) \\&=& \prod^{p}_{i=1}\dfrac{1}{\left\vert \mathcal{E}\right\vert}\sum_{k_i\in\mathcal{E}} \mathbb{P}_{c-LH}\left(\left.\left\{U^{i}_l\in \left(a_{l,m^i_l-1},a_{l,m^i_l}\right]\right\}_{l\in\mathcal{R}_s}\right\vert K_i=k_i\right)
\\&=& \resizebox{0.9\linewidth}{!}{$\displaystyle\prod^{p}_{i=1} \dfrac{1}{\left\vert \mathcal{E}\right\vert}\sum_{k_i\in\mathcal{E}} \left[\prod^{s}_{l=1} \mathbb{I}_{\mathcal{K}_{m^i_{t_l}}}\left(k_i\right)\right] \mathbb{E}_{c}\left[\prod^{s}_{l=1}\mathbb{I}_{\left(\dfrac{a_{l,m^i_{t_l}-1}-\alpha_{t_l,k_i}}{\beta_{t_l,k_i}-\alpha_{t_l,k_i}},\dfrac{a_{t_l,m^i_l}-\alpha_{t_l,k_i}}{\beta_{t_l,k_i}-\alpha_{t_l,k_i}}\right]}\left(V^{i}_{t_l,k_i}\right) \right]$}
\\&=& \prod^{p}_{i=1} \dfrac{1}{\left\vert \mathcal{E}\right\vert}\sum_{k_i\in\bigcap^{s}_{l=1}\mathcal{K}_{m^i_{t_l}}} \mathbb{E}_c\left[\prod^{s}_{l=1}\mathbb{I}_{\left(\dfrac{\alpha_{t_l,k_i}-\alpha_{t_l,k_i}}{\beta_{t_l,k_i}-\alpha_{t_l,k_i}},\dfrac{\beta_{t_l,k_i}-\alpha_{t_l,k_i}}{\beta_{t_l,k_i}-\alpha_{t_l,k_i}}\right]}\left(V^{i}_{t_l,k_i}\right)\right]
\\&=& \prod^{p}_{i=1} \dfrac{1}{\left\vert \mathcal{E}\right\vert}\sum_{k_i\in\bigcap^{s}_{l=1}\mathcal{K}_{m^i_{t_l}}} \mathbb{E}_c\left[\prod^{s}_{l=1}\mathbb{I}_{\left(0,1\right]}\left(V^{i}_{t_l,k_i}\right)\right]
\\&=& \prod^{p}_{i=1} \dfrac{1}{\left\vert \mathcal{E}\right\vert}\sum_{k_i\in\bigcap^{s}_{l=1}\mathcal{K}_{m^i_{t_l}}} \mu_c\left(\left(0,1\right]^{s} \times \left(0,1\right]^{d-s}\right)
\\&=& \prod^{p}_{i=1} \dfrac{\left\vert \bigcap^{s}_{l=1}\mathcal{K}_{m^i_{t_l}}\right\vert}{\left\vert \mathcal{E}\right\vert}= \left\{\begin{array}{ccc}
\left(\dfrac{\left(d-s\right)!}{d!}\right)^p &\hbox{if}& m^i_{t_1}\neq  m^i_{t_s} \hbox{ for each } i=1,\ldots,p \\
0 && \hbox{ otherwise } 
\end{array}\right.
\end{eqnarray*}

where $\left\vert \bigcap^{d}_{l=1}\mathcal{K}_{m^i_l}\right\vert=\left(d-s\right)!$ if $m^i_{t_1}\neq \ldots\neq m^i_{t_s}$ and $0$ otherwise because, in the LH case, it represents the number of permutation of the $d$-dimensional vector $\left(0,\ldots,d-1\right)$ when a number $s$ of the components are held fixed and there is no permutation with two equal indexes.
 
 We showed that $\mu_{c-LH}\left(A_{\left\{\mathbf{m}_l\right\}_{l\in\mathcal{R}_s}}\right)$ is independent from the measure of $\mathbf{V}^i$, then if 
\begin{eqnarray}\label{diffE}
\resizebox{0.95\linewidth}{!}{$\displaystyle\mathbb{E}_{a-LH}\left[\prod^s_{l=1} 	 f\left(\mathbf{U}_{t_l}\right)^{r_{t_l}}\left\vert \left\{\mathbf{U}_l\in A^{\mathbf{m}_{l}}\right\}_{l\in\mathcal{R}_s}\right.\right] - \mathbb{E}_{b-LH}\left[\prod^s_{l=1} 	 f\left(\mathbf{U}_{t_l}\right)^{r_{t_l}}\left\vert \left\{\mathbf{U}_l\in A^{\mathbf{m}_{l}}\right\}_{l\in\mathcal{R}_s}\right.\right]=o\left(1\right)$}
\end{eqnarray}

we obtain the statement of the Lemma.

We denote as $B_{\left\{\mathbf{m}_l\right\}_{l\in\mathcal{R}_s}}$ the circumscribed ball  of $A_{\left\{\mathbf{m}_l\right\}_{l\in\mathcal{R}_s}}$. The measure of $B_{\left\{\mathbf{m}_l\right\}_{l\in\mathcal{R}_s}}$ is bounded above by the measure of the union of the boxes adjacent to $A_{\left\{\mathbf{m}_l\right\}_{l\in\mathcal{R}_s}}$ and  $A_{\left\{\mathbf{m}_l\right\}_{l\in\mathcal{R}_s}}$ itself. In the worst case when none of the adjacent boxes has two or more $m$s equal,  we obtain:
\begin{eqnarray*}
\mu_{c-LH}\left(A_{\left\{\mathbf{m}_l\right\}_{l\in\mathcal{R}_s}}\right)\leq \mu_{c-LH}\left(B_{\left\{\mathbf{m}_l\right\}_{l\in\mathcal{R}_s}}\right)\leq \left(2s+1\right) \mu_{c-LH}\left(A_{\left\{\mathbf{m}_l\right\}_{l\in\mathcal{R}_s}}\right)
\end{eqnarray*}

The result follows from Lebesgue-Besicovitch Differentiation Theorem \citep{besicovitch_1945}. We refer to section 1.7 of \cite{evans1991} for a contemporary treatment. In particular, we write, for $c=a,b$ and $\left(\mathbf{u}^{\star}_1,\ldots,\mathbf{u}^{\star}_d\right) $, the center of the box $A_{\left\{\mathbf{m}_l\right\}_{l\in\mathcal{R}_s}}$:
\begin{eqnarray*}
&&\lim_{d\rightarrow\infty} \left\vert \mathbb{E}_{c-LH}\left[\prod^s_{l=1} 	 f\left(\mathbf{U}_{t_l}\right)^{r_{t_l}}
\left\vert \left\{\mathbf{U}_l\in A^{\mathbf{m}_{l}}\right\}_{l\in\mathcal{R}_s}\right.\right]
- \prod^s_{l=1} f\left(\mathbf{u}^{\star}_l\right)^{r_{t_l}} \right\vert\\&\leq&
\lim_{d\rightarrow\infty} \mathbb{E}_{c-LH}\left[\left\vert\prod^s_{l=1} 	 f\left(\mathbf{U}_{t_l}\right)^{r_{t_l}}- \prod^s_{l=1} 	 f\left(\mathbf{u}^{\star}_l\right)^{r_{t_l}}\right\vert\left\vert \left\{\mathbf{U}_l\in A^{\mathbf{m}_{l}}\right\}_{l\in\mathcal{R}_s}\right.\right]\\&\leq& \lim_{d\rightarrow\infty}\dfrac{\displaystyle\int_{A_{\left\{m_l\right\}_{l\in\mathcal{R}_s}} }\left\vert\prod^s_{l=1} 	 f\left(\mathbf{U}_{t_l}\right)^{r_{t_l}}-\prod^s_{l=1} 	 f\left(\mathbf{u}^{\star}_l\right)^{r_{t_l}}\right\vert d\mu_{c-LH}}{\mu_{c-LH}\left(A_{\left\{m_l\right\}_{l\in\mathcal{R}_s}}\right)} \\&\leq&  \lim_{d\rightarrow\infty}\dfrac{\left(2s+1\right)}{\mu_{c-LH}\left(B_{\left\{m_l\right\}_{l\in\mathcal{R}_s}}\right) } \int_{A_{\left\{m_l\right\}_{l\in\mathcal{R}_s}} }\left\vert\prod^s_{l=1} 	 f\left(\mathbf{U}_{t_l}\right)^{r_{t_l}}-\prod^s_{l=1} 	 f\left(\mathbf{u}^{\star}_l\right)^{r_{t_l}}\right\vert d\mu_{c-LH}
\\&\leq& \lim_{d\rightarrow\infty} \dfrac{\left(2s+1\right)}{\mu_{c-LH}\left(B_{\left\{m_l\right\}_{l\in\mathcal{R}_s}}\right) } \int_{B_{\left\{m_l\right\}_{l\in\mathcal{R}_s}} }\left\vert\prod^s_{l=1} 	 f\left(\mathbf{U}_{t_l}\right)^{r_{t_l}}-\prod^s_{l=1} 	 f\left(\mathbf{u}^{\star}_l\right)^{r_{t_l}}\right\vert d\mu_{c-LH}\leq 0
\end{eqnarray*}
 The last passage follows from  $\lim_{d\rightarrow\infty} \left(2s+1\right)\mu_{c-LH}\left(A_{\left\{\mathbf{m}_l\right\}_{l\in\mathcal{R}_s}}\right)=0$ and corollary 1 in section 1.7 of \cite{evans1991}. Because the limit is independent from the measure of $\mathbf{V}^i$, equation \eqref{diffE} is verified.
\end{proof}
\subsection{Proof of Theorem \ref{CLT}}

\begin{proof}
 We apply lemma \ref{muirrelevance} substituting by the method of moments the measure $\mu$ with the independence measure obtaining an ordinary hypercube sample. Since $f$ is bounded we can apply Theorem 1 in \cite{Owen:92:CLT} .
\end{proof}


\begin{thebibliography}{}

\bibitem[Ahn and Fuchs, 2020]{ahn2020minimal}
Ahn, J.~Y. and Fuchs, S. (2020).
\newblock On minimal copulas under the concordance order.
\newblock {\em Journal of Optimization Theory and Applications},
  184(3):762--780.

\bibitem[Albonico et~al., 2019]{albonico2019}
Albonico, A., Calés, L., Cardani, R., Croitorov, O., Ferroni, F., Giovannini,
  M., Hohberger, S., Pataracchia, B., Pericoli, F.~M., Raciborski, R., Ratto,
  M., Roeger, W., and Vogel, L. (2019).
\newblock Comparing post-crisis dynamics across euro area countries with the
  global multi-country model.
\newblock {\em Economic Modelling}, 81:242--273.

\bibitem[Aliprantis and Border, 2007]{aliprantis2007infinite}
Aliprantis, C. and Border, K. (2007).
\newblock {\em Infinite Dimensional Analysis: A Hitchhiker's Guide}.
\newblock Springer.

\bibitem[Andr{\'e}asson, 1972]{Andreasson1972}
Andr{\'e}asson, I.~J. (1972).
\newblock {\em Combinations of antithetic methods in simulation}.
\newblock Kungliga Tekniska H{\"o}gskolan. Department of Information
  Processing. Computer~….

\bibitem[Andr{\'e}asson and Dahlquist, 1972]{AndreassonDahlquist1972}
Andr{\'e}asson, I.~J. and Dahlquist, G. (1972).
\newblock {\em Groups of antithetic transformations in simulation}.
\newblock Kungliga Tekniska H{\"o}gskolan. Department of Information
  Processing. Computer~….

\bibitem[Andrieu and Roberts, 2009]{andrieu2009}
Andrieu, C. and Roberts, G.~O. (2009).
\newblock The pseudo-marginal approach for efficient monte carlo computations.
\newblock {\em Ann. Statist.}, 37(2):697--725.

\bibitem[Arvidsen and Johnsson, 1982]{Arvi:John:82:VRT}
Arvidsen, N.~I. and Johnsson, T. (1982).
\newblock Variance reduction through negative correlation, a simulation study.
\newblock {\em Journal of Statistical Computation and Simulation},
  15(2-3):119--127.

\bibitem[Barlow and Proschan, 1975]{barlow1975statistical}
Barlow, R. and Proschan, F. (1975).
\newblock {\em Statistical Theory of Reliability and Life Testing: Probability
  Models}.
\newblock International series in decision processes. Holt, Rinehart and
  Winston.

\bibitem[Besicovitch, 1945]{besicovitch_1945}
Besicovitch, A.~S. (1945).
\newblock A general form of the covering principle and relative differentiation
  of additive functions.
\newblock {\em Mathematical Proceedings of the Cambridge Philosophical
  Society}, 41(2):103–110.

\bibitem[Birkhoff, 1946]{birkhoff1946}
Birkhoff, G. (1946).
\newblock Three observations on linear algebra.
\newblock {\em Univ. Nac. Tacuman, Rev. Ser. A}, 5:147--151.

\bibitem[Brown, 1966]{brown1966}
Brown, J. (1966).
\newblock Approximation theorems for markov operators.
\newblock {\em Pacific Journal of Mathematics}, 16(1):13--23.

\bibitem[Bubenik and Holbrook, 2007]{BUBENIK2007}
Bubenik, O. and Holbrook, J. (2007).
\newblock Densities for random balanced sampling.
\newblock {\em Journal of Multivariate Analysis}, 98(2):350 -- 369.

\bibitem[Chan et~al., 2006]{Chan2006}
Chan, D., Kohn, R., and Kirby, C. (2006).
\newblock Multivariate stochastic volatility models with correlated errors.
\newblock {\em Econometric Reviews}, 25(2-3):245--274.

\bibitem[Craiu and Meng, 2005]{CraiuMeng2005}
Craiu, R.~V. and Meng, X.-L. (2005).
\newblock Multiprocess parallel antithetic coupling for backward and forward
  {M}arkov chain {M}onte {C}arlo.
\newblock {\em The Annals of Statistics}, 33(2):661--697.

\bibitem[Craiu and Meng, 2006]{Craiu2006}
Craiu, R.~V. and Meng, X.-L. (2006).
\newblock Meeting hausdorff in monte carlo: A surprising tour with antihype
  fractals.
\newblock {\em Statistica Sinica}, 16(1):77--91.

\bibitem[Dhaene and Goovaerts, 1996]{dhaene1996dependency}
Dhaene, J. and Goovaerts, M.~J. (1996).
\newblock Dependency of risks and stop-loss order 1.
\newblock {\em ASTIN Bulletin: The Journal of the IAA}, 26(2):201--212.

\bibitem[Durante et~al., 2014]{Durante2014}
Durante, F., {Fern\'andez S\' anchez}, J., and Trutschnig, W. (2014).
\newblock Multivariate copulas with hairpin support.
\newblock {\em Journal of Multivariate Analysis}, 130:323--334.

\bibitem[Durante and Sanchez, 2012]{DURANTE2012}
Durante, F. and Sanchez, J.~F. (2012).
\newblock On the approximation of copulas via shuffles of min.
\newblock {\em Statistics \& Probability Letters}, 82(10):1761 -- 1767.

\bibitem[Evans and Gariepy, 1991]{evans1991}
Evans, L. and Gariepy, R. (1991).
\newblock {\em Measure Theory and Fine Properties of Functions}.
\newblock Studies in Advanced Mathematics. Taylor \& Francis.

\bibitem[Fishman and Huang, 1983]{Fishman1983}
Fishman, G.~S. and Huang, B.~D. (1983).
\newblock Antithetic variates revisited.
\newblock {\em Commun. ACM}, 26(11):964–971.

\bibitem[Fr\'{e}chet, 1935]{Frechet35}
Fr\'{e}chet, M. (1935).
\newblock G\'{e}n\'{e}ralisation du th\'{e}oreme des probabilit\'{e}s totales.
\newblock {\em Fundamenta Mathematicae}, 25(1):379--387.

\bibitem[Fr{\'e}chet, 1951]{Frechet1951}
Fr{\'e}chet, M. (1951).
\newblock Sur les tableaux de corr{\'e}lation dont les marges sont donn{\'e}es.
\newblock {\em Ann. Univ. Lyon, 3\^{e} serie, Sciences, Sect. A}, 14:53--77.

\bibitem[Frigessi et~al., 2000]{Frigessi2000}
Frigessi, A., G{\aa}semyr, J., and Rue, H. (2000).
\newblock Antithetic coupling of two {G}ibbs sampler chains.
\newblock {\em The Annals of Statistics}, 28(4):1128--1149.

\bibitem[Fuchs et~al., 2018]{fuchs2018characterizations}
Fuchs, S., McCord, Y., and Schmidt, K.~D. (2018).
\newblock Characterizations of copulas attaining the bounds of multivariate
  kendall’s tau.
\newblock {\em Journal of Optimization Theory and Applications},
  178(2):424--438.

\bibitem[Gaffke and R\"uschendorf, 1981]{Gaffke1981}
Gaffke, N. and R\"uschendorf, L. (1981).
\newblock On a class of extremal problems in statistics.
\newblock {\em Mathematische Operationsforschung und Statistik. Series
  Optimization}, 12(1):123--135.

\bibitem[Gelfand and Smith, 1990]{Gelfand1990}
Gelfand, A.~E. and Smith, A. F.~M. (1990).
\newblock Sampling-based approaches to calculating marginal densities.
\newblock {\em Journal of the American Statistical Association},
  85(410):398--409.

\bibitem[Gerber and Chopin, 2015]{SQMC}
Gerber, M. and Chopin, N. (2015).
\newblock Sequential quasi monte carlo.
\newblock {\em Journal of the Royal Statistical Society: Series B (Statistical
  Methodology)}, 77(3):509--579.

\bibitem[Gerow and Holbrook, 1996]{Ger}
Gerow, K. and Holbrook, J. (1996).
\newblock Statistical sampling and fractal distributions.
\newblock {\em Math. Intelligencer}, 18:12--22.

\bibitem[Hammersley and Morton, 1956]{ham-mo}
Hammersley, D.~C. and Morton, K.~V. (1956).
\newblock A new {M}onte {C}arlo technique: antithetic variates.
\newblock {\em Mathematical Proceedings of the Cambridge Philosophical
  Society}, 52:449--475.

\bibitem[Hammersley and Mauldon, 1956]{hammersley_mauldon_1956}
Hammersley, J.~M. and Mauldon, J.~G. (1956).
\newblock General principles of antithetic variates.
\newblock {\em Mathematical Proceedings of the Cambridge Philosophical
  Society}, 52(3):476--481.

\bibitem[Handscomb, 1958]{handscomb_1958}
Handscomb, D.~C. (1958).
\newblock Proof of the antithetic variates theorem for n > 2.
\newblock {\em Mathematical Proceedings of the Cambridge Philosophical
  Society}, 54(2):300--301.

\bibitem[Hardy et~al., 1934]{hardy1934}
Hardy, G., Collection, K. M.~R., Littlewood, J., P{\'o}lya, G., P{\'o}lya, G.,
  and Littlewood, D. (1934).
\newblock {\em Inequalities}.
\newblock Cambridge University Press.

\bibitem[Hoeffding, 1940]{hoeff}
Hoeffding, W. (1940).
\newblock Masstabinvariante korrelationtheorie.
\newblock {\em Schriften des Mathematischen Instituts und des Instituts f\"ur
  Angewandte Mathematik der Universitat Berlin}, 5:179--233.

\bibitem[Isbell, 1955]{isbell1955}
Isbell, J.~R. (1955).
\newblock Birkhoff’s problem 111.
\newblock {\em Proceedings of the American Mathematical Society},
  6(2):217--218.

\bibitem[Joe, 1990]{joe1992}
Joe, H. (1990).
\newblock Multivariate concordance.
\newblock {\em Journal of Multivariate Analysis}, 35(1):12 -- 30.

\bibitem[Knott and Smith, 2006]{KNOTT2006}
Knott, M. and Smith, C. (2006).
\newblock Choosing joint distributions so that the variance of the sum is
  small.
\newblock {\em Journal of Multivariate Analysis}, 97(8):1757 -- 1765.

\bibitem[Lee and Ahn, 2014]{LEE2014}
Lee, W. and Ahn, J.~Y. (2014).
\newblock On the multidimensional extension of countermonotonicity and its
  applications.
\newblock {\em Insurance: Mathematics and Economics}, 56(Supplement C):68 --
  79.

\bibitem[Lee et~al., 2017]{lee2017multivariate}
Lee, W., Cheung, K.~C., and Ahn, J.~Y. (2017).
\newblock Multivariate countermonotonicity and the minimal copulas.
\newblock {\em Journal of Computational and Applied Mathematics}, 317:589--602.

\bibitem[Lemieux, 2009]{lemieux2009}
Lemieux, C. (2009).
\newblock {\em Monte Carlo and Quasi-Monte Carlo Sampling}.
\newblock Springer Series in Statistics. Springer New York.

\bibitem[L{\'e}vy, 1962]{levy1962}
L{\'e}vy, P. (1962).
\newblock Extensions d'un th{\'e}or{\`e}me de d. dugu{\'e} et m. girault.
\newblock {\em Zeitschrift f{\"u}r Wahrscheinlichkeitstheorie und verwandte
  Gebiete}, 1(2):159--173.

\bibitem[Lindenstrauss, 1965]{Lindenstrauss65}
Lindenstrauss, J. (1965).
\newblock A remark on extreme doubly stochastic measures.
\newblock {\em The American Mathematical Monthly}, 72(4):379--382.

\bibitem[Loh, 1996]{loh}
Loh, W.~L. (1996).
\newblock On {L}atin hypercube sampling.
\newblock {\em The Annals of Statistics}, {\bf 24}:2058--2080.

\bibitem[Lu and Yi, 2004]{lu2004generalized}
Lu, T.-Y. and Yi, Z. (2004).
\newblock Generalized correlation order and stop-loss order.
\newblock {\em Insurance: Mathematics and Economics}, 35(1):69--76.

\bibitem[{Matou\v{s}ek}, 1998]{matousek}
{Matou\v{s}ek}, J. (1998).
\newblock On the $l^2$-discrepancy for anchored boxes.
\newblock {\em Journal of Complexity}, 14(4):527--556.

\bibitem[McKay et~al., 1979]{McKa:Beck:Cono:79:CTM}
McKay, M.~D., Beckman, R.~J., and Conover, W.~J. (1979).
\newblock {A} comparison of three methods for selecting values of input
  variables in the analysis of output from a computer code.
\newblock {\em Technometrics}, {\bf 21}:239--245.

\bibitem[Mikusi{\'n}ski and Taylor, 2010]{mikusinski2010some}
Mikusi{\'n}ski, P. and Taylor, M.~D. (2010).
\newblock Some approximations of n-copulas.
\newblock {\em Metrika}, 72(3):385--414.

\bibitem[Moameni, 2016]{moameni_2016}
Moameni, A. (2016).
\newblock Supports of extremal doubly stochastic measures.
\newblock {\em Canadian Mathematical Bulletin}, 59(2):381–391.

\bibitem[Nelsen and {\'U}beda-Flores, 2012]{nelsen2012directional}
Nelsen, R.~B. and {\'U}beda-Flores, M. (2012).
\newblock Directional dependence in multivariate distributions.
\newblock {\em Annals of the Institute of Statistical Mathematics},
  64(3):677--685.

\bibitem[Owen, 1992]{Owen:92:CLT}
Owen, A.~B. (1992).
\newblock A central limit theorem for latin hypercube sampling.
\newblock {\em Journal of the Royal Statistical Society: Series B
  (Methodological)}, 54(2):541--551.

\bibitem[Owen, 2003]{owen2003}
Owen, A.~B. (2003).
\newblock The dimension distribution, and quadrature test functions.
\newblock {\em Statistica Sinica}, 13:1--17.

\bibitem[Puccetti and Wang, 2015]{Puccetti2015}
Puccetti, G. and Wang, R. (2015).
\newblock {Extremal Dependence Concepts}.
\newblock {\em Statistical Science}, 30(4):485 -- 517.

\bibitem[Roach and Wright, 1977]{Roach1977}
Roach, W. and Wright, R. (1977).
\newblock Optimal antithetic sampling plans.
\newblock {\em Journal of Statistical Computation and Simulation},
  5(2):99--114.

\bibitem[Rockafellar, 1970]{rockafellar1970convex}
Rockafellar, R. (1970).
\newblock {\em Convex Analysis}.
\newblock Princeton Landmarks in Mathematics and Physics. Princeton University
  Press.

\bibitem[Rubinstein and Samorodnitsky, 1987]{Rubinstein1987}
Rubinstein, Y.~R. and Samorodnitsky, G. (1987).
\newblock A modified version of handscomb's antithetic variates theorem.
\newblock {\em SIAM Journal on Scientific and Statistical Computing},
  8(1):82--98.

\bibitem[R{\"u}schendorf, 2013]{ruschendorf2013}
R{\"u}schendorf, L. (2013).
\newblock {\em Mathematical Risk Analysis: Dependence, Risk Bounds, Optimal
  Allocations and Portfolios}.
\newblock Springer Series in Operations Research and Financial Engineering.
  Springer Berlin Heidelberg.

\bibitem[R{\"u}schendorf and Uckelmann, 2002]{Ruschendorf2002}
R{\"u}schendorf, L. and Uckelmann, L. (2002).
\newblock {\em Variance Minimization and Random Variables with Constant Sum},
  pages 211--222.
\newblock Springer Netherlands, Dordrecht.

\bibitem[Shepp, 1962]{shepp1962}
Shepp, L. (1962).
\newblock Symmetric random walk.
\newblock {\em Transactions of the American Mathematical Society},
  104(1):144--153.

\bibitem[Stein, 1987]{Stei:87:LSP}
Stein, M. (1987).
\newblock {Large} sample properties of simulations using {Latin} hypercube
  sampling.
\newblock {\em Technometrics}, 29:143--151.

\bibitem[Tang, 1993]{tang1993}
Tang, B. (1993).
\newblock Orthogonal array-based latin hypercubes.
\newblock {\em Journal of the American Statistical Association},
  88(424):1392--1397.

\bibitem[Tukey, 1957]{tukey_1957}
Tukey, J.~W. (1957).
\newblock Antithesis or regression?
\newblock {\em Mathematical Proceedings of the Cambridge Philosophical
  Society}, 53(4):923–924.

\bibitem[van Dyk and Meng, 2001]{Meng2001}
van Dyk, D.~A. and Meng, X.-L. (2001).
\newblock The art of data augmentation.
\newblock {\em Journal of Computational and Graphical Statistics}, 10(1):1--50.

\bibitem[Vitale, 1990]{vitale1990}
Vitale, R.~A. (1990).
\newblock On stochastic dependence and a class of degenerate distributions.
\newblock {\em Topics in Statistical Dependence}, 16:459.

\bibitem[Von~Neumann, 1953]{vonneumann1953}
Von~Neumann, J. (1953).
\newblock A certain zero-sum two-person game equivalent to the optimal
  assignment problem.
\newblock {\em Contributions to the Theory of Games}, 2:5--12.

\bibitem[Wang and Wang, 2011]{Wang2011}
Wang, B. and Wang, R. (2011).
\newblock The complete mixability and convex minimization problems with
  monotone marginal densities.
\newblock {\em Journal of Multivariate Analysis}, 102(10):1344 -- 1360.

\bibitem[Wang and Fang, 2003]{wangfang2003}
Wang, X. and Fang, K.-T. (2003).
\newblock The effective dimension and quasi-monte carlo integration.
\newblock {\em Journal of Complexity}, 19(2):101--124.

\bibitem[Wang and Sloan, 2005]{wangsloan2005}
Wang, X. and Sloan, I.~H. (2005).
\newblock Why are high-dimensional finance problems often of low effective
  dimension?
\newblock {\em SIAM Journal on Scientific Computing}, 27(1):159--183.

\bibitem[Whitt, 1976]{Whitt76}
Whitt, W. (1976).
\newblock {Bivariate Distributions with Given Marginals}.
\newblock {\em The Annals of Statistics}, 4(6):1280 -- 1289.

\bibitem[Wilson, 1979]{Wilson1979}
Wilson, J.~R. (1979).
\newblock Proof of the antithetic-variates theorem for unbounded functions.
\newblock {\em Mathematical Proceedings of the Cambridge Philosophical
  Society}, 86(3):477–479.

\bibitem[Wilson, 1983]{Wilson1983}
Wilson, J.~R. (1983).
\newblock Antithetic sampling with multivariate inputs.
\newblock {\em American Journal of Mathematical and Management Sciences},
  3(2):121--144.

\end{thebibliography}
\end{document}